\newcommand{\vect}[1]{\textrm{vec}\left(#1\right)}
\newcommand{\unvect}[1]{\textrm{unvec}\left(#1\right)}
\newcommand{\dg}[1]{\mathrm{dg}\left(#1\right)}
\newcommand{\tr}[1]{\textrm{tr}\left(#1\right)}
\newcommand{\RN}[1]{\textup{\uppercase\expandafter{\romannumeral#1}}}
\newtheorem{assumption}{Assumption}
\newtheorem{proposition}{Proposition}
\newtheorem{lemma}{Lemma}
\newcounter{subassumption}[assumption]
\renewcommand{\thesubassumption}{(\textit{\roman{subassumption}})}
\renewcommand{\p@subassumption}{\theassumption}% Counter prefix.
\newcommand{\subass}{ 
	\refstepcounter{subassumption}%
	\item[\thesubassumption]	
}
\begin{document}

	\title{\bf 
	Estimation of large approximate dynamic matrix  factor models based on the EM algorithm and Kalman filtering
%	Quasi maximum likelihood estimation of high-dimensional approximate dynamic matrix factor models via the EM algorithm
	}	
		\author{Matteo Barigozzi 
		\footnote{Department of Economics,
						University of Bologna, 
%						Piazza Scaravilli 2, Bologna, Italy.
						E-mail: \url{matteo.barigozzi@unibo.it}} 
	 \and Luca Trapin 
	 \footnote{Department of Statistical Sciences,
		 	University of Bologna, 
%		 	Via Belle Arti 41, Bologna, Italy.
		 	E-mail: \url{luca.trapin@unibo.it}}
			}
	
	\date{\today}
	
	\maketitle
	
	\begin{abstract}
%		High-dimensional matrix-variate time series data are becoming increasingly popular in economics and finance. This has stimulated the development of matrix factor models to achieve significant dimension reduction.
		 This paper considers an approximate dynamic matrix factor model that accounts for the time series nature of the data by explicitly modelling the time evolution of the factors. We study estimation of the model parameters based on the Expectation Maximization (EM) algorithm, implemented jointly with the Kalman smoother which gives estimates of the factors. 
		% an EM algorithm to perform quasi-maximum likelihood estimation of the model parameters. The algorithm is further extended to estimate the dynamic matrix factor model on a dataset with an arbitrary pattern of missing data. 
		We establish the consistency of the estimated loadings and factor matrices as the sample size $T$ and the matrix dimensions $p_1$ and $p_2$ diverge to infinity. 
		We then extend this approach to: (a) the case of arbitrary patterns of missing data and (b) the presence of common stochastic trends. The finite sample properties of the estimators are assessed through a large simulation study and two applications on: (i) a financial dataset of volatility proxies and (ii) a macroeconomic dataset covering the main euro area countries. 
		
		\noindent
		{\it Keywords}: Matrix Factor Models; Expectation Maximization Algorithm; Kalman Smoother; Missing Observations; Common Trends.
	\end{abstract}

	\section{Introduction}
	Matrix-variate time series data are becoming increasingly popular in economics and finance. For example, when forecasting regional specific economic activity \citep{chernis2020three}, investigating the dynamics of international trade flows \citep{chen2022modeling}, measuring of financial connectedness \citep{billio2021matrix}. 
	This has stimulated the development of high-dimensional methods to analyze matrix time series data, including matrix autoregressive models \citep{chen2021autoregressive,hsu2021matrix,billio2023bayesian}, matrix panel regression models \citep{kapetanios2021estimation}, and matrix factor models \citep{wang2019factor,yu2021projected,chen2021statistical,xu2024quasi,yu2024dynamic}.

	In this paper, we study a matrix factor model for a $p_1\times p_2$ zero-mean matrix-valued stationary process $\left\{\mathbf{Y}_t\right\}$, with latent factors following  a Matrix Autoregressive (MAR) model of order $P$, i.e., for $t\in \mathbb Z$, 
%	 Namely, for any  $t\in \mathbb Z$, we consider the model given by 
	\begin{align}
	\mathbf{Y}_t &= \mathbf{R} \mathbf{F}_t \mathbf{C}' + \mathbf{E}_{t} , 
	\label{eq:FM}\\
	\mathbf{F}_t &= \sum_{\ell=1}^P \mathbf{A}_\ell \mathbf{F}_{t-\ell}\mathbf{B}'_\ell + \mathbf{U}_t.
	\label{eq:MAR}
	\end{align}	
	In \eqref{eq:FM}, $\mathbf{F}_t$ is a $k_1 \times k_2$ matrix of latent factors with $k_1,k_2<\min(p_1,p_2)$, $\mathbf{R}$ and $\mathbf{C}$ are $p_1\times k_1$ and $p_2\times k_2$ matrices of unknown row and column loadings, $\mathbf{E}_t$  is a $p_1\times p_2$ matrix of idiosyncratic components with $p_1\times p_1$ row covariance matrix $\mathbf{H}$ and $p_2\times p_2$ column covariance matrix $\mathbf{K}$. In \eqref{eq:MAR}, $\mathbf{A}_\ell$ and $\mathbf{B}_\ell$,  $\ell=1,\dots,P$, are $k_1\times k_1$ and $k_2\times k_2$ matrices of autoregressive parameters, and $\textbf{U}_t$  is a $k_1\times k_2$ matrix of innovations from a  matrix-variate distribution with $k_1\times k_1$ row covariance matrix $\textbf{P}$ and $k_2\times k_2$ column covariance matrix $\textbf{Q}$. The processes  $\left\{\mathbf{F}_t\right\}$ and  $\left\{\mathbf{E}_t\right\}$ are assumed to be uncorrelated (at all leads and lags).

In our setting the idiosyncratic components are allowed to be correlated both across rows and columns, i.e., in general $\mathbf{H}$ and $\mathbf{K}$ are allowed to be full matrices, and we say that the factor model is  {\it approximate}. Furthermore, the model is {\it dynamic} since the factors are autocorrelated as specified by the MAR in \eqref{eq:MAR}, and, moreover, we also allow the idiosyncratic components to be autocorrelated, although no explicit model for their dynamics is introduced. Therefore, we call a model defined by \eqref{eq:FM}-\eqref{eq:MAR} an approximate dynamic matrix factor model (DMFM). It combines the matrix factor model, as formulated by  \citet{yu2021projected} and \citet{chen2021statistical}, and the MAR proposed by 
\citet{chen2021autoregressive}. A DMFM has also been considered by \citet{yu2024dynamic} (see Section \ref{se:lit} for a detailed comparison with our work).

%The main references for a matrix factor model like \eqref{eq:FM} are. The former assume serially uncorrelated idiosyncratic components, while the latter, like us, allow for serial correlations. For the MAR in \eqref{eq:MAR} we refer to.

In this paper, we  propose a new estimator of the factor loading matrices and factor matrices of the DMFM, implemented via the Expectation Maximization (EM) algorithm jointly with the Kalman smoother.  We prove consistency of the spaces spanned by the estimated loadings and by the factors as $\min(p_1,p_2,T)\to\infty$. 

	We argue that accounting for factor dynamics via the Kalman smoother, thus considering joint estimation of all parameters and the factors, is particularly convenient as it allows the user to  impose {\it a priori} restrictions on the models' parameters and/or dynamics, construct counterfactual scenarios, conditional forecasts, obtain now-casts, and deal with missing values due to different sampling frequencies or plain unavailability of the data (see, e.g., the applications in \citealp{banbura2014maximum} and \citealp{banbura2015conditional} in the case of vector time series). Furthermore, we also show that, thanks to the use of the Kalman filter, this approach is also particularly convenient to handle the case in which the data is driven by common stochastic trends, i.e., when (some of) the factors are $I(1)$ \citep[see, e.g., the applications in][in the case of vector time series]{barigozzi2023measuring}.

%	 $\mathbf{A}$ and $\mathbf{B}$ are $k_1\times k_1$ and $k_2\times k_2$ matrices of the MAR coefficients, and $\mathbf U_t$ is the $k_1\times k_2$ matrix of MAR innovations, assumed to be a white noise matrix process.
%	In our setting, we allow $\mathbf E_t$ to be serially correlated, and its rows and columns are also allowed to be correlated. 
	
%	We further assume that $\left\{\mathbf{F}_t\right\}$ and $\left\{\mathbf{E}_t\right\}$ are two weakly stationary, mutually uncorrelated (at all leads and lags), but possibly autocorrelated, zero-mean matrix-valued processes.  In particular, we assume the factor matrix $\mathbf{F}_t$ follows a matrix autoregressive process (MAR) of order $P$ \citep{chen2021autoregressive},
%	\begin{equation}
%	\mathbf{F}_t = \sum_{i=1}^P \mathbf{A}_i \mathbf{F}_{t-i}\mathbf{B}'_i + \mathbf{U}_t, \quad t=1,\ldots, T,
%	\label{eq:MAR}
%	\end{equation}
%	where We call a model defined by \eqref{eq:FM}-\eqref{eq:MAR} an approximate dynamic matrix factor model (DMFM). 
	
	Similarly to the vector case, the DMFM can be identified only in the limit $p_1,p_2\to\infty$ due to its approximate structure. That is to say that the numbers of factors $k_1$ and $k_2$ can be consistently estimated only when both dimensions grow large. This is what allows one to disentangle the factor driven component from the idiosyncratic one. However, this forces us to work in a high-dimensional setting. This makes joint Maximum Likelihood estimation of all the parameters and the factors in \eqref{eq:FM}-\eqref{eq:MAR} a hard if not unfeasible task due to the large number of parameters we need to estimate, which is $O( (p_1^2+p_2^2)T)$ (all autocovariances of the factors and idiosyncratic components), and due to the lack of a closed form solution. 
	
The estimation approach we consider has two main features which allow us to solve both problems. First, it is based on a mis-specified likelihood where the idiosyncratic components are treated as if they were uncorrelated. This reduces the number of parameters to be estimated to $O( p_1+p_2)$. Second, it is an iterative approach where, in a first step, for given parameters we estimate the factors via the Kalman smoother, and, in a second step, for given factors we estimate all parameters by maximizing the expected likelihood conditional on the factors. This allows us to derive a closed form expression for all estimators. 

Our approach is the generalization of the approach proposed by \citet{doz2012quasi} for the vector case.  However, such generalization is non-trivial, indeed, in the present matrix time series setting, we need,  at each iteration of the EM algorithm, to jointly estimate the two matrices of loadings, $\mathbf R$ and $\mathbf C$, which depend on each other (the same goes for the row and column 
 idiosyncratic covariances, $\mathbf H$ and $\mathbf K$, the
MAR coefficients, $\mathbf A$ and $\mathbf B$,  and the MAR innovation covariances, $\mathbf P$ and $\mathbf Q$). Respecting this bilinear structure requires modifying the algorithm accordingly and makes the derivation of the asymptotic properties more challenging.

	Finally, we show the potential of the proposed approach through  two applications. First, we analyze a matrix times series containing various volatility proxies for many stocks. Since not all proxies are available for all stocks, we show how to adapt the EM algorithm to deal with missing values and we then produce volatility forecasts for all stocks. Second, we analyze a matrix of time series of real macroeconomic variables of various Euro Area countries, which are clearly driven by few common trends.
		
%	\red{
%		The Kalman smoother provides a powerful framework for constructing counterfactuals in macroeconomic analysis by enabling the estimation of unobserved state variables and underlying trends, particularly in the presence of noise or missing data, as demonstrated by , where it facilitates scenario analysis and conditional forecasting in large vector autoregressive models.
%		}

% are the first to consider the bilinear model in \eqref{eq:FM} under a setting where $\mathbf{E}_t$ is white noise. They propose estimators of factors $\mathbf{F}_t$ and loadings $\mathbf{R}$ and $\mathbf{C}$ based on the eigen-analysis of the auto-covariance matrix and establish their consistency. 
%	 \cite{chen2021statistical} relax the assumption that $\mathbf{E}_t$ is white noise and establish the asymptotic normality of estimators based on the spectral aggregation of the mean and the column (row) covariance matrix of the data. Under analogous assumptions, \cite{yu2021projected} improve the estimation efficiency of the factor loading matrices with iterative projection algorithms. 

%\cite{chen2020constrained} and \cite{liu2021threshold} extend these estimators to constrained and threshold versions of the matrix factor model, respectively.

	The rest of the paper is organized as follows. Section \ref{se:lit} discusses related works.
	Section \ref{sec:QML} presents the estimator obtained via the EM algorithm. 
	Section \ref{sec:asymp} presents the assumptions and the consistency results. 
	Sections \ref{subsec:banbura} and \ref{subsec:unitroot} explain how to extend the EM algorithm in presence of missing data and/or common stochastic trends.
	Section \ref{sec:Sim} studies the finite sample properties of the EM algorithm through Monte Carlo simulations. 
	Section \ref{sec:App} presents two real data applications on variance proxies of financial assets and on macroeconomic indicators of the Euro Area. Appendix \ref{ap:notation} contains all notation, as well as relevant results on matrix operations. Appendix \ref{app:EM} contains details on the EM updates. Appendix \ref{app:proofs} contains all proofs; Appendix \ref{sec:gianni} explains how to identify $I(1)$ and $I(0)$ factors in the case of $I(1)$ data;
	Appendix \ref{app:sim} contains additional simulation results.
	
	\section{Related literature}\label{se:lit}
	
%	We start by briefly reviewing alternative estimation approaches of large matrix factor models, possibly augmented with a MAR for the factors.

	There exist many works considering  estimation only  of the matrix factor model in \eqref{eq:FM}, thus without explicitly accounting for the factors' dynamics. First, \citet{wang2019factor} introduce the class of large matrix factor models under the assumption of serially uncorrelated idiosyncratic components, and
	propose to estimate the loadings by means of eigenvectors of a long-run covariance matrix (see also \citealp{chen2020constrained}). Second,
	 \citet{yu2021projected} and \citet{chen2021statistical}  extend this approach to the case of possibly autocorrelated idiosyncratic components, and  propose two different generalizations to the matrix setting of the Principal Component (PC) estimators typically used in the vector case. Both these work consider also methods for determining the number of factors (see also  \citealp{he2023one}, and \citealp{han2022rank}, for alternative methods). In a similar setting, \cite{gao2023denoising} consider estimating in the case of idiosyncratic components containing weak signals.
	 Last, \citet{yuan2023two} and \cite{xu2024quasi} consider QML estimation of two different specifications of a matrix factor model.  
%	 where only the loadings and the idiosyncratic variances are estimated and the idiosyncratic components are treated as if they were uncorrelated.
%	  \citep{bai2012statistical,bai2016maximum}. 

	To the best of our knowledge only \cite{yu2024dynamic} consider a DMFM as specified by \eqref{eq:FM}-\eqref{eq:MAR}. However, our work differs in several aspects. First, we consider joint estimation of factors and parameters of the model, while  they consider a two-step approach where first the loadings and the factors are estimated and then a MAR is estimated on the factors. Second, we allow the idiosyncratic components to be serially correlated, while they impose a different factor structure with time independent idiosyncratic components. Third, we derive the asymptotic properties of the  factors 
	estimated via the Kalman smoother, while they do not study the asymptotic properties of such estimator, although entertaining the possibility of retrieving the factors via filtering.  As a last difference, we also study our estimation approach in presence of arbitrary patterns of missing data or stochastic trends.

	Our work is also related to three other strands of the literature. First, the idea of considering a misspecified likelihood in factor analysis to make its maximization more treatable dates back to \citet{tipping1999probabilistic} who, in a vector context, treated the idiosyncratic components as i.i.d.. This idea was then extended by \citet{doz2012quasi} and \citet{bai2016maximum} to the case of high-dimensional vectors of time series having serially and cross-sectionally correlated idiosyncratic components. In particular,  \citet{doz2012quasi} explicitly model the factors dynamics.
	
	Second, there exist many factor model approaches for handling missing values in high-dimensional vector time series. On the one hand, \citet{banbura2014maximum} propose an EM-based approach which we generalize to the matrix setting in this paper. On the other hand, there are a few  approaches based on various modifications of standard PC analysis, see, e.g., the recent works by \citet{xiong2023large} and \citet{cahan2023factor}. Finally, \citet{cen2024tensor} consider a PC based approach for the tensor case, which includes the matrix case.
	
	Third, in the case of $I(1)$ vector time series, estimation of factor models via PC has been studied in a few works either under the assumption of stationary idiosyncratic components, which can be serially uncorrelated \citep{zhang2019identifying} or autocorrelated \citep{bai2004estimating}, or when allowing for $I(1)$ idiosyncratic components \citep{bai2004panic,barigozzi2021large}.  Recently, \citet{chen2025inference} considered estimation via PC methods for matrix time series with $I(1)$ and $I(0)$ factors and stationary idiosyncratic components.

	\section{Estimation of the Dynamic Matrix Factor Model}
	\label{sec:QML}
	\paragraph{The log-likelihood.}
	Let consider a DMFM as defined in \eqref{eq:FM}-\eqref{eq:MAR}, and without loss of generality assume that the MAR is of order $P=1$. For a $p_1\times p_2$ matrix-valued covariance stationary process $\left\{\mathbf{Y}_t\right\}$ our data generating process is then given by:
	\begin{align}
		\label{eq:dmfm1}
		\mathbf{Y}_t &= \mathbf{R} \mathbf{F}_t \mathbf{C}' + \mathbf{E}_t,  \\
		\label{eq:dmfm2}
		\mathbf{F}_t &= \mathbf{A} \mathbf{F}_{t-1}\mathbf{B}' + \mathbf{U}_t,
	\end{align}
	where $\mathbf{R}$ is a $p_1\times k_1$ matrix of row loadings, $\mathbf{C}$ is a $p_2\times k_2$ matrix of column, $\mathbf{F}_t$ is a $k_1\times k_2$ matrix of latent factor, $\mathbf{E}_t$ is a $p_1\times p_2$ matrix of idiosyncratic components with covariances $\mathbf{H}$ and $\mathbf{K}$, $\mathbf A$ and $\mathbf B$ are both $k_1\times k_2$ matrices of MAR coefficients, and $\textbf{U}_t$ is a $k_1\times k_2$ matrix of innovations with covariances $\mathbf{P}$ and $\mathbf{Q}$. 
	As usual in factor models, for simplicity and without loss of generality, we assume $\mathbb{E}\left[\mathbf{F}_t\right]=\mathbf 0_{k_1,k_2}$ and $\mathbb{E}\left[\mathbf{E}_t\right]=\mathbf 0_{p_1,p_2}$. Therefore, model \eqref{eq:dmfm1}-\eqref{eq:dmfm2} implies that $\mathbb{E}\left[\mathbf{Y}_t\right]=\mathbf 0_{p_1,p_2}$ in other words, we implicitly assume for simplicity to be working with centered data. 
	
%	For the purpose of estimation, we make the possibly misspecified assumption that $\left\{\mathbf{F}_t\right\}$ and $\left\{\mathbf{E}_t\right\}$ jointly follow Gaussian matrix-valued processes. Under this assumption, we can perform Gaussian Quasi-Maximum Likelihood (QML) estimation. 
	
	Denote as $\mathsf{y}_t=\vect{\mathbf{Y}_t}$, $\mathsf{f}_t=\vect{\mathbf{F}_t}$, $\mathsf{e}_t=\vect{\mathbf{E}_t}$ and $\mathsf{u}_t=\vect{\mathbf{U}_t}$, the vectorized versions of the matrices $\mathbf{Y}_t$, $\mathbf{F}_t$, $\mathbf{E}_{t}$ and $\mathbf{U}_t$, respectively. 	Then, consider a sample of $T$ observations, and let $\mathsf{Y}_T=\left(\mathsf{y}'_1 \cdots \mathsf{y}'_T\right)'$ and $\mathsf{E}_T=\left(\mathsf{e}'_1 \cdots \mathsf{e}'_T\right)'$ be $(p_1p_2T)$-dimensional vectors containing all the observations and idiosyncratic components, respectively, and let $\mathsf{F}_T=\left(\mathsf{f}'_1 \cdots \mathsf{f}'_T\right)'$ be the $(k_1k_2T)$-dimensional vector of factors. 
	Let $\boldsymbol{\Omega}^{\mathsf{Y}}_T=\mathbb{E}\left[\mathsf{Y}_T\mathsf{Y}'_T\right]$, $\boldsymbol{\Omega}^{\mathsf{E}}_T=\mathbb{E}\left[\mathsf{E}_T\mathsf{E}'_T\right]$, $\boldsymbol{\Omega}^{\mathsf{F}}_T=\mathbb{E}\left[\mathsf{F}_T\mathsf{F}'_T\right]$ be covariance matrices containing all the cross-sectional row and column covariances and all the autocovariances up to lag $(T-1)$. Notice that $\boldsymbol{\Omega}^{\mathsf{F}}_T$ is fully characterized by the matrices of MAR parameter  $\mathbf{A}$, $\mathbf{B}$, $\mathbf{P}$ and $\mathbf{Q}$, thus, hereafter, we denote it as $\boldsymbol{\Omega}^{\mathsf{F}}_{T(\mathbf A,\mathbf B,\mathbf P,\mathbf Q)}$.
	
	It follows that the DMFM is fully characterized by the covariance matrix of $\mathsf{Y}_T$, which must be such that
$
	\boldsymbol{\Omega}^{\mathsf{Y}}_T = (\mathbb{I}_T\otimes \mathbf{C} \otimes \mathbf{R})\boldsymbol{\Omega}^{\mathsf{F}}_{T(\mathbf A,\mathbf B,\mathbf P,\mathbf Q)} (\mathbb{I}_T\otimes \mathbf{C} \otimes \mathbf{R})' + \boldsymbol{\Omega}^{\mathsf{E}}_T.
	$
%	Last, $\boldsymbol{\Omega}^{\mathsf{F}}_T$ is fully characterized by matrices $\mathbf{A}$, $\mathbf{B}$, $\mathbf{P}$ and $\mathbf{Q}$, and we shall denote it as $\boldsymbol{\Omega}^{\mathsf{F}}_{T(\mathbf A,\mathbf B,\mathbf P,\mathbf Q)}$.
%	
%	
	In an approximate DMFM as the one we consider, $\boldsymbol{\Omega}^{\mathsf{E}}_T$ is allowed to be a full-matrix, but this implies that it has $\frac{p_1p_2T(p_1p_2T+1)}2$ entries to be estimated 
%	The number of of parameters to be estimated is however too big, the main reason being that $\boldsymbol{\Omega}^{\mathsf{E}}_T$  is of size $p_1p_2T \times p_1p_2T $ 
while we have only $p_1p_2T $ observations. This  makes Maximum Likelihood estimation unfeasible.
%	 by $(\vect{\mathbf{R}}',\vect{\mathbf{C}}',\vecht{\boldsymbol{\Omega}^{\mathsf{E}}_T}' \vect{\mathbf{A}}',\vect{\mathbf{B}}',\vect{\mathbf{P}}',\vect{\mathbf{Q}}')'$. 
%	The log-likelihood for $\mathsf{Y}_T$ can then be written as 
%	\begin{equation}
%		\ell\left(\mathsf{Y}_T; \boldsymbol{\theta}\right) = \frac{p_1p_2T}{2} \log(2\pi) - \log\left(|\mathsf{L}_T\boldsymbol{\Omega}^{\mathsf{F}}_T \mathsf{L}'_T + \boldsymbol{\Omega}^{\mathsf{E}}_T |\right) - \frac{1}{2}\left[\mathsf{Y}'_T \left(\mathsf{L}_T\boldsymbol{\Omega}^{\mathsf{F}}_T \mathsf{L}'_T + \boldsymbol{\Omega}^{\mathsf{E}}_T\right)^{-1}\mathsf{Y}_T \right]
%		\label{eq:llk}
%	\end{equation}
%	and the QML estimator of $\boldsymbol{\theta}$ is defined as the maximizer $\widehat{\boldsymbol{\theta}}$ of \eqref{eq:llk}. 
%	 but this makes $\widehat{\boldsymbol{\theta}}$ very hard, if not impossible, to obtain without further restrictions on the model. 
	 
	 A solution consists in considering a misspecified  likelihood where the idiosyncratic components $\mathbf{E}_t$ are treated as if they were serially and cross-sectionally uncorrelated, i.e., when we replace $\boldsymbol{\Omega}^{\mathsf{E}}_T$ with $\mathbb{I}_T\otimes \dg{\mathbf{K}} \otimes \dg{\mathbf{H}}$. This is the approach followed in the vector factor model case, by, e.g., \citet{bai2016maximum} and \citet{doz2012quasi}.
	 Under this misspecification, the vector of parameters to be estimated reduces to
%	we estimate the following \emph{approximate} DMFM,
%	\[
%	\begin{array}{l}
%	\qquad \mathbf{Y}_t = \mathbf{R} \mathbf{F}_t \mathbf{C}' + \mathbf{E}_{t}, \qquad \mathbf{E}_t\sim \mathcal{MN}_{p_1,p_2}(\mathbf{0}_{p_1,p_2}, \mathbf{H}, \mathbf{K}) \\[0.1in]
%	\qquad \mathbf{F}_t = \mathbf{A} \mathbf{F}_{t-1}\mathbf{B}' + \mathbf{U}_t, \qquad \textbf{U}_t\sim \mathcal{MN}_{k_1,k_2}(\mathbf{0}_{k_1,k_2}, \mathbf{P}, \mathbf{Q})
%	\end{array}
%	\]
$
\boldsymbol{\theta}=\left(\vect{\mathbf{R}}',\vect{\mathbf{C}}',\vect{\dg{\mathbf{H}}}',\vect{\dg{\mathbf{K}}}',\vect{\mathbf{A}}',\vect{\mathbf{B}}',\vect{\mathbf{P}}',\vect{\mathbf{Q}}'\right)',
$
which has dimension now growing as $p_1+p_2$, thus it can be estimated using $p_1p_2T $ observations. Consequently, we consider a misspecified, or quasi, log-likelihood   given by:
	\begin{align}
		\ell\left(\mathsf{Y}_T; \boldsymbol{\theta}\right) =&\, \frac{p_1p_2T}{2} \log(2\pi) - \log\left(|(\mathbb{I}_T\otimes \mathbf{C} \otimes \mathbf{R})\boldsymbol{\Omega}^{\mathsf{F}}_{T(\mathbf A,\mathbf B,\mathbf P,\mathbf Q)} (\mathbb{I}_T\otimes \mathbf{C} \otimes \mathbf{R})' + \mathbb{I}_T\otimes \dg{\mathbf{K}} \otimes \dg{\mathbf{H}} |\right)\nonumber\\
		& - \frac{1}{2}\left[\mathsf{Y}'_T \left(
		(\mathbb{I}_T\otimes \mathbf{C} \otimes \mathbf{R})\boldsymbol{\Omega}^{\mathsf{F}}_{T(\mathbf A,\mathbf B,\mathbf P,\mathbf Q)} (\mathbb{I}_T\otimes \mathbf{C} \otimes \mathbf{R})' 
		+ 
		\mathbb{I}_T\otimes \dg{\mathbf{K}} \otimes \dg{\mathbf{H}}
		\right)^{-1}\mathsf{Y}_T \right].
		\label{eq:llk}
	\end{align}
	Due to the introduced misspecifications, we say that the maximizer of \eqref{eq:llk} is a QML estimator.

%	 becomes the vector of parameters to be estimated. 
		In principle, QML estimation of $\boldsymbol{\theta}$ can be performed by writing the model in vectorized form and maximizing the prediction error decomposition of the Gaussian likelihood obtained from the Kalman filter (see e.g. section 7.2 in \citealp{durbin2012time}). This approach is applicable when $p_1$ and $p_2$ are relatively small, but it becomes quickly unfeasible in larger settings due to a lack of closed form solution. Furthermore, by vectorizing the data we lose the bilinear structure of the model.
		We resort to the EM algorithm instead.\vspace{-.5cm}
	
	\paragraph{EM algorithm.}
	\label{subsec:em}
	The EM algorithm is an iterative procedure proposed by \cite{dempster1977maximum} to maximize the log-likelihood in problems where missing or latent observations make the likelihood intractable. This procedure works in two steps: given a set of parameter values, the E-step computes the expectation of the log-likelihood conditional on the observed data, thus ``filling'' the missing observations; the M-step maximizes the expected log-likelihood with respect to the model parameters. These two steps are iterated until a convergence criterion is satisfied. 	As the factor process $\{\mathbf{F}_t\}$ is unobserved in our setting, the EM algorithm is a suitable option to perform QML estimation. 
	
%	\textcolor{red}{In the vector case, \citet{doz2012quasi} by following the general arguments by \citet{dempster1977maximum} argue that the EM algorithm presented below is effectively approximating Quasi Maximum Likelihood (QML) estimation, a conjecture then proved by \citet{barigozzi2019quasi}. Here we do not prove such equivalence but we rely on the general argument by \citet{dempster1977maximum}  and therefore we consider our approach as a way to approximate QML estimation. As shown in Section \ref{} the considered log-likelihood is effectively increasing at each iteration as expected.}

	In general, \citet{wu1983convergence}  proves that, when considering a Gaussian quasi-likelihood the EM algorithm converges to one of its maxima. As such, \citet{doz2012quasi} consider their EM approach as 
	Quasi Maximum Likelihood (QML) estimation of  a dynamic vector factor model, a conjecture then proved by \citet{barigozzi2019quasi}. For this reasons in this section we refer to our estimator as a QML estimator.  However, to formally prove that the estimator defined below is effectively achieving QML estimation we would need more assumptions on the distribution of the data and the identification of the loadings space, which, in this paper, we refrain to make. For this reason, here we do not prove such equivalence but we limit to notice that, by construction,  the considered log-likelihood is effectively increasing at each iteration (see the numerical results in Section \ref{sec:Sim}). \vspace{-.5cm}

%	To initialize	the algorithm, we need a pre-estimator of the parameters, denoted as $\widehat{\boldsymbol{\theta}}_{(0)}$. 
%	Specifically, we use the Projected Estimator (PE) by \cite{yu2021projected} to obtain initial estimates of the loadings, the factors and the idiosyncratic variances, while we use the maximum likelihood estimator of \cite{chen2021autoregressive}, computed using the estimated factors, to obtain initial estimates of the MAR parameters.
	
	\paragraph{Kalman smoother.}
%	Consider the vectorized DMFM \red{si possono levare queste due equazioni}
%	\begin{align}
%	\label{eq:vdmfm1}
%	\mathsf{y}_t &= \left(\mathbf{C} \otimes \mathbf{R} \right) \mathsf{f}_t + \mathsf{e}_t,  \\
%%	\qquad \quad \mathsf{e}_t\sim(\mathbf{0}_{p_1p_2},\mathbf{K}\otimes\mathbf{H}), \\
%	\label{eq:vdmfm2}
%	\mathsf{f}_t &= \left(\mathbf{B} \otimes \mathbf{A} \right) \mathsf{f}_{t-1} + \mathsf{u}_t.
%%	 \qquad \mathsf{u}_t \sim \left(\mathbf{0}_{k_1k_2}, \mathbf{Q}\otimes\mathbf{P} \right),
%	\end{align}	
%	with $p=p_1p_2$ and $k=k_1k_2$, 
For any iteration $n\geq 0$ of the EM algorithm, and given an estimator of the parameters $\widehat{\boldsymbol{\theta}}^{(n)}$, we run the Kalman smoother on a vectorized version of the DMFM \eqref{eq:dmfm1}-\eqref{eq:dmfm2}. This gives as an estimator of $\mathsf{f}_t=\vect{\mathbf{F}_t}$ the linear projection $\mathsf{f}^{(n)}_{t|T} = \mathrm{Proj}_{\widehat{\boldsymbol{\theta}}^{(n)}}\left[\mathsf{f}_t|\mathsf{Y}_{T}\right]$ and the associated MSE, denoted as $\mathbf{\Pi}^{(n)}_{t|T}$. Moreover, by considering the Kalman smoother for the augmented state vector $(\mathsf{f}_t^\prime \;\; \mathsf{f}_{t-1}^\prime)^\prime$, we denote the top-left $k_1k_2\times k_1k_2$ block of the associated $2k_1k_2\times 2k_1k_2$ MSE as $\mathbf{\Delta}^{(n)}_{t|T}$ (see, e.g., Section 4.4 in \citealp{durbin2012time}, for the explicit expressions of these quantities). In particular, to run the Kalman smoother we need first to run the Kalman filter, which, in turn, requires an estimate of the inverse idiosyncratic covariance matrix. This is a hard taks in high-dimensions, but here, consistently with the misspecified log-likelihood \eqref{eq:llk}, we always consider an estimator of the misspecified diagonal covariance matrix $\text{dg}(\mathbf K)\otimes \text{dg}(\mathbf H)$, which is always invertible. 
%While for $\mathsf u_t$ we can use an estimator of the $k_1k_2\times k_1k_2$ full-covariance matrix $\mathbf Q\otimes \mathbf P$. 

Note that the engineering literature proposes matrix versions of the Kalman filter for matrix state-space models like the one in \eqref{eq:dmfm1}-\eqref{eq:dmfm2}  (e.g. \citealp{choukroun2006kalman}). However, these approaches heavily rely on the $\vect{\cdot}$ operator and offer only minor computational advantages, primarily due to algebraic simplifications. Similarly to our approach, also \cite{yu2024dynamic} utilize the vectorized Kalman filter.

Under joint Gaussianity of $\mathsf F_T$ and $\mathsf Y_T$, it is known that $\mathsf{f}^{(n)}_{t|T}$, $\mathbf{\Pi}^{(n)}_{t|T}$, and $\mathbf{\Delta}^{(n)}_{t|T}$ are estimators of the first and second conditional moments of $\mathsf{f}_t$ given $\mathsf Y_T$, obtained when computing expectations using the estimated parameters $\widehat{\boldsymbol{\theta}}^{(n)}$. As mentioned above, here we do not make any Gaussianity assumption. Nevertheless, we show that in the present high-dimensional setting the Kalman smoother delivers consistent estimates of the factors, thus providing a good approximation (see the results in Section \ref{sec:asymp}).\vspace{-.5cm}
\paragraph{E-step.}
	In the E-step, we use the output of the Kalman smoother to compute the expected quasi log-likelihood of the approximate DMFM. Spcifically,
	given $\mathsf{Y}_T$ and $\widehat{\boldsymbol{\theta}}^{(n)}$,  by Bayes' rule, we have\footnote{Notice that conditioning on $\mathsf{Y}_T$, which is a vector, is equivalent to conditioning on the sequence of matrices $\{\mathbf Y_1,\ldots,\mathbf Y_T\}$, hence, we can  write the argument of the log-likelihood in both ways. The same applies for $\mathsf{F}_T$ and the sequence of matrices $\{\mathbf F_1,\ldots,\mathbf F_T\}$. Therefore, in order to avoid introducing further notation, hereafter, we use only the vector notation $\mathsf{Y}_T$ and $\mathsf{F}_T$ to indicate the conditioning random variables and the arguments of the log-likelihoods, even when the latter are expressed explicitly as function of matrix valued time series.}
	\[
	\ell\left(\mathsf{Y}_T; \boldsymbol{\theta}\right) = \underbrace{\mathbb{E}_{\widehat{\boldsymbol{\theta}}^{(n)}}\left[\ell\left(\mathsf{Y}_T |\mathsf{F}_T; \boldsymbol{\theta}\right) |\mathsf{Y}_T \right] +  \mathbb{E}_{\widehat{\boldsymbol{\theta}}^{(n)}}\left[\ell\left(\mathsf{F}_T; \boldsymbol{\theta}\right) |\mathsf{Y}_T \right]}_{\mathcal{Q}(\boldsymbol{\theta}, \widehat{\boldsymbol{\theta}}^{(n)})} - \mathbb{E}_{\widehat{\boldsymbol{\theta}}^{(n)}}\left[\ell\left(\mathsf{F}_T |\mathsf{Y}_T; \boldsymbol{\theta}\right)|\mathsf{Y}_T\right].
	\]
	
%	
%Specifically, for any value of $\boldsymbol{\theta}$, the log-likelihood in \eqref{eq:llk} can be decomposed, by Bayes' rule, as
%	\[
%	\ell\left(\mathsf{Y}_T; \boldsymbol{\theta}\right) = \ell\left(\mathsf{Y}_T |\mathsf{F}_T; \boldsymbol{\theta}\right) +  \ell\left(\mathsf{F}_T; \boldsymbol{\theta}\right) -  \ell\left(\mathsf{F}_T |\mathsf{Y}_T; \boldsymbol{\theta}\right),
%	\]
%
%	then, taking expectations on both sides with respect to the conditional distribution of $\mathsf{F}_T$ 
%	given $\mathsf{Y}_T$ and $\widehat{\boldsymbol{\theta}}^{(n)}$, we have
%	\[
%	\ell\left(\mathsf{Y}_T; \boldsymbol{\theta}\right) = \underbrace{\mathbb{E}_{\widehat{\boldsymbol{\theta}}^{(n)}}\left[\ell\left(\mathsf{Y}_T |\mathsf{F}_T; \boldsymbol{\theta}\right) |\mathsf{Y}_T \right] +  \mathbb{E}_{\widehat{\boldsymbol{\theta}}^{(n)}}\left[\ell\left(\mathsf{F}_T; \boldsymbol{\theta}\right) |\mathsf{Y}_T \right]}_{\mathcal{Q}(\boldsymbol{\theta}, \widehat{\boldsymbol{\theta}}^{(n)})} - \mathbb{E}_{\widehat{\boldsymbol{\theta}}^{(n)}}\left[\ell\left(\mathsf{F}_T |\mathsf{Y}_T; \boldsymbol{\theta}\right)|\mathsf{Y}_T\right].
%	\]
	As proved in \cite{dempster1977maximum}, maximizing $\ell\left(\mathsf{Y}_T; \boldsymbol{\theta}\right)$ is equivalent to maximizing $\mathcal{Q}(\boldsymbol{\theta}, \widehat{\boldsymbol{\theta}}^{(n)})$, and we thus need to compute only the latter. 
%	Let $\mathsf{Y}_T=\left\{\mathbf{Y}_1,\dots,\mathbf{Y}_T\right\}$ and $\mathsf{F}_T=\left\{\mathbf{F}_1,\dots,\mathbf{F}_T\right\}$ be the collection of observation matrices $\mathbf{Y}_t$ and latent factor matrices $\mathbf{F}_t$, for the approximate DMFM we can write $\mathcal{Q}(\boldsymbol{\theta}, \widehat{\boldsymbol{\theta}}^{(n)})$ using matrix notation as
%	\begin{equation}
%		\mathcal{Q}(\boldsymbol{\theta}, \widehat{\boldsymbol{\theta}}^{(n)}) = \mathbb{E}_{\widehat{\boldsymbol{\theta}}^{(n)}} \left[\ell(\mathsf{Y}_T|\mathsf{F}_T;\boldsymbol{\theta}) | \mathsf{Y}_T \right] + \mathbb{E}_{\widehat{\boldsymbol{\theta}}^{(n)}} \left[\ell(\mathsf{F}_T;\boldsymbol{\theta}) | \mathsf{Y}_T \right],	\label{eq:ellk}
%	\end{equation}
	Specifically, we have  (see Appendix \ref{ap:ellk} for the derivation)
	\begin{align}
	\begin{array}{lll}
	\mathbb{E}_{\widehat{\boldsymbol{\theta}}} \left[\ell(\mathsf{Y}_T|\mathsf{F}_T;\boldsymbol{\theta}) | \mathsf{Y}_T \right]  &=& -\frac{T}{2}\left(p_1\log\left(|\mathbf{K}|\right)+p_2\log\left(|\mathbf{H}|\right)\right) \\
	&& - \frac{1}{2}\sum^T_{t=1} \mathbb{E}_{\widehat{\boldsymbol{\theta}}^{(n)}} \left[ \textbf{tr}\left(\mathbf{H}^{-1}\left(\mathbf{Y}_t-\mathbf{R}\mathbf{F}_{t}\mathbf{C}'\right)\mathbf{K}^{-1}\left(\mathbf{Y}_t-\mathbf{R}\mathbf{F}_{t}\mathbf{C}'\right)'\right) | \mathsf{Y}_T \right],
	\end{array}
	\label{eq:llmat1}		
	\\
	\begin{array}{lll}
	\mathbb{E}_{\widehat{\boldsymbol{\theta}}} \left[\ell(\mathsf{F}_T;\boldsymbol{\theta}) | \mathsf{Y}_T \right] &=& - \frac{T-1}{2}\left(k_1\log\left(|\mathbf{Q}|\right)+ k_2\log\left(|\mathbf{P}|\right) \right) \\
	&& 
	-  \frac{1}{2}\sum^T_{t=1} \mathbb{E}_{\widehat{\boldsymbol{\theta}}^{(n)}} \left[ \textbf{tr} \left(\mathbf{P}^{-1}\left(\mathbf{F}_{t}-\mathbf{A}\mathbf{F}_{t-1}\mathbf{B}'\right)\mathbf{Q}^{-1}\left(\mathbf{F}_{t}-\mathbf{A}\mathbf{F}_{t-1}\mathbf{B}' \right)' \right) | \mathsf{Y}_T \right].
	\end{array}	
		\label{eq:llmat2}			
	\end{align}
	Notice that these log-likelihoods depend directly on the data in its matrix form. \vspace{-.5cm}

\paragraph{M-step.}\label{subsec:Mstep}
	In the M-step, we maximize \eqref{eq:llmat1} and \eqref{eq:llmat2} to obtain a new estimate of the parameters $\widehat{\boldsymbol{\theta}}^{(n+1)}$. In particular, at any $n\geq 0$ iteration, the row and column loadings estimators are given by (see Appendix \ref{ap:em} for the derivation):
	\begin{align}
	\widehat{\mathbf{R}}^{(n+1)} &= \left(\sum\limits_{t=1}^{T} \mathbf{Y}_t\widehat{\mathbf{K}}^{(n)-1}\widehat{\mathbf{C}}^{(n)}\mathbf{F}^{(n)\prime}_{t|T}\right)\left(\sum\limits_{t=1}^{T}  \left(\widehat{\mathbf{C}}^{(n)\prime}\widehat{\mathbf{K}}^{(n)-1}\widehat{\mathbf{C}}^{(n)}\right)\star \left(\mathsf{f}^{(n)}_{t|T}\mathsf{f}^{(n)\prime}_{t|T} + \mathbf{\Pi}^{(n)}_{t|T}  \right)\right)^{-1},  \label{eq:load.est}\\
	\widehat{\mathbf{C}}^{(n+1)} &= \left(\sum\limits_{t=1}^{T} \mathbf{Y}'_t\widehat{\mathbf{H}}^{(n)-1}\widehat{\mathbf{R}}^{(n+1)}\mathbf{F}^{(n)}_{t|T} \right) \left(\sum\limits_{t=1}^{T}   \left(\widehat{\mathbf{R}}^{(n+1)\prime}\widehat{\mathbf{H}}^{(n)-1}\widehat{\mathbf{R}}^{(n+1)}\right) \star \left(\mathbb{K}_{k_1k_2} \left( \mathsf{f}^{(n)}_{t|T}\mathsf{f}^{(n)\prime}_{t|T} + \mathbf{\Pi}^{(n)}_{t|T}  \right) \mathbb{K}'_{k_1k_2}\right) \right)^{-1}.\nonumber
	\end{align}
	Clearly, the estimators of $\mathbf R$ and $\mathbf C$ depend on each other. Here, we choose to first estimate $\mathbf R$ conditional on the previous iteration estimator of $\mathbf C$. Since, as shown in the next section, all these estimators are consistent at any iteration $n\ge 0$, provided we correctly initialize the EM algorithm, we ensure, in this way, that the bilinear structure of the DMFM is preserved. We can of course equivalently choose to first estimate $\mathbf C$ conditional on the previous iteration estimator of $\mathbf R$. 
	
	By using $\widehat{\mathbf{R}}^{(n+1)}$ and  $\widehat{\mathbf{C}}^{(n+1)}$ we can compute the estimators of ${\mathbf{H}}$ and ${\mathbf{K}}$ which are enforced to be diagonal matrices in agreement with the considered misspecified log-likelihood given in \eqref{eq:llk}. Thus, for $i=1,\ldots, p_1$, we have:
%	\[
%	\setlength{\thinmuskip}{0mu}
%	\setlength{\medmuskip}{0mu}
%	\setlength{\thickmuskip}{0mu}
	\begin{align}
	[\widehat{\mathbf H}^{(n+1)}]_{ii} &= \frac{1}{Tp_2 }\sum\limits_{t=1}^{T}  \left[ \mathbf{Y}_t \widehat{\mathbf{K}}^{(n)-1}\mathbf{Y}'_t - \mathbf{Y}_t \widehat{\mathbf{K}}^{(n)-1} \widehat{\mathbf{C}}^{(n+1)} \mathbf{F}^{(n)\prime}_{t|T} \widehat{\mathbf{R}}^{(n+1)\prime} -  \widehat{\mathbf{R}}^{(n+1)} \mathbf{F}^{(n)}_{t|T} \widehat{\mathbf{C}}^{(n+1)\prime}\widehat{\mathbf{K}}^{(n)-1}\mathbf{Y}'_t \right. \nonumber\\
	& \qquad \qquad \left. + \left(\widehat{\mathbf{C}}^{(n+1)\prime}\widehat{\mathbf{K}}^{(n)-1} \widehat{\mathbf{C}}^{(n+1)}\right) \star \left(\left(\mathbb{I}_{k_2}\otimes\widehat{\mathbf{R}}^{(n+1)}\right) \left( \mathsf{f}^{(n)}_{t|T}\mathsf{f}^{(n)\prime}_{t|T} + \mathbf{\Pi}^{(n)}_{t|T}\right) \left( \mathbb{I}_{k_2}\otimes\widehat{\mathbf{R}}^{(n+1)}\right)'\right) \right]_{ii}, \nonumber
	\end{align}
	and $[\widehat{\mathbf H}^{(n+1)}]_{ij}=0$ if $i\ne j$. 
	Likewise, for $i=1,\ldots, p_2$, we have:
		\begin{align}
	[\widehat{\mathbf K}^{(n+1)}]_{ii} &= \frac{1}{Tp_1} \sum\limits_{t=1}^{T}  \left[ \mathbf{Y}'_t \widehat{\mathbf{H}}^{(n+1)-1} \mathbf{Y}_t - \mathbf{Y}_t'\widehat{\mathbf{H}}^{(n+1)-1} \widehat{\mathbf{R}}^{(n+1)} \mathbf{F}^{(n)}_{t|T} \widehat{\mathbf{C}}^{(n+1)\prime} - \widehat{\mathbf{C}}^{(n+1)}\mathbf{F}^{(n)\prime}_{t|T} \widehat{\mathbf{R}}^{(n+1)\prime}\widehat{\mathbf{H}}^{(n+1)-1}\mathbf{Y}_t \right. \nonumber\\
	&\left. \hspace*{-0.25in}+ \left(\widehat{\mathbf{R}}^{(n+1)\prime}\widehat{\mathbf{H}}^{(n+1)-1} \widehat{\mathbf{R}}^{(n+1)}\right) \star \left( \left(\mathbb{I}_{k_1}\otimes\widehat{\mathbf{C}}^{(n+1)} \right) \left(\mathbb{K}_{k_1k_2} \left(\mathsf{f}^{(n)}_{t|T}\mathsf{f}^{(n)\prime}_{t|T} + \mathbf{\Pi}^{(n)}_{t|T}  \right) \mathbb{K}'_{k_1k_2}\right) \left(\mathbb{I}_{k_1}\otimes\widehat{\mathbf{C}}^{(n+1)} \right)' \right) \right]_{ii}.\nonumber
	\end{align}	
	and $[\widehat{\mathbf K}^{(n+1)}]_{ij}=0$ if $i\ne j$. 
	As for the loadings,  the estimators of $\mathbf H$ and $\mathbf K$ depend on each other, and, in order to preserve the bilinear structure of the model, we first estimate $\mathbf H$ conditional on the previous iteration estimator of $\mathbf K$. 
%	We can of course equivalently choose to first estimate $\mathbf C$ conditional on the previous iteration estimator of $\mathbf R$.
	
Finally, while the individual MAR matrices are part of the model's structure, we opt for a more streamlined approach by estimating their Kronecker product directly. This choice simplifies implementation, particularly since the Kalman smoother in our algorithm is applied to vectorized data. Thus, at each iteration we compute the estimators $\widehat{\mathbf{B}\otimes\mathbf{A}}^{(n+1)}$ and $\widehat{\mathbf{Q}\otimes \mathbf{P}}^{(n+1)}$ (see Appendix \ref{ap:em} for their expressions and the expressions of the alternative estimators $\widehat{\mathbf{A}}^{(n+1)}$, $\widehat{\mathbf{B}}^{(n+1)}$, $\widehat{\mathbf{P}}^{(n+1)}$, and $\widehat{\mathbf{Q}}^{(n+1)}$). 
Clearly,  such  estimators do not satisfy the constraints imposed by the bilinear structure of the MAR. 
	Nevertheless, the asymptotic properties of the estimated loadings and factor matrices are unaffected by this choice. This is also confirmed by our simulations in Appendix \ref{app:sim}.\vspace{-.5cm}

%	\paragraph{Remark.} 

	\paragraph{Initialization.}\label{subsec:EMinit}
	We use the projected estimator (PE) of \cite{yu2021projected} to obtain initial estimates $\widehat{\mathbf{R}}^{(0)} $, $\widehat{\mathbf{C}}^{(0)}$, and $\widetilde{\mathbf {F}}_t$ of the row and column loadings and the factor matrices. Initial estimates of the idiosyncratic variances, i.e., the diagonals of $\widehat{\mathbf{K}}^{(0)}$, $\widehat{\mathbf{H}}^{(0)}$ can be obtained by computing the sample variances of the  PE residual idiosyncratic components. 
	 Last, in agreement with the M-step, pre-estimators of the MAR parameters can be computed without imposing the bilinear structure, i.e., by fitting a VAR on $\widetilde{\mathsf{f}}_t\equiv \text{vec}(\widetilde{\mathbf {F}}_t)$, thus giving $\widehat{\mathbf{B\otimes\mathbf{A}}}^{(0)} $ and $\widehat{\mathbf{Q}\otimes\mathbf{P}}^{(0)}$. Expressions of all these pre-estimators are in Appendix \ref{app:yu}.
		
	Finally, we initialize the Kalman filter by setting $\mathsf{f}^{(0)}_{0|0} = \mathbf{0}_{k_1k_2}$ at $n=0$ and $\mathsf{f}^{(n)}_{0|0} = \mathsf{f}^{(n-1)}_{0|T}$ at $n\geq1$, and $\mathbf{\Pi}^{(n)}_{0|0} = \mathbb{I}_{k_1k_2}$ at $n\geq0$.  \vspace{-.5cm}

	\paragraph{Convergence.}
	\label{subsec:EMconv}
	As in \cite{doz2012quasi}, we run the EM algorithm for a finite pre-specified number of iterations $n_{\textrm{\tiny max}}$. For a given tolerance level $\epsilon$, the algorithm is stopped at the first iteration $n^*<n_{\textrm{\tiny max}}$ such that 
	$
	\Delta\mathcal{L}_{n^*}={|\mathcal{L}(\mathsf{Y}_t,\widehat{\boldsymbol{\theta}}^{(n^*+1)})-\mathcal{L}(\mathsf{Y}_t;\widehat{\boldsymbol{\theta}}^{(n^*)})|}/{\frac{1}{2}|\mathcal{L}(\mathsf{Y}_T;\widehat{\boldsymbol{\theta}}^{(n^*+1)}) +\mathcal{L}(\mathsf{Y}_t;\widehat{\boldsymbol{\theta}}^{(n^*)})|} < \epsilon,
	$
	where $	\mathcal{L}(\mathsf{Y}_T;\boldsymbol{\theta})$ is the one-step-ahead prediction error log-likelihood, computed via the Kalman filter. \vspace{-.5cm}
%	which, in turn, is given by (up to constant terms and initial conditions):
%	\begin{align}
%		\mathcal{L}(\mathsf{Y}_T;\boldsymbol{\theta}) =&\, -\frac{1}{2} \sum\limits_{t=1}^{T} \log\left| \left(\mathbf{C}\otimes\mathbf{R}\right)\mathbf{\Pi}_{t|t-1}\left(\mathbf{C}\otimes\mathbf{R}\right)' + \mathbf{K}\otimes\mathbf{H} \right|\nonumber \\
%		& - \frac{1}{2} \sum\limits_{t=1}^{T} \left( \mathsf{y}_t - \left(\mathbf{C}\otimes\mathbf{R}\right)\mathsf{f}_{t|t-1}  \right)' \left(\left(\mathbf{C}\otimes\mathbf{R}\right)\mathbf{\Pi}_{t|t-1}\left(\mathbf{C}\otimes\mathbf{R}\right)' + \mathbf{K}\otimes\mathbf{H}\right)^{-1}  \left( \mathsf{y}_t - \left(\mathbf{C}\otimes\mathbf{R}\right)\mathsf{f}_{t|t-1}  \right),\nonumber
%	\end{align}
%	where $\mathsf{f}_{t|t-1}$ is the one-step-ahead prediction of $\mathsf{f}_t$ computed via the Kalman filter and $\mathbf{\Pi}_{t|t-1}$ is the associated MSE, both computed when using $\boldsymbol{\theta}$.
	
	\paragraph{Final estimators.}	Once the EM algorithm reaches convergence, we define the EM estimator of the model parameters as $\widehat{\boldsymbol{\theta}}\equiv \widehat{\boldsymbol{\theta}}^{(n^*+1)}$. In particular, the estimated factor loadings are given by $\widehat{\mathbf{R}}\equiv\widehat{\mathbf{R}}^{(n^*+1)}$ and $\widehat{\mathbf{C}}\equiv\widehat{\mathbf{C}}^{(n^*+1)}$. Finally, we obtain a final estimate of the factor matrices by running the Kalman smoother one last time, that is $\widehat{\mathbf{F}}_t \equiv \text{unvec}({\mathsf{f}^{(n^*+1)}_{t|T}})$ for any $t=1,\dots,T$.

%	and using the updated $\widehat{\mathbf{R}}^{(n+1)}$ and  $\widehat{\mathbf{C}}^{(n+1)}$ we can compute

	\section{Asymptotic results}\label{sec:asymp}

We make the following assumptions on the loadings and the factors.
	\begin{assumption}\textsc{(Common component).}
		\label{ass:common} 
		\begin{itemize}
			\subass \label{subass:loading} 
%			The loading matrices satisfies 
			$\lVert \mathbf{R} \rVert_{\textnormal{\tiny max}}\leq \bar{r}$ and $\lVert\mathbf{C}\rVert_{\textnormal{\tiny max}}\leq \bar{c}$, for finite positive reals $\bar{r}$ and $\bar{c}$, and, as $\min\{p_1,p_2\}\rightarrow\infty$, $\lVert p^{-1}_1\mathbf{R}'\mathbf{R} - \mathbb{I}_{k_1}\rVert \rightarrow 0$ and $\lVert p^{-1}_2\mathbf{C}'\mathbf{C} - \mathbb{I}_{k_2}\rVert \rightarrow 0$.
			\subass \label{subass:factor} 
			For all $t \in \mathbb Z$,
%			The factor matrix $\mathbf{F}_t$ satisfies 
			$\mathbb{E}\left[\mathbf{F}_t\right]=\mathbf{0}_{k_1,k_2}$, $\mathbb{E}\lVert\mathbf{F}_t\rVert^4<\infty$, and, as $T\to\infty$, 
			$
			T^{-1}\sum_{t=1}^{T} \mathbf{F}_t\mathbf{F}'_t \xrightarrow{p} \boldsymbol{\Sigma}_1$ and  $T^{-1}\sum_{t=1}^{T} \mathbf{F}'_t\mathbf{F}_t \xrightarrow{p} \boldsymbol{\Sigma}_2 
			$
			where $\boldsymbol{\Sigma}_i$ is a $k_i\times k_i$ matrix with distinct eigenvalues and spectral decomposition $\boldsymbol{\Sigma}_i=\boldsymbol{\Gamma}^F_i\boldsymbol{\Lambda}^F_i\boldsymbol{\Gamma}^{F\prime}_i$, for $i=1,2$. The factor numbers $k_1$ and $k_2$ are finite and independent of $T$, $p_1$, and $p_2$.
			\subass \label{subass:armatrix} 
%			The autoregressive matrices $\mathbf{A}$ and $\mathbf{B}$ are such that  
			$\lVert\mathbf{A}\otimes \mathbf{B}\rVert<1$.
			\subass \label{subass:arshock} For all $t \in \mathbb Z$, $\mathbb{E}\left[\mathbf{U}_t\right]=\mathbf{0}_{k_1,k_2}$, and $\mathbb{E}\left[\mathbf{U}_t\mathbf{U}'_t\right]=\mathbf{P}\textbf{\upshape tr}({\mathbf{Q}})$ and $\mathbb{E}\left[\mathbf{U}'_t\mathbf{U}_t\right]=\mathbf{Q}\textbf{\upshape tr}({\mathbf{P}})$, with $\mathbf{P}$ and $\mathbf{Q}$ $k_1\times k_1$ and $k_2\times k_2$ positive definite matrices such that $C^{-1}_P\leq [\mathbf P]_{ii}\leq  C_P$ and $C^{-1}_Q\leq [\mathbf Q]_{ii} \leq C_Q$, for some finite positive reals $C_P$ and $C_Q$ independent of $i$, and spectral decomposition $\mathbf{P}=\boldsymbol{\Gamma}^P\boldsymbol{\Lambda}^P\boldsymbol{\Gamma}^{P\prime}$ and $\mathbf{Q}=\boldsymbol{\Gamma}^Q\boldsymbol{\Lambda}^Q\boldsymbol{\Gamma}^{Q\prime}$. For all $t,k\in \mathbb{Z}$ with $k\not=0$, $\mathbf{U}_t$ and $\mathbf{U}_{t-k}$ are independent.
		\end{itemize}
	\end{assumption} 
	
	Assumptions \ref{subass:loading}-\ref{subass:factor} matches Assumptions B and C in \cite{yu2021projected}. Assumption \ref{subass:armatrix} guarantees the stationarity of the MAR(1) model. Assumption \ref{subass:arshock} requires the factor innovations $\{\mathbf{U}_t\}$ to have positive definite covariance matrix.
	
	We characterize the idiosyncratic component through the following assumption.
	
	\begin{assumption}\textsc{(Idiosyncratic component).}
		\label{ass:idio} 
		\begin{itemize}
			\subass \label{subass:idmix} The process $\{\mathsf{e}_t\}$ is $\alpha$-mixing, i.e. there exists $\gamma>2$ such that $\sum^{\infty}\limits_{h=1} \alpha(h)^{1-2/\gamma} \leq \infty$, with $\alpha({h})=\sup_{t\in\mathbb Z}\sup_{A\in\mathcal{F}^t_{-\infty},B\in\mathcal{F}^{\infty}_{t+h}} |\Pr(A\cap B)-\Pr(A)\cap\Pr(B)|$ and $\mathcal{F}^s_{\tau}$ the $\sigma$-field generated by $\{\mathsf{e}_t: \tau \leq t\leq s\}$.
			
			\end{itemize}
			There exists a finite positive real $c$, independent of $T$, $p_1$, and $p_2$, such that:
			\begin{itemize}
			\subass \label{subass:idshocks} for all $t=1,\dots,T$, $i=1,\dots,p_1$, $j=1,\dots,p_2$, $\mathbb{E}[e_{tij}] = 0$, $\mathbb{E}[|e_{tij}|^4] \leq c$,
			$
			\mathbb{E}\left[\mathbf{E}_t\mathbf{E}'_t\right]=\mathbf{H}\textbf{\upshape tr}({\mathbf{K}})$ and $\mathbb{E}\left[\mathbf{E}'_t\mathbf{E}_t\right]=\mathbf{K}\textbf{\upshape tr}({\mathbf{H}})$,
			with $\mathbf{H}$ and $\mathbf{K}$ $p_1\times p_1$ and $p_2\times p_2$ positive definite matrices such that $\textbf{\upshape tr}({\mathbf{H}})=p_1$, and  $C^{-1}_H\leq [\mathbf H]_{ii}\leq  C_H$ and $C^{-1}_K\leq [\mathbf K]_{ii} \leq C_K$, for some finite positive reals $C_H$ and $C_K$ independent of $i$;\vspace{-.8cm}
%			 \red{Moreover, $C^{-1}_K\leq k_{jj} \leq C_K$ for a constant $C_K$ independent from $j$, and $C^{-1}_H\leq h_{ii}\leq  C_H$ for a constant $C_H$ independent of $i$.}
%			\subass \label{subass:ecorr} for all $t=1,\dots,T$, $i =1,\dots,p_1$, $j=1,\dots,p_2$, and all $T,p_1,p_2\in\mathbb N$,
%			\[
%			\sum_{s=1}^{T} \sum_{l=1}^{p_1} \sum_{h=1}^{p_2} |\mathbb{E}\left[e_{tij}e_{slh}\right]| \leq c, \qquad  \sum_{l=1}^{p_1} \sum_{h=1}^{p_2} |\mathbb{E}\left[e_{tlj}e_{sih}\right]| \leq c,
%			\] 
%			\[\frac{1}{Tp_1p_2}\sum^T_{s,t=1}\sum^{p_1}_{j_1,j_2=1}\sum^{p_2}_{i_1,i_2=1}\left\lvert \mathbb{E}\left[e_{tj_1i_1}e_{tji}e_{sj_2i_2}e_{sji}\right]  \right\rvert \leq c;
%			\]
			\subass \label{subass:ecorr}  for all $T,p_1,p_2\in\mathbb N$,
				$
				(Tp_1p_2)^{-1} \sum_{t,s=1}^{T}\sum_{i_1,i_2=1}^{p_1} \sum_{j_1,j_2=1}^{p_2} |\mathbb{E}\left[e_{ti_2j_1}e_{si_1j_2} \right] | \leq c;
				$
			\subass \label{subass:ecorr2} 
			for all $t=1,\dots,T$, $i ,l_1=1,\dots,p_1$, $j,h_1=1,\dots,p_2$, and all $T,p_1,p_2\in\mathbb N$,\\
%			\begin{inparaenum}
%			For any $t\in\{1,\dots,T\}$, $i,l_1 \in \{1,\dots,p_1\}$, $j,h_1\in\{1,\dots,p_2\}$,
%\item [a.]		
$
			\sum_{s=1}^{T} \sum_{l_2=1}^{p_1} \sum_{h=1}^{p_2} |\mathbb{C}\text{\upshape ov}\left[e_{tij}e_{tl_1j},e_{sih}e_{sl_2h}\right]| \leq c$;
%			\item [b.]  
			$\sum_{s=1}^{T}\sum_{l=1}^{p_1} \sum_{h_2=1}^{p_2} |\mathbb{C}\text{\upshape ov}\left[e_{tij}e_{tih_1},e_{sl_2j}e_{sih_2}\right]| \leq c
			$;\\
%			\item [c.]
			$\sum_{s=1}^{T} \sum_{i,l_2=1}^{p_1} \sum_{j,h_2=1}^{p_2} \left(|\mathbb{C}\text{\upshape ov}\left[e_{tij}e_{tl_1h_1},e_{sij}e_{sl_2h_2}\right]| + |\mathbb{C}\text{\upshape ov}\left[e_{tl_1j}e_{tih_1},e_{sl_2j}e_{sih_2}\right]|\right) \leq c$.
%			\end{inparaenum}
		\end{itemize}
	\end{assumption} 
	
	Assumptions \ref{ass:idio} closely matches Assumptions A and D in \cite{yu2021projected}. In particular, Assumption
	\ref{subass:idmix} controls serial idiosyncratic dependence by requiring them to be $\alpha$-mixing (see also \citealp[Assumption 1]{chen2021statistical}). Assumption \ref{subass:idshocks} 
	 imposes finite absolute fourth moments and requires $\mathbf{H}$ and $\mathbf{K}$ to be positive definite matrices, and, since $\mathbf{H}$ and $\mathbf{K}$ are only determined up to a positive constant, and only their Kronecker product, $\mathbf{K} \otimes \mathbf{H}$, is uniquely defined, we impose the constraint $\textbf{tr}(\mathbf{H}) = p_1$ (see, e.g., \citealp{viroli2012matrix}). Assumption \ref{subass:ecorr} controls cross-sectional idiosyncratic dependence across both rows and columns (see also \citealp[Assumption 2]{chen2021statistical}), while Assumption \ref{subass:ecorr2} bounds fourth order cumulants to allow for consistent estimation of the second order moments.

	 Finally, the dependence between common and idiosyncratic components is controlled through the following assumption, which matches Assumption E in \cite{yu2021projected}.
	
	\begin{assumption}\textsc{(Components dependence).} There exists a finite positive real $c$, independent of $T$, $p_1$, and $p_2$, such that:
		\label{ass:dep} 
		\begin{itemize}
			\subass \label{subass:fecorr} 
			$
			\mathbb{E}[\lVert T^{-1/2}\sum_{t=1}^{T} (\mathbf{F}_t\mathbf{v}'\mathbf{E}_t\mathbf{w})\rVert^2_F] \leq c
			$
			for any deterministic vector $\mathbf{v}$ and $\mathbf{w}$ with $\lVert \mathbf{v}\rVert=1$ and $\lVert \mathbf{w} \rVert =1$;
			\subass \label{subass:fecorr2} for all $T,p_1,p_2\in\mathbb N$ and all $i= 1,\dots,p_1$, $j=1,\dots,p_2$, \\
			$\left\lVert\sum_{h=1}^{p_2}\mathbb{E}\left[\boldsymbol{\zeta}_{ij}\otimes\boldsymbol{\zeta}_{ih}\right] \right\rVert_{\textnormal{\tiny max}} \leq c$;
			$\left\lVert\sum_{l=1}^{p_1}\mathbb{E}\left[\boldsymbol{\zeta}_{ij}\otimes\boldsymbol{\zeta}_{lj}\right] \right\rVert_{\textnormal{\tiny max}} \leq c$,
%			\[
%			\red{\left\lVert\sum_{l=1}^{p_1}\sum_{h_2=1}^{p_2}\mathbb{C}\text{\upshape ov}\left[\boldsymbol{\zeta}_{ij}\otimes\boldsymbol{\zeta}_{ih_1},\boldsymbol{\zeta}_{lj}\otimes\boldsymbol{\zeta}_{lh_2}\right] \right\rVert_{\textnormal{max}} \leq c, \qquad \left\lVert\sum_{l_2=1}^{p_1}\sum_{h_1=1}^{p_2}\mathbb{C}\text{\upshape ov}\left[\boldsymbol{\zeta}_{ij}\otimes\boldsymbol{\zeta}_{l_1j},\boldsymbol{\zeta}_{ih}\otimes\boldsymbol{\zeta}_{l_2h}\right] \right\rVert_{\textnormal{max}} \leq c,}
%			\]
			
			and for all $T,p_1,p_2\in\mathbb N$ and all $i_1,l_1=1, \dots, p_1$, $j_1,h_1=1,\dots,p_2$, letting  $\boldsymbol{\zeta}_{ij}=\text{\upshape vec}(T^{-1/2}\sum_{t=1}^{T}\mathbf{F}_te_{tij})$,\\
 $\left\lVert\sum_{j_2,h_2=1}^{p_2}\mathbb{C}\text{\upshape ov}\left[\boldsymbol{\zeta}_{i_1j_1}\otimes\boldsymbol{\zeta}_{l_1h_1},\boldsymbol{\zeta}_{i_1j_2}\otimes\boldsymbol{\zeta}_{l_1h_2}\right] \right\rVert_{\textnormal{\tiny max}} \leq c$; 
$\left\lVert\sum_{i_2=1}^{p_1}\sum_{h_2=1}^{p_2}\mathbb{C}\text{\upshape ov}\left[\boldsymbol{\zeta}_{i_1j_1}\otimes\boldsymbol{\zeta}_{l_1h_1},\boldsymbol{\zeta}_{i_2j_1}\otimes\boldsymbol{\zeta}_{l_1h_2}\right] \right\rVert_{\textnormal{\tiny max}} \leq c$;\\
$\left\lVert\sum_{i_2,l_2=1}^{p_1}\mathbb{C}\text{\upshape ov}\left[\boldsymbol{\zeta}_{i_1j_1}\otimes\boldsymbol{\zeta}_{l_1h_1},\boldsymbol{\zeta}_{i_2j_1}\otimes\boldsymbol{\zeta}_{l_2h_1}\right] \right\rVert_{\textnormal{\tiny max}} \leq c$; $\left\lVert\sum_{l_2=1}^{p_1}\sum_{j_2=1}^{p_2}\mathbb{C}\text{\upshape ov}\left[\boldsymbol{\zeta}_{i_1j_1}\otimes\boldsymbol{\zeta}_{l_1h_1},\boldsymbol{\zeta}_{i_1j_2}\otimes\boldsymbol{\zeta}_{l_2h_1}\right] \right\rVert_{\textnormal{\tiny max}} \leq c$.

		\end{itemize} 
	\end{assumption} 

%	\paragraph{Remark.} 

	Under the above assumptions we can then derive theoretical results on the convergence rates of the EM estimators for the loading and factor matrices $\widehat{\mathbf{R}}$, $\widehat{\mathbf{C}}$, and $\widehat{\mathbf{F}}_t$, defined in Section \ref{subsec:em}. 
%	In particular, we show that, using the initialization setting discussed in Section \ref{subsec:EMinit}, the estimators converge in probability after one iteration of the EM algorithm, i.e., $n^*=0$.

	\begin{proposition}	\label{prop:EMloadCons}
		Recall the definitions $\widehat{\mathbf{R}}\equiv\widehat{\mathbf{R}}^{(n^*+1)}$ and $\widehat{\mathbf{C}}\equiv\widehat{\mathbf{C}}^{(n^*+1)}$ of the EM estimators of the loadings, with $n^*\geq 0$. Under Assumptions \ref{ass:common} through \ref{ass:dep}, there exist matrices $\widehat{\mathbf{J}}_1$ of size $k_1\times k_1$ and $\widehat{\mathbf{J}}_2$ of size $k_2\times k_2$ satisfying $\widehat{\mathbf{J}}_1\widehat{\mathbf{J}}'_1 \xrightarrow{p} \mathbb{I}_{k_1}$ and $\widehat{\mathbf{J}}_2\widehat{\mathbf{J}}'_2 \xrightarrow{p} \mathbb{I}_{k_2} $, such that, as $\min\left\{p_1,p_2,T\right\} \rightarrow \infty$,
		\begin{align}
		\min\left(\sqrt{Tp_1},\sqrt{Tp_2},{p_1p_2}\right) \left\lVert \frac{\widehat{\mathbf{R}} - \mathbf{R}\widehat{\mathbf{J}}_1}{\sqrt {p_1}} \right\rVert &= O_p \left( 1\right), \quad
		\min\left(\sqrt{Tp_2},\sqrt{Tp_1},{p_1p_2}\right) \left\lVert\frac{ \widehat{\mathbf{C}}- \mathbf{C}\widehat{\mathbf{J}}_2}{{\sqrt {p_2}}} \right\rVert = O_p \left( 1\right) ,\nonumber
		\end{align}
		and, for any given $i =1,\dots,p_1$ and $ j=1,\dots,p_2$, as $\min\left\{p_1,p_2,T\right\} \rightarrow \infty$,
		\begin{align}
		\min\left(\sqrt{Tp_1},\sqrt{Tp_2},{p_1p_2}\right) \left\lVert \widehat{\mathbf{r}}_{i} - \mathbf{r}_{i}\widehat{\mathbf{J}}_1 \right\rVert &= O_p \left(1 \right), \quad
		\min\left(\sqrt{Tp_2},\sqrt{Tp_1},{p_1p_2}\right) \left\lVert \widehat{\mathbf{c}}_{j} - \mathbf{c}_{j}\widehat{\mathbf{J}}_2 \right\rVert = O_p \left(1\right). \nonumber
		 \end{align}
	\end{proposition}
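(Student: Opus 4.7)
The natural strategy is induction on the EM iteration $n$. For the base case $n=0$, the projected-estimator initialization $(\widehat{\mathbf{R}}^{(0)},\widehat{\mathbf{C}}^{(0)},\widetilde{\mathbf{F}}_t)$ of \citet{yu2021projected} is consistent at the rate $\min(\sqrt{Tp_1},\sqrt{Tp_2},p_1p_2)$ under Assumptions \ref{ass:common}--\ref{ass:dep}, which are tailored to match theirs. The inductive claim to carry through iteration $n$ is that all entries of $\widehat{\boldsymbol{\theta}}^{(n)}$, together with the Kalman-smoothed factors $\mathsf{f}^{(n)}_{t|T}$, are consistent at that same rate; the task is then to show that the closed-form M-step updates preserve the rate at iteration $n+1$.

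For the inductive step, I would substitute $\mathbf{Y}_t = \mathbf{R}\mathbf{F}_t\mathbf{C}' + \mathbf{E}_t$ into the update \eqref{eq:load.est} for $\widehat{\mathbf{R}}^{(n+1)}$, splitting the numerator into a signal part $\mathbf{R}\sum_t \mathbf{F}_t \mathbf{C}'\widehat{\mathbf{K}}^{(n)-1}\widehat{\mathbf{C}}^{(n)}\mathbf{F}^{(n)\prime}_{t|T}$ and a noise part $\sum_t \mathbf{E}_t\widehat{\mathbf{K}}^{(n)-1}\widehat{\mathbf{C}}^{(n)}\mathbf{F}^{(n)\prime}_{t|T}$. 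Defining $\widehat{\mathbf{J}}_1$ as the signal coefficient of $\mathbf{R}$ multiplied by the inverse denominator, the property $\widehat{\mathbf{J}}_1\widehat{\mathbf{J}}_1'\xrightarrow{p}\mathbb{I}_{k_1}$ follows from $p_1^{-1}\mathbf{R}'\mathbf{R}\to\mathbb{I}_{k_1}$ (Assumption \ref{subass:loading}), the induction hypothesis on $\widehat{\mathbf{C}}^{(n)}$, $\widehat{\mathbf{K}}^{(n)}$, $\mathsf{f}^{(n)}_{t|T}$, and $T^{-1}\sum_t \mathbf{F}_t\mathbf{F}_t' \xrightarrow{p}\boldsymbol{\Sigma}_1$. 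The remainder $\widehat{\mathbf{R}}^{(n+1)} - \mathbf{R}\widehat{\mathbf{J}}_1$ then equals the noise part times the inverse denominator, and each of the three rates $\sqrt{Tp_1}$, $\sqrt{Tp_2}$, $p_1p_2$ corresponds to a distinct term in this expansion: the direct $\mathbf{E}_t$-$\mathbf{F}_t$ coupling controlled by Assumptions \ref{subass:fecorr}--\ref{subass:fecorr2}; the column-averaged interaction $p_2^{-1}\sum_j \mathbf{E}_t \widehat{\mathbf{K}}^{(n)-1}\widehat{\mathbf{c}}^{(n)}_j$ producing the $\sqrt{Tp_2}$ rate; and the quadratic $\mathbf{E}_t$-$\mathbf{E}_t$ contribution arising from the Kalman-smoother contamination in $\mathsf{f}^{(n)}_{t|T}$, tamed by Assumptions \ref{subass:ecorr}--\ref{subass:ecorr2}. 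The denominator must be shown to be non-singular in the limit: suitably scaled by $(p_2T)^{-1}$ it converges to a product built from the limits of $p_2^{-1}\mathbf{C}'\mathbf{K}^{-1}\mathbf{C}$ and $\boldsymbol{\Sigma}_1$, respecting the Hadamard structure, which follows from the diagonal bounds in Assumption \ref{subass:idshocks} together with Assumption \ref{subass:factor}. The pointwise bounds for $\widehat{\mathbf{r}}_i - \mathbf{r}_i\widehat{\mathbf{J}}_1$ arise by restricting the same decomposition to a single row, so that no $\sqrt{p_1}$ normalization appears while each noise term is of the same stochastic order.

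The update for $\widehat{\mathbf{C}}^{(n+1)}$ is handled by a fully symmetric argument that uses the just-computed $\widehat{\mathbf{R}}^{(n+1)}$; since the rate of $\widehat{\mathbf{R}}^{(n+1)}$ is established first, the bilinear chain is closed within a single iteration. The pointwise bounds on $\widehat{\mathbf{c}}_j - \mathbf{c}_j\widehat{\mathbf{J}}_2$ follow analogously.

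The main obstacle, I expect, is the auxiliary lemma establishing the Kalman-smoother rate $\mathsf{f}^{(n)}_{t|T} - (\widehat{\mathbf{J}}_2^{(n)-1}\otimes\widehat{\mathbf{J}}_1^{(n)-1})\mathsf{f}_t = O_p\bigl(\min(\sqrt{Tp_1},\sqrt{Tp_2},p_1p_2)^{-1}\bigr)$, which feeds the inductive step. The smoother is a nonlinear functional of $\widehat{\boldsymbol{\theta}}^{(n)}$ evaluated on misspecified diagonal idiosyncratic inputs $\dg{\widehat{\mathbf{K}}^{(n)}}\otimes\dg{\widehat{\mathbf{H}}^{(n)}}$, so one must perturb the filter recursion and show that the accumulated error across $t$ remains of the required order. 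This step is genuinely new relative to the vector case of \citet{doz2012quasi} because the Kronecker and Hadamard-product structures do not collapse to scalar quantities, and the induction must simultaneously carry consistency of $\widehat{\mathbf{A}}^{(n)}$, $\widehat{\mathbf{B}}^{(n)}$ (or rather their Kronecker product $\widehat{\mathbf{B}\otimes\mathbf{A}}^{(n)}$) and of the MAR innovation covariance in parallel with the loadings and idiosyncratic variances.
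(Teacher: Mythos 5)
Your overall architecture is the same as the paper's: induction over EM iterations with the projected estimator of \citet{yu2021projected} as the base case, substitution of the DGP into the closed-form update \eqref{eq:load.est}, separation into a signal part absorbed by $\mathbf{R}$ times a rotation and a noise part, invertibility of the Hadamard-structured denominator, and restriction of the same decomposition to a single row or column for the pointwise rates; this is exactly the structure of \eqref{eq:rh.1} and of Lemmas \ref{lemma:boundRC}, \ref{lemma:est.loadscale}, \ref{lemma:est.loadscale.1}, \ref{lemma:boundkh}, \ref{lemma:boundkh1}. One organizational difference: you define $\widehat{\mathbf{J}}_1$ anew at each iteration as the signal coefficient times the inverse denominator, which produces iteration-dependent rotations whose asymptotic orthogonality would have to be proved separately and reconciled with the single pair $(\widehat{\mathbf{J}}_1,\widehat{\mathbf{J}}_2)$ that must also serve in Proposition \ref{prop:EMfactorCons}; the paper instead fixes the rotation at the PE rotation for all $n$, so that only errors of the previous iterates appear in \eqref{eq:rh.1}. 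Also, the induction does not need all of $\widehat{\boldsymbol{\theta}}^{(n)}$ at the loading rate: the MAR parameters only enter at the cruder rate $O_p(\max\{T^{-1/2},(p_1p_2)^{-1}\})$ of Lemma \ref{lemma:boundAB}.

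The genuine gap is the auxiliary Kalman-smoother lemma, which you correctly identify as the crux but neither prove nor state in a usable (or attainable) form. The pointwise rate you posit, $\mathsf{f}^{(n)}_{t|T}-\widehat{\mathbf{J}}^{-1}\mathsf{f}_t=O_p\bigl(\min(\sqrt{Tp_1},\sqrt{Tp_2},p_1p_2)^{-1}\bigr)$, is too strong: even with known parameters the projection error is a cross-sectional average of $p_1p_2$ idiosyncratic terms and is only $O_p\bigl((p_1p_2)^{-1/2}\bigr)$, which is why Proposition \ref{prop:EMfactorCons} has $\sqrt{p_1p_2}$ and not $p_1p_2$. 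What actually feeds the loading recursion are time-averaged cross-products such as $T^{-1}\sum_t(\mathsf{f}^{(n)}_{t|T}-\widehat{\mathbf{J}}^{-1}\mathsf{f}_t)\mathsf{f}_t'$ and $(Tp_2)^{-1}\sum_t\sum_j e_{tij}(\mathsf{f}^{(n)}_{t|T}-\widehat{\mathbf{J}}^{-1}\mathsf{f}_t)$, which must be shown to be $O_p(\max\{(Tp_1)^{-1/2},(Tp_2)^{-1/2},(p_1p_2)^{-1}\})$; multiplying a pointwise smoother bound by factor or idiosyncratic norms loses the averaging over $t$ and does not deliver these rates. The paper establishes them in Lemma \ref{lemma:crossf} by splitting smoother minus filter, filter minus GLS projection, and projection minus truth, importing the first two differences (and the bound $\max_t\lVert\boldsymbol{\Pi}^{(n)}_{t|T}\rVert=O_p((p_1p_2)^{-1})$) from the vectorized state space via \citet{barigozzi2019quasi} — so no fresh perturbation analysis of the recursions is required, contrary to your expectation — and handling only the projection error with new matrix-specific bounds built from Assumptions \ref{subass:ecorr}--\ref{subass:fecorr2} and Lemmas \ref{lemma:est.loadscale}/\ref{lemma:est.loadscale.1}. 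Without a lemma of this averaged form your inductive step does not close.
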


		These rates are comparable with those of the PE, which we use to initialize the EM algorithm. In particular, from \citet[Theorem 3.1]{yu2021projected}, the PE are such that, as $\min\left\{p_1,p_2,T\right\} \rightarrow \infty$,
		\begin{align}
		&\min\left(\sqrt{Tp_1},{Tp_2},{p_1p_2}\right) \left\lVert \frac{\widehat{\mathbf{R}}^{(0)} - \mathbf{R}\widehat{\mathbf{J}}_1}{\sqrt {p_1}} \right\rVert = O_p \left( 1\right), \label{eq:yu1}\\
		&\min\left(\sqrt{Tp_2},{Tp_1},{p_1p_2}\right) \left\lVert\frac{ \widehat{\mathbf{C}}^{(0)}- \mathbf{C}\widehat{\mathbf{J}}_2}{{\sqrt {p_2}}} \right\rVert = O_p \left( 1\right). \label{eq:yu2}
		\end{align}	
		Consider, for example, the error we have for the initial estimator ${\widehat{\mathbf{R}}^{(0)}}$. While the first term $\sqrt {T p_1}$ is the same as the one we find in Proposition \ref{prop:EMloadCons}, we also have a slower comparable term $\sqrt{Tp_2}$ coming from the initial estimator ${\widehat{\mathbf{C}}^{(0)}}$ of the columns. 
		This is due to the fact that at each iteration we need also to estimate the idiosyncratic variances, which require estimating both the row and the column loadings first. 
		The price to be paid is however negligible or even null when $p_1$ and $p_2$ are of the same order of magnitude.

%		We also notice that although the estimation error related to the MAR parameters, does not play any asymptotic role, still, by taking explicitly into account the dynamics of the factors, we can have important gains in finite samples (see the results in Section \ref{sec:Sim}).
		
%	and for the column loadings is $\min\left({Tp_1},\sqrt{Tp_2},{p_1p_2}\right)$

	\begin{proposition}		\label{prop:EMfactorCons}
		Recall the definition $\widehat{\mathbf{F}}_{t}\equiv\text{\upshape unvec}(\widehat{\mathsf{f}}^{(n^*+1)}_{t|T})$ of the Kalman smoother estimator of the factors computed using the estimated parameters $\widehat{\boldsymbol\theta}^{(n^*+1)}$, with $n^*\geq 0$. Under Assumptions \ref{ass:common} through \ref{ass:dep}, and given $\widehat{\mathbf{J}}_1$ and $\widehat{\mathbf{J}}_2$ as defined in Proposition \ref{prop:EMloadCons},
		for any given $t=1,\dots,T$, as $\min\left\{p_1,p_2,T\right\} \rightarrow \infty$,
		\[
		\min\left(\sqrt{Tp_1},\sqrt{Tp_2},\sqrt{p_1p_2}\right)\left\lVert \widehat{\mathbf{F}}_t-\widehat{\mathbf{J}}^{-1}_1\mathbf{F}_t\widehat{\mathbf{J}}^{-\prime}_2 \right\rVert = O_p \left( 1 \right).
		\]
	\end{proposition}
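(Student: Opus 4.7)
The plan is to bound the Kalman--smoother error by comparing $\widehat{\mathsf f}^{(n^*+1)}_{t|T}\equiv\vect{\widehat{\mathbf F}_t}$ with an infeasible \emph{static} cross--sectional GLS projector that depends only on the estimated loadings and misspecified diagonal idiosyncratic covariance, and then invoking Proposition \ref{prop:EMloadCons} together with the cross--sectional idiosyncratic averaging provided by Assumptions \ref{ass:idio}--\ref{ass:dep}. Throughout I work in vectorized form: the misspecified state--space is $\mathsf y_t=(\mathbf C\otimes\mathbf R)\mathsf f_t+\mathsf e_t$, $\mathsf f_t=(\mathbf B\otimes\mathbf A)\mathsf f_{t-1}+\mathsf u_t$ with idiosyncratic covariance $\dg{\widehat{\mathbf K}}\otimes\dg{\widehat{\mathbf H}}$, so that $\widehat{\mathsf f}^{(n^*+1)}_{t|T}=\mathrm{Proj}_{\widehat{\boldsymbol\theta}^{(n^*+1)}}[\mathsf f_t\mid\mathsf Y_T]$, and the target quantity in vectorized form is $\lVert \widehat{\mathsf f}^{(n^*+1)}_{t|T}-(\widehat{\mathbf J}_2\otimes\widehat{\mathbf J}_1)^{-1}\mathsf f_t\rVert$.

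Step 1 is to define the infeasible static projector $\check{\mathsf f}_t=\widehat{\mathbf S}^{-1}(\widehat{\mathbf C}\otimes\widehat{\mathbf R})'\bigl(\dg{\widehat{\mathbf K}}^{-1}\otimes\dg{\widehat{\mathbf H}}^{-1}\bigr)\mathsf y_t$ with $\widehat{\mathbf S}=(\widehat{\mathbf C}'\dg{\widehat{\mathbf K}}^{-1}\widehat{\mathbf C})\otimes(\widehat{\mathbf R}'\dg{\widehat{\mathbf H}}^{-1}\widehat{\mathbf R})$. Substituting $\mathsf y_t=(\mathbf C\otimes\mathbf R)\mathsf f_t+\mathsf e_t$ and using Proposition \ref{prop:EMloadCons} to write $\widehat{\mathbf R}=\mathbf R\widehat{\mathbf J}_1+\mathbf \Delta_R$, $\widehat{\mathbf C}=\mathbf C\widehat{\mathbf J}_2+\mathbf \Delta_C$, the error $\check{\mathsf f}_t-(\widehat{\mathbf J}_2\otimes\widehat{\mathbf J}_1)^{-1}\mathsf f_t$ splits into (a) a loading--deviation term driven by $\mathbf \Delta_R,\mathbf \Delta_C$ and (b) an idiosyncratic--averaging term $\widehat{\mathbf S}^{-1}(\widehat{\mathbf C}\otimes\widehat{\mathbf R})'(\dg{\widehat{\mathbf K}}^{-1}\otimes\dg{\widehat{\mathbf H}}^{-1})\mathsf e_t$. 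By Assumption \ref{subass:loading} and the boundedness of the diagonals of $\widehat{\mathbf H},\widehat{\mathbf K}$ inherited from Assumption \ref{subass:idshocks} via the M--step updates, $\widehat{\mathbf S}$ has smallest eigenvalue of order $p_1p_2$; Proposition \ref{prop:EMloadCons} then yields $\text{(a)}=O_p(1/\min(\sqrt{Tp_1},\sqrt{Tp_2}))$, while Assumptions \ref{subass:ecorr}, \ref{subass:fecorr}, and \ref{subass:fecorr2} applied to the quadratic form in $\mathsf e_t$ yield $\text{(b)}=O_p(1/\sqrt{p_1p_2})$. Hence $\lVert\check{\mathsf f}_t-(\widehat{\mathbf J}_2\otimes\widehat{\mathbf J}_1)^{-1}\mathsf f_t\rVert=O_p\bigl(1/\min(\sqrt{Tp_1},\sqrt{Tp_2},\sqrt{p_1p_2})\bigr)$, already matching the target rate.

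Step 2 is to show that the full smoother equals the static projector up to a dynamic correction of the same order. Writing the smoother in information form gives the identity $\widehat{\mathsf f}^{(n^*+1)}_{t|T}=\check{\mathsf f}_t+\mathbf\Pi^{(n^*+1)}_{t|T}\mathbf g_t$, where $\mathbf g_t$ aggregates past--  and future--observation contributions filtered through the estimated $\widehat{\mathbf B}\otimes\widehat{\mathbf A}$ and $\widehat{\mathbf Q}\otimes\widehat{\mathbf P}$; stability of the estimated transition (Assumption \ref{subass:armatrix} plus consistency of the pre--estimators) keeps $\lVert\mathbf g_t\rVert=O_p(1)$. The critical estimate is $\lVert\mathbf\Pi^{(n^*+1)}_{t|T}\rVert=O_p(1/(p_1p_2))$: the information about $\mathsf f_t$ supplied by $\mathsf y_t$ alone is $(\widehat{\mathbf C}\otimes\widehat{\mathbf R})'(\dg{\widehat{\mathbf K}}^{-1}\otimes\dg{\widehat{\mathbf H}}^{-1})(\widehat{\mathbf C}\otimes\widehat{\mathbf R})$, whose smallest eigenvalue is of order $p_1p_2$ by Assumption \ref{subass:loading}; a Woodbury/Sherman--Morrison rewriting of the Kalman gain together with the stationary Riccati argument of \citet{doz2012quasi} propagates this bound to the smoother MSE uniformly in $t$. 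Combining the two steps by the triangle inequality and applying $\mathrm{unvec}$ yields Proposition \ref{prop:EMfactorCons}.

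The main obstacle is Step 2: the Riccati recursion is driven by the \emph{estimated} dynamic parameters and by the \emph{misspecified} diagonal idiosyncratic covariance, so stationary Kalman--filter theory does not apply off the shelf. My plan is to follow the perturbation argument of \citet{doz2012quasi} and \citet{barigozzi2019quasi}, comparing the filter driven by $\widehat{\boldsymbol\theta}^{(n^*+1)}$ to the one driven by $\boldsymbol\theta$, using Proposition \ref{prop:EMloadCons} and the consistency of the MAR pre--estimators to control the perturbation, and then exploiting the fact that the measurement signal--to--noise ratio of order $p_1p_2$ dominates any residual transition--parameter uncertainty. This ultimately delivers the $1/(p_1p_2)$ decay of $\mathbf\Pi^{(n^*+1)}_{t|T}$ and closes the argument.
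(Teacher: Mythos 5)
Your proposal is correct and follows essentially the same route as the paper: the paper also compares the smoother with exactly your static WLS projector (its $\mathsf{f}^{LS(n)}_{t}$ built from $\widehat{\mathbf R}^{(n)},\widehat{\mathbf C}^{(n)},\widehat{\mathbf H}^{(n)},\widehat{\mathbf K}^{(n)}$), bounds the static-projection error via Proposition \ref{prop:EMloadCons} and the idiosyncratic-averaging assumptions exactly as in your Step 1, and controls the dynamic correction at rate $O_p(1/(p_1p_2))$ by invoking the steady-state Kalman/Riccati perturbation lemmas of \citet{barigozzi2019quasi} (the same machinery you cite), the only cosmetic difference being that the paper splits the correction as smoother-minus-filter plus filter-minus-WLS rather than writing a single information-form identity $\widehat{\mathsf f}_{t|T}=\check{\mathsf f}_t+\mathbf{\Pi}_{t|T}\mathbf g_t$.
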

	
	The rate in \citet[Theorem 3.5]{yu2021projected} for the PE is instead $\min(\sqrt T p_1,\sqrt T p_2, \sqrt{p_1p_2})$. The first term $\sqrt{p_1p_2}$ is the same and corresponds to the case of known parameters, while the other rates, due to the estimation of the parameters, are slower because they inherit the slower rates of the EM estimator of the loadings loadings. Nevertheless, if ${p_2}/{T}\to 0$ and ${p_1}/{T}\to 0$ as $\min(p_1,p_2,T)\to\infty$, then the Kalman smoother and the PE have the same $\sqrt{p_1p_2}$ rate, which would be obtained by vectorizing the data and estimating the factors via projection onto the true loadings.

% LO INSERIREI SOLO SE LO CHIEDE UN REFEREE
%\red{VALUTA SE LEVARE However, the Kalman smoother might be more precise than the PE at least when the true idiosyncratic covariances are sparse enough. We refer to \citet[Proposition ???]{barigozzi2019quasi} for a formal proof of this statement, while here we just provide an intuition. Specifically, as $\min\left\{p_1,p_2,T\right\} \rightarrow \infty$, on the one hand, the Kalman smoother is asymptotically equivalent to the Kalman filter which, in turn, is asymptotically equivalent to a weighted least squares estimator (the error related to the MAR coefficients is negligible), with weights given by the idiosyncratic variances. On the other hand, the PE is  equivalent to an ordinary least squares estimator, hence, the potential efficiency gains of the Kalman smoother.}
%
%\red{VALUTA SE LEVARE  By analogy with the vector case we may conjecture that we could obtain sharper rates at the price of introducing more assumptions on the distribution of the data \citep[Propositions 2 and 3]{barigozzi2019quasi}. }

We conclude by noting that many applications in economics and finance may involve settings in which one dimension of the matrix $\mathbf Y_t$, say $p_1$, together with the sample size $T$, diverges ($p_1, T \to \infty$), while the other dimension, say $p_2$, remains fixed ($p_2 < \infty$). Alternatively, both dimensions may diverge ($p_1, p_2 \to \infty$) while the sample size is fixed. An example of the former case arises in asset pricing applications, where the number of traded assets can be large, whereas the number of liquidity or volatility proxies is typically finite. An example of the latter case is firm-level balance sheet data, where observations are available for many firms and variables but only at low frequency (e.g., annually). Our estimator remains consistent under both scenarios.

%	These rates match the results by \citet[Proposition 1]{barigozzi2019quasi} for the EM algorithm in the vector case and under weak moment assumption similar to ours. 
	
%\red{	SCRIVERE LA PARTE SU EM CON MISSPECIF E QML\\
%SCRIVERE DUE COSE SU NONSTAZIONARIETA' PERCHE' RISULTATO FATTORI NON CAMBIA\\
%%	, in general, we shall consider them as just approximations of those  conditional moments. 	, and under regularity conditions, the EM algorithm converges to a maximum of the log-likelihood \red{wu83}
%}

%\section{Extension to the case of missing data}\label{sec:miss}

	\section{The case of missing data}\label{subsec:banbura}
	If the data contains missing values, the estimation of the factors and their second moments with the Kalman smoother is still possible (see, e.g., \citealp[Section 6.4]{durbin2012time} for details). Now, since $\mathbb{E}_{\widehat{\boldsymbol{\theta}}_{(n)}} \left[\ell(\mathsf{F}_T;\boldsymbol{\theta}) | \mathsf{Y}_T \right]$ depends only on the factors but not on the data, its expression remains unchanged. Thus, the estimators of $\mathbf A\otimes \mathbf B$ and $\mathbf P\otimes \mathbf Q$ are also unchanged. However, $\mathbb{E}_{\widehat{\boldsymbol{\theta}}_{(n)}} \left[\ell(\mathsf{Y}_T|\mathsf{F}_T;\boldsymbol{\theta}) | \mathsf{Y}_T \right]$  depends on the data and therefore its expression is affected by missing values. It follows that we need to adjust the estimators of $\mathbf{R}$, $\mathbf{C}$, $\mathbf{H}$, and $\mathbf{K}$ in the M-step accordingly. To this end, we extend the procedure of \cite{banbura2014maximum} to the matrix setting. 
	
	Let $\mathbf{W}_t$ be a $p_1 \times p_2$ matrix with $(i,j)$ entry equal to zero if $y_{tij}$ is missing and equal to one otherwise. For any iteration $n\geq 0$, the estimators of the row and column loadings  are then modified to (see Appendix \ref{ap:EMmiss} for the derivation of these expressions):
	\[
	\setlength{\thinmuskip}{0mu}
	\setlength{\medmuskip}{0mu}
	\setlength{\thickmuskip}{0mu}
	\begin{split}
	\vect{\widehat{\mathbf{R}}^{(n+1)}} &= \left(\sum\limits_{t=1}^{T} \sum\limits_{s=1}^{p_1} \sum\limits_{q=1}^{p_1} \left( \left(\widehat{\mathbf{C}}^{(n)\prime}\mathbb{D}^{[s,q]}_{\mathbf{W}_t} \widehat{\mathbf{K}}^{(n)-1} \widehat{\mathbf{C}}^{(n)}\right) \star \left(\mathsf{f}^{(n)}_{t|T} \mathsf{f}^{(n)\prime}_{t|T} + \mathbf{\Pi}^{(n)}_{t|T} \right) \right) \otimes \left(\mathbb{E}^{(s,q)}_{p_1,p_1}\widehat{\mathbf{H}}^{(n)-1}\right) \right)^{-1} \\
	& \hspace*{1in} \times \left(\sum\limits_{t=1}^{T} \vect{\left(\mathbf{W}_t \circ \widehat{\mathbf{H}}^{(n)-1}\mathbf{Y}_t\widehat{\mathbf{K}}^{(n)-1}\right) \widehat{\mathbf{C}}^{(n)} \mathbf{F}^{(n)\prime}_{t|T}} \right), \\
	\hspace*{-0.25in}\vect{\widehat{\mathbf{C}}^{(n+1)}} &= \left(\sum\limits_{t=1}^{T} \sum\limits_{k=1}^{p_2} \sum\limits_{q=1}^{p_2}  \left( \left(\widehat{\mathbf{R}}^{(n+1)\prime}\mathbb{D}^{[s,q]}_{\mathbf{W}'_t}\widehat{\mathbf{H}}^{(n)-1} \widehat{\mathbf{R}}^{(n+1)} \right)\star \left(\mathbb{K}_{k_1k_2} \left( \mathsf{f}^{(n)}_{t|T}\mathsf{f}^{(n)\prime}_{t|T} + \mathbf{\Pi}^{(n)}_{t|T}  \right) \mathbb{K}'_{k_1k_2}\right) \right) \otimes \left(\mathbb{E}^{(s,q)}_{p_2,p_2} \widehat{\mathbf{K}}^{(n)-1} \right) \right)^{-1}  \\
	& \hspace*{1in} \times \left(\sum\limits_{t=1}^{T} \vect{\left(\mathbf{W}_t \circ \widehat{\mathbf{H}}^{(n)-1}\mathbf{Y}_t\widehat{\mathbf{K}}^{(n)-1}\right)' \widehat{\mathbf{R}}^{(n)} \mathbf{F}^{(n)}_{t}} \right).
	\end{split}
	\]
%	Details on the derivations of these expressions as well as expressions for the  estimators of the idiosyncratic variances are given in .
		
	Since $\mathbf{Y}_t$ contains missing observations, the initialization procedure described in Section~\ref{subsec:EMinit} cannot be applied directly. To address this, we introduce an additional preliminary step in which the missing entries are imputed using a suitable imputation method. To this end, an obvious choice consists in applying the imputation procedure for tensor factor models proposed by \cite{cen2024tensor}.
	
	Let $w_{t,i,j}$ be the entry $(i,j)$ of $\mathbf W_t$ and let $\eta$ be the fraction of missing data, i.e., such that 
	$(1-\eta)\le \min_{i=1,\ldots, p_1}\min_{i=j,\ldots, p_2}T^{-1}{\sum_{t=1}^T w_{t,i,j}}$. Then, from \citet[Corollary 1.1]{cen2024tensor}, we see that, under our assumptions, we have initial estimators such that,  as $\min\left\{p_1,p_2,T\right\} \rightarrow \infty$,
		\begin{align}
			&\min\left(\sqrt{Tp_2},p_1,\max\left(\sqrt T, \frac{1-\eta}{\eta}\right)\right) \left\lVert \frac{\widehat{\mathbf{R}}^{(0)} - \mathbf{R}\widehat{\mathbf{J}}_1}{\sqrt {p_1}} \right\rVert = O_p \left( 1\right), \label{eq:clifford1}\\
			&\min\left(\sqrt{Tp_1},p_2,\max\left(\sqrt T, \frac{1-\eta}{\eta}\right)\right) \left\lVert\frac{ \widehat{\mathbf{C}}^{(0)}- \mathbf{C}\widehat{\mathbf{J}}_2}{{\sqrt {p_2}}} \right\rVert = O_p \left( 1\right). \label{eq:clifford2}
			\end{align}
	These results extend the findings of \citet{xiong2023large} from the vector  to the matrix setting. They are comparable to the results for the initial PC-based estimator studied by \citet[Theorem 3.3]{yu2021projected} and \citet[Theorem 1]{chen2021statistical}. However, because no projection step is involved in the method of \citet{cen2024tensor}, the rates in \eqref{eq:clifford1}-\eqref{eq:clifford2} are not directly comparable to those for the PE analyzed in \citet[Theorem 3.1]{yu2021projected}.

%	Nevertheless, our results can be directly adapted to the setting without a projection step. From the discussion after Proposition \ref{prop:EMloadCons} it is clear that  In other words, 
	
	From the discussion after Proposition \ref{prop:EMloadCons} it is clear that, when dealing with missing data and initializing the EM algorithm with the estimator by \cite{cen2024tensor}, the consistency rates for our estimated loadings will be the minimum of the rates in \eqref{eq:clifford1} and \eqref{eq:clifford2}. Specifically, the loadings estimated via the EM algorithm are such that, as $\min\left\{p_1,p_2,T\right\} \rightarrow \infty$,
	\begin{align}
			&\min\left(\sqrt{Tp_1},\sqrt{Tp_2}, p_1,p_2, \max\left(\sqrt T,\frac{1-\eta}{\eta}\right)\right) \left\lVert \frac{\widehat{\mathbf{R}} - \mathbf{R}\widehat{\mathbf{J}}_1}{\sqrt {p_1}} \right\rVert = O_p \left( 1\right),\label{eq:conject1}\\ 
			& \min\left(\sqrt{Tp_1},\sqrt{Tp_2}, p_1,p_2,\max\left(\sqrt T, \frac{1-\eta}{\eta}\right)\right) \left\lVert\frac{ \widehat{\mathbf{C}}- \mathbf{C}\widehat{\mathbf{J}}_2}{{\sqrt {p_2}}} \right\rVert = O_p \left( 1\right). \label{eq:conject2}
	\end{align}
	The same rates hold for the single row or column estimators $\widehat{\mathbf r}_{i}$, $i=1,\ldots, p_1$, and $\widehat {\mathbf c}_{j}$, $j=1,\ldots, p_2$.
			
	By the same arguments, we expect the Kalman smoother computed using the EM estimator of the parameters to be a consistent estimator of the factors with a rate given by the minimum of the rates in \eqref{eq:clifford1} and \eqref{eq:clifford2} and $\sqrt{p_1p_2}$ corresponding to the rate for known parameters. Hence, as $\min\left\{p_1,p_2,T\right\} \rightarrow \infty$,
	\begin{align}
	&\min\left(\sqrt{Tp_1},\sqrt{Tp_2}, p_1,p_2,\max\left(\sqrt T, \frac{1-\eta}{\eta}\right),\sqrt{p_1p_2}\right) \left\lVert \widehat{\mathbf{F}}_t-\widehat{\mathbf{J}}^{-1}_1\mathbf{F}_t\widehat{\mathbf{J}}^{-\prime}_2 \right\rVert = O_p \left( 1 \right),\label{eq:conject3}
	\end{align}
	which matches the rate in \citet[Corollary 1.2]{cen2024tensor}.

	A formal proof of the statements \eqref{eq:conject1}, \eqref{eq:conject2}, and \eqref{eq:conject3} would follow \textit{verbatim} the same steps of the proofs of Propositions \ref{prop:EMloadCons} and \ref{prop:EMfactorCons}, respectively, but when using as initial estimators those satisfying \eqref{eq:clifford1}-\eqref{eq:clifford2} instead of the PE which satisfy \eqref{eq:yu1}-\eqref{eq:yu2}. Hence, such proof is omitted.
		
%\red{Finally, as an alternative, we could think of initializing our EM algorithm by means of the PE computed on a fully observed subset of the original matrix $\mathbf{Y}_t$. 
%This strategy is compared  with the are assessed in the simulation study presented in Section~\ref{sec:Sim}. }

	\section{The case of non-stationary data}\label{subsec:unitroot}

	In this section, we consider the case in which $\{\text{vec}(\mathbf Y_t)\}$ is no more covariance stationary but is instead an $I(1)$ process, following a matrix factor model as   in \eqref{eq:dmfm1}, but under the assumption that $\{\text{vec}(\mathbf F_t)\}$ is $I(1)$ and $\{\text{vec}(\mathbf E_t)\}$ is $I(0)$.
%	 (see below for a discussion in the case of non-stationary idiosyncratic components). 
	
	In a macroeconomic context, it is reasonable to assume that the elements of $\mathbf F_t$ are driven both by common trends, which are $I(1)$, and stationary components, which we could consider as common cycles (see, e.g., the applications in Section \ref{sec:EA} and in \citealp{barigozzi2023measuring}). In such a case, there exist a $k_1\times k_1$ invertible matrix $\bm{\mathcal R}$ and a $k_2\times k_2$ invertible matrix $\bm{\mathcal C}$ such that our model can be rewritten as (see Appendix \ref{sec:gianni} for the explicit expressions)
	\begin{equation}
	\mathbf Y_t =\mathbf R\mathbf F_t\mathbf C' +\mathbf E_t=\mathbf R \bm{\mathcal R}^{-1}  \bm{\mathcal R}\mathbf F_t \bm{\mathcal C}' \bm{\mathcal C}^{-1'}\mathbf C'+\mathbf E_t
	= \mathbf R_1 \mathbf G_{1t} \mathbf C_{1}' +  \mathbf R_0 \mathbf G_{0t} \mathbf C_{0}' +  \mathbf E_{t} ,\label{eq:gianni}
	\end{equation}
	where $\{\text{vec}(\mathbf G_{1t})\}$ is an $I(1)$ process with $\mathbf G_{1t}$ being $r_1\times r_2$, $\mathbf R_1$ being $p_1\times r_1$ and $\mathbf C_1$ being $p_2\times r_2$, while $\{\text{vec}(\mathbf G_{0t})\}$ and $\{\text{vec}(\mathbf E_t)\}$ are $I(0)$, with $\mathbf G_{0t}$ being $q_1\times q_2$, $\mathbf R_0$ being $p_1\times q_1$ and $\mathbf C_0$ being $p_2\times q_2$, so that $k_1=r_1+q_1$ and $k_2=r_2+q_2$.
%	Clearly, this implies that 
%	 $\{\text{vec}(\mathbf Y_t)\}$ is a cointegrated process with $p_1p_2-r_1r_2$ cointegrating relations, i.e., it is driven by the $r_1r_2$ common trends, or factors, $\{\text{vec}(\mathbf G_{1t})\}$, see, e.g.,  \citet{zhang2019identifying} for a treatment of cointegration and common trends in the high-dimensional vector case.
	 
The model on the rightmost side of \eqref{eq:gianni} is introduced by \citet{chen2025inference} who propose PC-type estimators of both $\mathbf G_{1t}$ and $\mathbf G_{0t}$.
% It is in fact a special case of the general model on the leftmost side of \eqref{eq:gianni} when assuming that all rows (columns) of $\mathbf F_t$ have the same cointegrating relations. 
Here, instead we focus on the estimation of the common component, i.e., $\mathbf S_t=\mathbf R\mathbf F_t\mathbf C'$, which does not require identification of the common trends. If, according to \eqref{eq:gianni}, we make the assumption that $\{\text{vec}(\mathbf F_t)\}$ is a cointegrated process driven by $r_1r_2$ common trends, then the correct specification for its dynamics is either via a VECM or a VAR in levels. Hence, the DMFM must be estimated by applying the EM algorithm and the Kalman smoother on the levels of the data, i.e., without differencing them in order to achieve stationarity.

%	 it is still possible to estimate the factor loading and factor matrices by applying our EM algorithm with an unrestricted VAR model in the levels of $\mathbf{F}_t$. 
Under the assumption that $\{\text{vec}(\mathbf E_t)\}$ is stationary, we can still adopt the same initialization as in the stationary case, i.e., we can still use the PE as described in Section~\ref{subsec:EMinit}.  From \citet[Theorem 4]{chen2025inference} we see that, under our assumptions plus the assumption of cointegrated factors, we have initial estimators such that,  as $\min\left\{p_1,p_2,T\right\} \rightarrow \infty$,
		\begin{align}
		&\min\left(T\sqrt{p_2},T^2,T^{3/2}\sqrt {p_1}\right) \left\lVert \frac{\widehat{\mathbf{R}}^{(0)} - \mathbf{R}\widehat{\mathbf{J}}_1}{\sqrt {p_1}} \right\rVert = O_p \left( 1\right), \label{eq:gianni1}\\
		&\min\left(T\sqrt{p_1},T^2,T^{3/2}\sqrt {p_2}\right) \left\lVert\frac{ \widehat{\mathbf{C}}^{(0)}- \mathbf{C}\widehat{\mathbf{J}}_2}{{\sqrt {p_2}}} \right\rVert = O_p \left( 1\right). \label{eq:gianni2}
		\end{align}
		These results generalize to the matrix case the results by \citet{bai2004estimating} for the vector case.
		
	Once again, our results can then be directly adapted to this setting. From the discussion after Proposition \ref{prop:EMloadCons} it is clear that the consistency rates for our estimated loadings will be the minimum of the rates in \eqref{eq:gianni1} and \eqref{eq:gianni2}, i.e., the loadings estimated via the EM algorithm are such that, as $\min\left\{p_1,p_2,T\right\} \rightarrow \infty$,
	\begin{align}
		&\min\left(T\sqrt{p_2},T^2,T\sqrt {p_1}\right) \left\lVert \frac{\widehat{\mathbf{R}} - \mathbf{R}\widehat{\mathbf{J}}_1}{\sqrt {p_1}} \right\rVert = O_p \left( 1\right),\label{eq:conject1ns}\\ 
		&\min\left(T\sqrt{p_1},T^2,T\sqrt {p_2}\right) \left\lVert\frac{ \widehat{\mathbf{C}}- \mathbf{C}\widehat{\mathbf{J}}_2}{{\sqrt {p_2}}} \right\rVert = O_p \left( 1\right). \label{eq:conject2ns}
	\end{align}
	The same rates hold for the single row or column estimators $\widehat{\mathbf r}_{i}$, $i=1,\ldots, p_1$, and $\widehat {\mathbf c}_{j}$, $j=1,\ldots, p_2$. A formal proof of the statements \eqref{eq:conject1ns} and \eqref{eq:conject2ns} would follow \textit{verbatim} the same steps of the proofs of Propositions \ref{prop:EMloadCons} and \ref{prop:EMfactorCons}, respectively, but when using as initial estimators those satisfying \eqref{eq:gianni1}-\eqref{eq:gianni2} instead of the PE which satisfy \eqref{eq:yu1}-\eqref{eq:yu2}. Hence, such proof is omitted.

	By the same arguments, we expect the Kalman smoother computed using the EM estimator of the parameters to be a consistent estimator of the factors with rate the minimum between the rates in \eqref{eq:gianni1} and \eqref{eq:gianni2}, divided by $\sqrt T$ due to non-stationarity, and $\sqrt{p_1p_2}$ corresponding to the rate for known parameters.  
	Hence, as $\min\left\{p_1,p_2,T\right\} \rightarrow \infty$,
	\begin{align}
	&\min\left(\sqrt{T p_1},\sqrt {T p_2}, T^{3/2}, \sqrt{p_1p_2}\right) \left\lVert \widehat{\mathbf{F}}_t-\widehat{\mathbf{J}}^{-1}_1\mathbf{F}_t\widehat{\mathbf{J}}^{-\prime}_2 \right\rVert = O_p \left( 1 \right),\label{eq:conject3I1}
	\end{align}
	which matches the rate in \citet[Theorem 5]{chen2025inference}. A formal proof of this statement is, however, less straightforward and, thus, it  should  be regarded  just as an informed conjecture. 

	We conclude with three remarks. First, in the presence of missing observations, the EM algorithm can still be applied using the update modifications discussed in Section~\ref{subsec:banbura}. Since no iputation method exists for the non-stationary case, we propose to initialize the algorithm by running the EM procedure on a fully observed subset of the original matrix $\mathbf{Y}_t$. Simulation results in Appendix~\ref{app:sim} confirm the effectiveness of this approach.
	
	Second, the number of common trends can be determined by following the same approach proposed in \citet[Theorem 7]{chen2025inference} and based on eigenvalue ratios of suitable second moment matrices.
	
	Third, if all or some of the idiosyncratic components were non-stationary due to the presence of stochastic trends, then, the above approach would not be consistent. Indeed, in that case the loadings should be estimated from the differenced data as explained in \citet{bai2004panic} and \citet{barigozzi2021large} in the vector case. In this case, the estimated loadings would retain the same rates as the PE for stationary data given in \eqref{eq:yu1} and \eqref{eq:yu2}. Moreover, the Kalman smoother should be run by adding as additional latent states all those idiosyncratic components which are $I(1)$, in a way similar to the approach proposed by \citet{banbura2014maximum} for serially correlated, but stationary, idiosyncratic components. This case is left for further research.

%	 
%	 However, if the interest is in capturing the comovements in the data we can just rewrite model \eqref{eq:gianni} as
%%	 
%%	 
%%	  then it is not necessary to identify $\mathbf G_{1t}$ and $\mathbf G_{0t}$ separately, but it is enough to recover the space spanned by :
%	 \[
%	 \mathbf Y_t 
%	 = \left[ \mathbf R_1\;\;\mathbf R_0\right]\left(\begin{array}{cc}
%	 \mathbf G_{1t}& \mathbf 0_{r_1,q_2}\\
%	 \mathbf 0_{q_1,r_2}& \mathbf G_{0t}
%	 \end{array}
%	 \right) \left[ 
%	 \begin{array}{cc} 
%	 \mathbf C_1'\\
%	 \mathbf C_0' 
%	 \end{array}\right] + \mathbf E_t 
%	 = \mathbf R \mathbf F_t\mathbf C' + \mathbf E_t,
%	 \]
%	 where $\mathbf R$ is $p_1\times k_1$, $\mathbf C$ is $p_2\times k_2$, 
%	 $\mathbf F_t$ is $k_1\times k_2$ with $k_1=r_1+q_1$ and $k_2=r_2+q_2$. 
%%	   $\{\text{vec}(\mathbf F_t)\}$ is a cointegrated process driven by $r_1r_2$ common trends. 

	\section{Simulation study}
	\label{sec:Sim}

	We perform Monte Carlo simulations in order to assess the finite sample properties of the proposed EM estimator and the Kalman smoother. 
	For $t=1,\dots,T$, we generate observations according to the following DMFM:
	\[
	\begin{array}{ll}
	\qquad \mathbf{Y}_t = \mathbf{R} \mathbf{F}_t \mathbf{C}' + \mathbf{E}_{t}, \quad  \mathbf{F}_t = \mathbf{A} \mathbf{F}_{t-1}\mathbf{B}' + \mathbf{U}_t, & \textbf{U}_t\sim \mathfrak{D}_{k_1,k_2}(\mathbf{0}_{k_1,k_2}, \mathbb{I}_{k_1}, \mathbb{I}_{k_2}), \\
	\qquad\mathbf{E}_t = \mathbf{D} \mathbf{E}_{t-1}\mathbf{G}' + \mathbf{V}_t, & \mathbf{V}_t\sim \mathfrak{D}_{p_1,p_2}(\mathbf{0}_{p_1,p_2}, \mathbf{H}, \mathbf{K}),
	\end{array}	
	\]
	 where $\mathfrak{D}_{k_1,k_2}(\mathbf{0}_{k_1,k_2}, \mathbf{P}, \mathbf{Q}) $ and $\mathfrak{D}_{p_1,p_2}(\mathbf{0}_{p_1,p_2}, \mathbf{H}, \mathbf{K})$ denote general matrix distributions of dimensions $k_1\times k_2$ and $p_1\times p_2$, centered on zero, and with covariance matrices $\mathbf{P}, \mathbf{Q}$ and  $\mathbf{H}, \mathbf{K}$, respectively. We consider $\mathfrak{D}$ either to be a matrix normal (N) or a matrix skew-t (St) distribution with 4 degrees of freedom.
	The loading matrices are such that $\left[\mathbf{R}\right]_{ij},\left[\mathbf{C}\right]_{ij} \sim \mathcal{U}(-1,1)$. 
	The matrix of latent factors follows a MAR(1) process with $\mathbf{B}=\mu \frac{\mathbf{B}^*}{\lvert \nu^{(1)}\left(\mathbf{B}^{*} \otimes \mathbf{A}\right) \rvert}$ where 
	$[\textbf{B}^*]_{ii}, [\textbf{A}]_{ii} \sim\mathcal{U}(0.7,0.9)$ and $[\textbf{B}^*]_{ij}, [\textbf{A}]_{ij} \sim \mathcal{U}(0,0.5)$ for $i\ne j$.
%	
%	\[
%	\left[\textbf{B}^*\right]_{ij}, \left[\textbf{A}\right]_{ij} \sim \begin{cases} \mathcal{U}(0.7,0.9), & i=j, \\ \mathcal{U}(0,0.5), & i \not=j, \end{cases} \qquad  \left[\mathbf{P}\right]_{ij},\left[\mathbf{Q}\right]_{ij}=\begin{cases} 1, & i=j,\\0, & i \not=j. \end{cases}
%	\]
	Note that $\mu$ defines the maximum eigenvalue of the matrix $\mathbf{B}\otimes\mathbf{A}$ allowing us to control whether the matrix factor process is stationary or not. In particular when $\mu=1$, the simulated factors are driven by one common $I(1)$ trend. Throughout, we set $k_1=2$ and $k_2=2$. 
	
	The idiosyncratic components follow a MAR(1) process with 
	\[
	\left[\textbf{D}\right]_{ij},\left[\textbf{G}\right]_{ij} =\begin{cases} \mathcal{U}(0,\delta), & i=j, \\0, & i \not=j, \end{cases}  \qquad \left[\textbf{H}\right]_{ij}, \left[\textbf{K}\right]_{ij} = \begin{cases}	\mathcal{U}(0.7,1.2), & i=j, \\ \tau^{|i-j|},& i \not=j, \end{cases}
	\]
	with $\tau$  and $\delta$ controlling the degree of cross-sectional and serial correlation, respectively. 
		
			For each performance measure considered, we report its average and standard deviation over 100 replications. We use the column space distance $\mathcal{D}(\mathbf{R}, \widehat{\mathbf{R}})$ and $\mathcal{D}(\mathbf{C}, \widehat{\mathbf{C}})$ to evaluate the loadings matrices estimators, which, for any $m\times n$ matrix $\mathbf{A}$, is defined as
$
	\mathcal{D}(\mathbf{A}, \widehat{\mathbf{A}}) = \big\lVert \widehat{\mathbf{A}} \left(\widehat{\mathbf{A}}'\widehat{\mathbf{A}}\right)^{-1}\widehat{\mathbf{A}}'-\mathbf{A}\left(\mathbf{A}'\mathbf{A}\right)^{-1}\mathbf{A}' \big\rVert.
$
	We also consider the mean squared error in recovering the signal $\mathbf{S}_t=\mathbf{R}\mathbf{F}_t\mathbf{C}'_t$, defined as
	$
	\textrm{MSE}_{\textbf{S}} = (Tp_1p_2)^{-1} \sum_{t=1}^{T} \lVert \widehat{\mathbf{S}}_t - \mathbf{S}_t  \rVert^2_{\textrm{F}},
	$
	where $\widehat{\mathbf{S}}_t= \widehat{\mathbf{R}}\widehat{\mathbf{F}}_t\widehat{\mathbf{C}}$  denotes the estimated signal, as described in Sections \ref{sec:QML} or  \ref{subsec:unitroot}.

	In Table \ref{tab:EMvsPCA} we compare the performance of the EM estimators for the loading and factor matrices with those of the PE both in the stationary and the $I(1)$ cases. The EM algorithm improves upon PE across all the different settings. Furthermore, in Figure \ref{fig:EMconv} we show for one replication the log-likelihood as function of the number of iterations of the EM algorithm. As expected the log-likelihood increases monotonically and the first few iterations seem to be the most important ones.		
	
	\begin{figure}[t!]
		\caption{Log-likelihood as a function of EM iterations.}
		\centering
		\vskip .2cm
		\begin{tabular}{cc}
				Stationary case & $I(1)$ case\\
		\includegraphics[width=0.3\textwidth]{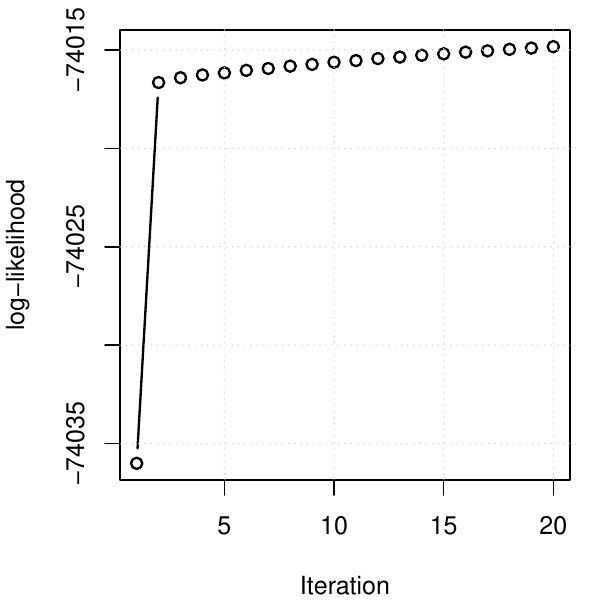}&
		\includegraphics[width=0.3\textwidth]{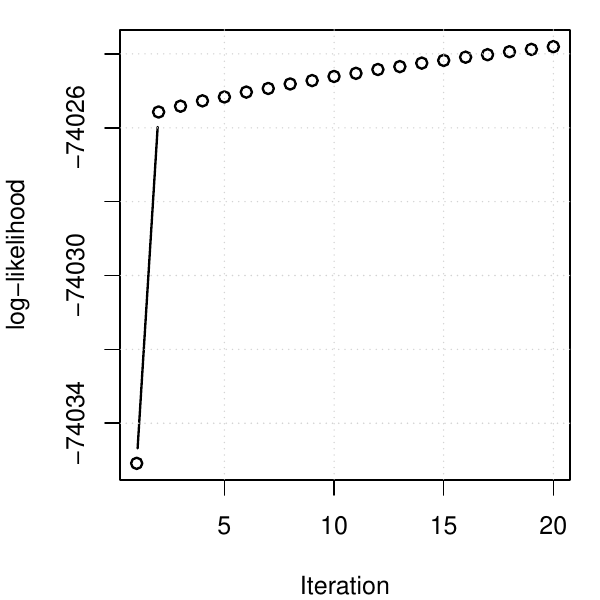}
		\end{tabular}  
		\label{fig:EMconv}
	\end{figure}
	
	\begin{table}[h!]
		\caption{Average and standard deviation (in parenthesis) of the ratio between the performance  of the EM estimator and PE over 100 replications, for each of $\mathcal{D}(\mathbf{R}, \widehat{\mathbf{R}})$, $\mathcal{D}(\mathbf{C}, \widehat{\mathbf{C}})$, and $\textrm{MSE}_{\textbf{S}}$. }
		\centering
		\begin{tabular}{cccccccccccc}
			\toprule
			& & & & & & \multicolumn{3}{c}{$T=100$} & \multicolumn{3}{c}{$T=400$}   \\
			\midrule
			$\mu$ & $\delta$ & $\tau$ & $\mathfrak{D}$ & $p_1$ & $p_2$ &  $\mathcal{D}(\mathbf{R}, \widehat{\mathbf{R}})$ & $\mathcal{D}(\mathbf{C}, \widehat{\mathbf{C}})$ & $\textrm{MSE}_{\textbf{S}}$ & $\mathcal{D}(\mathbf{R}, \widehat{\mathbf{R}})$ & $\mathcal{D}(\mathbf{C}, \widehat{\mathbf{C}})$ & $\textrm{MSE}_{\textbf{S}}$   \\ 
			\midrule
			\multirow{2}{*}{$0.7$} & \multirow{2}{*}{$0$} & \multirow{2}{*}{$0$} & \multirow{2}{*}{N}& 20 & 20 & 0.98 & 0.97 & 0.92& 0.98 & 0.96 & 0.91 \\
			&&&& &   & (0.05) & (0.05) & (0.03) & (0.05) & (0.05) & (0.01)  \\
			&&&& 10 & 30 & 0.98 & 0.96 & 0.90 & 0.96 & 0.96 & 0.90 \\
			&&&& &   & (0.09) & (0.05) & (0.03) & (0.10) & (0.05) & (0.01) \\
			\midrule
			 \multirow{2}{*}{$0.7$} & \multirow{2}{*}{$0.7$} & \multirow{2}{*}{$0.5$} & \multirow{2}{*}{N} & 20 & 20 & 0.80 & 0.71 & 0.73 & 0.74 & 0.65 & 0.75 \\
			&&&& &   & (0.07) & (0.08) & (0.04) & (0.06) & (0.06) & (0.02)  \\
			&&&& 10 & 30 & 0.87 & 0.68 & 0.70 & 0.82 & 0.63 & 0.75  \\
			&&&& &   & (0.10) & (0.08) & (0.05) & (0.10) & (0.06) & (0.03) \\
			\midrule
			\multirow{2}{*}{$0.7$} &\multirow{2}{*}{$0$} & \multirow{2}{*}{$0$} & \multirow{2}{*}{St} & 20 & 20 & 0.97 & 0.97 & 0.91 & 0.96 & 0.98 & 0.91  \\
			&&&& &   & (0.07) & (0.05) & (0.07)  & (0.06) & (0.06) & (0.03)\\
			&&&& 10 & 30 & 0.97 & 0.97 & 0.9 & 0.96 & 0.95 & 0.89   \\
			&&&& &   & (0.11) & (0.05) & (0.05) & (0.12) & (0.06) & (0.03) \\
			\midrule
			\multirow{2}{*}{$0.7$} &\multirow{2}{*}{$0.7$} & \multirow{2}{*}{$0.5$} & \multirow{2}{*}{St} &  20 & 20 & 0.9 & 0.86 & 0.95 & 0.92 & 0.9 & 1.01  \\
			&&&& &   &  (0.11) & (0.15) & (0.14) & (0.08) & (0.1) & (0.05)  \\
			&&&& 10 & 30 & 0.99 & 0.63 & 0.81 & 1.06 & 0.58 & 0.89   \\
			&&&& &   & (0.15) & (0.19) & (0.18) & (0.13) & (0.18) & (0.14)  \\
			\midrule
			\multirow{2}{*}{$1$} &\multirow{2}{*}{$0$} & \multirow{2}{*}{$0$} & \multirow{2}{*}{N}&  20 & 20 & 0.99 & 0.98 & 0.92 & 0.99 & 0.98 & 0.91 \\
			&&&&&& (0.06) & (0.06) & (0.02) &  (0.06) & (0.06) & (0.01) \\
			&&&& 10 & 30 &  0.97 & 0.97 & 0.9  & 0.96 & 0.97 & 0.9 \\
			&&&& &   & (0.1) & (0.04) & (0.03) & (0.14) & (0.05) & (0.02) \\
			\midrule
			\multirow{2}{*}{$1$} &\multirow{2}{*}{$0.7$} & \multirow{2}{*}{$0.5$} & \multirow{2}{*}{N} &  20 & 20 & 0.84 & 0.8 & 0.8 & 0.79 & 0.76 & 0.8 \\
			&&&& &   &  (0.07) & (0.07) & (0.03) & (0.07) & (0.07) & (0.02) \\
			&&&& 10 & 30 & 0.96 & 0.8 & 0.8 & 0.94 & 0.77 & 0.82   \\
			&&&& &   & (0.1) & (0.08) & (0.04) & (0.11) & (0.07) & (0.02) \\
			\bottomrule  
		\end{tabular}
		\label{tab:EMvsPCA}
	\end{table}
	
	We then introduce missing observations in the data generating process. 
%	In practice, datasets can present missing observations for several reasons and therefore display different missing patterns. 
	After simulating the matrix $\mathbf Y_t$ with no missing values as described above we introduce two patterns of missing observations widely seen in empirical application: (i) randomly missing, i.e., removing at each point in time observations of $\mathbf Y_t$ at random with a constant probability $\pi=\{25\%,50\%\}$; (ii) block missing, i.e., when a fixed portion $\pi=\{25\%,50\%\}$ of $\mathbf{Y}_t$ is removed for a given period of time. For the case of block missing we remove the bottom-right quarter and the right-half of $\mathbf{Y}_t$ for the first half of the time series when $\pi=25\%$ and $\pi=50\%$, respectively. 
	
	In this case, besides $\mathcal{D}(\mathbf{R}, \widehat{\mathbf{R}})$, $\mathcal{D}(\mathbf{C}, \widehat{\mathbf{C}})$, and $\textrm{MSE}_{\textbf{S}}$, we also investigate the goods of our imputation method by computing $
	\textrm{MSE}_{\textbf{Y}^{(0)}} =(Tp_1p_2)^{-1} \sum_{t=1}^{T} \lVert (\widehat{\mathbf{S}}_t - \mathbf{Y}_t)\circ (\mathbf{1}_{p_1,p_2}-\mathbf{W}_t)  \rVert^2_{\textrm{F}}$,
	where $\widehat{\mathbf{S}}_t= \widehat{\mathbf{R}}\widehat{\mathbf{F}}_t\widehat{\mathbf{C}}$ denotes the estimated signal, as described in Section \ref{subsec:banbura}, and $\mathbf{W}_t$ is the binary matrix indicating  observed entries.
	
	Following the discussion in Section \ref{subsec:banbura}, we adopt the imputation method proposed by \cite{cen2024tensor} to fill in missing values prior to initializing the EM algorithm. Because this method requires stationarity, we restrict the analysis to stationary settings. Table \ref{tab:EMvsPCAmiss} reports summary statistics for the relative performance of the EM estimator compared to the PE estimator applied to the imputed data. The results indicate that the EM algorithm yields systematically improved estimates over PE. Additional simulation results based on initialization using a balanced subpanel are in Appendix \ref{app:sim}.

		\begin{table}[h!]
		\caption{Average and standard deviation (in parenthesis) of ratio between the performance  of the EM estimator and PE  over 100 replications, for each of $\mathcal{D}(\mathbf{R}, \widehat{\mathbf{R}})$, $\mathcal{D}(\mathbf{C}, \widehat{\mathbf{C}})$, $\textrm{MSE}_{\textbf{S}}$, and $\textrm{MSE}_{\textbf{Y}^{(0)}}$. }
		\centering
		\begin{tabular}{cccccccccccc}
			\toprule
			& & & & \multicolumn{4}{c}{$T=100$} & \multicolumn{4}{c}{$T=400$}   \\
			\midrule
			$\mathfrak{D}$ & $\pi$ & $p_1$ & $p_2$ &  $\mathcal{D}(\mathbf{R}, \widehat{\mathbf{R}})$ & $\mathcal{D}(\mathbf{C}, \widehat{\mathbf{C}})$ & $\textrm{MSE}_{\textbf{S}}$ & $\textrm{MSE}_{\textbf{Y}^{(0)}}$& $\mathcal{D}(\mathbf{R}, \widehat{\mathbf{R}})$ & $\mathcal{D}(\mathbf{C}, \widehat{\mathbf{C}})$ & $\textrm{MSE}_{\textbf{S}}$ & $\textrm{MSE}_{\textbf{Y}^{(0)}}$   \\ 
			\midrule
			& &&& \multicolumn{8}{c}{\emph{Randomly missing}} \\
			\midrule
			\multirow{2}{*}{N} & \multirow{2}{*}{$25\%$} & 20 & 20 & 0.9 & 0.91 & 0.94 & 1.00 & 0.78 & 0.81 & 0.95 & 1.00   \\
			&& &  & (0.11) & (0.08) & (0.03) & (0.00) & (0.1) & (0.08) & (0.01) & (0.00)\\
			&& 10 & 30 & 0.72 & 0.92 & 0.89 & 1.00 & 0.48 & 0.87 & 0.89 & 1.00 \\
			&& &   & (0.15) & (0.06) & (0.03) & (0.00) & (0.08) & (0.06) & (0.02) & (0.00) \\
			\midrule
			\multirow{2}{*}{N} & \multirow{2}{*}{$50\%$} & 20 & 20 & 0.67 & 0.7 & 0.81 & 0.99 & 0.57 & 0.61 & 0.88 & 1.00 \\
			&& &  & (0.12) & (0.11) & (0.05) & (0) & (0.1) & (0.11) & (0.02) & (0.00) \\
			&& 10 & 30 &  0.53 & 0.72 & 0.77 & 0.99 & 0.32 & 0.67 & 0.82 & 0.99 \\
			&& &   &   (0.14) & (0.09) & (0.04) & (0) & (0.08) & (0.09) & (0.02) & (0.00) \\
			\midrule
			\multirow{2}{*}{St} &\multirow{2}{*}{$25\%$} & 20 & 20 & 0.77 & 0.89 & 0.89 & 0.99 & 0.65 & 1.01 & 0.93 & 0.99  \\
			&& &  & (0.14) & (0.16) & (0.11) & (0.02) & (0.12) & (0.17) & (0.06) & (0.00) \\
			&& 10 & 30 & 0.70 & 0.87 & 0.85 & 0.99 & 0.55 & 0.93 & 0.88 & 0.99 \\
			&& &   & (0.20) & (0.12) & (0.08) & (0.00)  & (0.14) & (0.06) & (0.05) & (0.00)  \\
			\midrule
			\multirow{2}{*}{St} & \multirow{2}{*}{$50\%$} & 20 & 20 & 0.43 & 0.55 & 0.62 & 0.97 & 0.37 & 0.63 & 0.77 & 0.99 \\
			&& &  & (0.12) & (0.14) & (0.14) & (0.05) & (0.09) & (0.17) & (0.12) & (0.02)  \\
			&& 10 & 30 & 0.41 & 0.52 & 0.61 & 0.96 &  0.32 & 0.69 & 0.76 & 0.98 \\
			&& &   & (0.16) & (0.14) & (0.13) & (0.04) &  (0.13) & (0.14) & (0.10) & (0.01) \\
			\midrule
			&&&& \multicolumn{8}{c}{\emph{Block missing}} \\
			\midrule
			\multirow{2}{*}{N} &\multirow{2}{*}{$25\%$} & 20 & 20 & 0.82 & 0.92 & 0.92 & 0.99 & 0.65 & 0.87 & 0.94 & 1.00  \\
			&& &  & (0.17) & (0.06) & (0.06) & (0.01) & (0.14) & (0.07) & (0.02) & (0.00) \\
			&& 10 & 30 &  0.86 & 0.97 & 0.91 & 1.00 & 0.67 & 0.95 & 0.9 & 1.00 \\
			&& &   &  (0.15) & (0.05) & (0.03) & (0.00) & (0.14) & (0.05) & (0.01) & (0.00)\\
			\midrule
			\multirow{2}{*}{N} &\multirow{2}{*}{$50\%$} & 20 & 20 & 0.82 & 0.92 & 0.92 & 0.99 & 0.73 & 0.7 & 0.87 & 0.99  \\
			&& &  & (0.17) & (0.06) & (0.06) & (0.01) & (0.1) & (0.12) & (0.04) & (0.00)  \\
			&& 10 & 30 & 0.76 & 0.88 & 0.82 & 0.98  & 0.54 & 0.84 & 0.85 & 0.99\\
			&& & &  (0.14) & (0.12) & (0.07) & (0.02) & (0.09) & (0.1) & (0.02) & (0.00) \\
			\midrule
			\multirow{2}{*}{St} &\multirow{2}{*}{$25\%$} & 20 & 20 & 0.87 & 0.98 & 0.91 & 0.98 & 0.74 & 1.09 & 0.96 & 0.99  \\
			&& &  & (0.16) & (0.12) & (0.14) & (0.04) & (0.13) & (0.08) & (0.03) & (0.00) \\
			&& 10 & 30 & 0.87 & 0.92 & 0.88 &  0.98 & 0.68 & 0.95& 0.90 & 0.99 \\
			&& &   &  (0.20) & (0.09) & (0.08) & (0.02) & (0.16) & (0.02) & (0.03) & (0.00) \\
			\midrule
			\multirow{2}{*}{St} &\multirow{2}{*}{$50\%$} & 20 & 20 & 0.86 & 0.69 & 0.69 & 0.93 & 0.70 & 0.80& 0.83 & 0.98 \\
			&& &  & (0.08) & (0.23) & (0.22) & (0.13) & (0.08) & (0.19) & (0.11) & (0.01) \\
			&& 10 & 30 & 0.81 & 0.77 & 0.74 & 0.95  & 0.59 & 0.85 & 0.78 & 0.9 8\\
			&&& & (0.21) & (0.21) & (0.2) & (0.08)& (0.14) & (0.12) & (0.11) & (0.02) \\
			\bottomrule 
		\end{tabular}
		\label{tab:EMvsPCAmiss}
	\end{table}

	\section{Empirical applications}	\label{sec:App}
	\paragraph{Forecasting volatilities.}
	%Low-frequency (monthly or quarterly) estimates of the volatility of financial assets are often employed in the analysis of the risk-return relationship \citep{goyal2003idiosyncratic} or as possible proxies of economic uncertainty \citep{bloom2009impact}. Researchers typically use the sum of squared daily returns over a month (or some modifications of it) as an estimate of the monthly volatility. However, as there are about only twenty daily observations in a month, this approach can yield poor volatility estimates. High-frequency (intra-day) observations offer a viable alternative to compute monthly volatility estimates. One can use high-frequency observations to estimate the daily integrated variance (realized measures), and then aggregate these estimates to obtain an estimate of the monthly volatility. 
	
	Despite the abundant use of high-frequency data in the financial econometrics literature, their availability is often limited to major equity indices or large U.S. stocks  \citep{bollerslev2018risk}, limiting the chance of building high-frequency-based estimates of volatility for a large number of traded companies. Given that volatility measures tend to covary across assets \citep{barigozzi2016generalized}, a natural question is whether high-frequency-based volatility measures on a set of assets can be used to improve volatility estimates for a set of assets for which only daily observations are available. 
	
	We collect daily returns and realized measures for $30$ assets listed in the S\&P500 under the Financial GICS sector. The data covers the period that goes from the beginning of 2006 to the end of 2010, covering the Great Financial Crisis.  We consider $10$ realized measures of the daily integrated volatility. In particular, we have 7 high-frequency measures based on intra-daily data\footnote{These are: 5-min and 15-min realized variance, autocorrelation-corrected 5-min realized variance \citep{hansen2006realized}, realized range \citep{christensen2007realized}, realized kernel \citep{barndorff2008designing}, pre-averaged realized variance \citep{jacod2009microstructure}, maximum likelihood realized variance \citep{xiu2010quasi}.} and three low-frequency proxies based on the opening (O), highest (H), lowest (L), and closing (C) daily prices (OHLC hereafter).\footnote{These are: the daily range $(H - L)^2/(4\log 2)$, the O/C adjusted daily range $0.5(H - L)^2 - (2\log 2 - 1)(C - O)^2$, and the O/C adjusted daily range $(H -C)(H -O)+(L-C)(L-O)$.} These measures are available only for half of the stocks in the sample, as we have access to high-frequency data solely for those assets. For the remaining stocks we only have daily data, and can therefore compute just the three OHLC variance  proxies.  We thus obtain a matrix time series of $p_2=10$ daily variance proxies on $p_1=30$ assets for $T=1259$ days, with a block of missing observations corresponding to 35\% of the total number of possible observations which is $p_1p_2T$.
	
	Our data can be modeled as a 2-layers hierarchical factor model which in turn is equivalent to a matrix factor model. First, let $\sigma^2_{i,t}$ be the $t$th day latent variance of the $i$th asset  and define $\boldsymbol{\widetilde{\sigma}}^2_{t}=(\widetilde{\sigma}^2_{1,t},\dots,\widetilde{\sigma}^2_{p_1,t})'$, with $\widetilde{\sigma}^2_{i,t}=\sigma^2_{i,t}/\bar{\sigma}^2_{i,t}$ and $\overline{\sigma}^2_{i}=(\prod^T_{t=1}\sigma^2_{i,t})^{{1}/{T}}$, for all $i=1,\dots,p_1$. We assume that the vector of centered latent log-variances for all assets, $\log(\boldsymbol{\widetilde{\sigma}}^2_{t})$, follows a factor model with $\mathbf{f}_{t}$ being a vector of $k_1$ common factors, e.g., representing the stock market, that is	
%	 follows,
\begin{align}
\log\left(\boldsymbol{\widetilde{\sigma}}^2_{t}\right) = \mathbf{R} \textbf{f}_{t} + \boldsymbol{\varepsilon}_{t},\label{eq:strato1}
\end{align}
where $\mathbf{R}$ is a $p_1 \times k_1$ loading matrix and $ \boldsymbol{\varepsilon}_{t}$ contains the idiosyncratic component for each asset.
% $\mathbf{f}_{t}$ is a vector of $k_1$ factors. %$\mathbf{f}_{t} = \mathbf{A}\mathbf{f}_{t-1} + \mathbf{u}_t$,with $\mathbf{u}_t$ a vector of innovations.

 Second, let $\mathbf{s}_{i,t}$ be the vector of $p_2$ variance proxies for the $i$th asset on the $t$th day and define $\widetilde{\mathbf{s}}_{i,t}=(\widetilde{s}_{i,1,t},\dots,\widetilde{s}_{1,p_2,t})'$, with $\widetilde{s}_{i,j,t}=s_{i,j,t}/\bar{s}_i$ and $\overline{s}_i=(\prod^T_{t=1}\prod_{j=1}^{p_2} s_{i,j,t})^{{1}/{(Tp_2)}}$. It is reasonable to assume that the centered vector of log-variance proxies of asset $i$ follows a one factor model, where the common factor is the latent volatility $\log (\widetilde{\sigma}^2_{i,t})$ of asset $i$, that is
 \begin{align}
 \log \left(\widetilde{\mathbf{s}}_{i,t}\right) = \textbf{c} \log \left(\widetilde{\sigma}^2_{i,t}\right) + \boldsymbol{\epsilon}_{i,t},\label{eq:strato2}
%  =  \textbf{c}  \mathbf{r}_i
% \log(\boldsymbol{\widetilde{\sigma}}^2_{t}) = \mathbf{R} \textbf{f}_{t} + \boldsymbol{\varepsilon}_{t}
 \end{align}
 where $\textbf{c}$ is a $p_2$-dimensional loading vector and $\boldsymbol{\epsilon}_{i,t}$ is a $p_2$-dimensional vector contaning the measurement errors of all variance proxies of asset $i$. 

% Let $\textbf{c}$ be a $p_2$ loading vector, we assume that $\log (\widetilde{\mathbf{s}}_{i,t}) = \textbf{c} \log (\widetilde{\sigma}^2_{i,t}) + \boldsymbol{\epsilon}_{i,t}$, for all assets $i=1,\ldots, p_1$,
% with $\boldsymbol{\epsilon}_{i,t}$ being a vector of measurement errors. 
%Letting $\mathbf{Y}_t = \left(\log \left(\widetilde{\mathbf{s}}_{1,t}\right)', \dots, \log \left(\widetilde{\mathbf{s}}_{p_1,t}\right)' \right)'$, 
It follows that the $p_2$ vector of observed centered log-transformed variance proxies for the $i$th asset follows a 2-layer factor model. Indeed, by substituting the transposed of
 \eqref{eq:strato1}
into \eqref{eq:strato2}, we have
 \begin{align}
 \log \left(\widetilde{\mathbf{s}}_{i,t}\right)' = \mathbf{r}_i' \mathbf f_t  \textbf{c}'+  {\varepsilon}_{i,t} \textbf{c} '+ \boldsymbol{\epsilon}_{i,t}',\label{eq:2layer}
%\mathbf{Y}_t =\left(  \left( \mathbf{r}_1' \mathbf f_t \textbf{c}'+   {\varepsilon}_{1,t}\textbf{c}'\right)\cdots \left( \mathbf{r}_{p_1}' \mathbf f_t\textbf{c}' +   {\varepsilon}_{p_1,t}\textbf{c}'\right) \right) '+ \left(\boldsymbol{\epsilon}_{1,t} \cdots \boldsymbol{\epsilon}_{p_1,t}  \right)'= 
 \end{align}
 where $\mathbf r_i'$ is the $i$th row of $\mathbf R$ and ${\varepsilon}_{i,t}$ is the $i$th element of $\boldsymbol{\varepsilon}_{t}$. By letting $\mathbf{Y}_t = \left(\log \left(\widetilde{\mathbf{s}}_{1,t}\right)', \dots, \log \left(\widetilde{\mathbf{s}}_{p_1,t}\right)' \right)'$, we see that \eqref{eq:2layer} is 
 equivalent to the matrix factor model in \eqref{eq:dmfm1} with $\mathbf{E}_t= \boldsymbol {\varepsilon}_{t}\textbf{c}' + 
 (\boldsymbol{\epsilon}_{1,t}'\cdots\boldsymbol{\epsilon}_{p_1,t}')'$.
% being a $p_1\times p_2$ matrix stacking the vectors of observed log-transformed realized measures on each of its rows
%  Using matrix notation, and assuming a VAR(1) for the latent factors, we can write our system of equations as
%	\[
%	\mathbf{Y}_t = \mathbf{R} \textbf{f}_{t} \textbf{c}' + \mathbf{E}_t, \qquad	\mathbf{f}_{t} = \mathbf{A}\mathbf{f}_{t-1} + \mathbf{u}_t,
%	\]
%	with $\mathbf{E}_t$ being a $p_1\times p_2$ matrix of idiosyncratic components. 
	For economic reasons we fix the number of columns factors  to $k_2=1$, indeed, this corresponds to the number of latent variance factor underlying all proxies. As for the number of row factors the eigenvalue-ratio criterion by \citet{cen2024tensor} suggests to set $k_1=1$.
	
%		\begin{figure}[t!]
%			\caption{MSE for the daily variance forecasts.}
%			\centering
%			\includegraphics[width=.5\textwidth]{}
%			\label{fig:EAplot}
%	\end{figure} 
	
	We then conduct a forecasting exercise. 
%	To assess whether the estimation of the latent variance of an asset for which we only have OHLC data can be improved by high-frequency variance estimates on other assets, we compare the forecasts obtained with our DMFM model, when modeling $\mathbf f_t$ as a VAR(1), to those of a dynamic factor model on the low-frequency variance measures of a single asset. 
	We define an in-sample window of $750$ observations for the models estimation and leave $509$ observations for the out-of-sample forecast evaluation. We estimate a DMFM on the in-sample window using our proposed EM algorithm modeling $f_t$, which, since $k_1=k_2=1$ is now a scalar, as an AR(1), and obtain one-step-ahead forecasts of $\widetilde{\sigma}^2_{i,t}$ as 
	$\widehat{\widetilde{\sigma}}^2_{i,t|t-1} 
	= \exp(\widehat{r}_{i} \, \widehat{A} \, \widehat{f}_{t-1|t-1})$, where $\widehat{r}_i$ is the estimated row loading for the $i$th asset and $\widehat{A}$ is the estimated autoregressive coefficient.
	For comparison, we also estimate an analogous DMFM on the in-sample window using the proposed EM algorithm, but restricted to the $15 \times 3$ sub-matrix of assets for which only low-frequency volatility measures are available, i.e., to a balanced subpanel of the considered dataset.
%	a one-factor dynamic factor model 
%%	with $k=1$ factor 
%	for the 15 assets with only OHLC data available, estimated via the EM algorithm as in \citet{doz2012quasi}.
%%	\[
%%	\begin{array}{rcl}
%%		\log(\mathbf{\widetilde{s}}_{i,t}) &=& \mathbf{\lambda}_i g_{i,t} + \mathbf{v}_{i,t} \\
%%		g_{i,t} &=& \beta g_{i,t-1} + u_{i,t}
%%	\end{array}
%%	\]
%	and obtain forecasts of $\widetilde{\sigma}^2_{i,t}$ as follows
%	\[
%		\widehat{\widetilde{\sigma}}^2_{i,t|t-1} = \exp\left(\widehat{g}_{i,t|t-1}\right).
%	\]
%\red{Non si capisce, hai fatto un factor model per ogni asset con solo 3 proxies? La label della figura dice vector, quindi pare che abbiamo vettorizzato la matrice e usato banbura modugno. Oppure hai vettorizzato una matrice 3x15 che usa solo low-freq?}

%Figure \ref{fig:EAplot} depicts the MSE values obtained from the two models when forecasting the daily range as a proxy for the latent variance.  Predictions obtained with the DMFM always outperform those of the stock-specific factor model except in one case, and differences are always statistically significant according to the \cite{diebold1995comparing} test of equal predictive accuracy. \vspace{-.5cm}

Table \ref{tab:MSEvol} reports the out-of-sample MSE ratios comparing the model estimated on the reduced matrix to that estimated on the full dataset, along with the p-values from the \cite{diebold1995comparing} test for each financial asset at the daily frequency. The out-of-sample MSE for the model estimated on the reduced matrix is higher for twelve out of fifteen assets, reaching up to 7\% in some cases. According to the Diebold-Mariano test of equal predictive accuracy, these differences are statistically significant for nine assets. This finding underscores the advantage of incorporating high-frequency volatility proxies from assets that covary with those for which we only have access to low-frequency measures, and thus shows the importance of having a method which allows us to deal with panels with missing observations.

\begin{table}[htbp]
	\centering
	\caption{Out-of-sample MSE ratios and $p$-values from the \cite{diebold1995comparing} test, comparing model performance on the reduced matrix versus the full dataset for each financial asset observed at the daily frequency; $^{*}$ indicates $p$-values below 0.10.}
	\begin{tabular}{lcl}
		\toprule
		Ticker & MSE Ratio & DM \\
		\hline
		AMP & $1.042$ & $0.094^{*}$ \\
		BEN & $1.042$ & $0.020^{*}$ \\
		CMA & $0.999$ & $0.530$ \\
		CME & $1.055$ & $0.073^{*}$ \\
		FITB & $1.031$ & $0.092^{*}$ \\
		HBAN & $1.008$ & $0.311$ \\
		ICE & $1.002$ & $0.441$ \\
		MCO & $1.029$ & $0.166$ \\
		MTB & $1.065$ & $0.069^{*}$ \\
		NDAQ & $1.072$ & $0.033^{*}$ \\
		NTRS & $0.995$ & $0.765$ \\
		SCHW & $0.981$ & $0.806$ \\
		TROW & $1.058$ & $0.046^{*}$ \\
		USB & $1.048$ & $0.041^{*}$ \\
		ZION & $1.034$ & $0.076^{*}$ \\
		\toprule
	\end{tabular}
	\label{tab:MSEvol}
\end{table}

\paragraph{Macroeconomic trends in the Euro Area.}\label{sec:EA}
We analyze a collection of macroeconomic indicators from EA countries.\footnote{The data is available at \href{https://zenodo.org/doi/10.5281/zenodo.10514667}{https://zenodo.org/doi/10.5281/zenodo.10514667}.} Specifically, we consider 39 real macroeconomic indicators across three categories: National Accounts, Labor Market Indicators, and Industrial Production and Turnover. These indicators are collected at either monthly or quarterly frequency for eight countries: Austria, Belgium, Germany, Spain, France, Italy, the Netherlands, and Portugal, resulting in a matrix-valued time series of dimensions $(p_1, p_2) = (8, 39)$. The dataset spans the period from January 2000 to November 2024 ($T = 299$). 

By applying the eigenvalue ratio criterion by \citet{yu2021projected} on the differenced data we find evidence of one row factor and three column factors, but we cannot say whether any of these is $I(1)$ or stationary. To this end we can instead apply the eigenvalue ratio approach proposed by \citet{chen2025inference} on the non-differenced data,  showing evidence of just one $I(1)$ common factor, i.e., a common trend.  Since the factor matrix is actually a 3-dimensional vector this implies that the process of latent factors is indeed cointegrated with two cointegrating relations. As explained before, and differently from \citet{chen2025inference}, here we are not interested in identifying the trend or the other factors separately, but we are interested in recovering the whole common component of the data, i.e., $\widehat{\mathbf{S}}_t= \widehat{\mathbf{R}}\widehat{\mathbf{F}}_t\widehat{\mathbf{C}}$. 
Hence, we can apply the methodology described in Section \ref{subsec:unitroot}. Moreover, since the considered dataset contains both monthly and quarterly varaibles, we apply our method when also imputing missing values as described in Section \ref{subsec:banbura}. Figure \ref{fig:MacroGDP} reports the GDP of Germany, France, Spain, and Italy (in black), which are quarterly, together with their estimated common components  $\widehat{\mathbf{S}}_t$ (in red), which are monthly time series. While the GDPs of Germany and France are strongly related to the common EA factors, Spain and Italy display more idiosyncratic behavior, hinting at a different level of commonality among EA countries.

%All time series are individually centered and standardized. 
%\magenta{Figure \ref{fig:MacroFactors} reports the estimated factors obtained using the EM algorithm with $(k_1, k_2) = (1, 3)$. A standard cointegration test by \citet{} returns 1 cointegrating relation, i.e., 2 common trends, thus supporting our assumption. LUCA: QUesto è da rimuovere? Ho notato hai commentato il grafico} Finally, 

%\red{fare test su numero fattori usando Yu et al ma sui livelli\\}
%\red{riportare evals di $T^{-2}\sum_t f_t f_t'$ se esce che o $k_1$ o $k_2$ e' 1. Altrimenti se esce che $f_t$ e' una matrice vettorizza prima di vedere evals. \\} 
%\red{Le figure spezzale e fai un file per ogni plot cosi le sistemiamo meglio nella pagina.  }

%\begin{figure}
%	\caption{Estimated factors for the European macroeconomic indicators}
%	\centering
%	\includegraphics[width=.7\textwidth]{./MacroFactors}
%	\label{fig:MacroFactors}
%\end{figure}

\begin{figure}
	\caption{Estimated GDP for selected countries}
	\centering
	\includegraphics[width=.7\textwidth]{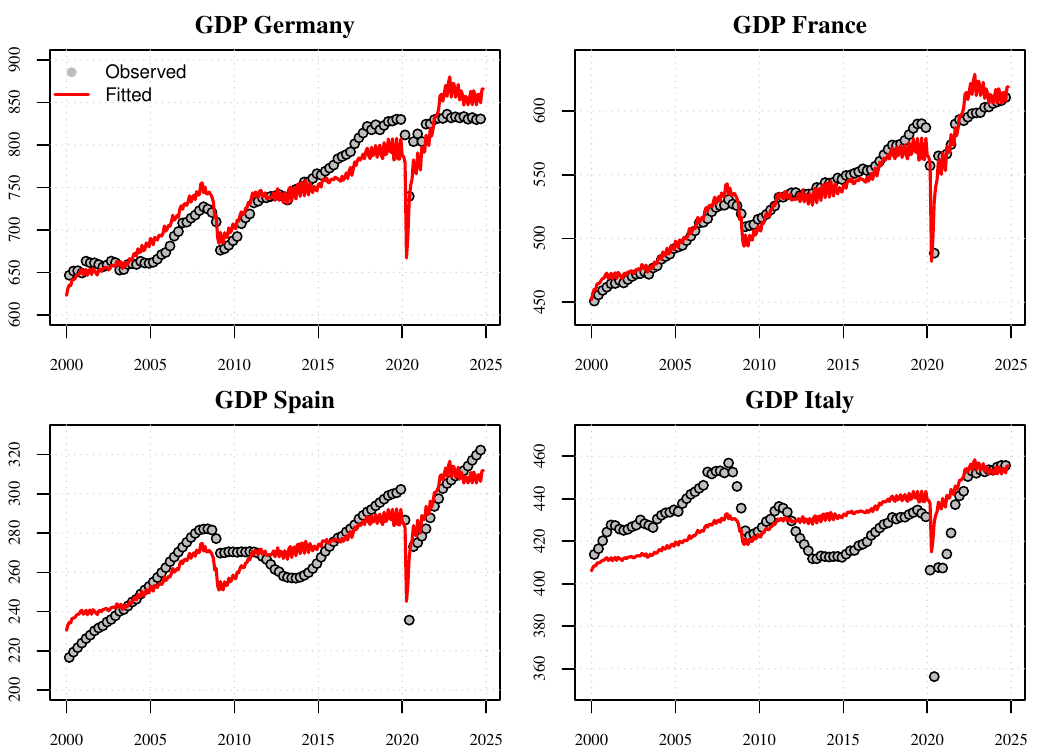}
	\label{fig:MacroGDP}
\end{figure}

%\begin{figure}
%	\caption{Estimated GDP growth rate for Germany and Italy}
%	\includegraphics{./MacroGDPdiffLines}
%	\label{fig:MacroGDPdiff}
%\end{figure}

\section{Conclusions}
This paper introduces a methodology for estimating a large approximate DMFM using the EM algorithm combined with Kalman filtering. 
%extending the vector-based approach proposed by \cite{doz2012quasi} to matrix data. 
We establish the consistency of the spaces spanned by the estimated loadings and factors as $\min\left(p_1, p_2, T\right) \to \infty$. Our estimation framework accommodates missing observations and unit root data.

Our approach can be readily adapted to include additional constraints on the model parameters \citep{chen2020constrained} or to explicitly model the dynamics of the idiosyncratic components which can be modeled as additional latent states \citep{banbura2014maximum}. Moreover, the proposed approach can be further and straightforwardly extended to tensor data of  higher order, enhancing its applicability to more complex data structures.

	\bibliographystyle{myplainnat}
	\bibliography{EMmatrix}

@article{chen2021autoregressive,
	title={Autoregressive models for matrix-valued time series},
	author={Chen, Rong and Xiao, Han and Yang, Dan},
	journal={Journal of Econometrics},
	volume={222},
	pages={539--560},
	year={2021},
}

@article{dempster1977maximum,
	title={Maximum likelihood from incomplete data via the {EM} algorithm},
	author={Dempster, Arthur P and Laird, Nan M and Rubin, Donald B},
	journal={Journal of the Royal Statistical Society Series B},
	volume={39},
	pages={1--22},
	year={1977},
}

@article{chen2021statistical,
	title={Statistical Inference for High-Dimensional Matrix-Variate Factor Models},
	author={Chen, Elynn Y and Fan, Jianqing},
	journal={Journal of the American Statistical Association},
	volume={118},
	year={2023},
	 pages={1038--1055}
}

@article{banbura2014maximum,
	title={Maximum likelihood estimation of factor models on datasets with arbitrary pattern of missing data},
	author={Ba{\'n}bura, Marta and Modugno, Michele},
	journal={Journal of Applied Econometrics},
	volume={29},
	pages={133--160},
	year={2014},
}

@book{durbin2012time,
	title={Time Series Analysis by State Space Methods},
	author={Durbin, James and Koopman, Siem Jan},
	year={2012},
	publisher={Oxford University Press}
}

@article{chernis2020three,
	title={A three-frequency dynamic factor model for nowcasting {C}anadian provincial {GDP} growth},
	author={Chernis, Tony and Cheung, Calista and Velasco, Gabriella},
	journal={International Journal of Forecasting},
	volume={36},
	pages={851--872},
	year={2020},
}

@article{chen2022modeling,
  title={Modeling Dynamic Transport Network with Matrix Factor Models: an Application to International Trade Flow},
  author={Chen, Elynn Y and Chen, Rong},
  journal={Journal of Data Science},
  volume={21},
  pages={490--507},
  year={2022},
}

@article{billio2021matrix,
	title={A matrix-variate $t$ model for networks},
	author={Billio, Monica and Casarin, Roberto and Costola, Michele and Iacopini, Matteo},
	journal={Frontiers in Artificial Intelligence},
	volume={4},
	pages={49},
	year={2021},
}

@article{hsu2021matrix,
  title={Matrix autoregressive spatio-temporal models},
  author={Hsu, Nan-Jung and Huang, Hsin-Cheng and Tsay, Ruey S},
  journal={Journal of Computational and Graphical Statistics},
  volume={30},
  pages={1143--1155},
  year={2021},
}

@article{billio2023bayesian,
  title={Bayesian dynamic tensor regression},
  author={Billio, Monica and Casarin, Roberto and Iacopini, Matteo and Kaufmann, Sylvia},
  journal={Journal of Business \& Economic Statistics},
  volume={41},
  pages={429--439},
  year={2023},
}

@article{kapetanios2021estimation,
	title={Estimation and inference for multi-dimensional heterogeneous panel datasets with hierarchical multi-factor error structure},
	author={Kapetanios, George and Serlenga, Laura and Shin, Yongcheol},
	journal={Journal of Econometrics},
	volume={220},
	pages={504--531},
	year={2021},
}

@article{wang2019factor,
	title={Factor models for matrix-valued high-dimensional time series},
	author={Wang, Dong and Liu, Xialu and Chen, Rong},
	journal={Journal of Econometrics},
	volume={208},
	pages={231--248},
	year={2019},
}

@article{chen2020constrained,
	title={Constrained Factor Models for High-Dimensional Matrix-Variate Time Series},
	author={Chen, Elynn Y and Tsay, Ruey S and Chen, Rong},
	journal={Journal of the American Statistical Association},
	volume={115},
	pages={775--793},
	year={2020},
}

@techreport{gao2023denoising,
	title={Denoising and Multilinear Projected-Estimation of High-Dimensional Matrix-Variate Factor Time Series},
	author={Gao, Zhaoxing and Tsay, Ruey S},
	type={arXiv:2309.02674},
	year={2023}
}

@techreport{chen2025inference,
  title={Inference in matrix-valued time series with common stochastic trends and multifactor error structure},
  author={Chen, Rong and Giannerini, Simone and Goracci, Greta and Trapani, Lorenzo},
  type={arXiv:2501.01925},
  year={2025}
}

@article{yu2021projected,
  title={Projected estimation for large-dimensional matrix factor models},
  author={Yu, Long and He, Yong and Kong, Xinbing and Zhang, Xinsheng},
  journal={Journal of Econometrics},
  volume={229},
  pages={201--217},
  year={2022},
}

@techreport{barigozzi2019quasi,
	title={Quasi maximum likelihood estimation and inference of large approximate dynamic factor models via the {EM} algorithm},
	author={Barigozzi, Matteo and Luciani, Matteo},
	type={arXiv},
	number={1910.03821v5},
	year={2024}
}

@article{barigozzi2023measuring,
  title={Measuring the output gap using large datasets},
  author={Barigozzi, Matteo and Luciani, Matteo},
  journal={The Review of Economics and Statistics},
  volume={105},
  pages={1500--1514},
  year={2023},
}

@article{merikoski2004inequalities,
	title={Inequalities for spreads of matrix sums and products},
	author={Merikoski, Jorma K and Kumar, Ravinder},
	journal={Applied Mathematics E-Notes},
	volume={4},
	pages={150--159},
	year={2004}
}

@article{xiong2023large,
	title={Large dimensional latent factor modeling with missing observations and applications to causal inference},
	author={Xiong, Ruoxuan and Pelger, Markus},
	journal={Journal of Econometrics},
	volume={233},
	pages={271--301},
	year={2023},
	publisher={Elsevier}
}

@article{bollerslev2018risk,
	title={Risk everywhere: Modeling and managing volatility},
	author={Bollerslev, Tim and Hood, Benjamin and Huss, John and Pedersen, Lasse Heje},
	journal={The Review of Financial Studies},
	volume={31},
	pages={2729--2773},
	year={2018},
	publisher={Oxford University Press}
}

@article{barigozzi2016generalized,
	title={Generalized dynamic factor models and volatilities: recovering the market volatility shocks},
	author={Barigozzi, Matteo and Hallin, Marc},
	journal={The Econometrics Journal},
	volume={19},
	number={1},
	pages={C33--C60},
	year={2016},
	publisher={Oxford University Press Oxford, UK}
}

@article{christensen2007realized,
	title={Realized range-based estimation of integrated variance},
	author={Christensen, Kim and Podolskij, Mark},
	journal={Journal of Econometrics},
	volume={141},
	number={2},
	pages={323--349},
	year={2007},
	publisher={Elsevier}
}

@article{barndorff2008designing,
	title={Designing realized kernels to measure the ex post variation of equity prices in the presence of noise},
	author={Barndorff-Nielsen, Ole E and Hansen, Peter Reinhard and Lunde, Asger and Shephard, Neil},
	journal={Econometrica},
	volume={76},
	number={6},
	pages={1481--1536},
	year={2008},
	publisher={Wiley Online Library}
}

@article{jacod2009microstructure,
	title={Microstructure noise in the continuous case: the pre-averaging approach},
	author={Jacod, Jean and Li, Yingying and Mykland, Per A and Podolskij, Mark and Vetter, Mathias},
	journal={Stochastic Processes and their Applications},
	volume={119},
	pages={2249--2276},
	year={2009},
	publisher={Elsevier}
}

@article{xiu2010quasi,
	title={Quasi-maximum likelihood estimation of volatility with high frequency data},
	author={Xiu, Dacheng},
	journal={Journal of Econometrics},
	volume={159},
	pages={235--250},
	year={2010},
}

@article{hansen2006realized,
	title={Realized variance and market microstructure noise},
	author={Hansen, Peter R and Lunde, Asger},
	journal={Journal of Business \& Economic Statistics},
	volume={24},
	number={2},
	pages={127--161},
	year={2006},
	publisher={Taylor \& Francis}
}

@article{xu2024quasi,
	title={Quasi maximum likelihood estimation for large-dimensional matrix factor models},
	author={Xu, Sainan and Yuan, Chaofeng and Guo, Jianhua},
	journal={Journal of Business \& Economic Statistics},
	note={available online},
	year={2024},
	publisher={Taylor \& Francis}
}

@techreport{yu2024dynamic,
	title={Dynamic Matrix Factor Models for High Dimensional Time Series},
	author={Yu, Ruofan and Chen, Rong and Xiao, Han and Han, Yuefeng},
	type={arXiv},
	number={2407.05624},
	year={2024}
}

@article{banbura2015conditional,
	title={Conditional forecasts and scenario analysis with vector autoregressions for large cross-sections},
	author={Ba{\'n}bura, Marta and Giannone, Domenico and Lenza, Michele},
	journal={International Journal of Forecasting},
	volume={31},
	number={3},
	pages={739--756},
	year={2015},
	publisher={Elsevier}
}

@article{bai2004estimating,
  title={Estimating cross-section common stochastic trends in nonstationary panel data},
  author={Bai, Jushan},
  journal={Journal of Econometrics},
  volume={122},
  pages={137--183},
  year={2004},
}

@article{bai2004panic,
  title={A {PANIC} attack on unit roots and cointegration},
  author={Bai, Jushan and Ng, Serena},
  journal={Econometrica},
  volume={72},
  pages={1127--1177},
  year={2004},
}

@article{bai2016maximum,
  title={Maximum likelihood estimation and inference for approximate factor models of high dimension},
  author={Bai, Jushan and Li, Kunpeng},
  journal={The Review of Economics and Statistics},
  volume={98},
  pages={298--309},
  year={2016},
 }

@article{doz2012quasi,
  title={A quasi-maximum likelihood approach for large, approximate dynamic factor models},
  author={Doz, Catherine and Giannone, Domenico and Reichlin, Lucrezia},
  journal={The Review of Economics and Statistics},
  volume={94},
  pages={1014--1024},
  year={2012}
}

@article{tipping1999probabilistic,
  title={Probabilistic principal component analysis},
  author={Tipping, Michael E and Bishop, Christopher M},
  journal={Journal of the Royal Statistical Society Series B},
  volume={61},
  pages={611--622},
  year={1999},
}

@article{cahan2023factor,
  title={Factor-based imputation of missing values and covariances in panel data of large dimensions},
  author={Cahan, Ercument and Bai, Jushan and Ng, Serena},
  journal={Journal of Econometrics},
  volume={233},
  pages={113--131},
  year={2023},
  publisher={Elsevier}
}

@article{cen2024tensor,
	title={Tensor time series imputation through tensor factor modelling},
	author={Cen, Zetai and Lam, Clifford},
	journal={Journal of Econometrics},
	volume={249},
	pages={105974},
	year={2025},
	publisher={Elsevier}
}

@article{he2023one,
  title={One-way or two-way factor model for matrix sequences?},
  author={He, Yong and Kong, Xinbing and Trapani, Lorenzo and Yu, Long},
  journal={Journal of Econometrics},
  volume={235},
  pages={1981--2004},
  year={2023},
}

@article{han2022rank,
  title={Rank determination in tensor factor model},
  author={Han, Yuefeng and Chen, Rong and Zhang, Cun-Hui},
  journal={Electronic Journal of Statistics},
  volume={16},
  pages={1726--1803},
  year={2022},
}

@article{yuan2023two,
  title={Two-way dynamic factor models for high-dimensional matrix-valued time series},
  author={Yuan, Chaofeng and Gao, Zhigen and He, Xuming and Huang, Wei and Guo, Jianhua},
  journal={Journal of the Royal Statistical Society Series B},
  volume={85},
  pages={1517--1537},
  year={2023},
 }

@article{viroli2012matrix,
	title={On matrix-variate regression analysis},
	author={Viroli, Cinzia},
	journal={Journal of Multivariate Analysis},
	volume={111},
	pages={296--309},
	year={2012},
	publisher={Elsevier}
}

@article{barigozzi2021large,
  title={Large-dimensional dynamic factor models: Estimation of impulse--response functions with {$I(1)$} cointegrated factors},
  author={Barigozzi, Matteo and Lippi, Marco and Luciani, Matteo},
  journal={Journal of Econometrics},
  volume={221},
  pages={455--482},
  year={2021},
 }

@article{wu1983convergence,
	title={On the convergence properties of the {EM} algorithm},
	author={Wu, CF Jeff},
	journal={The Annals of Statistics},
	volume={11},
	pages={95--103},
	year={1983},
	publisher={JSTOR}
}

@article{choukroun2006kalman,
	title={Kalman filtering for matrix estimation},
	author={Choukroun, Daniel and Weiss, Haim and Bar-Itzhack, Itzhack Y and Oshman, Yaakov},
	journal={IEEE Transactions on Aerospace and Electronic Systems},
	volume={42},
	number={1},
	pages={147--159},
	year={2006}
}

@article{zhang2019identifying,
  title={Identifying cointegration by eigenanalysis},
  author={Zhang, Rongmao and Robinson, Peter and Yao, Qiwei},
  journal={Journal of the American Statistical Association},
  volume={114},
  pages={916--927},
  year={2019},
  }

@article{diebold1995comparing,
 	title={Comparing predictive accuracy},
 	author={Diebold, Francis X and Mariano, Robert S},
 	journal={Journal of Business \& Economic Statistics},
 	volume={13},
 	number={3},
 	pages={134--144},
 	year={1995},
 	publisher={Taylor \& Francis}
 }
	
	\newpage
	\appendix
	\begin{center}
	{\LARGE \bf Supplementary Material}
	\end{center}
	\section{Notation and results on matrix operations}\label{ap:notation}
	We adopt the following notation.
	\begin{itemize}
	\item The Hadamard and Kronecker product are denoted with $\circ$ and $\otimes$, respectively. 
	\item We use $\vect{\cdot}$ and $\unvect{\cdot}$ to denote the vectorization operation and its inverse. 
	\item $\mathbb{I}_d$ denotes the $d\times d$ identity matrix. 
	\item  We use $\mathbf{1}_{m}$ and $\mathbf{0}_{m}$ to denote an $m$ dimensional vector filled with ones and zeros, and use  $\mathbf{1}_{m,n}$ and $\mathbf{0}_{m,n}$ to denote $m\times n$ matrices filled with ones and zeros, respectively.
	\item   Let $\mathbf{A}$ be an $n\times n$ matrix, the matrix $\dg{\mathbf{A}}$ denotes the matrix having the diagonal entries of $\mathbf{A}$ in the diagonal and zeros in the off-diagonal entries. 
	\item Let $\mathbf{A}$ be an $m\times n$ matrix, we denote with $\mathbb{D}_{\mathbf{A}}$ a $mn\times mn$ matrix stacking the columns of $\mathbf{A}$ on its diagonal, i.e. $\mathbb{D}_{\mathbf{A}} = \mathbb{I}_{mn} \vect{\mathbf{A}}$.
	\item  $\mathbb{E}^{(i,j)}_{m,n}$ denotes a standard basis $(m\times n)$ matrix with a one in the $(i,j)$ entry.
	\item $\mathbb{K}_{nm}$ denotes an $nm\times nm$ commutation matrix,
	$
	\mathbb{K}_{nm}=\sum_{i=1}^{n} \sum_{j=1}^{m} \mathbb{E}^{(i,j)}_{n,m} \otimes \mathbb{E}^{(i,j)\prime}_{n,m}.
	$
	\item 	The generic $(i,j)$ entry of a matrix $\mathbf{A}$ is denoted as $a_{ij}\equiv[\mathbf{A}]_{ij}$, while $\mathbf{a}_{i\cdot}\equiv[\mathbf{A}]_{i\cdot}$ and $\mathbf{a}_{.j}\equiv[\mathbf{A}]_{.j}$ denote the generic $i$th row and $j$th column of a matrix $\mathbf{A}$, respectively.
		\item 	Let $\mathbf{A}$ be a $mp\times nq$ matrix, we denote with $\textbf{A}^{[i,j]}$ the special partition of dimension $m\times n$, 
	$
	\textbf{A}^{[i,j]} = \sum^{m}_{r=1} \sum^{n}_{s=1} a_{(rp-p+i)(sq-q+j)} \mathbb{E}^{(r,s)}_{m,n},
	$
	for $i=1,\dots,p$ and $j=1,\dots,q$.
	\item Let $\mathbf{A}$ and $\mathbf{B}$ be $m\times n$ and $mp\times nq$ matrices such that $\textbf{B}$ can be partitioned into $mn$ sub-matrices of dimension $p\times q$. The star product between $\mathbf{A}$ and $\mathbf{B}$ is defined as
	$
	\mathbf{A}\star \mathbf{B} = \sum^m_{i=1} \sum_{j=1}^{n} a_{ij} \mathbf{B}^{(p,q)}_{ij},
	$
	where $\mathbf{B}^{(p,q)}_{ij}$ denotes the $ij$th block of dimension$p\times q$ of the matrix $\mathbf{B}$.
	\item  We denote by $\mathcal{U}(a,b)$ the uniform distribution on the interval $(a,b)$, $\mathcal{N}(\mathbf{A},\boldsymbol{\Sigma})$ the normal distribution with mean $\mathbf{A}$ and covariance matrix $\boldsymbol{\Sigma}$, and by  $\mathcal{MN}_{m,n}(\mathbf{A},\boldsymbol{\Sigma},\boldsymbol{\Omega})$ the $(m\times n)$ matrix normal distribution with mean $\mathbf{A}$, row covariance $\boldsymbol{\Sigma}$, and column covariance $\boldsymbol{\Omega}$. 
	\item 	We denote as $\nu^{(k)}(\mathbf{A})$ the $k$th largest eigenvalue of a generic squared matrix $\mathbf{A}$.
	%, and by $\rho(\mathbf{A})$ its spectral radius
	 The matrix norm induced by a vector $p$-norm is denoted as $\lVert \mathbf{A} \rVert_p$, with $\lVert \mathbf{A} \rVert$ the spectral norm. The Frobenious norm is denoted $\lVert \mathbf{A} \rVert_F = \sqrt{\sum\limits^n_{i=1}\sum\limits^n_{j=1}a_{ij}}=\textbf{tr}(\mathbf{A}\mathbf{A}')$. The max norm is denoted as $\lVert \mathbf{A}\rVert_{\textrm{\tiny max}}=\max\limits_{ij} a_{ij}$. 
	 \item 	 The $o_p$ is for convergence to zero in probability and $O_p$ is for stochastic boundedness. For two random series, $X_n$ and $Y_n$, $X_n \lesssim Y_n$ means that $X_n = O_p(Y_n)$, and $X_n \gtrsim Y_n$ means that $Y_n = O_p(X_n)$. The notation $X_n \asymp Y_n$ means that $X_n \lesssim Y_n$ and $X_n \gtrsim Y_n$.
\item We use $\mathbb{E}[\cdot]$ to denote the expectation with respect to the true unknown distribution.
	\end{itemize}

We make also use of the following matrix results. Recall that $\mathbb{K}_{nm}$ is the $(nm \times nm)$ commutation matrix. Let $\textbf{X}$, $\textbf{Y}$, and $\textbf{Z}$ be $m\times n$, $n\times p$ and $p\times q$ matrices,  
	\begin{eqnarray}
	\vect{\textbf{X}\textbf{Z}\textbf{Y}} &=& \left(\textbf{Y}' \otimes \textbf{X}\right) \vect{\textbf{Z}}, \label{eq:vectrick} \\
	\vect{\textbf{X}'}  &=& \mathbb{K}_{mn}\vect{\textbf{X}}, \label{eq:vectrasp}\\
	\textbf{X}\textbf{Z}\textbf{Y} &=& \textbf{Z} \star \left(\vect{\textbf{X}}\vect{\textbf{Y}'}'\right), \label{eq:startrick} \\
	\textbf{tr}(\textbf{X}(\textbf{Y}\circ \textbf{Z})) &=&\textbf{tr}((\textbf{X}'\circ \textbf{Y})' \textbf{Z}). \label{eq:hadamtrick}
	\end{eqnarray}	
	Let $\mathbf{A}$ be a $mp\times nq$ matrix. Recall that $\textbf{A}^{[i,j]} = \sum^{m}_{r=1} \sum^{n}_{s=1} a_{(rp-p+i)(sq-q+j)} \mathbb{E}^{(r,s)}_{m,n}$, we can decompose $\mathbf{A}$ as follows
	\begin{equation}
	\mathbf{A} = \sum^p_{i=1} \sum^q_{j=1} \textbf{A}^{[i,j]} \otimes \mathbb{E}^{(ij)}_{p,q} .	 
	\label{eq:matdec}
	\end{equation}
		
	\section{Details on the EM algorithm}\label{app:EM}

	\subsection{Derivation of the expected likelihoods}
	\label{ap:ellk}
%	Consider the following 
%%	In Section \ref{subsec:em}, we write the 
% Gaussian log-likelihood conditional on the data $\mathsf{Y}_T$ and where expectations are computed using the estimated parameters $\widehat{\boldsymbol{\theta}}$:
%	\[
%	\mathcal{Q}(\boldsymbol{\theta}, \widehat{\boldsymbol{\theta}}) = \mathbb{E}_{\widehat{\boldsymbol{\theta}}}\left[\ell\left(\mathsf{Y}_T |\mathsf{F}_T; \boldsymbol{\theta}\right) |\mathsf{Y}_T \right] +  \mathbb{E}_{\widehat{\boldsymbol{\theta}}}\left[\ell\left(\mathsf{F}_T; \boldsymbol{\theta}\right) |\mathsf{Y}_T \right].
%	\]
	The expressions of the expected log-likelihoods stated in terms of the data in its original matrix form, which are given in \eqref{eq:llmat1} and \eqref{eq:llmat2} (up to constant terms and initial conditions), 
	are obtained from the usual expressions for vectorized data as follows (here we consider expectations computed using a generic estimator of the parameters $\widehat{\boldsymbol{\theta}}$):
	\[
	\setlength{\thinmuskip}{0mu}
	\setlength{\medmuskip}{0mu}
	\setlength{\thickmuskip}{0mu}
	\begin{array}{rcl}
	\mathbb{E}_{\widehat{\boldsymbol{\theta}}}\left[\ell\left(\mathsf{Y}_T |\mathsf{F}_T; \boldsymbol{\theta}\right) |\mathsf{Y}_T \right] &=& -\frac{T}{2} \log\left(|\mathbf{K}\otimes\mathbf{H}|\right)- \frac{1}{2}\sum^T\limits_{t=1} \mathbb{E}_{\widehat{\boldsymbol{\theta}}} \left[\left(\mathsf{y}_t - \left(\mathbf{C}\otimes \mathbf{R} \right) \mathsf{f}_t \right)' \left(\mathbf{K}\otimes\mathbf{H}\right)^{-1} \left(\mathsf{y}_t - \left(\mathbf{C}\otimes \mathbf{R} \right)\mathsf{f}_t \right) |\mathsf{Y}_T\right]\\[0.1in]
	&=&-\frac{T}{2}\left(\log\left(|\mathbf{K}|^{p_1}|\mathbf{H}|^{p_2}\right)\right) - \frac{1}{2}\sum^T\limits_{t=1} \mathbb{E}_{\widehat{\boldsymbol{\theta}}} \left[\vect{\mathbf{Y}_t - \mathbf{R} \mathbf{F}_t\mathbf{C}' }' \left(\mathbf{K}^{-1}\otimes\mathbf{H}^{-1}\right) \vect{\mathbf{Y}_t - \mathbf{R} \mathbf{F}_t\mathbf{C}' } |\mathsf{Y}_T\right] \\[0.1in]
	&=&-\frac{T}{2}\left(\log\left(|\mathbf{K}|^{p_1}|\mathbf{H}|^{p_2}\right)\right) - \frac{1}{2}\sum^T\limits_{t=1} \mathbb{E}_{\widehat{\boldsymbol{\theta}}} \left[\vect{\mathbf{Y}_t - \mathbf{R} \mathbf{F}_t\mathbf{C}' }' \vect{ \mathbf{H}^{-1}\left(\mathbf{Y}_t - \mathbf{R} \mathbf{F}_t\mathbf{C}' \right)\mathbf{K}^{-1}} |\mathsf{Y}_T\right] \\[0.1in]
	&=&-\frac{T}{2}\left(\log\left(|\mathbf{K}|^{p_1}|\mathbf{H}|^{p_2}\right)\right) - \frac{1}{2}\sum^T\limits_{t=1} \mathbb{E}_{\widehat{\boldsymbol{\theta}}} \left[ \textbf{tr}\left(\mathbf{H}^{-1}\left(\mathbf{Y}_t-\mathbf{R}\mathbf{F}_{t}\mathbf{C}'\right)\mathbf{K}^{-1}\left(\mathbf{Y}_t-\mathbf{R}\mathbf{F}_{t}\mathbf{C}'\right)'\right) | \mathsf{Y}_T \right].
	\end{array}
	\]
	\[
	\setlength{\thinmuskip}{0mu}
	\setlength{\medmuskip}{0mu}
	\setlength{\thickmuskip}{0mu}
	\begin{array}{rcl}
	\mathbb{E}_{\widehat{\boldsymbol{\theta}}}\left[\ell\left(\mathsf{F}_T; \boldsymbol{\theta}\right) |\mathsf{Y}_T \right] &=& -\frac{T-1}{2} \log\left(|\mathbf{Q}\otimes\mathbf{P}|\right)- \frac{1}{2}\sum^T\limits_{t=2} \mathbb{E}_{\widehat{\boldsymbol{\theta}}} \left[\left(\mathsf{f}_t - \left(\mathbf{B}\otimes \mathbf{A} \right) \mathsf{f}_{t-1} \right)' \left(\mathbf{Q}\otimes\mathbf{P}\right)^{-1} \left(\mathsf{f}_t - \left(\mathbf{B}\otimes \mathbf{A} \right)\mathsf{f}_{t-1} \right) |\mathsf{Y}_T\right]\\[0.1in]
	&=&-\frac{T-1}{2}\left(\log\left(|\mathbf{Q}|^{k_1}|\mathbf{P}|^{k_2}\right)\right) - \frac{1}{2}\sum^T\limits_{t=2} \mathbb{E}_{\widehat{\boldsymbol{\theta}}} \left[\vect{\mathbf{F}_t - \mathbf{A} \mathbf{F}_{t-1}\mathbf{B}' }' \left(\mathbf{Q}\otimes\mathbf{P}\right)^{-1} \vect{\mathbf{F}_t - \mathbf{A} \mathbf{F}_{t-1}\mathbf{B}' } |\mathsf{Y}_T\right] \\[0.1in]
	&=&-\frac{T-1}{2}\left(\log\left(|\mathbf{Q}|^{k_1}|\mathbf{P}|^{k_2}\right)\right) - \frac{1}{2}\sum^T\limits_{t=2} \mathbb{E}_{\widehat{\boldsymbol{\theta}}} \left[\vect{\mathbf{f}_t - \mathbf{A} \mathbf{F}_{t-1}\mathbf{B}' }' \vect{ \mathbf{P}^{-1}\left(\mathbf{F}_t - \mathbf{A} \mathbf{F}_{t-1}\mathbf{B}' \right)\mathbf{Q}^{-1}} |\mathsf{Y}_T\right] \\[0.1in]
	&=&-\frac{T-1}{2}\left(\log\left(|\mathbf{Q}|^{k_1}|\mathbf{P}|^{k_2}\right)\right) - \frac{1}{2}\sum^T\limits_{t=2} \mathbb{E}_{\widehat{\boldsymbol{\theta}}} \left[ \textbf{tr}\left(\mathbf{P}^{-1}\left(\mathbf{F}_t-\mathbf{A}\mathbf{F}_{t-1}\mathbf{B}'\right)\mathbf{Q}^{-1}\left(\mathbf{F}_t-\mathbf{A}\mathbf{F}_{t-1}\mathbf{B}'\right)'\right) | \mathsf{Y}_T \right].
	\end{array}
	\]

	\subsection{EM updates}
	\label{ap:em}
	To obtain the EM updates we first compute the derivatives of $\mathcal{Q}(\boldsymbol{\theta}, \widehat{\boldsymbol{\theta}})$ with respect to each parameter in $\boldsymbol{\theta}$, and obtain the following
	\[
	\begin{array}{lll}
	\frac{\partial \mathcal{Q}(\boldsymbol{\theta}, \widehat{\boldsymbol{\theta}})}{\partial \textbf{R}} &=& \frac{\partial \mathbb{E}_{\widehat{\boldsymbol{\theta}}} \left[\ell(\mathsf{Y}_T|\mathbf{F}_T;\boldsymbol{\theta}) | \mathsf{Y}_T \right]}{\partial \textbf{R}} \\
	&=& \frac{1}{2} \sum\limits_{t=1}^{T} \left(\frac{\partial  \textbf{tr}\left(\mathbf{H}^{-1}\mathbf{Y}_t\mathbf{K}^{-1}\mathbf{C}\mathbb{E}_{\widehat{\boldsymbol{\theta}}} \left[\mathbf{F}'_{t}| \mathsf{Y}_T \right]\mathbf{R}'\right)}{\partial \textbf{R}}
	+ \frac{\partial  \textbf{tr}\left(\mathbf{H}^{-1}\mathbf{R}\mathbb{E}_{\widehat{\boldsymbol{\theta}}} \left[\mathbf{F}_{t}| \mathsf{Y}_T \right]\mathbf{C}'\mathbf{K}^{-1}\mathbf{Y}'_t\right)}{\partial \textbf{R}}
	-\frac{\partial  \textbf{tr}\left( \mathbb{E}_{\widehat{\boldsymbol{\theta}}} \left[\mathbf{H}^{-1}\mathbf{R}\mathbf{F}_{t}\mathbf{C}'\mathbf{K}^{-1}\mathbf{C}\mathbf{F}'_{t}\mathbf{R}'| \mathsf{Y}_T \right]\right)}{\partial \textbf{R}} \right) \\
	&=& \frac{1}{2} \sum\limits_{t=1}^{T} \left( 2 \mathbf{H}^{-1}\mathbf{Y}_t\mathbf{K}^{-1}\mathbf{C}\mathbb{E}_{\widehat{\boldsymbol{\theta}}} \left[\mathbf{F}'_{t}| \mathsf{Y}_T \right] 
	- 2 \mathbb{E}_{\widehat{\boldsymbol{\theta}}} \left[\mathbf{H}^{-1}\mathbf{R}\mathbf{F}_t\mathbf{C'}\mathbf{K}^{-1}\mathbf{C}\mathbf{F}'_{t}| \mathsf{Y}_T \right]\right),
	\end{array}	 
	\]
	\[
	\begin{array}{lll}
	\frac{\partial \mathcal{Q}(\boldsymbol{\theta}, \widehat{\boldsymbol{\theta}})}{\partial \textbf{C}} &=& \frac{\partial \mathbb{E}_{\widehat{\boldsymbol{\theta}}} \left[\ell(\mathsf{Y}_T|\mathbf{F}_T;\boldsymbol{\theta}) | \mathsf{Y}_T \right]}{\partial \textbf{C}} \\
	&=& \frac{1}{2} \sum\limits_{t=1}^{T} \left(\frac{\partial  \textbf{tr}\left(\mathbf{H}^{-1}\mathbf{Y}_t\mathbf{K}^{-1}\mathbf{C}\mathbb{E}_{\widehat{\boldsymbol{\theta}}} \left[\mathbf{F}'_{t}| \mathsf{Y}_T \right]\mathbf{R}'\right)}{\partial \textbf{C}}
	+ \frac{\partial  \textbf{tr}\left(\mathbf{H}^{-1}\mathbf{R}\mathbb{E}_{\widehat{\boldsymbol{\theta}}} \left[\mathbf{F}_{t}| \mathsf{Y}_T \right]\mathbf{C}'\mathbf{K}^{-1}\mathbf{Y}'_t\right)}{\partial \textbf{C}}
	-\frac{\partial  \textbf{tr}\left( \mathbb{E}_{\widehat{\boldsymbol{\theta}}} \left[\mathbf{H}^{-1}\mathbf{R}\mathbf{F}_{t}\mathbf{C}'\mathbf{K}^{-1}\mathbf{C}\mathbf{F}'_{t}\mathbf{R}'| \mathsf{Y}_T \right]\right)}{\partial \textbf{C}} \right) \\
	&=& \frac{1}{2} \sum\limits_{t=1}^{T} \left( 2 \mathbf{K}^{-1}\mathbf{Y}'_t\mathbf{H}^{-1}\mathbf{R}\mathbb{E}_{\widehat{\boldsymbol{\theta}}} \left[\mathbf{F}_{t}| \mathsf{Y}_T \right] 
	- 2 \mathbb{E}_{\widehat{\boldsymbol{\theta}}} \left[\mathbf{K}^{-1}\mathbf{C}\mathbf{F}'_t\mathbf{R'}\mathbf{H}^{-1}\mathbf{R}\mathbf{F}_{t}| \mathsf{Y}_T \right]\right),
	\end{array}	 
	\]
	\[
	\begin{array}{lll}
	\frac{\partial \mathcal{Q}(\boldsymbol{\theta}, \widehat{\boldsymbol{\theta}})}{\partial \textbf{H}} &=& \frac{\partial \mathbb{E}_{\widehat{\boldsymbol{\theta}}} \left[\ell(\mathsf{Y}_T|\mathbf{F}_T;\boldsymbol{\theta}) | \mathsf{Y}_T \right]}{\partial \textbf{H}} \\
	&=& -\frac{Tp_2}{2} \frac{\partial\log\left(|\mathbf{H}|\right)}{\partial\mathbf{H}} - \frac{1}{2} \sum\limits_{t=1}^{T} \frac{\partial \textbf{tr}\left( \mathbf{H}^{-1} \mathbb{E}_{\widehat{\boldsymbol{\theta}}} \left[\left(\mathbf{Y}_t-\mathbf{R}\mathbf{F}_{t}\mathbf{C}'\right)\mathbf{K}^{-1}\left(\mathbf{Y}_t-\mathbf{R}\mathbf{F}_{t}\mathbf{C}'\right)'| \mathsf{Y}_T \right] \right) }{\partial \mathbf{H}} \\
	&=& -\frac{Tp_2}{2} \mathbf{H}^{-1} + \frac{1}{2} \sum\limits_{t=1}^{T} \mathbf{H}^{-1} \mathbb{E}_{\widehat{\boldsymbol{\theta}}} \left[\left(\mathbf{Y}_t-\mathbf{R}\mathbf{F}_{t}\mathbf{C}'\right)\mathbf{K}^{-1}\left(\mathbf{Y}_t-\mathbf{R}\mathbf{F}_{t}\mathbf{C}'\right)'| \mathsf{Y}_T \right]' \mathbf{H}^{-1}, \qquad \qquad \qquad
	\end{array}	 
	\]
	\[
	\begin{array}{lll}
	\frac{\partial \mathcal{Q}(\boldsymbol{\theta}, \widehat{\boldsymbol{\theta}})}{\partial \textbf{K}} &=& \frac{\partial \mathbb{E}_{\widehat{\boldsymbol{\theta}}} \left[\ell(\mathsf{Y}_T|\mathbf{F}_T;\boldsymbol{\theta}) | \mathsf{Y}_T \right]}{\partial \textbf{K}} \\
	&=& -\frac{Tp_1}{2} \frac{\partial\log\left(|\mathbf{K}|\right)}{\partial\mathbf{K}} - \frac{1}{2} \sum\limits_{t=1}^{T} \frac{\partial \textbf{tr}\left( \mathbf{K}^{-1} \mathbb{E}_{\widehat{\boldsymbol{\theta}}} \left[\left(\mathbf{Y}_t-\mathbf{R}\mathbf{F}_{t}\mathbf{C}'\right)'\mathbf{H}^{-1}\left(\mathbf{Y}_t-\mathbf{R}\mathbf{F}_{t}\mathbf{C}'\right)| \mathsf{Y}_T \right] \right) }{\partial \mathbf{K}} \\
	&=& -\frac{Tp_1}{2} \mathbf{K}^{-1} + \frac{1}{2} \sum\limits_{t=1}^{T} \mathbf{K}^{-1} \mathbb{E}_{\widehat{\boldsymbol{\theta}}} \left[\left(\mathbf{Y}_t-\mathbf{R}\mathbf{F}_{t}\mathbf{C}'\right)'\mathbf{H}^{-1}\left(\mathbf{Y}_t-\mathbf{R}\mathbf{F}_{t}\mathbf{C}'\right)| \mathsf{Y}_T \right]' \mathbf{K}^{-1}, \qquad \qquad \qquad
	\end{array}	 
	\]
	\[
	\setlength{\thinmuskip}{0mu}
	\setlength{\medmuskip}{0mu}
	\setlength{\thickmuskip}{0mu}
	\begin{array}{lll}
	\frac{\partial \mathcal{Q}(\boldsymbol{\theta}, \widehat{\boldsymbol{\theta}})}{\partial \textbf{A}} &=& \frac{\partial \mathbb{E}_{\widehat{\boldsymbol{\theta}}} \left[\ell(\mathbf{F}_T;\boldsymbol{\theta}) | \mathsf{Y}_T \right]}{\partial \textbf{A}} \\
	&=& \frac{1}{2} \sum\limits_{t=2}^{T} \left(\frac{\partial  \textbf{tr}\left(\mathbb{E}_{\widehat{\boldsymbol{\theta}}} \left[\mathbf{P}^{-1}\mathbf{F}_{t}\mathbf{Q}^{-1}\mathbf{B}\mathbf{F}'_{t-1}\mathbf{A}'| \mathsf{Y}_T \right]\right)}{\partial \textbf{A}} + \frac{\partial  \textbf{tr}\left(\mathbb{E}_{\widehat{\boldsymbol{\theta}}} \left[\mathbf{P}^{-1}\mathbf{A}\mathbf{F}_{t-1}\mathbf{B}'\mathbf{Q}^{-1}\mathbf{F}'_{t}| \mathsf{Y}_T \right]\right)}{\partial \textbf{A}} \right. \\
	&& \left. \hspace*{3in} - \frac{\partial  \textbf{tr}\left(\mathbb{E}_{\widehat{\boldsymbol{\theta}}} \left[\mathbf{P}^{-1}\mathbf{A}\mathbf{F}_{t-1}\mathbf{B}'\mathbf{Q}^{-1}\mathbf{B}\mathbf{F}'_{t-1}\mathbf{A}'| \mathsf{Y}_T \right]\right)}{\partial \textbf{A}}\right)  \\
	&=& \frac{1}{2} \sum\limits_{t=2}^{T} 2 \left(\mathbb{E}_{\widehat{\boldsymbol{\theta}}} \left[\mathbf{P}^{-1}\mathbf{F}_{t}\mathbf{Q}^{-1}\mathbf{B}\mathbf{F}'_{t-1}| \mathsf{Y}_T \right] - 2 \mathbb{E}_{\widehat{\boldsymbol{\theta}}} \left[\mathbf{P}^{-1}\mathbf{A}\mathbf{F}_{t-1}\mathbf{B}'\mathbf{Q}^{-1}\mathbf{B}\mathbf{F}'_{t-1}| \mathsf{Y}_T \right] \right),
	\end{array}	 
	\]
	\[
	\setlength{\thinmuskip}{0mu}
	\setlength{\medmuskip}{0mu}
	\setlength{\thickmuskip}{0mu}
	\begin{array}{lll}
	\frac{\partial \mathcal{Q}(\boldsymbol{\theta}, \widehat{\boldsymbol{\theta}})}{\partial \textbf{B}} &=& \frac{\partial \mathbb{E}_{\widehat{\boldsymbol{\theta}}} \left[\ell(\mathbf{F}_T;\boldsymbol{\theta}) | \mathsf{Y}_T \right]}{\partial \textbf{B}} \\
	&=& \frac{1}{2} \sum\limits_{t=2}^{T} \left(\frac{\partial  \textbf{tr}\left(\mathbb{E}_{\widehat{\boldsymbol{\theta}}} \left[\mathbf{P}^{-1}\mathbf{F}_{t}\mathbf{Q}^{-1}\mathbf{B}\mathbf{F}'_{t-1}\mathbf{A}'| \mathsf{Y}_T \right]\right)}{\partial \textbf{B}} + \frac{\partial  \textbf{tr}\left(\mathbb{E}_{\widehat{\boldsymbol{\theta}}} \left[\mathbf{P}^{-1}\mathbf{A}\mathbf{F}_{t-1}\mathbf{B}'\mathbf{Q}^{-1}\mathbf{F}'_{t}| \mathsf{Y}_T \right]\right)}{\partial \textbf{B}} \right. \\
	&& \hspace*{3in} \left. - \frac{\partial  \textbf{tr}\left(\mathbb{E}_{\widehat{\boldsymbol{\theta}}} \left[\mathbf{P}^{-1}\mathbf{A}\mathbf{F}_{t-1}\mathbf{B}'\mathbf{Q}^{-1}\mathbf{B}\mathbf{F}'_{t-1}\mathbf{A}'| \mathsf{Y}_T \right]\right)}{\partial \textbf{B}} \right)\\
	&=& \frac{1}{2} \sum\limits_{t=2}^{T} \left(2 \mathbb{E}_{\widehat{\boldsymbol{\theta}}} \left[\mathbf{Q}^{-1}\mathbf{F}'_{t}\mathbf{P}^{-1}\mathbf{A}\mathbf{F}'_{t-1}| \mathsf{Y}_T \right] - 2 \mathbb{E}_{\widehat{\boldsymbol{\theta}}} \left[\mathbf{Q}^{-1}\mathbf{B}\mathbf{F}'_{t-1}\mathbf{A}'\mathbf{P}^{-1}\mathbf{A}\mathbf{F}_{t-1}| \mathsf{Y}_T \right]\right),
	\end{array}	 
	\]
	\[
	\begin{array}{lll}
	\frac{\partial \mathcal{Q}(\boldsymbol{\theta}, \widehat{\boldsymbol{\theta}})}{\partial \textbf{P}} &=& \frac{\partial \mathbb{E}_{\widehat{\boldsymbol{\theta}}} \left[\ell(\mathbf{F}_T;\boldsymbol{\theta}) | \mathsf{Y}_T \right]}{\partial \textbf{P}} \\
	&=& -\frac{(T-1)k_2}{2} \frac{\partial\log\left(|\mathbf{P}|\right)}{\partial\mathbf{P}} + \frac{1}{2} \sum\limits_{t=2}^{T} \frac{\partial \textbf{tr}\left( \mathbf{P}^{-1} \mathbb{E}_{\widehat{\boldsymbol{\theta}}} \left[\left(\mathbf{F}_{t}-\mathbf{A}\mathbf{F}_{t-1}\mathbf{B}'\right)\mathbf{Q}^{-1}\left(\mathbf{F}_{t}-\mathbf{A}\mathbf{F}_{t-1}\mathbf{B}' \right)' | \mathsf{Y}_T \right] \right) }{\partial \mathbf{P}} \\
	&=& -\frac{(T-1)k_2}{2} \mathbf{P}^{-1} + \frac{1}{2} \sum\limits_{t=2}^{T} \mathbf{P}^{-1} \mathbb{E}_{\widehat{\boldsymbol{\theta}}} \left[\left(\mathbf{F}_{t}-\mathbf{A}\mathbf{F}_{t-1}\mathbf{B}'\right)\mathbf{Q}^{-1}\left(\mathbf{F}_{t}-\mathbf{A}\mathbf{F}_{t-1}\mathbf{B}' \right)' | \mathsf{Y}_T \right]' \mathbf{P}^{-1}, \qquad \qquad \qquad
	\end{array}	 
	\]
	\[
	\begin{array}{lll}
	\frac{\partial \mathcal{Q}(\boldsymbol{\theta}, \widehat{\boldsymbol{\theta}})}{\partial \textbf{Q}} &=& \frac{\partial \mathbb{E}_{\widehat{\boldsymbol{\theta}}} \left[\ell(\mathbf{F}_T;\boldsymbol{\theta}) | \mathsf{Y}_T \right]}{\partial \textbf{Q}} \\
	&=& -\frac{(T-1)k_1}{2} \frac{\partial\log\left(|\mathbf{Q}|\right)}{\partial\mathbf{Q}} + \frac{1}{2} \sum\limits_{t=2}^{T} \frac{\partial \textbf{tr}\left( \mathbf{Q}^{-1} \mathbb{E}_{\widehat{\boldsymbol{\theta}}} \left[ \left(\mathbf{F}_{t}-\mathbf{A}\mathbf{F}_{t-1}\mathbf{B}'\right)'\mathbf{P}^{-1}\left(\mathbf{F}_{t}-\mathbf{A}\mathbf{F}_{t-1}\mathbf{B}' \right)| \mathsf{Y}_T \right] \right) }{\partial \mathbf{Q}} \\
	&=& -\frac{(T-1)k_1}{2} \mathbf{Q}^{-1} + \frac{1}{2} \sum\limits_{t=2}^{T} \mathbf{Q}^{-1} \mathbb{E}_{\widehat{\boldsymbol{\theta}}} \left[ \left(\mathbf{F}_{t}-\mathbf{A}\mathbf{F}_{t-1}\mathbf{B}'\right)'\mathbf{P}^{-1}\left(\mathbf{F}_{t}-\mathbf{A}\mathbf{F}_{t-1}\mathbf{B}' \right)| \mathsf{Y}_T \right]' \mathbf{Q}^{-1}. \qquad \qquad \qquad
	\end{array}	 
	\]
	First order conditions (FOC) then yield
	\[
	\mathbf{R} = \left(\sum_{t=1}^{T} \mathbf{Y}_t\mathbf{K}^{-1}\mathbf{C}\mathbb{E}_{\widehat{\boldsymbol{\theta}}} \left[\mathbf{F}'_{t}| \mathsf{Y}_T \right]\right)\left( \sum_{t=1}^{T} \mathbb{E}_{\widehat{\boldsymbol{\theta}}} \left[\mathbf{F}_t\mathbf{C'}\mathbf{K}^{-1}\mathbf{C}\mathbf{F}'_{t}| \mathsf{Y}_T \right] \right)^{-1},
	\]
	\[
	\mathbf{C} = \left(\sum_{t=1}^{T} \mathbf{Y}'_t\mathbf{H}^{-1}\mathbf{R}\mathbb{E}_{\widehat{\boldsymbol{\theta}}} \left[\mathbf{F}_{t}| \mathsf{Y}_T \right] \right)\left(\sum_{t=1}^{T}  \mathbb{E}_{\widehat{\boldsymbol{\theta}}} \left[\mathbf{F}'_t\mathbf{R'}\mathbf{H}^{-1}\mathbf{R}\mathbf{F}_{t}| \mathsf{Y}_T \right]\right)^{-1},
	\]
	\[
	\mathbf{H} = \frac{1}{Tp_2 }\sum_{t=1}^{T}  \mathbb{E}_{\widehat{\boldsymbol{\theta}}} \left[\left(\mathbf{Y}_t-\mathbf{R}\mathbf{F}_{t}\mathbf{C}'\right)\mathbf{K}^{-1}\left(\mathbf{Y}_t-\mathbf{R}\mathbf{F}_{t}\mathbf{C}'\right)'| \mathsf{Y}_T \right],
	\]
	\[
	\mathbf{K} = \frac{1}{Tp_1} \sum_{t=1}^{T}  \mathbb{E}_{\widehat{\boldsymbol{\theta}}} \left[\left(\mathbf{Y}_t-\mathbf{R}\mathbf{F}_{t}\mathbf{C}'\right)'\mathbf{H}^{-1}\left(\mathbf{Y}_t-\mathbf{R}\mathbf{F}_{t}\mathbf{C}'\right)| \mathsf{Y}_T \right],
	\]
	\[
	\mathbf{A} = \left(\sum_{t=1}^{T} \mathbb{E}_{\widehat{\boldsymbol{\theta}}} \left[\mathbf{F}_{t}\mathbf{Q}^{-1}\mathbf{B}\mathbf{F}'_{t-1}| \mathsf{Y}_T \right]\right)\left(\sum_{t=1}^{T}  \mathbb{E}_{\widehat{\boldsymbol{\theta}}} \left[\mathbf{F}_{t-1}\mathbf{B}'\mathbf{Q}^{-1}\mathbf{B}\mathbf{F}'_{t-1}| \mathsf{Y}_T \right] \right)^{-1},
	\]
	\[
	\mathbf{B} = \left(\sum_{t=1}^{T}  \mathbb{E}_{\widehat{\boldsymbol{\theta}}} \left[\mathbf{F}'_{t}\mathbf{P}^{-1}\mathbf{A}\mathbf{F}_{t-1}| \mathsf{Y}_T \right] \right)\left(\sum_{t=1}^{T}  \mathbb{E}_{\widehat{\boldsymbol{\theta}}} \left[\mathbf{F}'_{t-1}\mathbf{A}'\mathbf{P}^{-1}\mathbf{A}\mathbf{F}_{t-1}| \mathsf{Y}_T \right] \right)^{-1},
	\]
	\[
	\mathbf{P}  = \frac{1}{(T-1)k_2} \sum_{t=2}^{T} \mathbb{E}_{\widehat{\boldsymbol{\theta}}} \left[\left(\mathbf{F}_{t}-\mathbf{A}\mathbf{F}_{t-1}\mathbf{B}'\right)\mathbf{Q}^{-1}\left(\mathbf{F}_{t}-\mathbf{A}\mathbf{F}_{t-1}\mathbf{B}' \right)' | \mathsf{Y}_T \right],
	\]
	\[
	\mathbf{Q} = \frac{1}{(T-1)k_1} \sum_{t=2}^{T} \mathbb{E}_{\widehat{\boldsymbol{\theta}}} \left[ \left(\mathbf{F}_{t}-\mathbf{A}\mathbf{F}_{t-1}\mathbf{B}'\right)'\mathbf{P}^{-1}\left(\mathbf{F}_{t}-\mathbf{A}\mathbf{F}_{t-1}\mathbf{B}' \right)| \mathsf{Y}_T \right].
	\]
	Using the conditional moments of the Kalman smoother recursions and \eqref{eq:vectrick}-\eqref{eq:startrick}, we obtain 
	\[
	\mathbb{E}_{\widehat{\boldsymbol{\theta}}} \left[\mathbf{F}_{t}| \mathsf{Y}_T \right] = \unvect{\mathbb{E}_{\widehat{\boldsymbol{\theta}}} \left[\mathsf{f}_{t}| \mathsf{Y}_T \right]} = \unvect{\mathsf{f}_{t|T}} = \mathbf{F}_{t|T},
	\]
	\[
	\mathbb{E}_{\widehat{\boldsymbol{\theta}}} \left[\mathbf{F}'_{t}| \mathsf{Y}_T \right] = \unvect{\mathbb{E}_{\widehat{\boldsymbol{\theta}}} \left[\mathsf{f}_{t}| \mathsf{Y}_T \right]}' = \unvect{\mathsf{f}_{t|T}}' = \mathbf{F}'_{t|T},
	\]
	\[
	\begin{array}{rcl}
	\mathbb{E}_{\widehat{\boldsymbol{\theta}}} \left[\mathbf{F}_t\mathbf{C'}\mathbf{K}^{-1}\mathbf{C}\mathbf{F}'_{t}| \mathsf{Y}_T \right] &=& \left(\mathbf{C'}\mathbf{K}^{-1}\mathbf{C}\right)\star \mathbb{E}_{\widehat{\boldsymbol{\theta}}} \left[\vect{\mathbf{F}_t}\vect{\mathbf{F}_{t}}'| \mathsf{Y}_T \right] \\
	&=&	\left(\mathbf{C'}\mathbf{K}^{-1}\mathbf{C}\right)\star \mathbb{E}_{\widehat{\boldsymbol{\theta}}} \left[\mathsf{f}_t\mathsf{f}'_t| \mathsf{Y}_T \right] \\
	&=&	\left(\mathbf{C'}\mathbf{K}^{-1}\mathbf{C}\right)\star \left( \mathsf{f}_{t|T}\mathsf{f}'_{t|T} + \boldsymbol{\Pi}_{t|T}  \right),
	\end{array}
	\]
	\[
	\begin{array}{rcl}
	\mathbb{E}_{\widehat{\boldsymbol{\theta}}} \left[\mathbf{F}'_t\mathbf{R'}\mathbf{H}^{-1}\mathbf{R}\mathbf{F}_{t}| \mathsf{Y}_T \right] &=& \left(\mathbf{R'}\mathbf{H}^{-1}\mathbf{R}\right) \star \mathbb{E}_{\widehat{\boldsymbol{\theta}}} \left[\vect{\mathbf{F}'_t}\vect{\mathbf{F}_{t}'}'| \mathsf{Y}_T \right] \\
	&=& \left(\mathbf{R'}\mathbf{H}^{-1}\mathbf{R}\right) \star \left(\mathbb{K}_{k_1k_2} \mathbb{E}_{\widehat{\boldsymbol{\theta}}} \left[\mathsf{f}_t \mathsf{f}'_t| \mathsf{Y}_T \right] \mathbb{K}'_{k_1k_2}\right) \\
	&=& \left(\mathbf{R'}\mathbf{H}^{-1}\mathbf{R}\right) \star \left(\mathbb{K}_{k_1k_2} \left( \mathsf{f}_{t|T}\mathsf{f}'_{t|T} + \boldsymbol{\Pi}_{t|T}  \right) \mathbb{K}'_{k_1k_2}\right),
	\end{array}
	\]
	\[
	\begin{array}{rcl}
	\mathbb{E}_{\widehat{\boldsymbol{\theta}}} \left[\mathbf{R}\mathbf{F}_{t}\mathbf{C}'\mathbf{K}^{-1} \mathbf{C} \mathbf{F}'_{t}\mathbf{R}'| \mathsf{Y}_T \right] &=& \left(\mathbf{C}'\mathbf{K}^{-1} \mathbf{C}\right) \star \mathbb{E}_{\widehat{\boldsymbol{\theta}}} \left[\vect{\mathbf{R}\mathbf{F}_{t}} \vect{\mathbf{R}\mathbf{F}_{t}}' | \mathsf{Y}_T \right] \\
	&=& \left(\mathbf{C}'\mathbf{K}^{-1} \mathbf{C}\right) \star \left(\left( \mathbb{I}_{k_2}\otimes\mathbf{R}\right) \mathbb{E}_{\widehat{\boldsymbol{\theta}}} \left[\mathsf{f}_t \mathsf{f}'_t | \mathsf{Y}_T \right] \left( \mathbb{I}_{k_2}\otimes\mathbf{R}\right)'\right) \\
	&=& \left(\mathbf{C}'\mathbf{K}^{-1} \mathbf{C}\right) \star \left(\left( \mathbb{I}_{k_2}\otimes\mathbf{R}\right) \left( \mathsf{f}_{t|T}\mathsf{f}'_{t|T} + \boldsymbol{\Pi}_{t|T}\right) \left( \mathbb{I}_{k_2}\otimes\mathbf{R}\right)'\right),
	\end{array}
	\]
	\[
	\begin{array}{rcl}
	\mathbb{E}_{\widehat{\boldsymbol{\theta}}} \left[ \mathbf{C}\mathbf{F}'_{t} \mathbf{R}'\mathbf{H}^{-1} \mathbf{R}\mathbf{F}_{t}\mathbf{C}' | \mathsf{Y}_T \right] &=&  \left(\mathbf{R}'\mathbf{H}^{-1} \mathbf{R}\right) \star\mathbb{E}_{\widehat{\boldsymbol{\theta}}} \left[ \vect{\mathbf{C}\mathbf{F}'_{t}} \vect{\mathbf{C}\mathbf{F}'_{t}}' | \mathsf{Y}_T \right] \\ 
	&=& \left(\mathbf{R}'\mathbf{H}^{-1} \mathbf{R}\right) \star \left( \left(\mathbb{I}_{k_1}\otimes\mathbf{C} \right) \left(\mathbb{K}_{k_1k_2} \mathbb{E}_{\widehat{\boldsymbol{\theta}}} \left[\mathsf{f}_t \mathsf{f}'_t| \mathsf{Y}_T \right] \mathbb{K}'_{k_1k_2}\right) \left(\mathbb{I}_{k_1}\otimes\mathbf{C} \right)' \right) \\
	&=& \left(\mathbf{R}'\mathbf{H}^{-1} \mathbf{R}\right) \star \left( \left(\mathbb{I}_{k_1}\otimes\mathbf{C} \right) \left(\mathbb{K}_{k_1k_2} \left( \mathsf{f}_{t|T}\mathsf{f}'_{t|T} + \boldsymbol{\Pi}_{t|T}  \right) \mathbb{K}'_{k_1k_2}\right) \left(\mathbb{I}_{k_1}\otimes\mathbf{C} \right)' \right),
	\end{array}
	\]
	\[
	\begin{array}{rcl}
	\mathbb{E}_{\widehat{\boldsymbol{\theta}}} \left[\left(\mathbf{Y}_t-\mathbf{R}\mathbf{F}_{t}\mathbf{C}'\right)\mathbf{K}^{-1}\left(\mathbf{Y}_t-\mathbf{R}\mathbf{F}_{t}\mathbf{C}'\right)'| \mathsf{Y}_T \right] &=&  \mathbf{Y}_t \mathbf{K}^{-1}\mathbf{Y}'_t - \mathbf{Y}_t \mathbf{K}^{-1} \mathbf{C} \mathbb{E}_{\widehat{\boldsymbol{\theta}}} \left[\mathbf{F}'_{t}| \mathsf{Y}_T \right]\mathbf{R}' \\ && -  \mathbf{R}\mathbb{E}_{\widehat{\boldsymbol{\theta}}} \left[\mathbf{F}_{t}| \mathsf{Y}_T \right]\mathbf{C}'\mathbf{K}^{-1}\mathbf{Y}'_t + \mathbb{E}_{\widehat{\boldsymbol{\theta}}} \left[\mathbf{R}\mathbf{F}_{t}\mathbf{C}'\mathbf{K}^{-1} \mathbf{C} \mathbf{F}'_{t}\mathbf{R}'| \mathsf{Y}_T \right],
	\end{array}
	\]
	\[
	\begin{array}{rcl}
	\mathbb{E}_{\widehat{\boldsymbol{\theta}}} \left[\left(\mathbf{Y}_t-\mathbf{R}\mathbf{F}_{t}\mathbf{C}'\right)'\mathbf{H}^{-1}\left(\mathbf{Y}_t-\mathbf{R}\mathbf{F}_{t}\mathbf{C}'\right)| \mathsf{Y}_T \right] &=& \mathbf{Y}'_t \mathbf{H}^{-1} \mathbf{Y}_t - \mathbf{Y}_t'\mathbf{H}^{-1} \mathbf{R}\mathbb{E}_{\widehat{\boldsymbol{\theta}}} \left[\mathbf{F}_{t} | \mathsf{Y}_T \right]\mathbf{C}' \\
	&& - \mathbf{C}\mathbb{E}_{\widehat{\boldsymbol{\theta}}} \left[\mathbf{F}'_{t}| \mathsf{Y}_T \right]\mathbf{R}'\mathbf{H}^{-1}\mathbf{Y}_t  + \mathbb{E}_{\widehat{\boldsymbol{\theta}}} \left[ \mathbf{C}\mathbf{F}'_{t}\mathbf{R}'\mathbf{H}^{-1} \mathbf{R}\mathbf{F}_{t}\mathbf{C}' | \mathsf{Y}_T \right],
	\end{array}
	\]
	\[
	\begin{array}{rcl}
	\mathbb{E}_{\widehat{\boldsymbol{\theta}}} \left[\mathbf{F}_{t}\mathbf{Q}^{-1}\mathbf{F}'_{t} | \mathsf{Y}_T \right] &=& \mathbf{Q}^{-1} \star \mathbb{E}_{\widehat{\boldsymbol{\theta}}} \left[\vect{\mathbf{F}_{t}}\vect{\mathbf{F}_{t}}'| \mathsf{Y}_T \right] \\
	&=& \mathbf{Q}^{-1} \star \mathbb{E}_{\widehat{\boldsymbol{\theta}}} \left[\mathsf{f}_{t} \mathsf{f}_{t}'| \mathsf{Y}_T \right] \\
	&=& \mathbf{Q}^{-1} \star \left( \mathsf{f}_{t|T}\mathsf{f}'_{t|T} + \boldsymbol{\Pi}_{t|T}  \right),
	\end{array}
	\]
	\[
	\begin{array}{rcl}
	\mathbb{E}_{\widehat{\boldsymbol{\theta}}} \left[ \mathbf{F}'_{t}\mathbf{P}^{-1}\mathbf{F}_{t}| \mathsf{Y}_T \right]  &=& \mathbf{P}^{-1} \star \mathbb{E}_{\widehat{\boldsymbol{\theta}}} \left[\vect{\mathbf{F}'_{t}} \vect{\mathbf{F}'_{t}}'| \mathsf{Y}_T \right] \\
	&=& \mathbf{P}^{-1} \star \left(\mathbb{K}_{k_1k_2} \mathbb{E}_{\widehat{\boldsymbol{\theta}}} \left[\mathsf{f}_t \mathsf{f}'_{t}| \mathsf{Y}_T \right] \mathbb{K}'_{k_1k_2}\right) \\
	&=& \mathbf{P}^{-1} \star \left(\mathbb{K}_{k_1k_2} \left( \mathsf{f}_{t|T}\mathsf{f}'_{t|T} + \boldsymbol{\Pi}_{t|T}  \right) \mathbb{K}'_{k_1k_2}\right),
	\end{array}	
	\]
	\[
	\begin{array}{rcl}
	\mathbb{E}_{\widehat{\boldsymbol{\theta}}} \left[\mathbf{F}_{t}\mathbf{Q}^{-1}\mathbf{B}\mathbf{F}'_{t-1}| \mathsf{Y}_T \right] &=& \left(\mathbf{Q}^{-1}\mathbf{B}\right) \star \mathbb{E}_{\widehat{\boldsymbol{\theta}}} \left[\vect{\mathbf{F}_{t}}\vect{\mathbf{F}_{t-1}}'| \mathsf{Y}_T \right] \\
	&=& \left(\mathbf{Q}^{-1}\mathbf{B}\right) \star \mathbb{E}_{\widehat{\boldsymbol{\theta}}} \left[\mathsf{f}_{t} \mathsf{f}_{t-1}'| \mathsf{Y}_T \right] \\
	&=& \left(\mathbf{Q}^{-1}\mathbf{B}\right) \star \left( \mathsf{f}_{t|T}\mathsf{f}'_{t-1|T} +  \boldsymbol{\Delta}_{t|T}  \right),
	\end{array}
	\]
	\[
	\begin{array}{rcl}
	\mathbb{E}_{\widehat{\boldsymbol{\theta}}} \left[\mathbf{F}_{t-1}\mathbf{B}'\mathbf{Q}^{-1}\mathbf{B}\mathbf{F}'_{t-1}| \mathsf{Y}_T \right] &=& \left(\mathbf{B}'\mathbf{Q}^{-1}\mathbf{B}\right) \star \mathbb{E}_{\widehat{\boldsymbol{\theta}}} \left[\vect{\mathbf{F}_{t-1}}\vect{\mathbf{F}_{t-1}}'| \mathsf{Y}_T \right] \\
	&=& \left(\mathbf{B}'\mathbf{Q}^{-1}\mathbf{B}\right) \star \mathbb{E}_{\widehat{\boldsymbol{\theta}}} \left[\mathsf{f}_{t-1} \mathsf{f}_{t-1}'| \mathsf{Y}_T \right] \\
	&=& \left(\mathbf{B}'\mathbf{Q}^{-1}\mathbf{B}\right) \star \left( \mathsf{f}_{t-1|T}\mathsf{f}'_{t-1|T} + \boldsymbol{\Pi}_{t-1|T}  \right),
	\end{array}
	\]
	\[
	\begin{array}{rcl}
	\mathbb{E}_{\widehat{\boldsymbol{\theta}}} \left[\mathbf{F}'_{t}\mathbf{P}^{-1}\mathbf{A}\mathbf{F}_{t-1}| \mathsf{Y}_T \right] &=& \left(\mathbf{P}^{-1}\mathbf{A}\right) \star \mathbb{E}_{\widehat{\boldsymbol{\theta}}} \left[\vect{\mathbf{F}'_{t}} \vect{\mathbf{F}'_{t-1}}'| \mathsf{Y}_T \right] \\
	&=& \left(\mathbf{P}^{-1}\mathbf{A}\right) \star \left(\mathbb{K}_{k_1k_2} \mathbb{E}_{\widehat{\boldsymbol{\theta}}} \left[\mathsf{f}_t \mathsf{f}'_{t-1}| \mathsf{Y}_T \right] \mathbb{K}'_{k_1k_2}\right) \\
	&=& \left(\mathbf{P}^{-1}\mathbf{A}\right) \star \left(\mathbb{K}_{k_1k_2} \left( \mathsf{f}_{t|T}\mathsf{f}'_{t-1|T} +  \boldsymbol{\Delta}_{t|T}  \right) \mathbb{K}'_{k_1k_2}\right),
	\end{array}
	\]
	\[
	\begin{array}{rcl}
	\mathbb{E}_{\widehat{\boldsymbol{\theta}}} \left[\mathbf{F}'_{t-1}\mathbf{A}'\mathbf{P}^{-1}\mathbf{A}\mathbf{F}_{t-1}| \mathsf{Y}_T \right] &=& \left(\mathbf{A}'\mathbf{P}^{-1}\mathbf{A}\right) \star \mathbb{E}_{\widehat{\boldsymbol{\theta}}} \left[\vect{\mathbf{F}'_{t-1}} \vect{\mathbf{F}'_{t-1}}'| \mathsf{Y}_T \right] \\
	&=& \left(\mathbf{A}'\mathbf{P}^{-1}\mathbf{A}\right) \star \left(\mathbb{K}_{k_1k_2} \mathbb{E}_{\widehat{\boldsymbol{\theta}}} \left[\mathsf{f}_{t-1} \mathsf{f}'_{t-1}| \mathsf{Y}_T \right] \mathbb{K}'_{k_1k_2}\right)  \\
	&=& \left(\mathbf{A}'\mathbf{P}^{-1}\mathbf{A}\right) \star \left(\mathbb{K}_{k_1k_2} \left( \mathsf{f}_{t-1|T}\mathsf{f}'_{t-1|T} + \boldsymbol{\Pi}_{t-1|T}  \right) \mathbb{K}'_{k_1k_2}\right), \\
	\end{array}
	\]
	\[
	\begin{array}{rcl}
	\mathbb{E}_{\widehat{\boldsymbol{\theta}}} \left[\left(\mathbf{F}_{t}-\mathbf{A}\mathbf{F}_{t-1}\mathbf{B}'\right)\mathbf{Q}^{-1}\left(\mathbf{F}_{t}-\mathbf{A}\mathbf{F}_{t-1}\mathbf{B}' \right)' | \mathsf{Y}_T \right] &=& \mathbb{E}_{\widehat{\boldsymbol{\theta}}} \left[\mathbf{F}_{t}\mathbf{Q}^{-1}\mathbf{F}'_{t} | \mathsf{Y}_T \right] - \mathbb{E}_{\widehat{\boldsymbol{\theta}}} \left[\mathbf{F}_{t}\mathbf{Q}^{-1} \mathbf{B}\mathbf{F}'_{t-1}| \mathsf{Y}_T \right] \mathbf{A}' \\
	&& - \mathbf{A}\mathbb{E}_{\widehat{\boldsymbol{\theta}}} \left[\mathbf{F}_{t-1}\mathbf{B}'\mathbf{Q}^{-1}\mathbf{F}'_{t} | \mathsf{Y}_T \right] \\
	&& + \mathbf{A} \mathbb{E}_{\widehat{\boldsymbol{\theta}}} \left[\mathbf{F}_{t-1}\mathbf{B}'\mathbf{Q}^{-1}\mathbf{B}\mathbf{F}'_{t-1}| \mathsf{Y}_T \right]\mathbf{A}',
	\end{array}
	\]
	\[
	\begin{array}{rcl}
	\mathbb{E}_{\widehat{\boldsymbol{\theta}}} \left[ \left(\mathbf{F}_{t}-\mathbf{A}\mathbf{F}_{t-1}\mathbf{B}'\right)'\mathbf{P}^{-1}\left(\mathbf{F}_{t}-\mathbf{A}\mathbf{F}_{t-1}\mathbf{B}' \right)| \mathsf{Y}_T \right] &=& \mathbb{E}_{\widehat{\boldsymbol{\theta}}} \left[ \mathbf{F}'_{t}\mathbf{P}^{-1}\mathbf{F}_{t}| \mathsf{Y}_T \right] - \mathbb{E}_{\widehat{\boldsymbol{\theta}}} \left[ \mathbf{F}'_{t}\mathbf{P}^{-1} \mathbf{A}\mathbf{F}_{t-1}| \mathsf{Y}_T \right] \mathbf{B}' \\ 
	&& - \mathbf{B}\mathbb{E}_{\widehat{\boldsymbol{\theta}}} \left[ \mathbf{F}'_{t-1}\mathbf{A}'\mathbf{P}^{-1} \mathbf{F}_{t}| \mathsf{Y}_T \right] \\
	&& + \mathbf{B}\mathbb{E}_{\widehat{\boldsymbol{\theta}}} \left[ \mathbf{F}'_{t-1}\mathbf{A}'\mathbf{P}^{-1} \mathbf{A}\mathbf{F}_{t-1}| \mathsf{Y}_T \right]\mathbf{B}'.
	\end{array}
	\]
	Combining these results together with the FOC we obtain
	\[
	\mathbf{R} = \left(\sum_{t=1}^{T} \mathbf{Y}_t\mathbf{K}^{-1}\mathbf{C}\mathbf{F}'_{t|T}\right)\left(\sum_{t=1}^{T}  \left(\mathbf{C'}\mathbf{K}^{-1}\mathbf{C}_t\right)\star \left( \mathsf{f}_{t|T}\mathsf{f}'_{t|T} + \boldsymbol{\Pi}_{t|T}  \right)\right)^{-1},
	\]
	\[
	\mathbf{C} = \left(\sum_{t=1}^{T} \mathbf{Y}'_t\mathbf{H}^{-1}\mathbf{R}\mathbf{F}_{t|T} \right) \left(\sum_{t=1}^{T}   \left(\mathbf{R'}\mathbf{H}^{-1}\mathbf{R}\right) \star \left(\mathbb{K}_{k_1k_2} \left( \mathsf{f}_{t|T}\mathsf{f}'_{t|T} + \boldsymbol{\Pi}_{t|T}  \right) \mathbb{K}'_{k_1k_2}\right) \right)^{-1},
	\]
	\[
	\begin{split}
	\mathbf{H} = \frac{1}{Tp_2 }\sum_{t=1}^{T}  & \left[ \mathbf{Y}_t \mathbf{K}^{-1}\mathbf{Y}'_t - \mathbf{Y}_t \mathbf{K}^{-1} \mathbf{C} \mathbf{F}'_{t|T} \mathbf{R}' -  \mathbf{R} \mathbf{F}_{t|T} \mathbf{C}'\mathbf{K}^{-1}\mathbf{Y}'_t \right. \\ &  \left. + \left(\mathbf{C}'\mathbf{K}^{-1} \mathbf{C}\right) \star \left(\left( \mathbb{I}_{k_2}\otimes\mathbf{R}\right) \left( \mathsf{f}_{t|T}\mathsf{f}'_{t|T} + \boldsymbol{\Pi}_{t|T}\right) \left( \mathbb{I}_{k_2}\otimes\mathbf{R}\right)'\right) \right],
	\end{split}	
	\]
	\[
	\begin{split}
	\mathbf{K} = \frac{1}{Tp_1} \sum_{t=1}^{T} & \left[ \mathbf{Y}'_t \mathbf{H}^{-1} \mathbf{Y}_t - \mathbf{Y}_t'\mathbf{H}^{-1} \mathbf{R} \mathbf{F}_{t|T} \mathbf{C}' - \mathbf{C}\mathbf{F}'_{t|T} \mathbf{R}'\mathbf{H}^{-1}\mathbf{Y}_t \right. \\ &
	\left. + \left(\mathbf{R}'\mathbf{H}^{-1} \mathbf{R}\right) \star \left( \left(\mathbb{I}_{k_1}\otimes\mathbf{C} \right) \left(\mathbb{K}_{k_1k_2} \left( \mathsf{f}_{t|T}\mathsf{f}'_{t|T} + \boldsymbol{\Pi}_{t|T}  \right) \mathbb{K}'_{k_1k_2}\right) \left(\mathbb{I}_{k_1}\otimes\mathbf{C} \right)' \right) \right],
	\end{split}	
	\]
	\[
	\mathbf{A} = \left(\sum_{t=1}^{T} \left(\mathbf{Q}^{-1}\mathbf{B}\right) \star \left( \mathsf{f}_{t|T}\mathsf{f}'_{t-1|T} + \boldsymbol{\Delta}_{t|T}  \right) \right)\left(\sum_{t=1}^{T} \left(\mathbf{B}'\mathbf{Q}^{-1}\mathbf{B}\right) \star \left( \mathsf{f}_{t-1|T}\mathsf{f}'_{t-1|T} + \boldsymbol{\Pi}_{t-1|T}  \right) \right)^{-1},
	\]
	\[
	\setlength{\thinmuskip}{0mu}
	\setlength{\medmuskip}{0mu}
	\setlength{\thickmuskip}{0mu}
	\mathbf{B} = \left(\sum_{t=1}^{T}  \left(\mathbf{P}^{-1}\mathbf{A}\right) \star \left(\mathbb{K}_{k_1k_2} \left( \mathsf{f}_{t|T}\mathsf{f}'_{t-1|T} +  \boldsymbol{\Delta}_{t|T}  \right) \mathbb{K}'_{k_1k_2}\right) \right)\left(\sum_{t=1}^{T}  \left(\mathbf{A}'\mathbf{P}^{-1}\mathbf{A}\right) \star \left(\mathbb{K}_{k_1k_2} \left( \mathsf{f}_{t-1|T}\mathsf{f}'_{t-1|T} + \boldsymbol{\Pi}_{t-1|T}  \right) \mathbb{K}'_{k_1k_2}\right) \right)^{-1},
	\]
	\[
	\begin{split}
	\mathbf{P}  = \frac{1}{(T-1)k_2} \sum_{t=2}^{T} & \left\{ \mathbf{Q}^{-1} \star \left( \mathsf{f}_{t|T}\mathsf{f}'_{t|T} + \boldsymbol{\Pi}_{t|T}  \right) - \left[\left(\mathbf{Q}^{-1}\mathbf{B}\right) \star \left( \mathsf{f}_{t|T}\mathsf{f}'_{t-1|T} +  \boldsymbol{\Delta}_{t|T}  \right)\right] \mathbf{A}' \right. - \\ 
	& \left. \mathbf{A} \left[\left(\mathbf{Q}^{-1}\mathbf{B}\right) \star \left( \mathsf{f}_{t|T}\mathsf{f}'_{t-1|T} +  \boldsymbol{\Delta}_{t|T}  \right)\right]' + \mathbf{A} \left[\left(\mathbf{B}'\mathbf{Q}^{-1}\mathbf{B}\right) \star \left( \mathsf{f}_{t-1|T}\mathsf{f}'_{t-1|T} + \boldsymbol{\Pi}_{t-1|T}  \right)\right] \mathbf{A}' \right\},
	\end{split} 
	\]
	\[
	\begin{split}
	\mathbf{Q} = \frac{1}{(T-1)k_1} \sum_{t=2}^{T} & \left\{\mathbf{P}^{-1} \star \left(\mathbb{K}_{k_1k_2} \left( \mathsf{f}_{t|T}\mathsf{f}'_{t|T} + \boldsymbol{\Pi}_{t|T}  \right) \mathbb{K}'_{k_1k_2}\right)  - \left[\left(\mathbf{P}^{-1}\mathbf{A}\right) \star \left(\mathbb{K}_{k_1k_2} \left( \mathsf{f}_{t|T}\mathsf{f}'_{t-1|T} +  \boldsymbol{\Delta}_{t|T}  \right) \mathbb{K}'_{k_1k_2}\right)\right] \mathbf{B}' \right. \\ & - \mathbf{B} \left[\left(\mathbf{P}^{-1}\mathbf{A}\right) \star \left(\mathbb{K}_{k_1k_2} \left( \mathsf{f}_{t|T}\mathsf{f}'_{t-1|T} +  \boldsymbol{\Delta}_{t|T}  \right) \mathbb{K}'_{k_1k_2}\right)\right]' \\ & + \left. \mathbf{B} \left[\left(\mathbf{A}'\mathbf{P}^{-1}\mathbf{A}\right) \star \left(\mathbb{K}_{k_1k_2} \left( \mathsf{f}_{t-1|T}\mathsf{f}'_{t-1|T} +\boldsymbol{\Pi}_{t-1|T}  \right) \mathbb{K}'_{k_1k_2}\right) \right] \mathbf{B}' \right\}.
	\end{split}	
	\]
	For any iteration $n\ge 0$, given an estimator of the parameters $\widehat{\boldsymbol{\theta}}^{(n)}$, the explicit solutions for $\mathbf R$, $\mathbf C$, $\mathbf H$, and $\mathbf K$ obtained from the previous FOCs are given in Section \ref{subsec:Mstep}. While the solutions for $\mathbf A$, $\mathbf B$, $\mathbf P$, and $\mathbf Q$ are the following:
%		Similarly, the estimators of the autoregressive parameters are given by:
	\[
	\setlength{\thinmuskip}{0mu}
	\setlength{\medmuskip}{0mu}
	\setlength{\thickmuskip}{0mu}
	\begin{array}{rcl}
	\widehat{\mathbf{A}}^{(n+1)} &=& \left(\sum\limits_{t=2}^{T} \left(\widehat{\mathbf{Q}}^{(n)-1}\widehat{\mathbf{B}}^{(n)}\right) \star \left( \mathsf{f}^{(n)}_{t|T}\mathsf{f}^{(n)\prime}_{t-1|T} + \mathbf{\Delta}^{(n)}_{t|T}  \right) \right)\left(\sum\limits_{t=1}^{T} \left(\widehat{\mathbf{B}}^{(n)\prime}\widehat{\mathbf{Q}}^{(n)-1}\widehat{\mathbf{B}}^{(n)}\right) \star \left(\mathsf{f}^{(n)}_{t-1|T}\mathsf{f}^{(n)\prime}_{t-1|T} + \mathbf{\Pi}^{(n)}_{t-1|T}  \right) \right)^{-1}, \\[0.25in]
	\widehat{\mathbf{B}}^{(n+1)} &=&  \left(\sum\limits_{t=2}^{T}  \left(\widehat{\mathbf{P}}^{(n)-1}\widehat{\mathbf{A}}^{(n+1)}\right) \star \left(\mathbb{K}_{k_1k_2} \left( \mathsf{f}^{(n)}_{t|T}\mathsf{f}^{(n)\prime}_{t-1|T} + \mathbf{\Delta}^{(n)}_{t|T}  \right) \mathbb{K}'_{k_1k_2}\right) \right) \\[0.1in]
	&&	 \hspace{1in} \times \left(\sum\limits_{t=1}^{T}  \left(\widehat{\mathbf{A}}^{(n+1)\prime}\widehat{\mathbf{P}}^{(n)-1}\widehat{\mathbf{A}}^{(n+1)}\right) \star \left(\mathbb{K}_{k_1k_2} \left( \mathsf{f}^{(n)}_{t-1|T}\mathsf{f}^{(n)\prime}_{t-1|T} + \mathbf{\Pi}^{(n)}_{t-1|T} \right) \mathbb{K}'_{k_1k_2}\right) \right)^{-1},
	\end{array}
	\]
%%	while we set the $\widehat{h}^{(n+1)}_{ij}=0$ and  $\widehat{k}^{(n+1)}_{ij}=0$ for all $i\not=j$. 
%	Last, using $\widehat{\mathbf{A}}^{(n+1)}$ and $\widehat{\mathbf{B}}^{(n+1)}$ we can compute the autoregressive innovation covariances as:
	\[
	\setlength{\thinmuskip}{0mu}
	\setlength{\medmuskip}{0mu}
	\setlength{\thickmuskip}{0mu}
	\begin{array}{rcl}
	\widehat{\mathbf{P}}^{(n+1)} &=& \frac{1}{(T-1)k_2} \sum\limits_{t=2}^{T}  \left[ \widehat{\mathbf{Q}}^{(n)-1} \star \left( \mathsf{f}^{(n)}_{t|T}\mathsf{f}^{(n)\prime}_{t|T} + \mathbf{\Pi}^{(n)}_{t|T}  \right)  \right. \\[0.1in] 
	&& \hspace*{0.8in} - \left(\left(\widehat{\mathbf{Q}}^{(n)-1}\widehat{\mathbf{B}}^{(n+1)}\right) \star \left( \mathsf{f}^{(n)}_{t|T}\mathsf{f}^{(n)\prime}_{t-1|T} + \mathbf{\Delta}^{(n)}_{t|T}  \right)\right)  \widehat{\mathbf{A}}^{(n+1)\prime}   \\[0.1in]
	&& \hspace*{0.8in} -  \widehat{\mathbf{A}}^{(n+1)} \left( \left(\widehat{\mathbf{Q}}^{(n)-1}\widehat{\mathbf{B}}^{(n+1)}\right) \star \left( \mathsf{f}^{(n)}_{t|T}\mathsf{f}^{(n)\prime}_{t-1|T} + \mathbf{\Delta}^{(n)}_{t|T}  \right)\right)' \\[0.1in]
	&& \hspace*{0.8in} \left.+ \widehat{\mathbf{A}}^{(n+1)} \left(\left(\widehat{\mathbf{B}}^{(n+1)\prime}\widehat{\mathbf{Q}}^{(n)-1}\widehat{\mathbf{B}}^{(n+1)}\right) \star \left( \mathsf{f}^{(n)}_{t-1|T}\mathsf{f}^{(n)\prime}_{t-1|T} + \mathbf{\Pi}^{(n)}_{t-1|T}  \right)\right) \widehat{\mathbf{A}}^{(n+1)\prime} \right], \\[0.25in]
	\widehat{\mathbf{Q}}^{(n+1)} &=& \frac{1}{(T-1)k_1} \sum\limits_{t=2}^{T}  \left[ \widehat{\mathbf{P}}^{(n)-1} \star \left(\mathbb{K}_{k_1k_2} \left( \mathsf{f}^{(n)}_{t|T}\mathsf{f}^{(n)\prime}_{t|T} + \mathbf{\Pi}^{(n)}_{t|T} \right) \mathbb{K}'_{k_1k_2}\right)\right. \\[0.1in]
	&& \hspace*{0.8in} - \left( \left(\mathbf{P}^{(n)-1}\mathbf{A}^{(n+1)}\right) \star \left(\mathbb{K}_{k_1k_2} \left( \mathsf{f}^{(n)}_{t|T}\mathsf{f}^{(n)\prime}_{t-1|T} + \mathbf{\Delta}^{(n)}_{t|T}  \right) \mathbb{K}'_{k_1k_2}\right) \right) \widehat{\mathbf{B}}'_{(n+1)}  \\[0.1in]
	&& \hspace*{0.8in} - \widehat{\mathbf{B}}^{(n+1)} \left( \left(\widehat{\mathbf{P}}^{(n)-1}\widehat{\mathbf{A}}^{(n+1)}\right) \star \left(\mathbb{K}_{k_1k_2} \left( \mathsf{f}^{(n)}_{t|T}\mathsf{f}^{(n)\prime}_{t-1|T} + \mathbf{\Delta}^{(n)}_{t|T}  \right) \mathbb{K}'_{k_1k_2}\right)\right)' \\[0.1in]
	&& \hspace*{0.8in} + \left. \widehat{\mathbf{B}}^{(n+1)} \left( \left(\widehat{\mathbf{A}}^{(n+1)\prime}\widehat{\mathbf{P}}^{(n)-1}\widehat{\mathbf{A}}^{(n+1)}\right) \star \left(\mathbb{K}_{k_1k_2} \left( \mathsf{f}^{(n)}_{t-1|T}\mathsf{f}^{(n)\prime}_{t-1|T} + \mathbf{\Pi}^{(n)}_{t-1|T}  \right) \mathbb{K}'_{k_1k_2}\right) \right) \widehat{\mathbf{B}}^{(n+1)\prime} \right]. 
	\end{array}	
	\]
	In practice, estimating the factor matrices using 
	$\widehat{\mathbf{B}}^{(n+1)}\otimes \widehat{\mathbf{A}}^{(n+1)}$ and $\widehat{\mathbf{P}}^{(n+1)}\otimes \widehat{\mathbf{Q}}^{(n+1)}$
	or using  those computed directly for the vectorized MAR, i.e., 
			\begin{align}
	\widehat{\mathbf{B}\otimes\mathbf{A}}^{(n+1)} = \left(\sum^T_{t=2}\mathsf{f}^{(n+1)}_{t|T}  \mathsf{f}^{(n)\prime}_{t-1|T}  + \boldsymbol{\Delta}^{(n)}_{t|T}\right) \left(\sum_{t=2}^{T} \mathsf{f}^{(n)}_{t-1|T}  \mathsf{f}^{(n)\prime}_{t-1|T}  + \boldsymbol{\Pi}^{(n)}_{t-1|T}\right)^{-1},\nonumber\\
	\widehat{\mathbf{Q}\otimes \mathbf{P}}^{(n+1)} = \frac{1}{T} \sum_{t=2}^{T} \mathsf{f}^{(k)}_{t|T} \mathsf{f}^{(n)\prime}_{t|T} + \boldsymbol{\Pi}^{(n)}_{t|T} - \left( \mathsf{f}^{(n)}_{t|T} \mathsf{f}^{(n)\prime}_{t-1|T} + \boldsymbol{\Delta}^{(n)}_{t|T} \right)	\widehat{\mathbf{B}\otimes\mathbf{A}}^{(n+1)\prime},\nonumber
	\end{align}
	does not make any appreciable difference, so for ease of computation we suggest to use the latter.

\subsection{Initial estimators}\label{app:yu}
	Let $\mathbf{M}_1=(p_1p_2T)^{-1}\sum_{t=1}^{T}\mathbf{Y}_t\mathbf{Y}'_t$ and $\mathbf{M}_2=(p_1p_2T)^{-1}\sum_{t=1}^{T}\mathbf{Y}'_t\mathbf{Y}_t$, and define $\overline{\mathbf{X}}_t=p^{-1}_2\mathbf{Y}_t\overline{\mathbf{C}}$ and $\overline{\mathbf{Z}}_t=p^{-1}_1\mathbf{Y}'_t\overline{\mathbf{R}}$, where $\overline{\mathbf{R}}=\sqrt{p_1}\ \boldsymbol{\Gamma}^{M_1}$ and $\overline{\mathbf{C}}=\sqrt{p_2}\ \boldsymbol{\Gamma}^{M_2}$, with $\boldsymbol{\Gamma}^{M_i}$ containing the $k_i$ leading eigenvectors of $\mathbf{M}_i$, for $i=1,2$. The estimators $\overline{\mathbf{R}}$ and $\overline{\mathbf{C}}$ are called initial estimators are equivalent to those introdcued by \citet{chen2021statistical}. However, a better estimator of the row (column) loadings can be obtained by PC of the data projected onto the space spanned by the column (row) loadings. Specifically, let $\overline{\mathbf{M}}_1=(p_1p_2T)^{-1}\sum_{t=1}^{T}\overline{\mathbf{X}}_t\overline{\mathbf{X}}'_t$ and $\overline{\mathbf{M}}_2=(p_1p_2T)^{-1}\sum_{t=1}^{T}\overline{\mathbf{Z}}'_t\overline{\mathbf{Z}}_t$. Pre-estimators of $\mathbf{R}$ and $\mathbf{C}$ are given by:	
	\begin{equation}
	\widehat{\mathbf{R}}^{(0)} = \sqrt{p_1}\ \boldsymbol{\Gamma}^{\overline{M}_1}, \qquad \widehat{\mathbf{C}}^{(0)}=\sqrt{p_2}\  \boldsymbol{\Gamma}^{\overline{M}_2},\nonumber
	\end{equation}
	with $\boldsymbol{\Gamma}^{\overline M_i}$ containing the $k_i$ leading eigenvectors of $\overline{\mathbf{M}}_i$, for $i=1,2$. 
	
	The pre-estimator of the factor matrix is then obtained by linear projection as:
	\[
	\widetilde{\mathbf {F}}_t=\frac {
		\widehat{\mathbf{R}}^{(0)'}
	\mathbf{Y}_t
		\widehat{\mathbf{C}}^{(0)}}{p_1p_2}.
		\] 
		Then, letting $\widehat{\mathbf{E}}^{(0)} = \mathbf{Y}_t - \widehat{\mathbf{R}}^{(0)}\widetilde{\mathbf {F}}_t\widehat{\mathbf{C}}^{(0)\prime}$, the pre-estimators of $\mathbf{H}$ and $\mathbf{K}$ are given by:
	
	\begin{align}
	&[\widehat{\mathbf{H}}^{(0)}]_{ii}=\frac{1}{Tp_2} \sum_{t=1}^{T}\left[{\widehat{\mathbf{E}}^{(0)}_t\widehat{\mathbf{E}}^{(0)\prime}_t}\right]_{ii}, &&\quad[\widehat{\mathbf{H}}^{(0)}]_{ij}=0,\qquad i,j=1,\ldots, p_1, \quad i\ne j.	\nonumber\\
	&[\widehat{\mathbf{K}}^{(0)}]_{ii}=\frac{1}{Tp_1} \sum_{t=1}^{T}\left[{\widehat{\mathbf{E}}^{(0)\prime}_t\widehat{\mathbf{H}}^{(0)-1}\widehat{\mathbf{E}}^{(0)}_t}\right]_{ii},&& \quad [\widehat{\mathbf{K}}^{(0)}]_{ij}=0,\qquad i,j=1,\ldots, p_2, \quad i\ne j,	\nonumber
	\end{align}
	Notice that only the pre-estimators of the diagonal terms are needed for running the EM algorithm.

	Then, denoting the pre-estimator of the vectorized factors as $\widetilde{\mathsf{f}}_t=\frac{(\widehat{\mathbf{C}}^{(0)}\otimes\widehat{\mathbf{R}}^{(0)})'\mathsf{y}_t}{p_1p_2}$, the pre-estimators for the MAR parameters  are given by:
	\begin{align}
	\widehat{\mathbf{B\otimes\mathbf{A}}}^{(0)} &= \left(\sum^T_{t=2} \widetilde{\mathsf{f}}_t \widetilde{\mathsf{f}}'_{t-1} \right)\left(\sum^T_{t=2} \widetilde{\mathsf{f}}_{t-1} \widetilde{\mathsf{f}}'_{t-1}\right)^{-1}, \nonumber\\
	\widehat{\mathbf{Q}\otimes\mathbf{P}}^{(0)}& = \left(\sum^T_{t=2} \widetilde{\mathsf{f}}_t - \widehat{\mathbf{B\otimes\mathbf{A}}}^{(0)} \widetilde{\mathsf{f}}_{t-1} \right)\left(\sum^T_{t=2} \widetilde{\mathsf{f}}_t - \widehat{\mathbf{B\otimes\mathbf{A}}}^{(0)} \widetilde{\mathsf{f}}_{t-1} \right)',\nonumber
	\end{align}
	Alternatively, we can obtain the pre-estimators $\widehat{\mathbf{A}}^{(0)}$, $\widehat{\mathbf{B}}^{(0)}$, $\widehat{\mathbf{P}}^{(0)}$ and $\widehat{\mathbf{Q}}^{(0)}$ with the projection method of \cite{chen2021autoregressive} computed when using the estimated factors $\widetilde{\mathsf{f}}_t$.

	\subsection{EM updates with missing observations}
	\label{ap:EMmiss}
	Let $\mathbf{W}_t$ be a $p_1\times p_2$ matrix with ones corresponding to the non-missing entries in $\mathbf{Y}_t$ and zeros otherwise. Decomposing $\mathbf{Y}_t$ as follows
	\[
	\mathbf{Y}_t = \mathbf{W}_t \circ \mathbf{Y}_t + \left(\mathbf{1}_{p_1,p_2}-\mathbf{W}_t\right) \circ \mathbf{Y}_t,
	\]
	we can write 
	\begin{equation}
	\begin{array}{l}
	\textbf{tr}\left(\mathbf{H}^{-1}\left(\mathbf{Y}_t-\mathbf{R}\mathbf{F}_{t}\mathbf{C}'\right)\mathbf{K}^{-1}\left(\mathbf{Y}_t-\mathbf{R}\mathbf{F}_{t}\mathbf{C}'\right)'\right) \\
	= \quad \textbf{tr}\left(\mathbf{H}^{-1}\left(\mathbf{W}_t \circ (\mathbf{Y}_t-\mathbf{R}\mathbf{F}_{t}\mathbf{C}') \right)\mathbf{K}^{-1}\left(\mathbf{W}_t \circ (\mathbf{Y}_t-\mathbf{R}\mathbf{F}_{t}\mathbf{C}') \right)'\right) \\
	\quad + \; \textbf{tr}\left(\mathbf{H}^{-1}\left(\mathbf{W}_t \circ (\mathbf{Y}_t-\mathbf{R}\mathbf{F}_{t}\mathbf{C}') \right)\mathbf{K}^{-1}\left( \left(\mathbf{1}_{p_1,p_2}-\mathbf{W}_t\right) \circ (\mathbf{Y}_t-\mathbf{R}\mathbf{F}_{t}\mathbf{C}')\right)'\right) \\
	\quad+ \; \textbf{tr}\left(\mathbf{H}^{-1}\left(\left(\mathbf{1}_{p_1,p_2}-\mathbf{W}_t\right) \circ (\mathbf{Y}_t-\mathbf{R}\mathbf{F}_{t}\mathbf{C}')\right)\mathbf{K}^{-1}\left(\mathbf{W}_t \circ (\mathbf{Y}_t-\mathbf{R}\mathbf{F}_{t}\mathbf{C}')\right)'\right) \\
	\quad + \; \textbf{tr}\left(\mathbf{H}^{-1}\left(\left(\mathbf{1}_{p_1,p_2}-\mathbf{W}_t\right) \circ (\mathbf{Y}_t-\mathbf{R}\mathbf{F}_{t}\mathbf{C}')\right)\mathbf{K}^{-1}\left(\left(\mathbf{1}_{p_1,p_2}-\mathbf{W}_t\right) \circ (\mathbf{Y}_t-\mathbf{R}\mathbf{F}_{t}\mathbf{C}')\right)'\right).
	\end{array}
	\label{eq:llkdeco}
	\end{equation}
	Moreover, by the law of iterated expectations, we have that 
	\begin{equation}
	\begin{split}
	\mathbb{E}_{\widehat{\boldsymbol{\theta}}}& \left[  \textbf{tr}\left(\mathbf{H}^{-1}\left(\mathbf{Y}_t-\mathbf{R}\mathbf{F}_{t}\mathbf{C}'\right)\mathbf{K}^{-1}\left(\mathbf{Y}_t-\mathbf{R}\mathbf{F}_{t}\mathbf{C}'\right)'\right) | \mathsf{Y}_T \right] =  \\ & \mathbb{E}_{\widehat{\boldsymbol{\theta}}}\left[\mathbb{E}_{\widehat{\boldsymbol{\theta}}}\left[ \textbf{tr}\left(\mathbf{H}^{-1}\left(\mathbf{Y}_t-\mathbf{R}\mathbf{F}_{t}\mathbf{C}'\right)\mathbf{K}^{-1}\left(\mathbf{Y}_t-\mathbf{R}\mathbf{F}_{t}\mathbf{C}'\right)'\right)| \mathbf{F}_t, \mathsf{Y}_T  \right] \right].
	\end{split}
	\label{eq:LIEllk}
	\end{equation}
	Using the properties of Hadamard product and \eqref{eq:hadamtrick}, we obtain
	\begin{equation}
	% [inline block 0: 25 envs, 33472 chars -> data_tex | \begin{array}{l} 	\mathbb{E}_{\widehat{\boldsymbol{\theta}}}\left[\textbf{tr}\left(\mathbf{H}^{-1}\left(\mathbf{W}_t \ci...]

	\]
	For any iteration $n\ge 0$, given an estimator of the parameters $\widehat{\boldsymbol{\theta}}^{(n)}$, the explicit solutions for $\mathbf R$ and $\mathbf C$ obtained from the previous FOCs are given in Section \ref{subsec:banbura}. While the solutions for the diagonal elements of $\mathbf H$ and $\mathbf K$ are the following:
	\[
	\setlength{\thinmuskip}{0mu}
	\setlength{\medmuskip}{0mu}
	\setlength{\thickmuskip}{0mu}
	\begin{split}
	[\widehat{\mathbf H}^{(n+1)}]_{ii} &= \frac{1}{Tp_2}\sum\limits_{t=1}^{T} \left[  \left( \left(\mathbf{W}_t\circ\mathbf{Y}_t\right)\widehat{\mathbf{K}}^{(n)-1}\left(\mathbf{W}_t\circ\mathbf{Y}_t\right)'  \right. \right. \\[0.1in]
	& \hspace*{0.75in} - \left(\mathbf{W}_t\circ\mathbf{Y}_t\right)\widehat{\mathbf{K}}^{(n)-1}\left(\mathbf{W}_t\circ\left(\widehat{\mathbf{R}}^{(n+1)}\mathbf{F}^{(n)}_{t|T} \widehat{\mathbf{C}}^{(k+1)\prime}\right)\right)'  \\[0.1in]
	& \hspace*{0.75in} - \left(\mathbf{W}_t\circ\left(\widehat{\mathbf{R}}^{(n+1)} \mathbf{F}^{(n)}_{t|T} \widehat{\mathbf{C}}^{(k+1)\prime}\right)\right) \widehat{\mathbf{K}}^{(n)-1} \left(\mathbf{W}_t\circ \mathbf{Y}_t\right)' \\[0.1in]
	& \hspace*{0.75in} \left. +	\widehat{\mathbf{K}}^{(n)-1} \star \left(\mathbb{D}_{\mathbf{W}_t} \left(\widehat{\mathbf{C}}^{(n+1)}\otimes\widehat{\mathbf{R}}^{(n+1)}\right) \left(\mathsf{f}^{(n)}_{t|T} \mathsf{f}^{(n)\prime}_{t|T} + \mathbf{\Pi}^{(n)}_{t|T}\right) \left(\widehat{\mathbf{C}}^{(n+1)}\otimes\widehat{\mathbf{R}}^{(n+1)}\right)' \mathbb{D}'_{\mathbf{W}_t}\right) \right)' \\[0.1in]
	& \hspace*{0.75in} \left. + \left(\widehat{\textbf{H}}^{(n)} \mathbf{1}_{p_1,p_2} \widehat{\textbf{K}}^{(n)} \right) \widehat{\mathbf{K}}^{(n)-1} \left(\mathbf{1}_{p_1,p_2}-\mathbf{W}_t\right)' \right]_{ii}, \\[0.25in]
	[\widehat{\mathbf K}^{(n+1)}]_{ii} &= \frac{1}{Tp_1}\sum\limits_{t=1}^{T}  \left[  \left(\left(\mathbf{W}_t\circ\mathbf{Y}_t\right)'\widehat{\mathbf{H}}^{(n+1)-1}\left(\mathbf{W}_t\circ\mathbf{Y}_t\right) \right. \right. \\[0.1in]
	& \hspace*{0.75in} -\left(\mathbf{W}_t\circ\mathbf{Y}_t\right)'\widehat{\mathbf{H}}^{(n+1)-1}\left(\mathbf{W}_t\circ\left(\widehat{\mathbf{R}}^{(n+1)}\mathbf{F}^{(n)}_{t|T} \widehat{\mathbf{C}}^{(k+1)\prime}\right)\right) \\[0.1in]
	& \hspace*{0.75in} - \left(\mathbf{W}_t\circ\left(\widehat{\mathbf{R}}^{(n+1)} \mathbf{F}^{(n)}_{t|T} \widehat{\mathbf{C}}^{(k+1)\prime}\right)\right)' \widehat{\mathbf{H}}^{(n+1)-1} \left(\mathbf{W}_t\circ \mathbf{Y}_t\right) \\[0.1in]
	&  \hspace*{0.4in} \left.  + \widehat{\mathbf{H}}^{(n+1)-1} \star  \left(\mathbb{D}_{\mathbf{W}'_t} \left(\widehat{\mathbf{R}}^{(n+1)}\otimes\widehat{\mathbf{C}}^{(n+1)}\right) \left(\mathbb{K}_{k_1k_2} \left( \mathsf{f}^{(n)}_{t|T}\mathsf{f}^{(n)\prime}_{t|T} + \mathbf{\Pi}^{(n)}_{t|T}  \right) \mathbb{K}'_{k_1k_2}\right) \left(\widehat{\mathbf{R}}^{(n+1)}\otimes\widehat{\mathbf{C}}^{(n+1)}\right)' \mathbb{D}'_{\mathbf{W}'_t}\right) \right)'  \\[0.1in]
	&\hspace*{0.75in} \left. + \left(\mathbf{1}_{p_1,p_2}-\mathbf{W}_t\right)'\widehat{\mathbf{H}}^{(n+1)-1} \left(\widehat{\textbf{H}}^{(n+1)} \mathbf{1}_{p_1,p_2} \widehat{\textbf{K}}^{(n)} \right)  \right]_{ii}.
	\end{split}
	\]

	\section{Asymptotic results}\label{app:proofs}
	As $k_1$, $k_2$ are both fixed constants, without loss of generality, we assume $k_1 = k_2 = 1$ in some parts of the proofs as long as it simplifies the notations.
	
	\subsection{Proof of main results}
		\begin{proof}[Proof of Proposition \ref{prop:EMloadCons}]
		Recall that 
			\[
			% [inline block 1: 15 envs, 21163 chars -> data_tex | \begin{array}{rll} 				\widehat{\mathbf{R}}^{(n+1)}&=&\left( \sum\limits_{t=1}^{T} \mathbf{Y}_t\widehat{\mathbf{K}}^{(n)...]

				\label{eq:rh1.vi}
			\end{equation}
			by Lemma \ref{lemma:tildeBound}(iv) and Assumption \ref{subass:factor}.	Combining \eqref{eq:rh.1} with \eqref{eq:rh.1.i}-\eqref{eq:rh.1.iv} and \eqref{eq:rh.1.v}-\eqref{eq:rh1.vi}, we obtain 
			\[
			\frac{1}{\sqrt{p_1}}\left\lVert\widehat{\mathbf{R}}^{(1)}-\mathbf{R}\widehat{\mathbf{J}}_1 \right\rVert = O_p\left(\max\left\{\frac{1}{\sqrt{Tp_1}},\frac{1}{\sqrt{Tp_2}},\frac{1}{p_1p_2}\right\}\right).	
			\]
			Consider now $n>0$, the consistency result for $\widehat{\mathbf{R}}^{(n+1)}$ follows iterating the same steps but using Lemma \ref{lemma:est.loadscale.1} and this proposition in place of Lemmas \ref{lemma:boundRC} and \ref{lemma:est.loadscale}.  The proof for $\widehat{\mathbf{C}}$ follows analogously and it is omitted.
			
			For the row-wise consistency note that we can use the decomposition in \eqref{eq:rh.1} using $\mathbf{r}_{i\cdot}$ in place of $\mathbf{R}$, then \eqref{eq:rh.1.i}-\eqref{eq:rh.1.iii} follows analogously as $\lVert\mathbf{r}_{i\cdot}\rVert=O_p(1)$ by Assumption \ref{subass:loading}. Then
			\begin{equation}
				\begin{array}{rll}
					\left\lVert \frac{1}{Tp_2}\sum\limits_{t=1}^{T} \mathbf{e}'_{ti\cdot} \widehat{\mathbf{K}}^{(0)-1} \widehat{\mathbf{C}}^{(0)} \widehat{\mathbf{F}}^{(0)\prime}_{t|T}  \right\rVert &\lesssim& \left\lVert \frac{1}{Tp_2}\sum\limits_{t=1}^{T} \mathbf{e}'_{ti\cdot} \dg{\mathbf{K}}^{-1} \mathbf{C} \left(\widehat{\mathbf{F}}^{(0)}_{t|T} - \widehat{\mathbf{J}}^{-1}_1\mathbf{F}_t\widehat{\mathbf{J}}^{-\prime}_2 \right)' \right\rVert \\[0.1in]
					&& \left\lVert \frac{1}{Tp_2}\sum\limits_{t=1}^{T} \mathbf{e}'_{ti\cdot} \left( \widehat{\mathbf{K}}^{(0)-1}\widehat{\mathbf{C}}^{(0)} - \dg{\mathbf{K}}^{-1}\mathbf{C}\widehat{\mathbf{J}}_2 \right) \mathbf{F}'_t  \right\rVert \\[0.1in]
					&& \left\lVert \frac{1}{Tp_2}\sum\limits_{t=1}^{T} \mathbf{e}'_{ti\cdot} \dg{\mathbf{K}}^{-1} \mathbf{C} \mathbf{F}'_t  \right\rVert \\[0.1in]
					&=& O_p\left(\max\left\{ \frac{1}{\sqrt{Tp_1}}, \frac{1}{\sqrt{Tp_2}}, \frac{1}{p_1p_2}  \right\}\right)
				\end{array}
			\end{equation}
			since
			\[
			\begin{array}{rll}
				\frac{1}{Tp_2}\left\lVert \sum\limits_{t=1}^{T} \mathbf{e}'_{ti\cdot} \dg{\mathbf{K}}^{-1} \mathbf{C} \left(\widehat{\mathbf{F}}^{(0)}_{t|T} - \widehat{\mathbf{J}}^{-1}_1\mathbf{F}_t\widehat{\mathbf{J}}^{-\prime}_2 \right)' \right\rVert 
				&=& \frac{1}{Tp_2}\left\lVert \sum\limits_{t=1}^{T}  \sum\limits_{j=1}^{p_2} \sum\limits_{q=1}^{k_2}  e_{tij} k^{-1}_{jj} c_{jq}\left[\widehat{\mathbf{F}}^{(0)}_{t|T} - \widehat{\mathbf{J}}^{-1}_1\mathbf{F}_t\widehat{\mathbf{J}}^{-\prime}_2 \right]_{\cdot q} \right\rVert  \\[0.1in]
				&\leq& \bar{c}C_K  \left\lVert \frac{1}{Tp_2}\sum\limits_{j=1}^{p_2}\sum\limits_{t=1}^{T} e_{tij} \left(\mathsf{f}^{(0)}_{t|T} - \widehat{\mathbf{J}}^{-1}\mathsf{f}_t \right)  \right\rVert  \\[0.1in]
%				&\leq& \bar{c}C_K \max\limits_{1\leq j \leq p_2}  \left\lVert \frac{1}{T}\sum\limits_{t=1}^{T} e_{tij} \left(\mathsf{f}^{(0)}_{t|T} - \widehat{\mathbf{J}}^{-1}\mathsf{f}_t \right)  \right\rVert  \\[0.1in]
				&=& O_p\left(\max\left\{ \frac{1}{\sqrt{Tp_1}}, \frac{1}{\sqrt{Tp_2}}, \frac{1}{p_1p_2}  \right\}\right)
			\end{array}
			\]
			by Lemma \ref{lemma:crossf}(ii), 
			\[
			\begin{array}{rll}
				\left\lVert \frac{1}{Tp_2}\sum\limits_{t=1}^{T}  \mathbf{e}'_{ti\cdot} \left( \widehat{\mathbf{K}}^{(0)-1}\widehat{\mathbf{C}}^{(0)} - \dg{\mathbf{K}}^{-1}\mathbf{C} \right) \mathbf{F}'_t  \right\rVert &\lesssim& \frac{1}{\sqrt{T}} \frac{\left\lVert  \widehat{\mathbf{K}}^{(0)-1}\widehat{\mathbf{C}}^{(0)} - \dg{\mathbf{K}}^{-1}\mathbf{C} \right\rVert }{\sqrt{p_2}}\left\lVert \frac{1}{\sqrt{Tp_2}}\sum\limits_{t=1}^{T} \sum\limits_{j=1}^{p_2} e_{tij} \mathbf{F}'_t  \right\rVert \\[0.1in]
				&=& \frac{1}{\sqrt{T}} O_p\left( \max\left\{ \frac{1}{\sqrt{Tp_1}}, \frac{1}{p_1p_2}, \frac{1}{Tp_2} \right\}  \right)
			\end{array}
			\]
			by Assumption \ref{subass:fecorr} and Lemma \ref{lemma:est.loadscale}(iii),
			\[
			\left\lVert \frac{1}{Tp_2}\sum\limits_{t=1}^{T} \mathbf{e}_{ti\cdot} \dg{\mathbf{K}}^{-1} \mathbf{C} \mathbf{F}'_t  \right\rVert \leq \frac{C_K}{\sqrt{Tp_2}}\left\lVert \frac{1}{\sqrt{T}}\sum\limits_{t=1}^{T}\sum\limits_{j=1}^{p_2}  e_{tij} \frac{\mathbf{c}_{j\cdot}}{\sqrt{p_2}}\mathbf{F}'_t  \right\rVert =  O_p\left(\frac{1}{\sqrt{Tp_2}}\right)
			\]
			follows directly from Assumption \ref{subass:fecorr}. Iterating the same steps using Lemma \ref{lemma:est.loadscale.1} in place of Lemma \ref{lemma:est.loadscale} yields the result for $n>0$. Row-wise consistency for $\widehat{\mathbf{c}}_{j\cdot}$ can be established analogously. 
	\end{proof}
	
	\begin{proof}[Proof of Proposition \ref{prop:EMfactorCons}]
		Since $\vect{\widehat{\mathbf{F}}^{(n)}_t} =  \mathsf{f}^{(n)}_{t|T}$, we have
		\[
		\begin{array}{rll}
			\left\lVert \widehat{\mathbf{F}}^{(n)}_t-\widehat{\mathbf{J}}^{-1}_t\mathbf{F}_t\widehat{\mathbf{J}}^{-\prime}_2 \right\rVert &\leq& \left\lVert \widehat{\mathbf{F}}^{(n)}_t-\widehat{\mathbf{J}}^{-1}_t\mathbf{F}_t\widehat{\mathbf{J}}^{-\prime}_2  \right\rVert_F \\[0.1in]
			&=& 	\left\lVert \mathsf{f}^{(n)}_{t|T}-\widehat{\mathbf{J}}^{-1}\mathsf{f}_t \right\rVert \\[0.1in]
			&\leq& \left\lVert\mathsf{f}^{(n)}_{t|T}-\mathsf{f}^{(n)}_{t|t}\right\rVert + \left\lVert\mathsf{f}^{(n)}_{t|t}-\mathsf{f}^{LS(n)}_t\right\rVert + \left\lVert\mathsf{f}^{LS(n)}_{t}-\widehat{\mathbf{J}}^{-1}\mathsf{f}_t\right\rVert \\[0.1in]
		\end{array}
		\]
		with $\widehat{\mathbf{J}}=\widehat{\mathbf{J}}_2\otimes\widehat{\mathbf{J}}_1$ and
		\[
		\mathsf{f}^{LS(n)}_{t} = \left( \left(\left(\widehat{\mathbf{C}}^{(n)\prime}\widehat{\mathbf{K}}^{(n)-1}\widehat{\mathbf{C}}^{(n)}\right)^{-1}\widehat{\mathbf{C}}^{(n)\prime}\widehat{\mathbf{K}}^{(n)-1}\right)\otimes\left(\left(\widehat{\mathbf{R}}^{(n)\prime}\widehat{\mathbf{H}}^{(n)-1}\widehat{\mathbf{R}}^{(n)}\right)^{-1}\widehat{\mathbf{R}}^{(n)\prime}\widehat{\mathbf{H}}^{(n)-1}\right)\right)\mathsf{y}_t.
		\]
		Consider the case $n=0$. Combining Lemmas D.15 and D.16 in \cite{barigozzi2019quasi} with Lemmas \ref{lemma:boundRC}, \ref{lemma:boundAB},  \ref{lemma:est.loadscale}, we have that
		\[
			\left\lVert\mathsf{f}^{(0)}_{t|T}-\mathsf{f}_t\right\rVert \leq  \left\lVert\mathsf{f}^{LS(0)}_{t}-\widehat{\mathbf{J}}^{-1}\mathsf{f}_t\right\rVert + O_p\left(\frac{1}{p_1p_2}\right)
		\]
		Then,
		\[
		\begin{array}{rll}
			\left\lVert\mathsf{f}^{LS(0)}_{t}-\widehat{\mathbf{J}}^{-1}\mathsf{f}_t\right\rVert &\leq& \left\lVert \left( \left(\left(\widehat{\mathbf{C}}^{(0)\prime}\widehat{\mathbf{K}}^{(0)-1}\widehat{\mathbf{C}}^{(0)}\right)^{-1}\widehat{\mathbf{C}}^{(0)\prime}\widehat{\mathbf{K}}^{(0)-1}\right)\otimes\left(\left(\widehat{\mathbf{R}}^{(0)\prime}\widehat{\mathbf{H}}^{(0)-1}\widehat{\mathbf{R}}^{(0)}\right)^{-1}\widehat{\mathbf{R}}^{(0)\prime}\widehat{\mathbf{H}}^{(0)-1}\right)\right) \right. \\[0.1in]
			&& \hspace*{1.5in}\left. \times \left( \mathbf{C}\widehat{\mathbf{J}}_2 \otimes \mathbf{R}\widehat{\mathbf{J}}_1 - \widehat{\mathbf{C}}^{(0)} \otimes \widehat{\mathbf{R}}^{(0)} \right) \right\rVert \left\lVert \widehat{\mathbf{J}}^{-1}\mathsf{f}_t \right\rVert \\[0.1in]
			&& + \left\lVert \left( \left(\left(\widehat{\mathbf{C}}^{(0)\prime}\widehat{\mathbf{K}}^{(0)-1}\widehat{\mathbf{C}}^{(0)}\right)^{-1}\widehat{\mathbf{C}}^{(0)\prime}\widehat{\mathbf{K}}^{(0)-1}\right)\otimes\left(\left(\widehat{\mathbf{R}}^{(0)\prime}\widehat{\mathbf{H}}^{(0)-1}\widehat{\mathbf{R}}^{(0)}\right)^{-1}\widehat{\mathbf{R}}^{(0)\prime}\widehat{\mathbf{H}}^{(0)-1}\right)\right)\mathsf{e}_t \right\rVert \\[0.1in]
			&\leq&  \left\lVert \left(\left(\left(\mathbf{C}'\mathbf{K}^{-1}\mathbf{C}\right)^{-1}\mathbf{C}'\mathbf{K}^{-1}\right) \otimes \left(\left(\mathbf{R}^{\prime}\mathbf{H}^{-1}\mathbf{R}\right)^{-1}\mathbf{R}^{\prime}\mathbf{H}^{-1} \right)\right) \left( \mathbf{C}\widehat{\mathbf{J}}_2 \otimes \mathbf{R}\widehat{\mathbf{J}}_1 - \widehat{\mathbf{C}}^{(0)} \otimes \widehat{\mathbf{R}}^{(0)} \right) \right\rVert \left\lVert \widehat{\mathbf{J}}^{-1}\mathsf{f}_t \right\rVert \\[0.1in]
			&& + \left\lVert \left( \left(\left(\widehat{\mathbf{C}}^{(0)\prime}\widehat{\mathbf{K}}^{(0)-1}\widehat{\mathbf{C}}^{(0)}\right)^{-1}\widehat{\mathbf{C}}^{(0)\prime}\widehat{\mathbf{K}}^{(0)-1}\right)\otimes\left(\left(\widehat{\mathbf{R}}^{(0)\prime}\widehat{\mathbf{H}}^{(0)-1}\widehat{\mathbf{R}}^{(0)}\right)^{-1}\widehat{\mathbf{R}}^{(0)\prime}\widehat{\mathbf{H}}^{(0)-1}\right)\right)  \right. \\[0.1in]
			&& \left. \hspace*{1in} - \left(\left(\mathbf{C}'\mathbf{K}^{-1}\mathbf{C}\right)^{-1}\mathbf{C}'\mathbf{K}^{-1}\right) \otimes \left(\left(\mathbf{R}^{\prime}\mathbf{H}^{-1}\mathbf{R}\right)^{-1}\mathbf{R}^{\prime}\mathbf{H}^{-1} \right) \right\rVert\\[0.1in]
			&& \hspace*{0.5in}\times \left\lVert \left( \mathbf{C}\widehat{\mathbf{J}}_2 \otimes \mathbf{R}\widehat{\mathbf{J}}_1 - \widehat{\mathbf{C}}^{(0)} \otimes \widehat{\mathbf{R}}^{(0)} \right) \right\rVert \left\lVert \mathsf{f}_t \right\rVert \\[0.1in]
			&& + \left\lVert \left(\left(\widehat{\mathbf{C}}^{(0)\prime}\widehat{\mathbf{K}}^{(0)-1}\widehat{\mathbf{C}}^{(0)}\right)^{-1}\otimes\left(\widehat{\mathbf{R}}^{(0)\prime}\widehat{\mathbf{H}}^{(0)-1}\widehat{\mathbf{R}}^{(0)}\right)^{-1}\right)\left(\left(\mathbf{C}'\dg{\mathbf{K}}^{-1}\right) \otimes \left(\mathbf{R}'\dg{\mathbf{H}}^{-1} \right)\right) \mathsf{e}_t \right\rVert \\[0.1in]
			&& + \left\lVert \left(\left(\widehat{\mathbf{C}}^{(0)\prime}\widehat{\mathbf{K}}^{(0)-1}\widehat{\mathbf{C}}^{(0)}\right)^{-1}\otimes\left(\widehat{\mathbf{R}}^{(0)\prime}\widehat{\mathbf{H}}^{(0)-1}\widehat{\mathbf{R}}^{(0)}\right)^{-1}\right) \right.  \\[0.1in]
			&& \left.  \hspace*{1in} \times \left( \left(\widehat{\mathbf{C}}^{(0)\prime}\widehat{\mathbf{K}}^{(0)-1}\right)\otimes\left(\widehat{\mathbf{R}}^{(0)\prime}\widehat{\mathbf{H}}^{(0)-1}\right) -\left(\mathbf{C}'\dg{\mathbf{K}}^{-1}\right) \otimes \left(\mathbf{R}'\dg{\mathbf{H}}^{-1} \right)\right)  \mathsf{e}_t \right\rVert \\[0.1in]
			&=& I + II + III + IV
		\end{array}
		\]
		Now,
		\[
		\begin{array}{rll}
			I &\leq& \left\lVert\left(\frac{\mathbf{C}'\dg{\mathbf{K}}^{-1}\mathbf{C}}{p_2}\right)^{-1} \right\rVert \left\lVert \frac{\mathbf{C}'\dg{\mathbf{K}}^{-1}}{\sqrt{p_2}} \right\rVert \left\lVert\left(\frac{\mathbf{R}'\dg{\mathbf{H}}^{-1}\mathbf{R}}{p_1}\right)^{-1} \right\rVert \left\lVert \frac{\mathbf{R}'\dg{\mathbf{H}}^{-1}}{\sqrt{p_2}} \right\rVert \left\lVert\frac{ \mathbf{C}\widehat{\mathbf{J}}_2 \otimes \mathbf{R}\widehat{\mathbf{J}}_1 - \widehat{\mathbf{C}}^{(0)} \otimes \widehat{\mathbf{R}}^{(0)} }{\sqrt{p_1p_2}} \right\rVert \left\lVert \widehat{\mathbf{J}}^{-1}\mathsf{f}_t \right\rVert \\[0.1in]
			&=& O_p\left(\max\left\{ \frac{1}{\sqrt{Tp_1}}, \frac{1}{\sqrt{Tp_2}}, \frac{1}{p_1p_2} \right\}\right)
		\end{array}
		\]
		by Lemmas \ref{lemma:tildeBound}(iv), \ref{lemma:tildeBound}(v), \ref{lemma:boundRC}(iii), and since $ \left\lVert \mathsf{f}_t \right\rVert=O_p(1)$ by Assumption \ref{subass:factor},
		\[
		\begin{array}{rll}
			II &\leq&  \sqrt{p_2}\left\lVert \left(\widehat{\mathbf{C}}^{(0)\prime}\widehat{\mathbf{K}}^{(0)-1}\widehat{\mathbf{C}}^{(0)}\right)^{-1}\widehat{\mathbf{C}}^{(0)\prime}\widehat{\mathbf{K}}^{(0)-1}-\left(\mathbf{C}'\mathbf{K}^{-1}\mathbf{C}\right)^{-1}\mathbf{C}'\mathbf{K}^{-1} \right\rVert  \left\lVert \left(\frac{\mathbf{R}^{\prime}\mathbf{H}^{-1}\mathbf{R}}{p_1}\right)^{-1}\right\rVert \frac{\left\lVert\mathbf{R}^{\prime}\mathbf{H}^{-1}\right\rVert}{\sqrt{p_1}} \\[0.1in]
			&& \hspace*{0.5in}\times \frac{1}{\sqrt{p_1p_2}} \left\lVert \left( \mathbf{C}\widehat{\mathbf{J}}_2 \otimes \mathbf{R}\widehat{\mathbf{J}}_1 - \widehat{\mathbf{C}}^{(0)} \otimes \widehat{\mathbf{R}}^{(0)} \right) \right\rVert \left\lVert \widehat{\mathbf{J}}^{-1}\mathsf{f}_t \right\rVert \\[0.1in]
			&&  \left\lVert \left(\frac{\mathbf{C}^{\prime}\mathbf{K}^{-1}\mathbf{C}}{p_2}\right)^{-1}\right\rVert \frac{\left\lVert\mathbf{C}^{\prime}\mathbf{K}^{-1}  \right\rVert}{\sqrt{p_2}} \sqrt{p_1} \left\lVert \left(\widehat{\mathbf{R}}^{(0)\prime}\widehat{\mathbf{H}}^{(0)-1}\widehat{\mathbf{R}}^{(0)}\right)^{-1}\widehat{\mathbf{R}}^{(0)\prime}\widehat{\mathbf{H}}^{(0)-1}-\left(\mathbf{R}'\mathbf{H}^{-1}\mathbf{R}\right)^{-1}\mathbf{R}'\mathbf{H}^{-1} \right\rVert   \\[0.1in]
			&& \hspace*{0.5in}\times \frac{1}{\sqrt{p_1p_2}} \left\lVert \left( \mathbf{C}\widehat{\mathbf{J}}_2 \otimes \mathbf{R}\widehat{\mathbf{J}}_1 - \widehat{\mathbf{C}}^{(0)} \otimes \widehat{\mathbf{R}}^{(0)} \right) \right\rVert \left\lVert \widehat{\mathbf{J}}^{-1}\mathsf{f}_t \right\rVert \\[0.1in]
			&=& o_p \left(\max\left\{ \frac{1}{\sqrt{Tp_1}}, \frac{1}{\sqrt{Tp_2}}, \frac{1}{p_1p_2} \right\}\right)
		\end{array}
		\]
		by Lemmas \ref{lemma:tildeBound}(iv), \ref{lemma:tildeBound}(v), \ref{lemma:boundRC}(iii), \ref{lemma:est.loadscale}(ix), \ref{lemma:est.loadscale}(x), and Assumption \ref{subass:factor},
		\[
		\begin{array}{rll}
			III &\leq& \left\lVert\left(\frac{\widehat{\mathbf{C}}^{(0)\prime}\widehat{\mathbf{K}}^{(0)-1}\widehat{\mathbf{C}}^{(0)}}{p_2}\right)^{-1}\right\rVert \left\lVert\left(\frac{\widehat{\mathbf{R}}^{(0)\prime}\widehat{\mathbf{H}}^{(0)-1}\widehat{\mathbf{R}}^{(0)}}{p_1}\right)^{-1}\right\rVert \frac{1}{p_1p_2} \left\lVert\mathbf{R}'\dg{\mathbf{H}}^{-1} \mathbf{E}_t\dg{\mathbf{K}}^{-1}\mathbf{C} \right\rVert_F \\[0.1in]
			&=&O_p\left(\frac{1}{\sqrt{p_1p_2}}\right)
		\end{array}
		\]
		by Lemma \ref{lemma:est.loadscale}(v), \ref{lemma:est.loadscale}(vi), and since
		\[
		\begin{array}{rll}
			\mathbb{E}\left[\left\lVert\mathbf{R}'\dg{\mathbf{H}}^{-1} \mathbf{E}_t\dg{\mathbf{K}}^{-1}\mathbf{C} \right\rVert^2_F \right] &=& \mathbb{E}\left[ \sum\limits_{i=1}^{k_1}\sum\limits_{j=1}^{k_2} \left( \mathbf{r}'_{\cdot i}\dg{\mathbf{H}}^{-1} \mathbf{E}_t\dg{\mathbf{K}}^{-1}\mathbf{c}_{\cdot j} \right)^2 \right] \\[0.1in]
			&=& \sum\limits_{i=1}^{k_1}\sum\limits_{j=1}^{k_2} \mathbb{E}\left[ \left( \sum\limits_{s=1}^{p_1}\sum\limits_{q=1}^{p_2} r_{s i}h^{-1}_{ss} e_{tsq} k^{-1}_{qq}c_{qj} \right)^2 \right] \\[0.1in]
			&\leq& k \bar{r}^2\bar{c}^2 C^2_K C^2_H  \mathbb{E}\left[ \left( \sum\limits_{s=1}^{p_1}\sum\limits_{q=1}^{p_2} e_{tsq} \right)^2 \right] \\[0.1in]
			&\lesssim&  \sum\limits_{s_1,s_2}^{p_1}\sum\limits_{q_1,q_2}^{p_2} \left\lvert \mathbb{E}\left[ e_{ts_1q_1} e_{ts_2q_2} \right] \right\rvert \\[0.1in]
			&=& O_p\left(p_1p_2\right)
		\end{array}
		\]
		by Assumptions \ref{subass:loading}, \ref{subass:idshocks}, and \ref{subass:ecorr},
		\[
		\begin{array}{rll}
			IV &\leq& \left\lVert\left(\frac{\widehat{\mathbf{C}}^{(0)\prime}\widehat{\mathbf{K}}^{(0)-1}\widehat{\mathbf{C}}^{(0)}}{p_2}\right)^{-1}\right\rVert \left\lVert\left(\frac{\widehat{\mathbf{R}}^{(0)\prime}\widehat{\mathbf{H}}^{(0)-1}\widehat{\mathbf{R}}^{(0)}}{p_1}\right)^{-1}\right\rVert \left\lVert\frac{\widehat{\mathbf{C}}^{(0)\prime}\widehat{\mathbf{K}}^{(0)-1}-\mathbf{C}'\dg{\mathbf{K}}^{-1}}{\sqrt{p_2}}\right\rVert \left\lVert \frac{\mathbf{R}'\dg{\mathbf{H}}^{-1}}{\sqrt{p_1}} \right\rVert  \left\lVert\frac{\mathsf{e}_t}{\sqrt{p_1p_2}}\right\rVert \\[0.1in]
			&& +\left\lVert\left(\frac{\widehat{\mathbf{C}}^{(0)\prime}\widehat{\mathbf{K}}^{(0)-1}\widehat{\mathbf{C}}^{(0)}}{p_2}\right)^{-1}\right\rVert \left\lVert\left(\frac{\widehat{\mathbf{R}}^{(0)\prime}\widehat{\mathbf{H}}^{(0)-1}\widehat{\mathbf{R}}^{(0)}}{p_1}\right)^{-1}\right\rVert \left\lVert\frac{\widehat{\mathbf{R}}^{(0)}\widehat{\mathbf{H}}^{(0)-1}-\mathbf{R}'\dg{\mathbf{H}}^{-1}}{\sqrt{p_1}}\right\rVert \left\lVert \frac{\mathbf{C}'\dg{\mathbf{K}}^{-1}}{\sqrt{p_2}} \right\rVert \\[0.1in]
			&=& O_p\left(\max\left\{ \frac{1}{\sqrt{Tp_1}}, \frac{1}{\sqrt{Tp_2}}, \frac{1}{p_1p_2} \right\}\right)
		\end{array}		
		\]
		by Lemma \ref{lemma:tildeBound}(v), \ref{lemma:est.loadscale}(iii)-\ref{lemma:est.loadscale}(vi) and since
		\[
		\mathbb{E}\left[\left\lVert\mathsf{e}_t\right\rVert^2\right]=\mathbb{E}\left[\left\lVert\mathbf{E}_t\right\rVert^2_F\right]= \sum_{i=1}^{p_1}\sum_{j=1}^{p_2} \mathbb{E}\left[|e_{tij}|^2 \right]\leq p_1p_2 C_kC_h=O(p_1p_2)
		\]
		by Assumption \ref{subass:idshocks}. Iterating the same steps, using Proposition \ref{prop:EMloadCons}, Proposition \ (a.4)-(a.5) in \cite{barigozzi2019quasi}, and Lemma \ref{lemma:est.loadscale.1} in place of Lemmas \ref{lemma:boundRC}, \ref{lemma:boundAB},  \ref{lemma:est.loadscale}, we can obtain the result for $n>0$.
	\end{proof}

	\subsection{Auxiliary lemmata}
	
	\subsubsection{Preliminary results}
	
	\begin{lemma}
		Let $\mathbf{A}$ and $\mathbf{B}$ be $m\times n$ and $mp\times nq$ matrices, respectively. We have that
		\[
		\left\lVert \mathbf{A}\star \mathbf{B}  \right\rVert \leq mn\left\lVert\mathbf{A}\right\rVert_{\max} \left\lVert\mathbf{B} \right\rVert
		\]
		\label{lemma:starnorm}
	\end{lemma}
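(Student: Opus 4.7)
The plan is to bound the spectral norm of the $p\times q$ matrix $\mathbf{A}\star\mathbf{B}$ by applying the triangle inequality to its definition as a linear combination of $mn$ blocks, and then bounding each scalar coefficient and each block separately.

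First, I would start directly from the definition,
\[
\mathbf{A}\star \mathbf{B} = \sum_{i=1}^{m} \sum_{j=1}^{n} a_{ij}\, \mathbf{B}^{(p,q)}_{ij},
\]
and apply the triangle inequality together with the submultiplicativity of the spectral norm under scalar multiplication, yielding
\[
\lVert \mathbf{A}\star\mathbf{B}\rVert \le \sum_{i=1}^{m}\sum_{j=1}^{n} |a_{ij}|\, \lVert \mathbf{B}^{(p,q)}_{ij}\rVert.
\]
Then I would bound each scalar factor by $|a_{ij}|\le \lVert \mathbf{A}\rVert_{\textnormal{\tiny max}}$, which is immediate from the definition of the max norm.

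The only non-cosmetic step is bounding $\lVert \mathbf{B}^{(p,q)}_{ij}\rVert\le \lVert \mathbf{B}\rVert$. I would argue this by writing the block as a selection: $\mathbf{B}^{(p,q)}_{ij} = \mathbf{S}_i \mathbf{B}\, \mathbf{T}_j'$, where $\mathbf{S}_i$ is the $p\times mp$ matrix that extracts rows $(i-1)p+1,\dots,ip$ and $\mathbf{T}_j$ is the $q\times nq$ matrix that extracts columns $(j-1)q+1,\dots,jq$. Both $\mathbf{S}_i$ and $\mathbf{T}_j$ have orthonormal rows, so $\lVert \mathbf{S}_i\rVert = \lVert \mathbf{T}_j\rVert = 1$, and submultiplicativity of the spectral norm gives $\lVert \mathbf{B}^{(p,q)}_{ij}\rVert \le \lVert \mathbf{S}_i\rVert\,\lVert \mathbf{B}\rVert\,\lVert \mathbf{T}_j'\rVert = \lVert \mathbf{B}\rVert$.

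Combining these bounds in the double sum and counting the $mn$ terms gives the claimed inequality
\[
\lVert \mathbf{A}\star\mathbf{B}\rVert \le mn\,\lVert\mathbf{A}\rVert_{\textnormal{\tiny max}}\,\lVert\mathbf{B}\rVert.
\]
There is no real obstacle here; the only subtlety worth mentioning explicitly is the submatrix-norm bound, which is standard but needs to be invoked to handle the block structure implicit in the star product.
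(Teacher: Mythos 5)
Your proof is correct and follows essentially the same route as the paper: triangle inequality on the defining sum, the bound $|a_{ij}|\le\lVert\mathbf{A}\rVert_{\textnormal{\tiny max}}$, and the block bound $\lVert\mathbf{B}^{(p,q)}_{ij}\rVert\le\lVert\mathbf{B}\rVert$. The only difference is that you spell out the submatrix-norm step via selection matrices, which the paper takes as standard.
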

	\begin{proof}
		\[
		\begin{array}{rll}
			\lVert \mathbf{A}\star \mathbf{B}  \lVert &=& \left\lVert \sum\limits_{i=1}^m \sum\limits_{j=1}^{n} a_{ij} \mathbf{B}^{(p,q)}_{ij}  \right\rVert \\
			&\leq & \left\lVert\mathbf{A}\right\rVert_{\max} \sum\limits_{i=1}^m \sum\limits_{j=1}^{n}  \left\lVert\mathbf{B}^{(p,q)}_{ij}  \right\rVert \\
			&\leq& m n \left\lVert\mathbf{A}\right\rVert_{\max}\max\limits_{i,j}  \left\lVert\mathbf{B}^{(p,q)}_{ij}  \right\rVert \\
			&\leq & m n \left\lVert\mathbf{A}\right\rVert_{\max} \left\lVert \mathbf{B} \right\rVert \\
		\end{array}
		\]
	\end{proof}

	\begin{lemma}
		For any $k_1\times k_1$ and $k_2\times k_2$ orthogonal matrices $\mathbf{J}_1$ and $\mathbf{J}_2$, the DMFM in \eqref{eq:dmfm1}-\eqref{eq:dmfm2} is equivalent to
		\begin{align}
			\label{eq:dmfmt1}
			\mathbf{Y}_t &= \widetilde{\mathbf{R}} \widetilde{\mathbf{F}}_t \widetilde{\mathbf{C}}' + \mathbf{E}_t  \\
			\label{eq:dmfmt2}
			\widetilde{\mathbf{F}}_t &= \widetilde{\mathbf{A}} \widetilde{\mathbf{F}}_{t-1}\widetilde{\mathbf{B}}' + \widetilde{\mathbf{U}}_t
		\end{align}
		and its vectorized form is
%		in \eqref{eq:vdmfm1}-\eqref{eq:vdmfm2} can be written as follows,
		\begin{align}
			\label{eq:vdmfmt1}
			\mathsf{y}_t &= \left(\widetilde{\mathbf{C}} \otimes \widetilde{\mathbf{R}}\right) \widetilde{\mathsf{f}}_t + \mathsf{e}_t \\
			\label{eq:vdmfmt2}
			\widetilde{\mathsf{f}}_t &= \left(\widetilde{\mathbf{B}} \otimes \widetilde{\mathbf{A}}\right) \widetilde{\mathsf{f}}_{t-1} + \widetilde{ \mathsf{u}}_t 
		\end{align}
		with $\mathbb{E}\left[\widetilde{\mathbf{U}}_t\widetilde{\mathbf{U}}'_t\right]=\widetilde{\mathbf{P}}\tr{\widetilde{\mathbf{Q}}}$ and $\mathbb{E}\left[\widetilde{\mathbf{U}}'_t\widetilde{\mathbf{U}}_t\right]=\widetilde{\mathbf{Q}}\tr{\widetilde{\mathbf{P}}}$ such that $\widetilde{\mathbf{P}}=\widetilde{\boldsymbol{\Gamma}}^P\widetilde{\boldsymbol{\Lambda}}^P\widetilde{\boldsymbol{\Gamma}}^{P\prime}$ and $\widetilde{\mathbf{Q}}=\widetilde{\boldsymbol{\Gamma}}^Q\widetilde{\boldsymbol{\Lambda}}^Q\widetilde{\boldsymbol{\Gamma}}^{Q\prime}$, where
		\[
		\begin{array}{lll}
			\widetilde{\mathbf{R}}=\mathbf{R}\mathbf{J}_1, & \qquad \widetilde{\mathbf{C}}=\mathbf{C}\mathbf{J}_2, & \qquad \widetilde{\mathbf{F}}_t = \mathbf{J}^{-1}_1 \mathbf{F}_t (\mathbf{J}^{-1}_2)', \\
			\widetilde{\mathbf{A}} = \mathbf{J}^{-1}_1 \mathbf{A}\mathbf{J}_1, & \qquad \widetilde{\mathbf{B}}= \mathbf{J}^{-1}_2\mathbf{B}\mathbf{J}_2, & \qquad  \widetilde{\mathbf{U}}_t= \mathbf{J}^{-1}_1 \mathbf{U}_t (\mathbf{J}^{-1}_2)' \\
			\widetilde{\mathbf{P}}=\mathbf{J}^{-1}_1\mathbf{P}\mathbf{J}_1, &  \qquad \widetilde{\boldsymbol{\Gamma}}^P = \mathbf{J}^{-1}_1 \boldsymbol{\Gamma}^P & \qquad \widetilde{\boldsymbol{\Lambda}}^{P}=\boldsymbol{\Lambda}^P\\
			\widetilde{\mathbf{Q}}=\mathbf{J}^{-1}_2\mathbf{Q}\mathbf{J}_2 & \qquad  \widetilde{\boldsymbol{\Gamma}}^Q = \mathbf{J}^{-1}_2 \boldsymbol{\Gamma}^Q & \qquad \widetilde{\boldsymbol{\Lambda}}^{Q}=\boldsymbol{\Lambda}^Q\\
		\end{array}
		\]
		\label{lemma:dmfmt}
	\end{lemma}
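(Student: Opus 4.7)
The plan is to verify the equivalence by direct algebraic substitution, using the orthogonality of $\mathbf{J}_1$ and $\mathbf{J}_2$ at a few key places. First, for the measurement equation, I would substitute the definitions and collapse the telescoping products:
\[
\widetilde{\mathbf{R}}\, \widetilde{\mathbf{F}}_t\, \widetilde{\mathbf{C}}'
= (\mathbf{R}\mathbf{J}_1)\bigl(\mathbf{J}_1^{-1} \mathbf{F}_t (\mathbf{J}_2^{-1})'\bigr)(\mathbf{C}\mathbf{J}_2)'
= \mathbf{R} \mathbf{F}_t \mathbf{C}',
\]
since $\mathbf{J}_1 \mathbf{J}_1^{-1} = \mathbb{I}_{k_1}$ and $(\mathbf{J}_2^{-1})' \mathbf{J}_2' = \mathbb{I}_{k_2}$. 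Adding $\mathbf{E}_t$ to both sides recovers \eqref{eq:dmfm1}. The state equation is handled identically: a direct expansion yields $\widetilde{\mathbf{A}} \widetilde{\mathbf{F}}_{t-1} \widetilde{\mathbf{B}}' = \mathbf{J}_1^{-1} \mathbf{A} \mathbf{F}_{t-1} \mathbf{B}' (\mathbf{J}_2^{-1})'$ and $\widetilde{\mathbf{U}}_t = \mathbf{J}_1^{-1} \mathbf{U}_t (\mathbf{J}_2^{-1})'$, so pre- and post-multiplying \eqref{eq:dmfmt2} by $\mathbf{J}_1$ and $\mathbf{J}_2'$ respectively returns \eqref{eq:dmfm2}.

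Next I would verify the covariance structure of $\widetilde{\mathbf{U}}_t$. Orthogonality of $\mathbf{J}_2$ gives $(\mathbf{J}_2^{-1})' \mathbf{J}_2^{-1} = \mathbb{I}_{k_2}$, so that
\[
\mathbb{E}\bigl[\widetilde{\mathbf{U}}_t \widetilde{\mathbf{U}}_t'\bigr]
= \mathbf{J}_1^{-1}\, \mathbb{E}\bigl[\mathbf{U}_t \mathbf{U}_t'\bigr]\, (\mathbf{J}_1^{-1})'
= \mathbf{J}_1^{-1} \mathbf{P} (\mathbf{J}_1^{-1})' \tr{\mathbf{Q}}.
\]
Orthogonality of $\mathbf{J}_1$ now gives $(\mathbf{J}_1^{-1})' = \mathbf{J}_1$, so the right-hand side equals $\widetilde{\mathbf{P}}\, \tr{\mathbf{Q}} = \widetilde{\mathbf{P}}\, \tr{\widetilde{\mathbf{Q}}}$, the last equality because $\tr{\mathbf{J}_2^{-1}\mathbf{Q}\mathbf{J}_2} = \tr{\mathbf{Q}}$ by cyclic invariance of the trace. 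An identical argument yields $\mathbb{E}[\widetilde{\mathbf{U}}_t' \widetilde{\mathbf{U}}_t] = \widetilde{\mathbf{Q}}\, \tr{\widetilde{\mathbf{P}}}$. For the spectral decompositions, I would observe that $\widetilde{\boldsymbol{\Gamma}}^P = \mathbf{J}_1^{-1} \boldsymbol{\Gamma}^P$ is still orthogonal because $\widetilde{\boldsymbol{\Gamma}}^{P\prime} \widetilde{\boldsymbol{\Gamma}}^P = \boldsymbol{\Gamma}^{P\prime} (\mathbf{J}_1^{-1})' \mathbf{J}_1^{-1} \boldsymbol{\Gamma}^P = \mathbb{I}_{k_1}$, whence $\widetilde{\mathbf{P}} = \mathbf{J}_1^{-1}\mathbf{P}\mathbf{J}_1 = \widetilde{\boldsymbol{\Gamma}}^P \boldsymbol{\Lambda}^P \widetilde{\boldsymbol{\Gamma}}^{P\prime}$ with $\widetilde{\boldsymbol{\Lambda}}^P = \boldsymbol{\Lambda}^P$; the argument for $\widetilde{\mathbf{Q}}$ is the same.

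Finally, the vectorized forms follow by applying identity \eqref{eq:vectrick} from Appendix \ref{ap:notation} to \eqref{eq:dmfmt1} and \eqref{eq:dmfmt2}, giving $\vect{\widetilde{\mathbf{R}} \widetilde{\mathbf{F}}_t \widetilde{\mathbf{C}}'} = (\widetilde{\mathbf{C}} \otimes \widetilde{\mathbf{R}}) \widetilde{\mathsf{f}}_t$ and $\vect{\widetilde{\mathbf{A}} \widetilde{\mathbf{F}}_{t-1} \widetilde{\mathbf{B}}'} = (\widetilde{\mathbf{B}} \otimes \widetilde{\mathbf{A}}) \widetilde{\mathsf{f}}_{t-1}$, which are precisely \eqref{eq:vdmfmt1} and \eqref{eq:vdmfmt2}. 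The result is essentially a bookkeeping exercise with no genuine obstacle: the only care required is deploying orthogonality consistently in the covariance calculation, both to collapse $(\mathbf{J}_2^{-1})'\mathbf{J}_2^{-1}$ to the identity and to identify $(\mathbf{J}_1^{-1})'$ with $\mathbf{J}_1$, and verifying that the rotated eigenvector matrices $\widetilde{\boldsymbol{\Gamma}}^P, \widetilde{\boldsymbol{\Gamma}}^Q$ are themselves orthogonal so that the spectra of $\mathbf{P}$ and $\mathbf{Q}$ are preserved.
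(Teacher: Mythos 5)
Your proof is correct and follows essentially the same route as the paper's: direct substitution of the rotated quantities to recover \eqref{eq:dmfm1}--\eqref{eq:dmfm2}, the identical covariance calculation $\mathbb{E}[\widetilde{\mathbf{U}}_t\widetilde{\mathbf{U}}_t'] = \mathbf{J}_1^{-1}\mathbf{P}(\mathbf{J}_1^{-1})'\tr{\mathbf{Q}} = \widetilde{\mathbf{P}}\,\tr{\widetilde{\mathbf{Q}}}$ using orthogonality and cyclic invariance of the trace, and the vectorization via \eqref{eq:vectrick}. The only (minor) difference is that you also spell out why $\widetilde{\boldsymbol{\Gamma}}^P,\widetilde{\boldsymbol{\Gamma}}^Q$ remain orthogonal and the eigenvalues are preserved, a detail the paper leaves implicit.
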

	\begin{proof} Plug-in all the rotated (``tilde'') matrices to obtain,
		\[
		\begin{array}{rcl}
			\mathbf{Y}_t &=& \widetilde{\mathbf{R}} \widetilde{\mathbf{F}}_t \widetilde{\mathbf{C}}' + \mathbf{E}_t  \\
			&=& \mathbf{R}\mathbf{J}_1\mathbf{J}^{-1}_1 \mathbf{F}_t \mathbf{J}^{-1}_2\mathbf{J}'_2 \mathbf{C}'  + \mathbf{E}_t \\
			&=& \mathbf{R} \mathbf{F}_t \mathbf{C}' + \mathbf{E}_t \\[0.25in]
			\widetilde{\mathbf{F}}_t &=& \widetilde{\mathbf{A}} \widetilde{\mathbf{F}}_{t-1}\widetilde{\mathbf{B}}' + \widetilde{\mathbf{U}}_t \\
			\mathbf{J}_1\widetilde{\mathbf{F}}_t \mathbf{J}'_2&=& \mathbf{J}_1\widetilde{\mathbf{A}} \widetilde{\mathbf{F}}_{t-1}\widetilde{\mathbf{B}}'\mathbf{J}'_2 + \mathbf{J}_1\widetilde{\mathbf{U}}_t \mathbf{J}'_2 \\
			\mathbf{J}_1 \mathbf{J}^{-1}_1 \mathbf{F}_t (\mathbf{J}^{-1}_2)' \mathbf{J}'_2&=& \mathbf{J}_1\mathbf{J}^{-1}_1 \mathbf{A}\mathbf{J}_1 \mathbf{J}^{-1}_1 \mathbf{F}_t (\mathbf{J}^{-1}_2)'  \mathbf{J}'_2\mathbf{B}'(\mathbf{J}^{-1}_2)' \mathbf{J}'_2 + \mathbf{J}_1\mathbf{J}^{-1}_1 \mathbf{U}_t (\mathbf{J}^{-1}_2)'\mathbf{J}'_2\\
			\mathbf{F}_t &=& \mathbf{A} \mathbf{F}_{t-1}\mathbf{B}' + \mathbf{U}_t
		\end{array}
		\]
		Moreover, using Assumption \ref{subass:arshock}, we have
		\[
		\begin{array}{rll}
			\mathbb{E}\left[\widetilde{\mathbf{U}}_t\widetilde{\mathbf{U}}'_t\right] &=& \mathbb{E}\left[\mathbf{J}^{-1} _1\mathbf{U}_t(\mathbf{J}^{-1}_2)'\mathbf{J}^{-1}_2\mathbf{U}_t(\mathbf{J}^{-1}_1)'\right] \\
			&=& \mathbf{J}^{-1} _1 \mathbf{P}\tr{\mathbf{Q}}(\mathbf{J}^{-1}_1)' \\
			&=& \mathbf{J}^{-1} _1 \mathbf{P}(\mathbf{J}^{-1}_1)' \tr{\mathbf{Q}(\mathbf{J}^{-1}_2)'\mathbf{J}^{-1}_2}\\
			&=& \widetilde{\mathbf{P}}\tr{\widetilde{\mathbf{Q}}}
		\end{array}
		\]
		and analogous derivation can be obtained for $\mathbb{E}\left[\widetilde{\mathbf{U}}'_t\widetilde{\mathbf{U}}_t\right]$. The derivation of the vectorized form follows naturally.
	\end{proof}

	\begin{lemma}
		Consider the rotated system \eqref{eq:dmfmt1}-\eqref{eq:dmfmt2} defined in Lemma \ref{lemma:dmfmt}. For any $k_1\times k_1$ and $k_2\times k_2$ orthogonal matrices $\mathbf{J}_1$ and $\mathbf{J}_2$,  under Assumptions \ref{ass:common}-\ref{ass:idio}, we have that
		\begin{enumerate}
			\item[(i)] $\lVert \widetilde{\mathbf{A}} \rVert \leq1$, $\lVert \widetilde{\mathbf{B}} \rVert<1$, and $\lVert \widetilde{\mathbf{B}}\otimes\widetilde{\mathbf{A}} \rVert < 1$
			\item[(ii)] $\lVert {\mathbf{H}} \rVert=O(1)$, $\lVert {\mathbf{K}} \rVert=O(1)$, and $\lVert {\mathbf{K}}\otimes{\mathbf{H}} \rVert = O(1)$
			\item[(iii)] $p^{-1/2}_1\left\lVert {\widetilde{\mathbf{R}}} \right\rVert=O(1)$, and $p^{-1/2}_2\left\lVert {\widetilde{\mathbf{C}}} \right\rVert=O(1)$,
			\item[(iv)] $p_2 \left\lVert\left(\widetilde{\mathbf{C}}'\dg{\mathbf{K}}^{-1}\widetilde{\mathbf{C}}\right)^{-1} \right\rVert = O(1)$ and $p_1 \left\lVert\left(\widetilde{\mathbf{R}}'\dg{\mathbf{H}}^{-1}\widetilde{\mathbf{R}}\right)^{-1} \right\rVert = O(1)$
			\item[(v)] $\frac{1}{\sqrt{p_2}} \left\lVert\widetilde{\mathbf{C}}'\dg{\mathbf{K}}^{-1} \right\rVert = O(1)$ and $\frac{1}{\sqrt{p_1}} \left\lVert\widetilde{\mathbf{R}}'\dg{\mathbf{H}}^{-1}\right\rVert = O(1)$
		\end{enumerate}
		\label{lemma:tildeBound}
	\end{lemma}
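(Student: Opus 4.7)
The overall plan is to exploit two elementary facts throughout: first, that orthogonal similarity preserves the spectral norm (so $\lVert \widetilde{\mathbf{A}}\rVert=\lVert\mathbf{A}\rVert$, $\lVert \widetilde{\mathbf{B}}\rVert=\lVert\mathbf{B}\rVert$, and similarly for products), and second, that $\lVert\widetilde{\mathbf R}\rVert=\lVert\mathbf R\mathbf J_1\rVert=\lVert\mathbf R\rVert$ since $\mathbf J_1$ is orthogonal (and similarly for $\widetilde{\mathbf C}$). With these observations most of the statements reduce to their un-tilded counterparts, which are then controlled by Assumptions \ref{ass:common} and \ref{ass:idio}.

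For part (i), since $\mathbf J_1,\mathbf J_2$ are orthogonal, so is $\mathbf J_2\otimes\mathbf J_1$, and $\widetilde{\mathbf B}\otimes\widetilde{\mathbf A}=(\mathbf J_2\otimes\mathbf J_1)^{-1}(\mathbf B\otimes\mathbf A)(\mathbf J_2\otimes\mathbf J_1)$, so $\lVert\widetilde{\mathbf B}\otimes\widetilde{\mathbf A}\rVert=\lVert\mathbf B\otimes\mathbf A\rVert<1$ by Assumption \ref{subass:armatrix}; combined with the MAR identification convention $\lVert\mathbf A\rVert\le 1$ (which together with $\lVert\mathbf A\otimes\mathbf B\rVert=\lVert\mathbf A\rVert\lVert\mathbf B\rVert$ forces $\lVert\mathbf B\rVert<1$), the individual bounds follow. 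For part (iii), Assumption \ref{subass:loading} gives $\lVert p_1^{-1}\mathbf R'\mathbf R\rVert\to 1$, hence $\lVert \widetilde{\mathbf R}\rVert^2=\lVert\mathbf R'\mathbf R\rVert=O(p_1)$, and analogously for $\widetilde{\mathbf C}$. For part (iv), write $\widetilde{\mathbf C}'\dg(\mathbf K)^{-1}\widetilde{\mathbf C}=\mathbf J_2'(\mathbf C'\dg(\mathbf K)^{-1}\mathbf C)\mathbf J_2$, so its minimum eigenvalue coincides with that of $\mathbf C'\dg(\mathbf K)^{-1}\mathbf C\succeq C_K^{-1}\mathbf C'\mathbf C$ (using $[\mathbf K]_{ii}\le C_K$ from Assumption \ref{subass:idshocks}); by Assumption \ref{subass:loading} this is bounded below by order $p_2$, and the same argument handles $\widetilde{\mathbf R}'\dg(\mathbf H)^{-1}\widetilde{\mathbf R}$. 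Part (v) reduces to parts (iii)–(iv): $\lVert \widetilde{\mathbf C}'\dg(\mathbf K)^{-1}\rVert^2=\lVert \widetilde{\mathbf C}'\dg(\mathbf K)^{-2}\widetilde{\mathbf C}\rVert\le C_K^2\lVert\widetilde{\mathbf C}\rVert^2=O(p_2)$.

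The main obstacle will be part (ii), which requires bounding the operator norms of the \emph{full} idiosyncratic covariances $\mathbf H$ and $\mathbf K$, while the stated assumptions only give bounds on diagonal entries (Assumption \ref{subass:idshocks}) and an averaged control of cross-sectional dependence (Assumption \ref{subass:ecorr}). My plan is to specialize Assumption \ref{subass:ecorr} to $t=s$ to obtain an averaged row-sum control on $\mathbf K\otimes\mathbf H$ (via $\mathbb{E}[\mathbf E_t\mathbf E_t']=\mathbf H\,\tr(\mathbf K)$ and $\mathbb{E}[\mathbf E_t'\mathbf E_t]=\mathbf K\,\tr(\mathbf H)$) together with the normalization $\tr(\mathbf H)=p_1$, and then to invoke the standard uniform row-sum bound for idiosyncratic covariances implicit in the approximate-factor-model framework of \citet{yu2021projected}; combining these with $\lVert\mathbf K\otimes\mathbf H\rVert=\lVert\mathbf K\rVert\lVert\mathbf H\rVert$ gives the desired conclusion. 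If a strictly stated uniform assumption is not available, the argument must appeal to the maximum row-sum bound as a strengthening that is nonetheless standard in this literature.
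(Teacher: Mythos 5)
Your proposal is correct and follows essentially the same route as the paper: orthogonal invariance for (i), bounding $\lVert\mathbf{K}\otimes\mathbf{H}\rVert$ by its induced $1$-norm whose column sums are idiosyncratic covariances controlled via Assumption \ref{subass:ecorr} and then splitting with $\lVert\mathbf{K}\otimes\mathbf{H}\rVert=\lVert\mathbf{K}\rVert\lVert\mathbf{H}\rVert$ for (ii), the loading bounds of Assumption \ref{subass:loading} for (iii), a minimum-eigenvalue lower bound $\mathbf{C}'\dg{\mathbf{K}}^{-1}\mathbf{C}\succeq C_K^{-1}\mathbf{C}'\mathbf{C}$ (the paper phrases it via an eigenvalue-product inequality) for (iv), and submultiplicativity plus the diagonal bounds for (v). The two caveats you flag — that the individual bounds $\lVert\mathbf{A}\rVert\le 1$, $\lVert\mathbf{B}\rVert<1$ need an implicit MAR normalization beyond $\lVert\mathbf{A}\otimes\mathbf{B}\rVert<1$, and that the max column-sum bound in (ii) is a uniform strengthening of the averaged Assumption \ref{subass:ecorr} — are present in exactly the same form in the paper's own proof, so they do not distinguish your argument from it.
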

	\begin{proof}
		To show (i), note that by Assumption \ref{subass:armatrix},
		\[
		\lVert \widetilde{\mathbf{A}} \rVert\leq\lVert \mathbf{J}^{-1}_1 \mathbf{A}\mathbf{J}_1 \rVert_F= \sqrt{\tr{\mathbf{J}^{-1}_1 \mathbf{A}\mathbf{J}_1\mathbf{J}'_1 \mathbf{A}'(\mathbf{J}^{-1}_1)'}}=\sqrt{\tr{\ \mathbf{A} \mathbf{A}'}}=\lVert\mathbf{A}\rVert<1
		\]
		\[
		\lVert \widetilde{\mathbf{B}} \rVert = \lVert \mathbf{J}^{-1}_2 \mathbf{B}\mathbf{J}_2 \rVert <  \lVert \mathbf{J}^{-1}_2 \rVert \lVert\mathbf{B}\rVert \lVert \mathbf{J}_2 \rVert = \lVert\mathbf{B}\rVert < 1
		\]
		and $\lVert \widetilde{\mathbf{B}}\otimes\widetilde{\mathbf{A}} \rVert \leq \lVert \widetilde{\mathbf{B}} \rVert \lVert \widetilde{\mathbf{A}} \rVert<1$.
		To show (ii), start noticing that since $\mathbf{H}$ and $\mathbf{K}$ are symmetric matrices then also $\mathbf{K}\otimes\mathbf{H}$ is symmetric, implying that    $\lVert \mathbf{K}\otimes\mathbf{H} \rVert=\rho(\mathbf{K}\otimes\mathbf{H})<\lVert \mathbf{K}\otimes\mathbf{H} \rVert_1$. Note that each column of $\mathbf{K}\otimes\mathbf{H} $ can be written as $\mathbf{k}_{\cdot j} \otimes \mathbf{h}_{\cdot i}$ for $j=1,\dots,p_2$ and $i=1,\dots,p_1$. By the symmetry of $\mathbf{K}$ and $\mathbf{H}$, we have that $\mathbf{k}_{\cdot j} \otimes \mathbf{h}_{\cdot i}=\mathbf{k}_{j\cdot} \otimes \mathbf{h}_{i\cdot}$ and $\mathbf{k}_{j\cdot} \otimes \mathbf{h}_{i\cdot}=\vect{\mathbb{E}\left[\mathbf{e}_{\cdot j}\mathbf{e}'_{i\cdot}\right]}$. We conclude $\lVert \mathbf{K}\otimes\mathbf{H} \rVert_1=O(1)$ by Assumption \ref{subass:ecorr}. The same conclusion follows for $\lVert {\mathbf{H}} \rVert$ and $\lVert {\mathbf{K}} \rVert$ noticing that $\lVert \mathbf{K}\otimes\mathbf{H} \rVert= \lVert \mathbf{K}\rVert \lVert \mathbf{H} \rVert$.
		To show (iii), note that by Assumption \ref{subass:loading}
		\[
		\frac{1}{p_1}\left\lVert\widetilde{\mathbf{R}}\right\rVert^2 \asymp \frac{1}{p_1} \left\lVert\mathbf{R}\right\rVert^2_F = \frac{1}{p_1} \sum\limits_{i=1}^{p_1}\sum\limits_{j=1}^{k_1} r^2_{ij} \leq k_1 \bar{r} = O(1),
		\]
		and
		\[
		\frac{1}{p_2}\left\lVert\widetilde{\mathbf{C}}\right\rVert^2 \asymp \frac{1}{p_2} \left\lVert\mathbf{C}\right\rVert^2_F = \frac{1}{p_2} \sum\limits_{i=1}^{p_2}\sum\limits_{j=1}^{k_2} c^2_{ij} \leq k_2 \bar{c} = O(1).
		\]
		To show (iv), note that by Theorem 1 in \cite{merikoski2004inequalities}, we have
		\[
		\begin{array}{rll}
			p_2 \left\lVert\left(\widetilde{\mathbf{C}}'\dg{\mathbf{K}}^{-1}\widetilde{\mathbf{C}}\right)^{-1} \right\rVert &=& \frac{p_2}{\nu^{(k_2)}\left(\widetilde{\mathbf{C}}'\dg{\mathbf{K}}^{-1}\widetilde{\mathbf{C}} \right)} \\[0.1in]
			&\leq& \frac{p_2}{\nu^{(k_2)}\left(\widetilde{\mathbf{C}}'\widetilde{\mathbf{C}} \right)\nu^{(p_2)}\left(\dg{\mathbf{K}}^{-1}\right)}  \\[0.1in]
			&\leq& \frac{p_2}{\nu^{(k_2)}\left(\widetilde{\mathbf{C}}'\widetilde{\mathbf{C}} \right) \left(\nu^{(1)}\left(\dg{\mathbf{K}}\right) \right)^{-1}}  \\[0.1in]
			&\lesssim& \frac{1}{C^{-1}_K}  \\[0.1in]
		\end{array}
		\]
		by Assumptions \ref{subass:loading} and \ref{subass:idshocks}. The proof for $p_1 \left\lVert\left(\widetilde{\mathbf{R}}'\dg{\mathbf{H}}^{-1}\widetilde{\mathbf{R}}\right)^{-1} \right\rVert $ follows the same steps.
		To show (v), note that
		\[
		\begin{array}{rll}
			\frac{1}{\sqrt{p_2}} \left\lVert\widetilde{\mathbf{C}}'\dg{\mathbf{K}}^{-1} \right\rVert &\leq& \frac{1}{\sqrt{p_2}} \left\lVert\widetilde{\mathbf{C}} \right\rVert \left\lVert\dg{\mathbf{K}}^{-1} \right\rVert \\[0.1in]
			&\leq& \frac{1}{\sqrt{p_2}} \left\lVert\widetilde{\mathbf{C}} \right\rVert \max\limits_{j=1,\dots,p_2}  k^{-1}_{ii} \\[0.1in]
			&=& O(1) 
		\end{array}
		\]
		by (iii) and Assumption \ref{subass:idshocks}. The proof for $\frac{1}{\sqrt{p_1}} \left\lVert\widetilde{\mathbf{R}}'\dg{\mathbf{H}}^{-1}\right\rVert$ follows the same steps.
	\end{proof}

\subsubsection{Results on pre-estimators}
\begin{lemma}
	Under Assumptions \ref{ass:common} through \ref{ass:dep}, there exist matrices $\widehat{\mathbf{J}}_1$ and $\widehat{\mathbf{J}}_2$ satisfying $\widehat{\mathbf{J}}_1\widehat{\mathbf{J}}'_1 \xrightarrow{p} \mathbb{I}_{k_1k_1}$ and $\widehat{\mathbf{J}}_2\widehat{\mathbf{J}}'_2 \xrightarrow{p} \mathbb{I}_{k_2k_2} $, such that as $\min\left\{p_1,p_2,T\right\} \rightarrow \infty$,
	\begin{itemize}
		\item[(i)] $\frac{1}{\sqrt{p_1}} \left\lVert \widehat{\mathbf{R}}^{(0)} - \mathbf{R}\widehat{\mathbf{J}}_1 \right\rVert=O_p\left( \max\left\{\frac{1}{\sqrt{Tp_2}},\frac{1}{p_1p_2},\frac{1}{Tp_1}\right\} \right)$ 
		\item[(ii)] $\frac{1}{\sqrt{p_2}} \left\lVert \widehat{\mathbf{C}}^{(0)} - \mathbf{C}\widehat{\mathbf{J}}_2 \right\rVert=O_p\left( \max\left\{\frac{1}{\sqrt{Tp_1}},\frac{1}{p_1p_2},\frac{1}{Tp_2}\right\} \right)$
		\item[(iii)] $\frac{1}{\sqrt{p_1p_2}} \left\lVert \widehat{\mathbf{C}}^{(0)}\otimes\widehat{\mathbf{R}}^{(0)} - \mathbf{C}\widehat{\mathbf{J}}_2\otimes\mathbf{R}\widehat{\mathbf{J}}_1 \right\rVert=O_p\left(\max\left\{\frac{1}{\sqrt{Tp_1}}, \frac{1}{\sqrt{Tp_2}},\frac{1}{p_1p_2}\right\}\right)$ 
	\end{itemize}
	\label{lemma:boundRC}
\end{lemma}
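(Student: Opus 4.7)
The plan is to reduce all three statements to the known consistency of the projected estimator (PE) and then combine the individual bounds via the bilinear structure of the Kronecker product.

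Parts (i) and (ii) are essentially a restatement of Theorem 3.1 in \citet{yu2021projected}. The pre-estimators $\widehat{\mathbf R}^{(0)}$ and $\widehat{\mathbf C}^{(0)}$ defined in Appendix \ref{app:yu} are precisely the projected PC estimators studied there, and our Assumptions \ref{ass:common}--\ref{ass:dep} are tailored to match the conditions (B--E) in that paper. I would therefore invoke the theorem directly, taking $\widehat{\mathbf J}_1$ and $\widehat{\mathbf J}_2$ to be the rotation matrices produced by the spectral decomposition of the sample second-moment matrices $\overline{\mathbf M}_1$ and $\overline{\mathbf M}_2$; the property $\widehat{\mathbf J}_i \widehat{\mathbf J}_i' \xrightarrow{p} \mathbb I_{k_i}$ for $i=1,2$ follows from the same argument that identifies these rotations, together with Assumption \ref{subass:factor}.

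For part (iii), I would apply the bilinear identity
\[
\widehat{\mathbf C}^{(0)} \otimes \widehat{\mathbf R}^{(0)} - \mathbf C\widehat{\mathbf J}_2 \otimes \mathbf R\widehat{\mathbf J}_1 = \bigl(\widehat{\mathbf C}^{(0)}-\mathbf C\widehat{\mathbf J}_2\bigr) \otimes \widehat{\mathbf R}^{(0)} + \mathbf C\widehat{\mathbf J}_2 \otimes \bigl(\widehat{\mathbf R}^{(0)}-\mathbf R\widehat{\mathbf J}_1\bigr),
\]
together with the multiplicativity $\lVert \mathbf A \otimes \mathbf B \rVert = \lVert \mathbf A \rVert \lVert \mathbf B \rVert$. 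Normalising by $\sqrt{p_1 p_2}$ gives
\[
\frac{\lVert \widehat{\mathbf C}^{(0)} \otimes \widehat{\mathbf R}^{(0)} - \mathbf C\widehat{\mathbf J}_2 \otimes \mathbf R\widehat{\mathbf J}_1 \rVert}{\sqrt{p_1 p_2}} \leq \frac{\lVert \widehat{\mathbf C}^{(0)} - \mathbf C\widehat{\mathbf J}_2 \rVert}{\sqrt{p_2}}\cdot \frac{\lVert \widehat{\mathbf R}^{(0)} \rVert}{\sqrt{p_1}} + \frac{\lVert \mathbf C\widehat{\mathbf J}_2 \rVert}{\sqrt{p_2}}\cdot \frac{\lVert \widehat{\mathbf R}^{(0)} - \mathbf R\widehat{\mathbf J}_1 \rVert}{\sqrt{p_1}}.
\]
The normalised loading norms $\lVert \mathbf C\widehat{\mathbf J}_2 \rVert/\sqrt{p_2}$ and $\lVert \widehat{\mathbf R}^{(0)} \rVert/\sqrt{p_1}$ are both $O_p(1)$: the former by Lemma \ref{lemma:tildeBound}(iii), and the latter by the triangle inequality combined with (i) and Lemma \ref{lemma:tildeBound}(iii). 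Applying (i) and (ii) to the remaining two factors produces the maximum of the two rates; since $1/(Tp_1) \le 1/\sqrt{Tp_1}$ and $1/(Tp_2) \le 1/\sqrt{Tp_2}$, those fourth-order terms are dominated and the bound collapses to $\max\{1/\sqrt{Tp_1}, 1/\sqrt{Tp_2}, 1/(p_1 p_2)\}$, as claimed.

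There is no real obstacle here: the heavy lifting has already been done in \citet{yu2021projected}, and the Kronecker step is routine algebra. The only bookkeeping point worth flagging is that the rate in (iii) is slightly slower than the individual rates in (i) and (ii) because the cross-factor $\lVert \mathbf C\widehat{\mathbf J}_2 \rVert/\sqrt{p_2}$ (resp.\ $\lVert \widehat{\mathbf R}^{(0)} \rVert/\sqrt{p_1}$) is only $O_p(1)$ and not $o_p(1)$, so neither error in the decomposition absorbs the other.
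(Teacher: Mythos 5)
Your proposal is correct and follows essentially the same route as the paper: parts (i) and (ii) are cited directly from Theorem 3.1 of \citet{yu2021projected}, and part (iii) is reduced to (i)--(ii) via a Kronecker decomposition with the cross factors controlled by Lemma \ref{lemma:tildeBound}(iii). The only cosmetic difference is that you use the exact two-term identity with $\widehat{\mathbf R}^{(0)}$ in one slot, whereas the paper bounds via a three-term split (including an explicitly dominated cross term $(\widehat{\mathbf C}^{(0)}-\mathbf C\widehat{\mathbf J}_2)\otimes(\widehat{\mathbf R}^{(0)}-\mathbf R\widehat{\mathbf J}_1)$); the resulting rate bookkeeping is identical.
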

\begin{proof}
	Note that (i) and (ii) follows immediately from Theorem 3.1 in \cite{yu2021projected}. Moreover, consider (iii) and note that
	\[
	\begin{array}{rll}
		\frac{1}{\sqrt{p_1p_2}} \left\lVert \widehat{\mathbf{C}}^{(0)}\otimes\widehat{\mathbf{R}}^{(0)} - \mathbf{C}\widehat{\mathbf{J}}_2\otimes\mathbf{R}\widehat{\mathbf{J}}_1 \right\rVert & \leq &  \frac{1}{\sqrt{p_1p_2}} \left\lVert \left(\widehat{\mathbf{C}}^{(0)}-\mathbf{C}\widehat{\mathbf{J}}_2\right)\otimes\left(\widehat{\mathbf{R}}^{(0)}-\mathbf{R}\widehat{\mathbf{J}}_1\right) \right\rVert \\[0.1in]
		 && +\frac{1}{\sqrt{p_1p_2}} \left\lVert\left(\mathbf{C}\widehat{\mathbf{J}}_2\right)\otimes\left(\widehat{\mathbf{R}}^{(0)} -\mathbf{R}\widehat{\mathbf{J}}_1\right)\right\rVert  \\[0.1in]
		 && +  \frac{1}{\sqrt{p_1p_2}} \left\lVert\left(\widehat{\mathbf{C}}^{(0)} -\mathbf{C}\widehat{\mathbf{J}_2}\right)\otimes\left(\mathbf{R}\widehat{\mathbf{J}}_1\right)\right\rVert  \\[0.1in]
		 &\lesssim&  \left\lVert\frac{\mathbf{C}\widehat{\mathbf{J}}_2}{\sqrt{p_2}}\right\rVert\left\lVert\frac{\widehat{\mathbf{R}}^{(0)} -\mathbf{R}\widehat{\mathbf{J}}_1}{\sqrt{p_1}}\right\rVert + \left\lVert\frac{\widehat{\mathbf{C}}^{(0)} -\mathbf{C}\widehat{\mathbf{J}}_2}{\sqrt{p_2}}\right\rVert   \left\lVert\frac{\mathbf{R}\widehat{\mathbf{J}}_1}{\sqrt{p_1}}\right\rVert \\[0.1in]
		 &=& O_p\left( \max\left\{\frac{1}{\sqrt{Tp_2}},\frac{1}{p_1p_2},\frac{1}{Tp_1}\right\} \right) + O_p\left( \max\left\{\frac{1}{\sqrt{Tp_1}},\frac{1}{p_1p_2},\frac{1}{Tp_2}\right\} \right)
	\end{array}
	\]
	by (i), (ii) and Lemma \ref{lemma:tildeBound}(iii).
\end{proof}

\begin{lemma}
	Under Assumptions \ref{ass:common} through \ref{ass:dep}, there exist matrices $\widehat{\mathbf{J}}_1$ and $\widehat{\mathbf{J}}_2$ satisfying $\widehat{\mathbf{J}}_1\widehat{\mathbf{J}}'_1 \xrightarrow{p} \mathbb{I}_{k_1k_1}$ and $\widehat{\mathbf{J}}_2\widehat{\mathbf{J}}'_2 \xrightarrow{p} \mathbb{I}_{k_2k_2} $, such that for $s_1,s_2\in\{0,1\}$ as $\min\left\{p_1,p_2,T\right\} \rightarrow \infty$,
	\begin{itemize}
		\item[(i)] $\frac{1}{T}\sum_{t}^{T}\left(\left(\frac{\widehat{\mathbf{C}}^{(0)}}{p_2}\otimes\frac{\widehat{\mathbf{R}}^{(0)}}{p_1}\right)'\mathsf{y}_{t-s_1}-\widehat{\mathbf{J}}^{-1}\mathsf{f}_{t-s_1}\right) \mathsf{f}'_{t-s_2} \widehat{\mathbf{J}}^{-1\prime}=O_p\left(\max\left\{\frac{1}{\sqrt{T}},\frac{1}{p_1p_2}\right\}\right)$,
		\item[(ii)] $\frac{1}{T}\sum_{t}^{T}\left(\left(\frac{\widehat{\mathbf{C}}^{(0)}}{p_2}\otimes\frac{\widehat{\mathbf{R}}^{(0)}}{p_1}\right)'\mathsf{y}_{t-s_1}-\widehat{\mathbf{J}}^{-1}\mathsf{f}_{t-s_1}\right) \mathsf{u}'_{t}=O_p\left(\max\left\{\frac{1}{\sqrt{T}},\frac{1}{p_1p_2}\right\}\right),$
	\end{itemize}
	with $\widehat{\mathbf{J}}=\widehat{\mathbf{J}}_2\otimes\widehat{\mathbf{J}}_1$.
	\label{lemma:innerboundAB}
\end{lemma}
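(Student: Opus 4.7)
The strategy is to substitute the factor model into $\mathsf{y}_{t-s_1}$ and split the expression into a ``loading-error times factor'' piece and a ``projected idiosyncratic'' piece, then bound each using the pre-estimator rates in Lemma \ref{lemma:boundRC} and the moment conditions in Assumptions \ref{ass:common}--\ref{ass:dep}. Setting $\widehat{\mathbf{M}}=(\widehat{\mathbf{C}}^{(0)}/p_2)\otimes(\widehat{\mathbf{R}}^{(0)}/p_1)$ and inserting $\mathsf{y}_{t-s_1}=(\mathbf{C}\otimes\mathbf{R})\mathsf{f}_{t-s_1}+\mathsf{e}_{t-s_1}$, the left-hand side of (i) becomes $I_1+I_2$ with
\[
I_1 = \bigl[\widehat{\mathbf{M}}'(\mathbf{C}\otimes\mathbf{R})-\widehat{\mathbf{J}}^{-1}\bigr]\left(\tfrac{1}{T}\sum_t \mathsf{f}_{t-s_1}\mathsf{f}'_{t-s_2}\right)\widehat{\mathbf{J}}^{-1\prime},\quad I_2 = \widehat{\mathbf{M}}'\left(\tfrac{1}{T}\sum_t \mathsf{e}_{t-s_1}\mathsf{f}'_{t-s_2}\right)\widehat{\mathbf{J}}^{-1\prime}.
\]

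For $I_1$, the central factor is controlled via the Kronecker identity $A_1\otimes A_2-B_1\otimes B_2 = (A_1-B_1)\otimes A_2 + B_1\otimes (A_2-B_2)$ applied with $A_i$ the sample products $\widehat{\mathbf{C}}^{(0)\prime}\mathbf{C}/p_2$ and $\widehat{\mathbf{R}}^{(0)\prime}\mathbf{R}/p_1$ and $B_i$ the corresponding $\widehat{\mathbf{J}}^{-1}_i$ blocks. Each block, e.g. $\widehat{\mathbf{C}}^{(0)\prime}\mathbf{C}/p_2-\widehat{\mathbf{J}}_2^{-1}$, is further decomposed as
\[
\tfrac{(\widehat{\mathbf{C}}^{(0)}-\mathbf{C}\widehat{\mathbf{J}}_2)'\mathbf{C}}{p_2} + \widehat{\mathbf{J}}_2'\left(\tfrac{\mathbf{C}'\mathbf{C}}{p_2}-\mathbb{I}_{k_2}\right) + \widehat{\mathbf{J}}_2^{-1}\bigl(\widehat{\mathbf{J}}_2\widehat{\mathbf{J}}_2'-\mathbb{I}_{k_2}\bigr),
\]
and bounded respectively by Lemma \ref{lemma:boundRC}(ii) (times $\|\mathbf{C}/\sqrt{p_2}\|=O(1)$ from Lemma \ref{lemma:tildeBound}(iii)), by Assumption \ref{subass:loading}, and by the definition of $\widehat{\mathbf{J}}_2$. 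Combined with $\tfrac{1}{T}\sum_t\mathsf{f}_{t-s_1}\mathsf{f}'_{t-s_2}=O_p(1)$ (stationarity plus Assumption \ref{subass:factor}) and $\|\widehat{\mathbf{J}}^{-1}\|=O_p(1)$, all rates are dominated by $\max\{1/\sqrt{T},1/(p_1p_2)\}$.

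For $I_2$ I split $\widehat{\mathbf{M}}=\widehat{\mathbf{M}}_0+(\widehat{\mathbf{M}}-\widehat{\mathbf{M}}_0)$ with $\widehat{\mathbf{M}}_0=(\mathbf{C}\widehat{\mathbf{J}}_2/p_2)\otimes(\mathbf{R}\widehat{\mathbf{J}}_1/p_1)$. Using $\widehat{\mathbf{M}}_0'\mathsf{e}_{t-s_1}=(p_1p_2)^{-1}(\widehat{\mathbf{J}}_2'\otimes\widehat{\mathbf{J}}_1')\vect{\mathbf{R}'\mathbf{E}_{t-s_1}\mathbf{C}}$, each entry of $\widehat{\mathbf{M}}_0'\tfrac{1}{T}\sum_t\mathsf{e}_{t-s_1}\mathsf{f}'_{t-s_2}$ is $(p_1p_2)^{-1}$ times a quantity of the form $T^{-1}\sum_t\mathbf{F}_{t-s_2}\mathbf{r}'_{\cdot i}\mathbf{E}_{t-s_1}\mathbf{c}_{\cdot j}=O_p(\sqrt{p_1p_2/T})$ by Assumption \ref{subass:fecorr}, giving $O_p(1/\sqrt{Tp_1p_2})$ overall. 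The residual is handled by combining $\|\widehat{\mathbf{M}}-\widehat{\mathbf{M}}_0\|=(p_1p_2)^{-1/2}O_p(\max\{1/\sqrt{Tp_1},1/\sqrt{Tp_2},1/(p_1p_2)\})$ from Lemma \ref{lemma:boundRC}(iii) with the Frobenius bound $\|\tfrac{1}{T}\sum_t\mathsf{e}_{t-s_1}\mathsf{f}'_{t-s_2}\|_F=O_p(\sqrt{p_1p_2/T})$ obtained by aggregating Assumption \ref{subass:fecorr} over coordinates; the product is again dominated by $1/\sqrt{T}$. Part (ii) follows from the identical decomposition after replacing $\mathsf{f}'_{t-s_2}\widehat{\mathbf{J}}^{-1\prime}$ with $\mathsf{u}'_t$: the averages $T^{-1}\sum_t\mathsf{f}_{t-s_1}\mathsf{u}'_t$ and $T^{-1}\sum_t\mathsf{e}_{t-s_1}\mathsf{u}'_t$ are $O_p(1)$ and $O_p(1/\sqrt{T})$ entrywise respectively, using the martingale-difference structure of $\{\mathbf{U}_t\}$ in Assumption \ref{subass:arshock} and the factor/idiosyncratic uncorrelatedness stated after \eqref{eq:MAR}, and every intermediate bound carries through verbatim (the $O_p(1)$ contribution in the former gets multiplied by the pre-estimator rate of $I_1$, preserving the overall rate).

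The main obstacle is the bookkeeping of Kronecker-product differences: the bilinear DMFM structure forces every estimation error to be expressed jointly through $\widehat{\mathbf{C}}^{(0)}\otimes\widehat{\mathbf{R}}^{(0)}$, and only the add-and-subtract devices above cleanly separate the row and column contributions. Once organized this way, each resulting factor is either a pre-estimator error from Lemma \ref{lemma:boundRC} or a sample cross-moment from Assumptions \ref{ass:common}--\ref{ass:dep}, and the coarse rate $\max\{1/\sqrt{T},1/(p_1p_2)\}$ absorbs all the finer rates that arise.
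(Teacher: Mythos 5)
Your proposal is correct at the level of rigor of the paper and follows essentially the same route: substitute $\mathsf{y}_{t-s_1}=(\mathbf{C}\otimes\mathbf{R})\mathsf{f}_{t-s_1}+\mathsf{e}_{t-s_1}$, split into a loading-estimation-error term and a projected idiosyncratic term, bound the latter by separating $(\mathbf{C}\widehat{\mathbf{J}}_2\otimes\mathbf{R}\widehat{\mathbf{J}}_1)'\mathsf{e}_t$ (Assumption \ref{subass:fecorr}) from the Kronecker difference (Lemma \ref{lemma:boundRC}(iii), Lemma \ref{lemma:tildeBound}), and for part (ii) exploit the serial independence of $\mathbf{U}_t$ together with the idiosyncratic covariance summability in Assumption \ref{subass:ecorr} to get the $1/\sqrt{T}$ variance bound, exactly as the paper does. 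The only place you diverge is the loading-error-times-factor term $I_1$: the paper dispatches it by invoking $p_1^{-1}\widehat{\mathbf{R}}^{(0)\prime}\mathbf{R}=\widehat{\mathbf{J}}_1'+o_p(1)$, $p_2^{-1}\widehat{\mathbf{C}}^{(0)\prime}\mathbf{C}=\widehat{\mathbf{J}}_2'+o_p(1)$ (from Theorem 3.1 of Yu et al.) and declaring asymptotic equivalence, whereas you bound it explicitly; note, however, that your middle terms $\widehat{\mathbf{J}}_2'\bigl(p_2^{-1}\mathbf{C}'\mathbf{C}-\mathbb{I}_{k_2}\bigr)$ and $\widehat{\mathbf{J}}_2^{-1}\bigl(\widehat{\mathbf{J}}_2\widehat{\mathbf{J}}_2'-\mathbb{I}_{k_2}\bigr)$ carry no rate under Assumption \ref{subass:loading} and Lemma \ref{lemma:boundRC} (both give convergence without a rate), so your claim that $I_1$ is dominated by $\max\{1/\sqrt{T},1/(p_1p_2)\}$ overstates what the stated conditions deliver — though this matches the looseness already present in the paper's own ``asymptotically equivalent'' step, so it is not a gap relative to the paper's standard. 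One small further caution for part (ii): since $\{\mathbf{E}_t\}$ and $\{\mathbf{U}_t\}$ are only uncorrelated, not independent, the entrywise $O_p(1/\sqrt{T})$ for $T^{-1}\sum_t\mathsf{e}_{t-s_1}\mathsf{u}_t'$ should be argued through the second-moment computation the paper performs (Assumptions \ref{subass:arshock} and \ref{subass:ecorr}) rather than a bare martingale-difference appeal.
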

\begin{proof}
	Note that the left hand side of (i) can be written as
	\begin{equation}
		\frac{1}{T}\sum_{t=2}^{T} \left(\vect{ \frac{\widehat{\mathbf{R}}^{(0)\prime}}{p_1}\mathbf{R}\widehat{\mathbf{J}}_1\widehat{\mathbf{J}}^{-1}_1\mathbf{F}_{t-s_1}\widehat{\mathbf{J}}^{-1\prime}_2\widehat{\mathbf{J}}'_2\mathbf{C}'\frac{\widehat{\mathbf{C}}^{(0)}}{p_2} + \frac{\widehat{\mathbf{R}}^{(0)\prime}}{p_1}\mathbf{E}_{t}\frac{\widehat{\mathbf{C}}^{(0)}}{p_2}}- \widehat{\mathbf{J}}^{-1}\mathsf{f}_{t-s_1} \right) \mathsf{f}'_{t-s_2}\widehat{\mathbf{J}}^{-1\prime}
		\label{eq:AB1b}
	\end{equation}
	By Theorem 3.1 in \cite{yu2021projected}, we have that 
	\begin{equation}
		p^{-1}_1\widehat{\mathbf{R}}^{(0)\prime}\mathbf{R} =\widehat{\mathbf{J}}'_1 + o_p(1), \qquad p^{-1}_2\widehat{\mathbf{C}}^{(0)\prime}\mathbf{C}=\widehat{\mathbf{J}}'_2+ o_p(1),
		\label{eq:convRhR}
	\end{equation}
	therefore \eqref{eq:AB1b} is asymptotically equivalent to
	\begin{equation}
		\frac{1}{T} \sum^T_{t=2} \left(\frac{\widehat{\mathbf{C}}^{(0)} }{p_2}\otimes \frac{\widehat{\mathbf{R}}^{(0)}}{p_1}\right)' \mathsf{e}_{t-s_1} \mathsf{f}'_{t-s_2} \widehat{\mathbf{J}}^{-1\prime}.
		\label{eq:AB1b.as}
	\end{equation}
	We can bound \eqref{eq:AB1b.as} as follows
	\[
	\begin{array}{rll}
		\frac{1}{T} \sum\limits^T_{t=2} \left(\frac{\widehat{\mathbf{C}}^{(0)} }{p_2}\otimes \frac{\widehat{\mathbf{R}}^{(0)}}{p_1}\right)' \mathsf{e}_{t} \mathsf{f}'_{t} \widehat{\mathbf{J}}^{-1\prime} &\lesssim&  \frac{1}{T\sqrt{p_1p_2}} \sum\limits^T_{t=2} \left(\frac{\widehat{\mathbf{C}}^{(0)} }{\sqrt{p_2}}\otimes \frac{\widehat{\mathbf{R}}^{(0)}}{\sqrt{p_1}}-\frac{\mathbf{C}\widehat{\mathbf{J}}_2}{\sqrt{p_2}}\otimes \frac{\mathbf{R}\widehat{\mathbf{J}}_1}{\sqrt{p_1}}\right)' \mathsf{e}_{t} \mathsf{f}'_{t} \widehat{\mathbf{J}}^{-1\prime} \\
		&& + \frac{1}{T\sqrt{p_1p_2}} \sum\limits^T_{t=2} \left(\frac{\mathbf{C}\widehat{\mathbf{J}}_2}{\sqrt{p_2}}\otimes \frac{\mathbf{R}\widehat{\mathbf{J}}_1}{\sqrt{p_1}}\right)' \mathsf{e}_{t} \mathsf{f}'_{t} \widehat{\mathbf{J}}^{-1\prime} \\
		&\lesssim& O_p\left(\max\left\{\frac{1}{\sqrt{T}}, \frac{1}{p_1p_2} \right\}\right).
	\end{array}
	\]
	by Assumption \ref{subass:fecorr}, and Lemmas \ref{lemma:tildeBound} and \ref{lemma:boundRC}. Consider (ii) and note that by \eqref{eq:convRhR}, we have 
	\[
	\begin{array}{rll}
		\frac{1}{T}\sum\limits_{t}^{T}\left(\left(\frac{\widehat{\mathbf{C}}^{(0)}}{p_2}\otimes\frac{\widehat{\mathbf{R}}^{(0)}}{p_1}\right)'\mathsf{y}_{t-s_1}-\widehat{\mathbf{J}}^{-1}\mathsf{f}_{t-s_1}\right) \mathsf{u}'_{t} & \lesssim &
		\frac{1}{T\sqrt{p_1p_2}} \sum\limits^T_{t=2} \left(\frac{\widehat{\mathbf{C}}^{(0)} }{\sqrt{p_2}}\otimes \frac{\widehat{\mathbf{R}}^{(0)}}{\sqrt{p_1}}\right)' \mathsf{e}_{t} \mathsf{u}'_{t}. \\
		&\lesssim& O_p\left(\max\left\{\frac{1}{\sqrt{T}}, \frac{1}{p_1p_2} \right\}\right).
	\end{array}
	\]
	by  Lemmas \ref{lemma:tildeBound} and \ref{lemma:boundRC}, and since
	We can bound \eqref{eq:AB1b.as} as follows
	\[
	\begin{array}{rll}
		\mathbb{E}\left[\left\lVert\frac{1}{T\sqrt{p_1p_2}}  \sum\limits^T_{t=2} \mathsf{e}_{t} \mathsf{u}'_{t} \right\rVert^2\right]&\leq&  \frac{1}{T^2p_1p_2} \sum\limits_{i_1=1}^{p_1}\sum\limits_{i_2=1}^{p_2} \sum\limits_{j_1=1}^{k_1}\sum\limits_{j_2=1}^{k_2} \sum\limits^T_{t,s=1} \mathbb{E}\left[e_{ti_1i_2}e_{si_1i_2} u_{tj_1j_2}u_{sj_1j_2} \right]  \\
		&\leq&   \frac{kC_PC_Q}{T^2p_1p_2} \sum\limits_{i_1=1}^{p_1}\sum\limits_{i_2=1}^{p_2}  \sum\limits^T_{t,s=1} \left\lvert \mathbb{E}\left[e_{ti_1i_2}e_{si_1i_2} \right] \right\rvert \\
		&\lesssim& O_p\left(\frac{1}{T} \right).
	\end{array}
	\]
	by Assumptions \ref{subass:arshock} and \ref{subass:ecorr}.
\end{proof}

\begin{lemma}
	Under Assumptions \ref{ass:common} through \ref{ass:dep}, there exist matrices $\widehat{\mathbf{J}}_1$ and $\widehat{\mathbf{J}}_2$ satisfying $\widehat{\mathbf{J}}_1\widehat{\mathbf{J}}'_1 \xrightarrow{p} \mathbb{I}_{k_1}$ and $\widehat{\mathbf{J}}_2\widehat{\mathbf{J}}'_2 \xrightarrow{p} \mathbb{I}_{k_2} $, such that as $\min\left\{p_1,p_2,T\right\} \rightarrow \infty$,
	\begin{itemize} 
		\item[(i)] $\left\lVert \widehat{\mathbf{B}\otimes\mathbf{A}}^{(0)} - \widehat{\mathbf{J}}^{-1}(\mathbf{B}\otimes\mathbf{A})\widehat{\mathbf{J}}  \right\rVert = O_p\left( \max\left\{\frac{1}{\sqrt{T}}, \frac{1}{p_1p_2} \right\}\right)$
		\item[(ii)]  $\left\lVert \widehat{\mathbf{Q} \otimes \mathbf{P}}^{(0)} - \widehat{\mathbf{J}}^{-1}(\mathbf{Q}\otimes\mathbf{P})\widehat{\mathbf{J}} \right\rVert = O_p\left( \max\left\{\frac{1}{\sqrt{T}}, \frac{1}{p_1p_2} \right\}\right)$,
	\end{itemize}
	with $\widehat{\mathbf{J}}=\widehat{\mathbf{J}}_2\otimes\widehat{\mathbf{J}}_1$.
	\label{lemma:boundAB}
\end{lemma}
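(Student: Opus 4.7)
The plan is to treat (i) as an OLS problem for the rotated vectorized MAR recursion $\widehat{\mathbf J}^{-1}\mathsf f_t = \mathbf M\,\widehat{\mathbf J}^{-1}\mathsf f_{t-1} + \widehat{\mathbf J}^{-1}\mathsf u_t$, where $\mathbf M = \widehat{\mathbf J}^{-1}(\mathbf B\otimes\mathbf A)\widehat{\mathbf J}$ with $\widehat{\mathbf J} = \widehat{\mathbf J}_2\otimes\widehat{\mathbf J}_1$, and in which the regressor $\widehat{\mathbf J}^{-1}\mathsf f_{t-1}$ is observed only up to the measurement error $\boldsymbol\eta_t = \widetilde{\mathsf f}_t - \widehat{\mathbf J}^{-1}\mathsf f_t$. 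Starting from the standard OLS identity
\begin{equation*}
\widehat{\mathbf B\otimes\mathbf A}^{(0)} - \mathbf M = \left[\tfrac{1}{T}\sum_{t=2}^T (\widetilde{\mathsf f}_t - \mathbf M\widetilde{\mathsf f}_{t-1})\widetilde{\mathsf f}'_{t-1}\right]\left[\tfrac{1}{T}\sum_{t=2}^T \widetilde{\mathsf f}_{t-1}\widetilde{\mathsf f}'_{t-1}\right]^{-1},
\end{equation*}
I would substitute the identities $\widetilde{\mathsf f}_t - \mathbf M\widetilde{\mathsf f}_{t-1} = \widehat{\mathbf J}^{-1}\mathsf u_t + \boldsymbol\eta_t - \mathbf M\boldsymbol\eta_{t-1}$ and $\widetilde{\mathsf f}'_{t-1} = \mathsf f'_{t-1}\widehat{\mathbf J}^{-1\prime} + \boldsymbol\eta'_{t-1}$ into the numerator, generating five pieces to bound separately.

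The ``clean'' cross-term $T^{-1}\sum_t\widehat{\mathbf J}^{-1}\mathsf u_t\mathsf f'_{t-1}\widehat{\mathbf J}^{-1\prime}$ is $O_p(T^{-1/2})$ by a standard $\alpha$-mixing CLT using Assumptions \ref{subass:factor} and \ref{subass:arshock} together with the independence of $\mathsf u_t$ from $\mathsf f_{t-1}$. The mixed terms $T^{-1}\sum_t\widehat{\mathbf J}^{-1}\mathsf u_t\boldsymbol\eta'_{t-1}$ and $T^{-1}\sum_t(\boldsymbol\eta_t - \mathbf M\boldsymbol\eta_{t-1})\mathsf f'_{t-1}\widehat{\mathbf J}^{-1\prime}$ are exactly the quantities controlled by parts (ii) and (i) of Lemma \ref{lemma:innerboundAB}, yielding the target rate $O_p(\max\{T^{-1/2},(p_1p_2)^{-1}\})$. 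The purely quadratic piece $T^{-1}\sum_t\boldsymbol\eta_t\boldsymbol\eta'_{t-1}$ is second-order in the same rate and so negligible, by Cauchy--Schwarz applied to $T^{-1}\sum_t\lVert\boldsymbol\eta_t\rVert^2$, which is bounded via Lemma \ref{lemma:innerboundAB}(i) and Lemma \ref{lemma:boundRC}.

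For the denominator, expanding $\widetilde{\mathsf f}_{t-1}\widetilde{\mathsf f}'_{t-1}$ and applying Lemma \ref{lemma:innerboundAB}(i) twice shows that $T^{-1}\sum_t\widetilde{\mathsf f}_{t-1}\widetilde{\mathsf f}'_{t-1} = \widehat{\mathbf J}^{-1}(T^{-1}\sum_t\mathsf f_t\mathsf f'_t)\widehat{\mathbf J}^{-1\prime} + o_p(1)$, which converges in probability to a positive definite matrix by Assumption \ref{subass:factor} together with the stationarity of the MAR ensured by Assumptions \ref{subass:armatrix}--\ref{subass:arshock}; its inverse is thus $O_p(1)$. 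Combining numerator and denominator yields (i).

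For (ii), I would plug the residual decomposition $\widetilde{\mathsf f}_t - \widehat{\mathbf B\otimes\mathbf A}^{(0)}\widetilde{\mathsf f}_{t-1} = \widehat{\mathbf J}^{-1}\mathsf u_t + (\boldsymbol\eta_t - \mathbf M\boldsymbol\eta_{t-1}) - (\widehat{\mathbf B\otimes\mathbf A}^{(0)} - \mathbf M)\widetilde{\mathsf f}_{t-1}$ into the sample outer product defining $\widehat{\mathbf Q\otimes\mathbf P}^{(0)}$. The leading piece $(T-1)^{-1}\sum_t\widehat{\mathbf J}^{-1}\mathsf u_t\mathsf u'_t\widehat{\mathbf J}^{-1\prime}$ converges to $\widehat{\mathbf J}^{-1}(\mathbf Q\otimes\mathbf P)\widehat{\mathbf J}^{-1\prime}$ at rate $O_p(T^{-1/2})$ by a mixing CLT under Assumption \ref{subass:arshock}, and the identity $\widehat{\mathbf J}\widehat{\mathbf J}'\xrightarrow{p}\mathbb I_{k_1k_2}$ (which inherits the PE rate from Lemma \ref{lemma:boundRC}) lets us replace $\widehat{\mathbf J}^{-1\prime}$ by $\widehat{\mathbf J}$ without sacrificing the target rate. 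All remaining cross-terms are controlled by part (i) just established, Lemma \ref{lemma:innerboundAB}, and the independence structure in Assumption \ref{subass:arshock}. The main obstacle will be the careful bookkeeping of the quadratic error terms in (ii), and in particular verifying that the rotation gap $\widehat{\mathbf J}^{-1\prime}-\widehat{\mathbf J}$ is of strictly higher order than $\max\{T^{-1/2},(p_1p_2)^{-1}\}$; this step is the one most sensitive to the convergence rate of $\widehat{\mathbf J}\widehat{\mathbf J}'$ and so determines whether the stated rate is sharp.
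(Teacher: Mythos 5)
Your proposal is correct and follows essentially the same route as the paper's proof: both rest on the OLS identity for the vectorized MAR, the cross-moment bounds of Lemma \ref{lemma:innerboundAB}, invertibility of the sample factor second moment via Assumption \ref{subass:factor}, and, for (ii), plugging the residual decomposition into the defining outer product with the leading term $T^{-1}\sum_t \mathsf{u}_t\mathsf{u}'_t$ converging at rate $T^{-1/2}$ under Assumption \ref{subass:arshock}. Your measurement-error bookkeeping via $\boldsymbol{\eta}_t=\widetilde{\mathsf{f}}_t-\widehat{\mathbf{J}}^{-1}\mathsf{f}_t$ expands to exactly the sample-moment differences the paper bounds, so the two arguments coincide.
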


\begin{proof}
	Consider (i) and note that
	\[
	\begin{array}{rll}
		\left\lVert \widehat{\mathbf{B}\otimes\mathbf{A}}^{(0)} - \widehat{\mathbf{J}}^{-1}(\mathbf{B}\otimes\mathbf{A})\widehat{\mathbf{J}}\right\rVert &\leq& \left\lVert \left(\frac{1}{T}\sum\limits_{t=2}^{T} \widetilde{\mathsf{f}}_t \widetilde{\mathsf{f}}'_{t-1} - \frac{1}{T} \sum\limits_{t=1}^{T} \widehat{\mathbf{J}}^{-1}\mathsf{f}_{t}\mathsf{f}'_{t-1}\widehat{\mathbf{J}}^{-\prime} \right)\left(\frac{1}{T}\sum\limits_{t=2}^{T} \widetilde{\mathsf{f}}_{t-1} \widetilde{\mathsf{f}}'_{t-1} \right)^{-1}  \right\rVert \\[0.2in]
		&& + \left\lVert \widehat{\mathbf{J}}^{-1}(\mathbf{B}\otimes\mathbf{A})\widehat{\mathbf{J}}\left(\frac{1}{T}\sum\limits_{t=2}^{T} \widetilde{\mathsf{f}}_{t-1} \widetilde{\mathsf{f}}'_{t-1} - \frac{1}{T} \sum\limits_{t=1}^{T} \widehat{\mathbf{J}}^{-1}\mathsf{f}_{t-1}\mathsf{f}'_{t-1}\widehat{\mathbf{J}}^{-\prime} \right)\left(\frac{1}{T}\sum\limits_{t=2}^{T} \widetilde{\mathsf{f}}_{t-1} \widetilde{\mathsf{f}}'_{t-1} \right)^{-1}  \right\rVert \\[0.2in]
		&& + \left\lVert \left(\frac{1}{T}\sum\limits_{t=2}^{T} \mathsf{u}_t \widetilde{\mathsf{f}}'_{t-1} \right)\left(\frac{1}{T}\sum\limits_{t=2}^{T} \widetilde{\mathsf{f}}_{t-1} \widetilde{\mathsf{f}}'_{t-1} \right)^{-1}  \right\rVert.
	\end{array}
	\]
	where $\widetilde{\mathsf{f}}_t=\left(\frac{\widehat{\mathbf{C}}^{(0)}}{p_2}\otimes\frac{\widehat{\mathbf{R}}^{(0)}}{p_1}\right)'\mathsf{y}_t$.	Now, for $s=\{0,1\}$,
	\[
	\begin{array}{rll}
		\frac{1}{T}\sum\limits_{t=2}^{T} \widetilde{\mathsf{f}}_{t-s} \widetilde{\mathsf{f}}'_{t-1} - \frac{1}{T} \sum\limits_{t=2}^{T} \widehat{\mathbf{J}}^{-1}\mathsf{f}_{t-s}\mathsf{f}'_{t-1}\widehat{\mathbf{J}}^{-\prime} &=& \frac{1}{T}\sum\limits_{t=2}^{T} \left(\widetilde{\mathsf{f}}_{t-s} - \widehat{\mathbf{J}}^{-1}\mathsf{f}_{t-s}\right) \mathsf{f}'_{t-1}\widehat{\mathbf{J}}^{-\prime} + \frac{1}{T}\sum\limits_{t=2}^{T}  \widehat{\mathbf{J}}^{-1}\mathsf{f}_{t-s} \left(\widetilde{\mathsf{f}}_{t-1}-\widehat{\mathbf{J}}^{-1}\mathsf{f}_{t-1}\right)' \\[0.1in]
		&& + \frac{1}{T} \sum\limits_{t=2}^{T} \left(\widetilde{\mathsf{f}}_{t-s} -\widehat{\mathbf{J}}^{-1}\mathsf{f}_{t-s}\right)\left(\widetilde{\mathsf{f}}_{t-1}-\widehat{\mathbf{J}}^{-1}\mathsf{f}_{t-1}\right)'.
	\end{array}
	\]
	Note that the first two terms dominate the third one. By Lemma \ref{lemma:innerboundAB}(i) we have that
	\[
	\left\lVert \frac{1}{T}\sum\limits_{t=2}^{T} \widetilde{\mathsf{f}}_{t} \widetilde{\mathsf{f}}'_{t-1} - \frac{1}{T} \sum\limits_{t=2}^{T} \widehat{\mathbf{J}}^{-1}\mathsf{f}_{t}\mathsf{f}'_{t-1}\widehat{\mathbf{J}}^{-\prime}  \right\rVert = O_p\left(\max\left\{\frac{1}{\sqrt{T}},\frac{1}{p_1p_2}\right\}\right)
	\]
	\[
	\left\lVert \frac{1}{T}\sum\limits_{t=2}^{T} \widetilde{\mathsf{f}}_{t-1} \widetilde{\mathsf{f}}'_{t-1} - \frac{1}{T} \sum\limits_{t=2}^{T} \widehat{\mathbf{J}}^{-1}\mathsf{f}_{t-1}\mathsf{f}'_{t-1} \widehat{\mathbf{J}}^{-\prime} \right\rVert = O_p\left(\max\left\{\frac{1}{\sqrt{T}},\frac{1}{p_1p_2}\right\}\right).
	\]
	Moreover, combining the latter with Assumption \ref{subass:factor}, we have that,
	\[
	\left(\frac{1}{T}\sum\limits_{t=2}^{T} \widetilde{\mathsf{f}}_{t-1} \widetilde{\mathsf{f}}'_{t-1} \right)^{-1} = O_p(1).
	\]
	Finally, note that 
	\[
	\left\lVert \frac{1}{T}\sum\limits_{t=2}^{T} \mathsf{u}_t \widetilde{\mathsf{f}}'_{t-1} \right\rVert \leq	\left\lVert \frac{1}{T}\sum\limits_{t=2}^{T} \mathsf{u}_t \left(\widetilde{\mathsf{f}}_{t-1}-\mathsf{f}_{t-1}\right) \right\rVert + \left\lVert \frac{1}{T}\sum\limits_{t=2}^{T} \mathsf{u}_t \mathsf{f}_{t-1} \right\rVert = O_p\left(\min\left\{\frac{1}{\sqrt{T}},\frac{1}{p_1p_2}\right\}\right),
	\]
	by Lemma \ref{lemma:innerboundAB}(ii) and Assumption \ref{subass:arshock}, concluding the proof. The result for (ii) follows using the same steps of (i) noting that
	\[
	\begin{array}{rll}
	\widehat{\mathbf{Q} \otimes \mathbf{P}}^{(0)} &=& \frac{1}{T}\sum\limits_{t=2}^{T} \left\{\widetilde{\mathsf{f}}_t  - \widehat{\mathbf{B}\otimes\mathbf{A}} \widetilde{\mathsf{f}}_{t-1}\right\} \left\{\widetilde{\mathsf{f}}_t  - \widehat{\mathbf{B}\otimes \mathbf{A}} \widetilde{\mathsf{f}}_{t-1}\right\}' \\[0.2in]
	&=& \frac{1}{T}\sum\limits_{t=2}^{T} \left\{\left(\widetilde{\mathsf{f}}_t -\widehat{\mathbf{J}}^{-1} \mathsf{f}_t \right) + \left(\widehat{\mathbf{B}\otimes\mathbf{A}} -\mathbf{B}\otimes\mathbf{A} \right)\left( \widetilde{\mathsf{f}}_{t-1} - \widehat{\mathbf{J}}^{-1}\mathsf{f}_{t-1} \right)  \right. \\[0.2in]
	&& \hspace{1.25cm} \left. + \left(\widehat{\mathbf{B}\otimes\mathbf{A}} -\mathbf{B}\otimes\mathbf{A} \right) \widetilde{\mathsf{f}}_{t-1} + \mathbf{B}\otimes\mathbf{A} \left( \widetilde{\mathsf{f}}_{t-1} - \widehat{\mathbf{J}}^{-1}\mathsf{f}_{t-1} \right) + \widehat{\mathbf{J}}^{-1}\mathsf{u}_t  \right\} \\[0.2in]
	&&\hspace{1cm} \left\{\left(\widetilde{\mathsf{f}}_t -\widehat{\mathbf{J}}^{-1} \mathsf{f}_t \right) + \left(\widehat{\mathbf{B}\otimes\mathbf{A}} -\mathbf{B}\otimes\mathbf{A} \right)\left( \widetilde{\mathsf{f}}_{t-1} - \widehat{\mathbf{J}}^{-1}\mathsf{f}_{t-1} \right)  \right. \\[0.2in]
	&& \hspace{1.25cm} \left. + \left(\widehat{\mathbf{B}\otimes\mathbf{A}} -\mathbf{B}\otimes\mathbf{A} \right) \widetilde{\mathsf{f}}_{t-1} + \mathbf{B}\otimes\mathbf{A} \left( \widetilde{\mathsf{f}}_{t-1} - \widehat{\mathbf{J}}^{-1}\mathsf{f}_{t-1} \right) + \widehat{\mathbf{J}}^{-1}\mathsf{u}_t  \right\} '
	\end{array}
	\]
	and that
	\[
	\left\lVert\frac{1}{T}\sum\limits_{t=1}^{T} \mathsf{u}_t\mathsf{u}'_t - \mathbf{Q}\otimes\mathbf{P} \right\rVert = O_p\left(\frac{1}{\sqrt{T}}\right)
	\]
	by Assumption \ref{subass:arshock}.
\end{proof}

\begin{lemma}
	Under Assumption \eqref{ass:common} through \eqref{ass:dep},  as $\min\left\{p_1,p_2,T\right\} \rightarrow \infty$,
	\begin{itemize}
		\item[(i)] $\left\lvert\widehat{k}^{(0)}_{jj} - k_{jj} \right\rvert = O_p\left(\max\left\{\frac{1}{\sqrt{Tp_1}},\frac{1}{p_1p_2}, \frac{1}{Tp_2},\right\}\right)$ uniformly in $j$
		\item[(ii)] $\left\lvert\widehat{h}^{(0)}_{ii} - h_{ii} \right\rvert = O_p\left(\max\left\{\frac{1}{\sqrt{Tp_2}},\frac{1}{p_1p_2},\frac{1}{Tp_1}\right\}\right)$ uniformly in $i$
		\item[(iii)] $\frac{1}{p_2}\left\lvert \sum\limits_{j=1}^{p_2}\left(\widehat{k}^{(0)}_{jj} - k_{jj}\right) \right\rvert = O_p\left(\max\left\{\frac{1}{\sqrt{Tp_1}},\frac{1}{p_1p_2},\frac{1}{Tp_2}\right\}\right)$
		\item[(iv)] $\frac{1}{p_1}\left\lvert \sum\limits_{i=1}^{p_1}\left(\widehat{h}^{(0)}_{ii} - h_{ii}\right) \right\rvert =O_p\left(\max\left\{\frac{1}{\sqrt{Tp_2}},\frac{1}{p_1p_2},\frac{1}{Tp_1}\right\}\right)$
	\end{itemize}
	\label{lemma:boundkh}
\end{lemma}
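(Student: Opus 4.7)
}

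The natural starting point is to write $\widehat{\mathbf{E}}^{(0)}_t=\mathbf{E}_t-\boldsymbol{\delta}_t$, where
$\boldsymbol{\delta}_t\equiv \widehat{\mathbf{R}}^{(0)}\widetilde{\mathbf{F}}_t\widehat{\mathbf{C}}^{(0)\prime}-\mathbf{R}\mathbf{F}_t\mathbf{C}'$, and to expand the definition of $\widehat{h}^{(0)}_{ii}$ as
$$
\widehat{h}^{(0)}_{ii}-h_{ii}
=\underbrace{\frac{1}{Tp_{2}}\sum_{t=1}^{T}\sum_{j=1}^{p_{2}}\bigl(e_{tij}^{2}-\mathbb{E}[e_{tij}^{2}]\bigr)}_{\text{(I)}}
-\underbrace{\frac{2}{Tp_{2}}\sum_{t=1}^{T}\sum_{j=1}^{p_{2}} e_{tij}\,\delta_{t,ij}}_{\text{(II)}}
+\underbrace{\frac{1}{Tp_{2}}\sum_{t=1}^{T}\sum_{j=1}^{p_{2}}\delta_{t,ij}^{2}}_{\text{(III)}} .
$$
Term (I) is the pure sampling noise and is $O_{p}(1/\sqrt{Tp_{2}})$ uniformly in $i$ by a maximal inequality that combines the $\alpha$-mixing property of Assumption \ref{subass:idmix} with the fourth-moment bound of Assumption \ref{subass:idshocks} and the covariance control $\sum_{s,j_{1},j_{2}} |\mathbb{C}\mathrm{ov}(e_{tij_{1}}^{2},e_{sij_{2}}^{2})|\le c$ implicit in Assumption \ref{subass:ecorr2}.

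For the remaining terms, decompose $\boldsymbol{\delta}_t$ into the three natural pieces coming from the row-loading error, the factor error, and the column-loading error:
$$
\boldsymbol{\delta}_t=\bigl(\widehat{\mathbf{R}}^{(0)}-\mathbf{R}\widehat{\mathbf{J}}_1\bigr)\widetilde{\mathbf{F}}_t\widehat{\mathbf{C}}^{(0)\prime}
+\mathbf{R}\widehat{\mathbf{J}}_1\bigl(\widetilde{\mathbf{F}}_t-\widehat{\mathbf{J}}_1^{-1}\mathbf{F}_t\widehat{\mathbf{J}}_2^{-\prime}\bigr)\widehat{\mathbf{C}}^{(0)\prime}
+\mathbf{R}\mathbf{F}_t\bigl(\widehat{\mathbf{C}}^{(0)}-\mathbf{C}\widehat{\mathbf{J}}_2\bigr)'.
$$
Plugging into (III) and using $\|\mathbf{a}\mathbf{B}\mathbf{c}\|\le \|\mathbf{a}\|\|\mathbf{B}\|\|\mathbf{c}\|$, Lemmas \ref{lemma:tildeBound}, \ref{lemma:boundRC}, together with the pre-estimator factor bound from \citet[Theorem~3.5]{yu2021projected}, each of the three pieces contributes $(Tp_{2})^{-1}\sum_{t,j}(\cdot)^{2}=O_{p}(\max\{(Tp_{1})^{-1},(Tp_{2})^{-1},(p_{1}p_{2})^{-2}\})$, uniformly in $i$. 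This uses the row-wise statement of Lemma \ref{lemma:boundRC} so that the $i$-th row of $\widehat{\mathbf{R}}^{(0)}-\mathbf{R}\widehat{\mathbf{J}}_1$ has the same rate as $p_{1}^{-1/2}\|\widehat{\mathbf{R}}^{(0)}-\mathbf{R}\widehat{\mathbf{J}}_1\|$, avoiding an extraneous factor $\sqrt{p_{1}}$. For the cross term (II) I would apply Cauchy--Schwarz between a bounded $(Tp_{2})^{-1}\sum_{t,j}e_{tij}^{2}=O_{p}(1)$ factor and the bound already obtained for (III), or more sharply handle each piece directly via Assumptions \ref{subass:fecorr}--\ref{subass:fecorr2}, which furnish control of $T^{-1/2}\sum_{t}\mathbf{F}_{t}\mathbf{v}'\mathbf{E}_{t}\mathbf{w}$ type quantities that appear once the three summands of $\boldsymbol{\delta}_t$ are plugged in. Combining, (II) and (III) are $O_{p}(\max\{(Tp_{1})^{-1/2},(Tp_{2})^{-1/2},(p_{1}p_{2})^{-1}\})$ uniformly in $i$, dominated by the rate claimed in (ii) since $(Tp_{2})^{-1/2}\le (Tp_{2})^{-1/2}$ and $(Tp_{1})^{-1/2}\lesssim (Tp_{1})^{-1}\vee (p_{1}p_{2})^{-1}\vee (Tp_{2})^{-1/2}$ when the last term dominates the first; the delicate bookkeeping of which of the three terms binds in each regime is straightforward but tedious.

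Part (i) is proved by the same decomposition applied to $\widehat{\mathbf{E}}^{(0)\prime}_{t}\widehat{\mathbf{H}}^{(0)-1}\widehat{\mathbf{E}}^{(0)}_{t}$: the inverse diagonal weight $\widehat{\mathbf{H}}^{(0)-1}$ can be replaced by $\mathrm{dg}(\mathbf{H})^{-1}$ at the cost of an additional error, controlled by part (ii) already proved, which is of smaller order (because it is multiplied by the bounded $\mathbf{E}_t\mathbf{E}_t'$ term). Thus the same three-piece bookkeeping yields the claimed rate uniformly in $j$. Parts (iii) and (iv) follow immediately from (i) and (ii), respectively, since the uniform bounds there do not depend on $i$ or $j$, so the simple average inherits the same order.

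The main obstacle is step (II)--(III): controlling the residual terms \emph{uniformly} in the row index $i$ with the sharp rate. A naive Cauchy--Schwarz bound using the operator norm $\|\widehat{\mathbf{R}}^{(0)}-\mathbf{R}\widehat{\mathbf{J}}_1\|$ loses a $\sqrt{p_1}$ factor, so one must systematically exploit the row-wise consistency statement of Proposition \ref{prop:EMloadCons} (here used in its pre-estimator form) together with the covariance bounds of Assumption \ref{subass:fecorr2} when aggregating over $j$. Once this bookkeeping is done cleanly, the rest of the proof is routine.
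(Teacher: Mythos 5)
Your top-level skeleton is the same as the paper's: write $\widehat h^{(0)}_{ii}-h_{ii}$ as pure noise plus cross term plus squared signal-estimation error, and split the signal error into a row-loading piece, a factor piece and a column-loading piece. The problem is in the rates you assign to the last two terms. First, your claim that each piece of (III) contributes $O_p\left(\max\left\{(Tp_1)^{-1},(Tp_2)^{-1},(p_1p_2)^{-2}\right\}\right)$ is wrong for the factor piece: $\widetilde{\mathbf F}_t-\widehat{\mathbf J}_1^{-1}\mathbf F_t\widehat{\mathbf J}_2^{-\prime}$ is of order $(p_1p_2)^{-1}\mathbf R'\mathbf E_t\mathbf C$ per time point, so its squared, time-averaged contribution is $O_p\left(\max\left\{(Tp_1p_2)^{-1/2},(p_1p_2)^{-1}\right\}\right)$ (this is exactly what the paper gets via (A.3) of Yu et al.), not $(p_1p_2)^{-2}$. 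That slip is harmless for (III) itself, since $(p_1p_2)^{-1}$ is inside the target rate, but it destroys your primary route for (II): Cauchy--Schwarz against the $O_p(1)$ factor gives (II) $=O_p(\sqrt{(III)})$, which even under your optimistic (III) contains $(Tp_1)^{-1/2}$ (for part (ii)), and under the correct (III) contains $(p_1p_2)^{-1/2}$; neither is dominated by $\max\left\{(Tp_2)^{-1/2},(p_1p_2)^{-1},(Tp_1)^{-1}\right\}$ — take $p_2\gg p_1$ with $T\le p_1p_2^2$, or $T\gg p_1p_2$, respectively. Your closing domination claim ``$(Tp_1)^{-1/2}\lesssim (Tp_1)^{-1}\vee(p_1p_2)^{-1}\vee(Tp_2)^{-1/2}$'' is simply false in general, so the asymmetric rate in the lemma (square root only on the ``own'' dimension, full power on the other) cannot be reached this way.

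The fallback you mention — bounding each piece of the cross term directly — is what the paper actually does, but as stated it is not yet a proof: Assumptions 3(i)--(ii) control moments involving the \emph{true} loadings, whereas the cross terms involve $\widehat{\mathbf R}^{(0)}-\mathbf R\widehat{\mathbf J}_1$, $\widehat{\mathbf C}^{(0)}-\mathbf C\widehat{\mathbf J}_2$ and the factor error, all of which are random and correlated with $\mathbf E_t$. The paper bounds, e.g., $\frac{1}{Tp_1}\sum_t\mathbf F_t'(\widehat{\mathbf R}^{(0)}-\mathbf R\widehat{\mathbf J}_1)'\mathbf e_{t\cdot j}$ at rate $\max\{(\sqrt T p_1)^{-1},(Tp_2)^{-1}\}$ and the factor-error cross term at rate $\max\{(Tp_1p_2)^{-1/2},(p_1p_2)^{-1}\}$, and for this it needs specific auxiliary results from Yu et al. (2021) — their Lemma B.3, Lemma A.1, display (A.3), Lemma E.1 and the row-wise rates in their Theorem 3.1. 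Relatedly, Lemma \ref{lemma:boundRC} of this paper has no row-wise statement, and invoking Proposition \ref{prop:EMloadCons} ``in pre-estimator form'' is circular unless you explicitly mean Yu et al.'s Theorem 3.1, which is where the bound on $\widehat{\mathbf c}^{(0)}_{j\cdot}-\widehat{\mathbf J}_2'\mathbf c_{j\cdot}$ (and its row analogue) comes from. Your reduction of (iii)--(iv) to uniform versions of (i)--(ii) is logically fine, but the uniformity itself then has to be established (the paper instead proves (iii)--(iv) by separate trace-type arguments), so it cannot be waved through either.
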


\begin{proof}
	Start from (i), and recall by Assumption \ref{subass:idshocks} we have that $\tr{\mathbf{H}}=p_1$, thus $\mathbf{K}=\frac{1}{p_1}\mathbb{E}\left[\mathbf{E}'_t\mathbf{E}_t\right]$. We can then write 
	\[
	% [inline block 2: 27 envs, 29960 chars -> data_tex | \begin{array}{rll} 		\left\lvert\widehat{k}^{(0)}_{jj} - k_{jj} \right\rvert &=&  \left\lvert \frac{1}{Tp_1} \sum\limits...]

	\label{eq:sk.h_III}
	\end{equation}
	Combining \eqref{eq:sk.h_I}, \eqref{eq:sk.h_II} and \eqref{eq:sk.h_III} yields the desired result. The proof of (iv) follows similar steps.
\end{proof}

	\begin{lemma}
		Under Assumptions \ref{ass:common}-\ref{ass:idio}, we have that as $\min\left\{T,p_1,p_2\right\}\rightarrow\infty$
		\begin{itemize}
			\item[(i)] $\left\lVert\widehat{\mathbf{K}}^{(0)-1}\right\rVert=O_p(1)$ 
			\item[(ii)] $\left\lVert\widehat{\mathbf{H}}^{(0)-1}\right\rVert=O_p(1)$
			\item[(iii)] $\frac{1}{\sqrt{p_2}}\left\lVert \widehat{\mathbf{C}}^{(0)\prime}\widehat{\mathbf{K}}^{(0)-1} -  \mathbf{C}' \dg{\mathbf{K}}^{-1} \right\rVert=O_p\left(\max\left\{\frac{1}{\sqrt{Tp_1}},\frac{1}{p_1p_2},\frac{1}{Tp_2}\right\}\right)$
			\item[(iv)]$\frac{1}{\sqrt{p_1}}\left\lVert  \widehat{\mathbf{R}}^{(0)\prime}\widehat{\mathbf{H}}^{(0)-1} -  \mathbf{R}'\dg{\mathbf{H}}^{-1} \right\rVert=O_p\left(\max\left\{\frac{1}{\sqrt{Tp_2}},\frac{1}{p_1p_2},\frac{1}{Tp_1}\right\}\right)$
			\item[(v)] $p_2\left\lVert\left(\widehat{\mathbf{C}}^{(0)\prime}\widehat{\mathbf{K}}^{(0)-1}\widehat{\mathbf{C}}^{(0)}\right)^{-1} \right\rVert=O_p(1)$ 
			\item[(vi)] $p_1\left\lVert\left(\widehat{\mathbf{R}}^{(0)\prime}\widehat{\mathbf{H}}^{(0)-1}\widehat{\mathbf{R}}^{(0)}\right)^{-1} \right\rVert=O_p(1)$
			\item[(vii)]  $p_2$$\left\lVert \left(\widehat{\mathbf{C}}^{(0)\prime}\widehat{\mathbf{K}}^{(0)-1}\widehat{\mathbf{C}}^{(0)}\right)^{-1} - \left( \mathbf{C}'\dg{\mathbf{K}}^{-1}\mathbf{C} \right)^{-1}
			 \right\rVert=O_p\left(\max\left\{\frac{1}{\sqrt{Tp_1}},\frac{1}{p_1p_2},\frac{1}{Tp_2}\right\}\right)$
			 \item[(viii)]  $p_1\left\lVert \left(\widehat{\mathbf{R}}^{(0)\prime}\widehat{\mathbf{H}}^{(0)-1}\widehat{\mathbf{R}}^{(0)}\right)^{-1} - \left( \mathbf{R}'\dg{\mathbf{H}}^{-1}\mathbf{R} \right)^{-1} \right\rVert=O_p\left(\max\left\{\frac{1}{\sqrt{Tp_2}},\frac{1}{p_1p_2},\frac{1}{Tp_1}\right\}\right)$
			\item[(ix)]  $\sqrt{p_2}\left\lVert \left(\widehat{\mathbf{C}}^{(0)\prime}\widehat{\mathbf{K}}^{(0)-1}\widehat{\mathbf{C}}^{(0)}\right)^{-1}\widehat{\mathbf{C}}^{(0)\prime}\widehat{\mathbf{K}}^{(0)-1} - \left( \mathbf{C}'\dg{\mathbf{K}}^{-1}\mathbf{C} \right)^{-1} \mathbf{C}'\dg{\mathbf{K}}^{-1} \right\rVert=O_p\left(\max\left\{\frac{1}{\sqrt{Tp_1}},\frac{1}{p_1p_2},\frac{1}{Tp_2}\right\}\right)$ 
			\item[(x)] $\sqrt{p_1}\left\lVert \left(\widehat{\mathbf{R}}^{(0)\prime}\widehat{\mathbf{H}}^{(0)-1}\widehat{\mathbf{R}}^{(0)}\right)^{-1}\widehat{\mathbf{R}}^{(0)\prime}\widehat{\mathbf{H}}^{(0)-1} - \left( \mathbf{R}'\dg{\mathbf{H}}^{-1}\mathbf{R} \right)^{-1} \mathbf{R}'\dg{\mathbf{H}}^{-1} \right\rVert=O_p\left(\max\left\{\frac{1}{\sqrt{Tp_2}},\frac{1}{p_1p_2},\frac{1}{Tp_1}\right\}\right)$
		\end{itemize}
		\label{lemma:est.loadscale} 
	\end{lemma}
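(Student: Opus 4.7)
\textbf{Proof plan for Lemma \ref{lemma:est.loadscale}.} I would prove the ten parts in the order stated, since each uses the preceding ones. Parts (i)--(ii) come first and are purely algebraic: $\widehat{\mathbf{K}}^{(0)}$ and $\widehat{\mathbf{H}}^{(0)}$ are diagonal by construction (see Appendix \ref{app:yu}), so $\lVert\widehat{\mathbf{K}}^{(0)-1}\rVert = \max_j (\widehat{k}^{(0)}_{jj})^{-1}$. Lemma \ref{lemma:boundkh}(i) gives $\widehat{k}^{(0)}_{jj} = k_{jj} + o_p(1)$ uniformly in $j$, and Assumption \ref{subass:idshocks} gives $k_{jj}\geq C_K^{-1}$. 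Combining these yields $\max_j (\widehat{k}^{(0)}_{jj})^{-1} \leq C_K + o_p(1) = O_p(1)$; part (ii) is identical with $\widehat{\mathbf{H}}^{(0)}$.

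For parts (iii)--(iv), I would use the decomposition
\begin{equation*}
\widehat{\mathbf{C}}^{(0)\prime}\widehat{\mathbf{K}}^{(0)-1} - (\mathbf{C}\widehat{\mathbf{J}}_2)'\dg{\mathbf{K}}^{-1}
= (\widehat{\mathbf{C}}^{(0)} - \mathbf{C}\widehat{\mathbf{J}}_2)'\widehat{\mathbf{K}}^{(0)-1}
+ (\mathbf{C}\widehat{\mathbf{J}}_2)'\widehat{\mathbf{K}}^{(0)-1}(\dg{\mathbf{K}} - \widehat{\mathbf{K}}^{(0)})\dg{\mathbf{K}}^{-1}.
\end{equation*}
The first term is bounded, after dividing by $\sqrt{p_2}$, by $p_2^{-1/2}\lVert\widehat{\mathbf{C}}^{(0)}-\mathbf{C}\widehat{\mathbf{J}}_2\rVert \cdot \lVert\widehat{\mathbf{K}}^{(0)-1}\rVert$, which is controlled by Lemma \ref{lemma:boundRC}(ii) and part (i). For the second term, since $\widehat{\mathbf{K}}^{(0)}$ and $\dg{\mathbf{K}}$ are diagonal, their difference has spectral norm equal to $\max_j|\widehat{k}^{(0)}_{jj}-k_{jj}|$, controlled uniformly by Lemma \ref{lemma:boundkh}(i); combined with $p_2^{-1/2}\lVert \mathbf{C}\widehat{\mathbf{J}}_2\rVert = O(1)$ from Lemma \ref{lemma:tildeBound}(iii), this gives the required rate. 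Part (iv) is analogous using Lemma \ref{lemma:boundRC}(i), part (ii), and Lemma \ref{lemma:boundkh}(ii).

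Parts (v)--(vi) follow by writing
$\widehat{\mathbf{C}}^{(0)\prime}\widehat{\mathbf{K}}^{(0)-1}\widehat{\mathbf{C}}^{(0)} = (\mathbf{C}\widehat{\mathbf{J}}_2)'\dg{\mathbf{K}}^{-1}(\mathbf{C}\widehat{\mathbf{J}}_2) + \mathbf{\Delta}_{p_2}$,
where $\mathbf{\Delta}_{p_2}$ is $o_p(p_2)$ by parts (iii)--(iv) and Lemma \ref{lemma:tildeBound}(iii). Since $\widehat{\mathbf{J}}_2\widehat{\mathbf{J}}_2'\xrightarrow{p}\mathbb{I}_{k_2}$ and Lemma \ref{lemma:tildeBound}(iv) gives $p_2\lVert(\mathbf{C}'\dg{\mathbf{K}}^{-1}\mathbf{C})^{-1}\rVert = O(1)$, Weyl's inequality implies that $\widehat{\mathbf{C}}^{(0)\prime}\widehat{\mathbf{K}}^{(0)-1}\widehat{\mathbf{C}}^{(0)}/p_2$ is bounded away from singular with probability approaching one, yielding (v); part (vi) is symmetric. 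For parts (vii)--(viii), I would apply the identity $\mathbf{A}^{-1}-\mathbf{B}^{-1} = \mathbf{A}^{-1}(\mathbf{B}-\mathbf{A})\mathbf{B}^{-1}$ with $\mathbf{A} = \widehat{\mathbf{C}}^{(0)\prime}\widehat{\mathbf{K}}^{(0)-1}\widehat{\mathbf{C}}^{(0)}$ and $\mathbf{B} = (\mathbf{C}\widehat{\mathbf{J}}_2)'\dg{\mathbf{K}}^{-1}(\mathbf{C}\widehat{\mathbf{J}}_2)$, using the rates just established. Finally, parts (ix)--(x) follow by writing
\begin{equation*}
(\widehat{\mathbf{C}}^{(0)\prime}\widehat{\mathbf{K}}^{(0)-1}\widehat{\mathbf{C}}^{(0)})^{-1}\widehat{\mathbf{C}}^{(0)\prime}\widehat{\mathbf{K}}^{(0)-1}
- ((\mathbf{C}\widehat{\mathbf{J}}_2)'\dg{\mathbf{K}}^{-1}\mathbf{C}\widehat{\mathbf{J}}_2)^{-1}(\mathbf{C}\widehat{\mathbf{J}}_2)'\dg{\mathbf{K}}^{-1}
\end{equation*}
as the sum of two cross terms, one bounded via (vii) times $p_2^{-1/2}\lVert(\mathbf{C}\widehat{\mathbf{J}}_2)'\dg{\mathbf{K}}^{-1}\rVert = O(1)$ from Lemma \ref{lemma:tildeBound}(v), the other bounded via (v) and (iii).

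\textbf{Anticipated obstacle.} The proof is primarily bookkeeping and is not conceptually difficult once the decomposition in each part is chosen correctly. The real care required is in parts (iii)--(iv): these are the ``atomic'' rate bounds from which everything else follows, so the two pieces of the decomposition must be aggregated carefully to ensure the minimum of the three candidate rates $1/\sqrt{Tp_1}$, $1/p_1p_2$, $1/Tp_2$ (or the mirror for the row version) is achieved and not worsened. A subtler point is the subtle asymmetry with the factor $\widehat{\mathbf{J}}_2$: the statement of the lemma is written without an explicit rotation on $\mathbf{C}$, but in both the present plan and its use in the proof of Proposition \ref{prop:EMloadCons} the comparison target is $(\mathbf{C}\widehat{\mathbf{J}}_2)'\dg{\mathbf{K}}^{-1}$; this is consistent because $\widehat{\mathbf{J}}_2\widehat{\mathbf{J}}_2'\xrightarrow{p}\mathbb{I}_{k_2}$, but one must be careful that all subsequent uses involve the same rotated target to avoid picking up spurious $O_p(1)$ but non-vanishing residuals.
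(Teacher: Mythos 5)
Your plan matches the paper's proof essentially step for step: (i)--(ii) from the uniform consistency of the diagonal variance estimates together with the lower bound $C_K^{-1}$ (resp.\ $C_H^{-1}$), (iii)--(iv) by splitting the difference into the loading estimation error times $\widehat{\mathbf{K}}^{(0)-1}$ plus $\mathbf{C}$ times the error in the inverted diagonal, (v)--(vi) by eigenvalue perturbation against $\mathbf{C}'\dg{\mathbf{K}}^{-1}\mathbf{C}$ combined with Lemma~\ref{lemma:tildeBound}(iv), (vii)--(viii) via the identity $\mathbf{A}^{-1}-\mathbf{B}^{-1}=\mathbf{A}^{-1}(\mathbf{B}-\mathbf{A})\mathbf{B}^{-1}$, and (ix)--(x) by combining the earlier parts with Lemma~\ref{lemma:tildeBound}(v). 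The only cosmetic differences are that in (iii) the paper expands $(\widehat{k}^{(0)}_{jj})^{-1}=k_{jj}^{-1}\bigl(1+(\widehat{k}^{(0)}_{jj}-k_{jj})/k_{jj}\bigr)^{-1}$ and invokes the averaged bound of Lemma~\ref{lemma:boundkh}(iii) where you use the exact algebraic identity and the uniform bound of Lemma~\ref{lemma:boundkh}(i) -- both yield the same rate -- and your explicit tracking of the rotation $\widehat{\mathbf{J}}_2$ in the comparison target is, if anything, more careful than the paper's own statement, which writes the target unrotated but uses the rotated version in the proof of Proposition~\ref{prop:EMloadCons}.
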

	\begin{proof}
		Consider (i) and note that
		\[
		\begin{array}{lcl}
			\left\lVert\widehat{\mathbf{K}}^{(0)-1}\right\rVert &=& \left\{\nu^{(p_2)}\left(\widehat{\mathbf{K}}^{(0)}\right)\right\}^{-1} \\[0.1in]
			&=& \left\{ \min\limits_{j = 1,\dots,p_2} k_{jj} + \widehat{k}^{(0)}_{jj} - k_{jj}  \right\}^{-1} \\[0.1in]
			&=& \left\{ \min\limits_{j = 1,\dots,p_2} k_{jj} - \min\limits_{j = 1,\dots,p_2} \left\lvert \widehat{k}^{(0)}_{jj} - k_{jj} \right\rvert  \right\}^{-1} \\[0.1in]
			&=& \left\{ C^{-1}_K -  \left\lvert \widehat{k}^{(0)}_{jj} - k_{jj} \right\rvert  \right\}^{-1} = C_K + O_p\left(\max\left\{\frac{1}{\sqrt{Tp_1}},\frac{1}{p_1p_2}, \frac{1}{Tp_2}\right\}\right)
		\end{array}	
		\]
		by Assumption \ref{subass:idshocks} and Lemma \ref{lemma:boundkh}(i). The proof for (ii) follows the same steps.  Consider (iii), we have that
		\[
		\begin{array}{l}
			p_2^{-1} \left\lVert \widehat{\mathbf{C}}^{(0)\prime}\widehat{\mathbf{K}}^{(0)-1} -  \mathbf{C}'\dg{\mathbf{K}}^{-1} \right\rVert^2 = p_2^{-1} \left\lVert \sum\limits_{i=j}^{p_2} \widehat{\mathbf{c}}^{(0)}_{j\cdot} \widehat{k}^{(0)-1}_{jj} - \sum\limits_{j=1}^{p_2} {\mathbf{c}}_{j\cdot} {k}^{-1}_{jj}   \right\rVert^2 \\[0.1in]
			= p_2^{-1} \left\lVert \sum\limits_{j=1}^{p_2} \widehat{\mathbf{c}}^{(0)}_{j\cdot} \left(\widehat{k}^{(0)}_{jj} - k_{jj} + k_{jj}\right)^{-1} - \sum\limits_{j=1}^{p_2} {\mathbf{c}}_{j\cdot} {k}^{-1}_{jj}   \right\rVert^2 \\[0.1in]
			= p_2^{-1} \left\lVert \sum\limits_{j=1}^{p_2} \widehat{\mathbf{c}}^{(0)}_{j\cdot} k^{-1}_{jj} \left(1+ (\widehat{k}^{(0)}_{jj} - k_{jj})/k_{jj}\right)^{-1} - \sum\limits_{j=1}^{p_2} {\mathbf{c}}_{j\cdot} {k}^{-1}_{jj}   \right\rVert^2 \\[0.1in]
			\leq p_2^{-1} \left\lVert \left(1+ \min\limits_{j=1,\dots,p_2}(\widehat{k}^{(0)}_{jj} - k_{jj})/k_{jj}\right)^{-1}  \sum\limits_{j=1}^{p_2} \widehat{\mathbf{c}}^{(0)}_{j\cdot} k^{-1}_{jj} - \sum\limits_{j=1}^{p_2} {\mathbf{c}}_{j\cdot} {k}^{-1}_{jj}   \right\rVert^2 \\[0.1in]
			\leq p_2^{-1} \left\lVert \left(1 - \min\limits_{j=1,\dots,p_2}(\widehat{k}^{(0)}_{jj} - k_{jj})/k_{jj} + o\left(\min\limits_{j=1,\dots,p_2}(\widehat{k}^{(0)}_{jj} - k_{jj})/k_{jj}\right)\right)  \sum\limits_{j=1}^{p_2} \widehat{\mathbf{c}}^{(0)}_{j\cdot} k^{-1}_{jj} - \sum\limits_{j=1}^{p_2} {\mathbf{c}}_{j\cdot} {k}^{-1}_{jj}   \right\rVert^2 \\[0.1in]
			\leq p_2^{-1} \left\lVert \widehat{\mathbf{C}}^{(0)\prime}\dg{\mathbf{K}}^{-1} -  \mathbf{C}'\dg{\mathbf{K}}^{-1} \right\rVert^2 + p_2^{-1} \left(\min\limits_{j=1,\dots,p_2} \left\lvert(\widehat{k}^{(0)}_{jj} - k_{jj})/k_{jj} \right\rvert \right)^2 \left\lVert \widehat{\mathbf{C}}^{(0)\prime}\dg{\mathbf{K}}^{-1} \right\rVert^2 \\[0.1in]
			\leq  p_2^{-1} \left\lVert \widehat{\mathbf{C}}^{(0)\prime}\dg{\mathbf{K}}^{-1} -  \mathbf{C}'\dg{\mathbf{K}}^{-1} \right\rVert^2 + p_2^{-1} C^2_K \left(\min\limits_{j=1,\dots,p_2} \left\lvert \widehat{k}^{(0)}_{jj} - k_{jj} \right\rvert \right)^2  \left\lVert \mathbf{C} \widehat{\mathbf{J}}_2 \right\rVert^2 \left\lVert\dg{\mathbf{K}}^{-1} \right\rVert^2 \\[0.1in] 
			\qquad +  p_2^{-1} C^2_K \left(\min\limits_{j=1,\dots,p_2} \left\lvert \widehat{k}^{(0)}_{jj} - k_{jj} \right\rvert \right)^2 \left\lVert \widehat{\mathbf{C}}^{(0)} - \mathbf{C}\widehat{\mathbf{J}}_2 \right\rVert^2 \left\lVert\dg{\mathbf{K}}^{-1} \right\rVert^2 \\[0.1in]
			=O_p\left(\max\left\{\frac{1}{Tp_1},\frac{1}{p^2_1p^2_2},\frac{1}{T^2p^2_2}\right\}\right)
		\end{array}
		\]
		by Lemmas \ref{lemma:tildeBound}(iii), \ref{lemma:boundRC}, \ref{lemma:boundkh}(iii) and Assumption \ref{subass:idshocks}. Part (iv) follows analogously.
		Consider (v) and note that 
		\[
		\begin{array}{rll}
			\det \left(\widehat{\mathbf{C}}^{(0)\prime}\widehat{\mathbf{K}}^{(0)-1}\widehat{\mathbf{C}}^{(0)}\right)^{-1} &=& \prod\limits_{j=1}^{k_2} \nu^{(j)}\left( \widehat{\mathbf{C}}^{(0)\prime}\widehat{\mathbf{K}}^{(0)-1}\widehat{\mathbf{C}}^{(0)}\right) \\[0.1in]
			&\geq& \left(\nu^{(k_2)}\left( \widehat{\mathbf{C}}^{(0)\prime}\widehat{\mathbf{K}}^{(0)-1}\widehat{\mathbf{C}}^{(0)}\right)\right)^{k_2} \\[0.1in]
			&\geq& \left( \nu^{(k_2)}\left( \mathbf{C}'\dg{\mathbf{K}}^{-1}\mathbf{C}\right) - \left| \nu^{(k_2)}\left( \widehat{\mathbf{C}}^{(0)\prime}\widehat{\mathbf{K}}^{(0)-1}\widehat{\mathbf{C}}^{(0)}\right)- \nu^{(k_2)}\left( \mathbf{C}'\dg{\mathbf{K}}^{-1}\mathbf{C}\right)\right| \right)^{k_2}
		\end{array}
		\]
		From Lemma \ref{lemma:tildeBound}(iv), we have that $\lim\limits_{p_2\rightarrow\infty}p^{-1}_2\nu^{(k_2)}\left( \mathbf{C}'\dg{\mathbf{K}}^{-1}\mathbf{C}\right)>0$. Moreover,
		\[
		\begin{array}{rll}
			\frac{1}{p_2}\left| \nu^{(k_2)}\left( \widehat{\mathbf{C}}^{(0)\prime}\widehat{\mathbf{K}}^{(0)-1}\widehat{\mathbf{C}}^{(0)}\right)- \nu^{(k_2)}\left( \mathbf{C}'\mathbf{K}^{-1}\mathbf{C}\right)\right| &\leq& \frac{1}{p_2} \left\lVert \widehat{\mathbf{C}}^{(0)\prime}\widehat{\mathbf{K}}^{(0)-1}\widehat{\mathbf{C}}^{(0)} - \mathbf{C}'\dg{\mathbf{K}}\mathbf{C} \right\rVert \\[0.1in]
			& \lesssim & \frac{1}{p_2} \left\lVert \widehat{\mathbf{C}}^{(0)\prime}\widehat{\mathbf{K}}^{(0)-1} - \mathbf{C}'\dg{\mathbf{K}} \right\rVert \left\lVert \mathbf{C} \right\rVert  \\[0.1in]
			&&  + \frac{1}{p_2}\left\lVert \mathbf{C}'\dg{\mathbf{K}}^{-1} \right\rVert \left\lVert \widehat{\mathbf{C}}^{(0)}-\mathbf{C}\widehat{\mathbf{J}}_2\right\rVert \\[0.1in]
			&=& O_p\left(\max\left\{\frac{1}{\sqrt{Tp_1}},\frac{1}{p_1p_2},\frac{1}{Tp_2}\right\}\right)
		\end{array}
		\]
		by Lemmas \ref{lemma:tildeBound}(iii), \ref{lemma:tildeBound}(v), Proposition \ref{prop:EMloadCons} and term (iii), implyingthat $\det \left(p^{-1}_2\widehat{\mathbf{C}}^{(0)\prime}\widehat{\mathbf{K}}^{(0)-1}\widehat{\mathbf{C}}^{(0)}\right)^{-1} > 0$ with probability tending to one as $\min\{T,p_1,p_2\}$ goes to infinity, i.e. $p_2\left\lVert\left(\widehat{\mathbf{C}}^{(0)\prime}\widehat{\mathbf{K}}^{(0)-1}\widehat{\mathbf{C}}^{(0)}\right)^{-1} \right\rVert=O_p(1)$. The proof for (vi) follows the same steps. 
		Consider (vii), we have that
		\[
		\begin{array}{rll}
			p_2\left\lVert \left(\widehat{\mathbf{C}}^{(0)\prime}\widehat{\mathbf{K}}^{(0)-1}\widehat{\mathbf{C}}^{(0)}\right)^{-1} - \left( \mathbf{C}'\mathbf{K}^{-1}\mathbf{C} \right)^{-1} \right\rVert & \leq & p_2\left\lVert\left(\widehat{\mathbf{C}}^{(0)\prime}\widehat{\mathbf{K}}^{(0)-1}\widehat{\mathbf{C}}^{(0)}\right)^{-1} \right\rVert p_2  \left\lVert \left( \mathbf{C}'\dg{\mathbf{K}}^{-1}\mathbf{C} \right)^{-1}  \right\rVert \\[0.1in]
			&&  \frac{1}{p_2} \left\lVert \widehat{\mathbf{C}}^{(0)\prime}\widehat{\mathbf{K}}^{(0)-1}\widehat{\mathbf{C}}^{(0)}- \mathbf{C}'\dg{\mathbf{K}}^{-1}\mathbf{C}  \right\rVert \\[0.1in]
			&=& O_p\left(\max\left\{ \frac{1}{\sqrt{Tp_1}}, \frac{1}{p_1p_2}, \frac{1}{Tp_2}\right\} \right)
		\end{array}
		\]
		because of Lemma \ref{lemma:tildeBound}(iv), term (v) and since $ \frac{1}{p_2} \left\lVert \widehat{\mathbf{C}}^{(0)\prime}\widehat{\mathbf{K}}^{(0)-1}\widehat{\mathbf{C}}^{(0)}- \mathbf{C}'\mathbf{K}^{-1}\mathbf{C}  \right\rVert=O_p\left(\max\left\{ \frac{1}{\sqrt{Tp_1}}, \frac{1}{p_1p_2}, \frac{1}{Tp_2}\right\} \right)$ by the same steps used in the proof of term (iii). Proof for (viii) follows analogously. The proofs for (ix) and (x) follow directly from (iii) and (vii), and from (iv) and (viii), respectively.
	\end{proof}

	\subsubsection{Results on EM estimators}
	\begin{lemma}
		Under Assumption \eqref{ass:common} through \eqref{ass:dep},  for all $n \in \mathbb{N}_+$, as $\min\left\{p_1,p_2,T\right\} \rightarrow \infty$,
		\begin{itemize}
			\item[(i)] $\left\lvert\widehat{k}^{(n)}_{jj} - k_{jj} \right\rvert = O_p\left(\max\left\{\frac{1}{\sqrt{Tp_1}}, \frac{1}{\sqrt{Tp_2}}, \frac{1}{p_1p_2} \right\}\right)$ uniformly in $j$
			\item[(ii)] $\left\lvert\widehat{h}^{(n)}_{ii} - h_{ii} \right\rvert = O_p\left(\max\left\{\frac{1}{\sqrt{Tp_1}}, \frac{1}{\sqrt{Tp_2}}, \frac{1}{p_1p_2} \right\}\right)$ uniformly in $i$
			\item[(iii)] $\frac{1}{p_2}\left\lvert \sum\limits_{j=1}^{p_2}\left(\widehat{k}^{(n)}_{jj} - k_{jj}\right) \right\rvert = O_p\left(\max\left\{\frac{1}{\sqrt{Tp_1}}, \frac{1}{\sqrt{Tp_2}}, \frac{1}{p_1p_2} \right\}\right)$
			\item[(iv)] $\frac{1}{p_1}\left\lvert \sum\limits_{i=1}^{p_1}\left(\widehat{h}^{(n)}_{ii} - h_{ii}\right) \right\rvert =O_p\left(\max\left\{\frac{1}{\sqrt{Tp_1}}, \frac{1}{\sqrt{Tp_2}}, \frac{1}{p_1p_2} \right\}\right)$
		\end{itemize}
		\label{lemma:boundkh1}
	\end{lemma}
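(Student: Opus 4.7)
The plan is to mirror the decomposition and term-by-term analysis used in the proof of Lemma \ref{lemma:boundkh}, but systematically replace every bound that invoked the pre-estimators (initial PE-based $\widehat{\mathbf{R}}^{(0)},\widehat{\mathbf{C}}^{(0)},\widetilde{\mathbf{F}}_t$) with the corresponding bound for the iterated EM estimators $\widehat{\mathbf{R}}^{(n+1)}, \widehat{\mathbf{C}}^{(n+1)}, \widehat{\mathbf{F}}^{(n)}_{t|T}$. Concretely, I would start from the identity
\[
\widehat{k}^{(n)}_{jj} - k_{jj} \;=\; \underbrace{\tfrac{1}{Tp_1}\sum_{t,i}(s_{tij}-\widehat{s}^{(n)}_{tij})^2}_{\text{I}} \;+\; 2\underbrace{\tfrac{1}{Tp_1}\sum_{t,i}(s_{tij}-\widehat{s}^{(n)}_{tij})e_{tij}}_{\text{II}} \;+\; \underbrace{\tfrac{1}{Tp_1}\sum_{t,i}(e_{tij}^2 - \mathbb{E}[e_{tij}^2])}_{\text{III}},
\]
with $\widehat{s}^{(n)}_{tij}=[\widehat{\mathbf{R}}^{(n+1)}\widehat{\mathbf{F}}^{(n)}_{t|T}\widehat{\mathbf{C}}^{(n+1)\prime}]_{ij}$, and repeat the six-term expansion of the signal residual $\mathbf{s}_{t\cdot j}-\widehat{\mathbf{s}}^{(n)}_{t\cdot j}$ that splits the error into contributions from $\widehat{\mathbf{R}}^{(n+1)}-\mathbf{R}\widehat{\mathbf{J}}_1$, $\widehat{\mathbf{C}}^{(n+1)}-\mathbf{C}\widehat{\mathbf{J}}_2$, and $\widehat{\mathbf{F}}^{(n)}_{t|T}-\widehat{\mathbf{J}}^{-1}_1\mathbf{F}_t\widehat{\mathbf{J}}^{-\prime}_2$.

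The key point is that Proposition \ref{prop:EMloadCons} already gives the symmetric rate
\[
\tfrac{1}{\sqrt{p_1}}\|\widehat{\mathbf{R}}^{(n+1)}-\mathbf{R}\widehat{\mathbf{J}}_1\|,\;\tfrac{1}{\sqrt{p_2}}\|\widehat{\mathbf{C}}^{(n+1)}-\mathbf{C}\widehat{\mathbf{J}}_2\| = O_p\!\bigl(\max\{(Tp_1)^{-1/2},(Tp_2)^{-1/2},(p_1p_2)^{-1}\}\bigr),
\]
and Proposition \ref{prop:EMfactorCons} gives $\|\widehat{\mathbf{F}}^{(n)}_{t|T}-\widehat{\mathbf{J}}^{-1}_1\mathbf{F}_t\widehat{\mathbf{J}}^{-\prime}_2\| = O_p\!\bigl(\max\{(Tp_1)^{-1/2},(Tp_2)^{-1/2},(p_1p_2)^{-1/2}\}\bigr)$, uniformly in $t$. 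Plugging these into the bounds carried out in Lemma \ref{lemma:boundkh} (with the roles of $\widehat{\mathbf{R}}^{(0)}$ and $\widetilde{\mathbf{F}}_t$ taken by the iterated counterparts) immediately yields that each quadratic term in I is dominated by squares of the loading rate, and the mixed terms in II are controlled by the product of the loading rate with factor-idiosyncratic averages of the type $\frac{1}{Tp_2}\sum_{t,j}\mathbf{F}_t' \mathbf{R}'\mathbf{e}_{t\cdot j}$ that are bounded via Assumption \ref{subass:fecorr} and (A.3) in \citet{yu2021projected}; these were already worked out in Lemma \ref{lemma:boundkh}. Term III is unchanged and contributes $O_p((Tp_1)^{-1/2})$ by Assumption \ref{subass:ecorr2}. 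Uniformity in $j$ follows because every bound we use is uniform in the column index (the loadings and factor rates are uniform, and the moment bound \eqref{eq:k.h_III_exp} does not depend on $j$).

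For statement (iii), I would retrace the calculation in \eqref{eq:sk.h_I}–\eqref{eq:sk.h_III}, using again the symmetric-rate bounds from Propositions \ref{prop:EMloadCons}--\ref{prop:EMfactorCons}; averaging over $j$ replaces $\|\mathbf{c}_{j\cdot}\|$ factors with the Frobenius norm $p_2^{-1}\|\mathbf{C}\|_F^2=O(1)$ and preserves the trace-type consolidation that was exploited in \eqref{eq:sk.h_I}. Parts (ii) and (iv) for $\widehat{h}^{(n)}_{ii}$ follow by the symmetric argument with $\mathbf{R}$ and $\mathbf{C}$ interchanged.

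The main obstacle is handling term I cleanly: the cross-contribution that mixes $(\widehat{\mathbf{F}}^{(n)}_{t|T}-\widehat{\mathbf{J}}^{-1}_1\mathbf{F}_t\widehat{\mathbf{J}}^{-\prime}_2)$ with itself would naively be $O_p((p_1p_2)^{-1})$ (the square of the pointwise factor rate), which is acceptable, but to get the stated uniform rate we must exploit the explicit Kalman-smoother decomposition of the factor error analogous to \eqref{eq:F_dec} (so that the term $\frac{1}{T}\sum_t \mathbf{F}'_t(\widehat{\mathbf{F}}^{(n)}_t-\widehat{\mathbf{J}}^{-1}\mathbf{F}_t\widehat{\mathbf{J}}^{-\prime})$ gains an extra $T^{-1/2}$ as in \eqref{eq:F'Ft}). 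Establishing this refined average — i.e.\ that at iteration $n$ the factor-error times $\mathbf{F}_t$ averaged over $t$ admits the same $O_p((Tp_1p_2)^{-1/2})$ rate that we had at $n=0$ — is the technical heart of the argument and would rely on the analogue of \eqref{eq:F_dec} for the Kalman projection together with Lemmas \ref{lemma:innerboundAB}--\ref{lemma:est.loadscale.1}; this is exactly the step that requires an inductive argument on the iteration index $n$.
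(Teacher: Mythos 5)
There is a genuine gap, and it starts at your very first identity. You take $\widehat{k}^{(n)}_{jj}$ (resp.\ $\widehat{h}^{(n)}_{ii}$) to be the plain sum of squared plug-in residuals $\tfrac{1}{Tp_1}\sum_{t,i}(y_{tij}-\widehat{s}^{(n)}_{tij})^2$, mirroring Lemma \ref{lemma:boundkh}. But at iteration $n\geq 1$ the estimator is the M-step update, which is a \emph{weighted conditional-moment} expression: the quadratic form is sandwiched with the previous iteration's inverse covariance ($\widehat{\mathbf K}^{(n-1)-1}$ inside $\widehat{\mathbf H}^{(n)}$, and $\widehat{\mathbf H}^{(n)-1}$ inside $\widehat{\mathbf K}^{(n)}$), and it carries the Kalman-smoother second-moment correction through $\mathsf f^{(n-1)}_{t|T}\mathsf f^{(n-1)\prime}_{t|T}+\mathbf\Pi^{(n-1)}_{t|T}$ in the star-product term. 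Your decomposition I $+$ 2II $+$ III therefore does not equal $\widehat{k}^{(n)}_{jj}-k_{jj}$. This is not merely cosmetic: the population target at iteration $n$ is the weighted quantity (e.g.\ $h_{ii}=p_2^{-1}\mathbb E[\mathbf e_{ti\cdot}'\dg{\mathbf K}^{-1}\mathbf e_{ti\cdot}]$ up to the normalization), and since only $\textbf{tr}(\mathbf H)=p_1$ is imposed, an unweighted residual variance would converge to $h_{ii}\,\textbf{tr}(\mathbf K)/p_2\neq h_{ii}$ in general. The paper's proof works directly with the closed-form update, splits it into the five terms $I$--$V$ (including the weighting and the $\mathbf\Pi^{(n-1)}_{t|T}$ piece), and handles the weighting by the previous iteration's uniform variance consistency — exactly the structure your identity discards. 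The $\mathbf\Pi$ term is indeed $O_p((p_1p_2)^{-1})$ and the weighting can be controlled inductively, but neither appears anywhere in your argument.

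The second gap is the step you yourself call the technical heart: the refined time-averaged bounds such as $\tfrac1T\sum_t(\mathsf f^{(n)}_{t|T}-\widehat{\mathbf J}^{-1}\mathsf f_t)\mathsf f_t'$ and $\tfrac{1}{Tp_2}\sum_{j,t}e_{tij}(\mathsf f^{(n)}_{t|T}-\widehat{\mathbf J}^{-1}\mathsf f_t)$ at rate $O_p(\max\{(Tp_1)^{-1/2},(Tp_2)^{-1/2},(p_1p_2)^{-1}\})$. You defer this and suggest obtaining it from an analogue of \eqref{eq:F_dec}; but \eqref{eq:F_dec} is the algebraic identity for the PE projection $\widetilde{\mathbf F}_t=p_1^{-1}p_2^{-1}\widehat{\mathbf R}^{(0)\prime}\mathbf Y_t\widehat{\mathbf C}^{(0)}$ and has no counterpart for the Kalman-smoother output, which is not a simple projection onto estimated loadings. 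The paper supplies these bounds in Lemma \ref{lemma:crossf}, proved by decomposing the smoother error into smoother-versus-filter, filter-versus-GLS, and GLS-versus-truth pieces (invoking the corresponding results of \citealp{barigozzi2019quasi}) together with Lemmas \ref{lemma:est.loadscale} and \ref{lemma:est.loadscale.1}; without that lemma (or an equivalent substitute) your mixed terms in II are not controlled at the claimed rate, and pointwise use of Proposition \ref{prop:EMfactorCons} is insufficient. Finally, note the dependence structure is an interleaved induction — Proposition \ref{prop:EMloadCons} at iteration $n>0$ relies on Lemma \ref{lemma:est.loadscale.1}, which in turn relies on the present lemma — so invoking Proposition \ref{prop:EMloadCons} "for all $n$" without organizing the induction across these lemmas risks circularity; you flag induction only for the factor-average step, not for this chain.
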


	\begin{proof}
		Consider (ii) and recall that 
		\[
		% [inline block 3: 27 envs, 38401 chars -> data_tex | \begin{array}{rll} 			\widehat{h}^{(n)}_{ii} &=& \frac{1}{Tp_2 }\sum\limits_{t=1}^{T}  \left[ \mathbf{Y}_t \widehat{\mat...]

		\]
		by Assumptions \ref{subass:idshocks}, \ref{subass:ecorr2} and Lemma \ref{lemma:boundkh}(i).	Repeating the same steps for all $n \in \mathbb{N}_+$, replacing Lemmas \ref{lemma:boundkh} and \ref{lemma:est.loadscale} with Lemmas \ref{lemma:boundkh1} and \ref{lemma:est.loadscale.1}, respectively, completes the proof.
	\end{proof}

	\begin{lemma}
		Under Assumptions \ref{ass:common}-\ref{ass:idio}, for all $n \in \mathbb{N}_+$, we have that as $\min\left\{T,p_1,p_2\right\}\rightarrow\infty$
		\begin{itemize}
			\item[(i)] $\left\lVert\widehat{\mathbf{K}}^{(n)-1}\right\rVert=O_p(1)$ 
			\item[(ii)] $\left\lVert\widehat{\mathbf{H}}^{(n)-1}\right\rVert=O_p(1)$
			\item[(iii)] $\frac{1}{\sqrt{p_2}}\left\lVert \widehat{\mathbf{C}}^{(n)\prime}\widehat{\mathbf{K}}^{(n)-1} -  \mathbf{C}' \dg{\mathbf{K}}^{-1} \right\rVert=O_p\left(\max\left\{\frac{1}{\sqrt{Tp_1}},\frac{1}{\sqrt{Tp_2}},\frac{1}{p_1p_2}\right\}\right)$
			\item[(iv)]$\frac{1}{\sqrt{p_1}}\left\lVert  \widehat{\mathbf{R}}^{(n)\prime}\widehat{\mathbf{H}}^{(n)-1} -  \mathbf{R}'\dg{\mathbf{H}}^{-1} \right\rVert=O_p\left(\max\left\{\frac{1}{\sqrt{Tp_1}},\frac{1}{\sqrt{Tp_2}},\frac{1}{p_1p_2}\right\}\right))$
			\item[(v)] $p_2\left\lVert\left(\widehat{\mathbf{C}}^{(n)\prime}\widehat{\mathbf{K}}^{(n)-1}\widehat{\mathbf{C}}^{(n)}\right)^{-1} \right\rVert=O_p(1)$ 
			\item[(vi)] $p_1\left\lVert\left(\widehat{\mathbf{R}}^{(n)\prime}\widehat{\mathbf{H}}^{(n)-1}\widehat{\mathbf{R}}^{(n)}\right)^{-1} \right\rVert=O_p(1)$
			\item[(vii)]  $p_2$$\left\lVert \left(\widehat{\mathbf{C}}^{(n)\prime}\widehat{\mathbf{K}}^{(n)-1}\widehat{\mathbf{C}}^{(n)}\right)^{-1} - \left( \mathbf{C}'\dg{\mathbf{K}}^{-1}\mathbf{C} \right)^{-1}
			\right\rVert=O_p\left(\max\left\{\frac{1}{\sqrt{Tp_1}},\frac{1}{\sqrt{Tp_2}},\frac{1}{p_1p_2}\right\}\right)$
			\item[(viii)]  $p_1\left\lVert \left(\widehat{\mathbf{R}}^{(n)\prime}\widehat{\mathbf{H}}^{(n)-1}\widehat{\mathbf{R}}^{(n)}\right)^{-1} - \left( \mathbf{R}'\dg{\mathbf{H}}^{-1}\mathbf{R} \right)^{-1} \right\rVert=O_p\left(\max\left\{\frac{1}{\sqrt{Tp_1}},\frac{1}{\sqrt{Tp_2}},\frac{1}{p_1p_2}\right\}\right)$
			\item[(ix)]  $\sqrt{p_2}\left\lVert \left(\widehat{\mathbf{C}}^{(n)\prime}\widehat{\mathbf{K}}^{(n)-1}\widehat{\mathbf{C}}^{(n)}\right)^{-1}\widehat{\mathbf{C}}^{(n)\prime}\widehat{\mathbf{K}}^{(n)-1} - \left( \mathbf{C}'\dg{\mathbf{K}}^{-1}\mathbf{C} \right)^{-1} \mathbf{C}'\dg{\mathbf{K}}^{-1} \right\rVert=O_p\left(\max\left\{\frac{1}{\sqrt{Tp_1}},\frac{1}{\sqrt{Tp_2}},\frac{1}{p_1p_2}\right\}\right)$ 
			\item[(x)] $\sqrt{p_1}\left\lVert \left(\widehat{\mathbf{R}}^{(n)\prime}\widehat{\mathbf{H}}^{(n)-1}\widehat{\mathbf{R}}^{(n)}\right)^{-1}\widehat{\mathbf{R}}^{(n)\prime}\widehat{\mathbf{H}}^{(n)-1} - \left( \mathbf{R}'\dg{\mathbf{H}}^{-1}\mathbf{R} \right)^{-1} \mathbf{R}'\dg{\mathbf{H}}^{-1} \right\rVert=O_p\left(\max\left\{\frac{1}{\sqrt{Tp_1}},\frac{1}{\sqrt{Tp_2}},\frac{1}{p_1p_2}\right\}\right)$
		\end{itemize}
		\label{lemma:est.loadscale.1} 
	\end{lemma}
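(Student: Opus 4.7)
The plan is to establish the ten claims of Lemma \ref{lemma:est.loadscale.1} by a joint induction on $n \geq 1$, mirroring verbatim the proof of Lemma \ref{lemma:est.loadscale} which handles the base case $n = 0$. The key observation is that the proof of Lemma \ref{lemma:est.loadscale} uses only two substantive inputs: rates for the loading estimators (supplied there by Lemma \ref{lemma:boundRC}) and rates for the idiosyncratic variance estimators (supplied there by Lemma \ref{lemma:boundkh}). For arbitrary $n \geq 1$, these two inputs are replaced respectively by Proposition \ref{prop:EMloadCons}, which gives the rate $\max\{(Tp_1)^{-1/2}, (Tp_2)^{-1/2}, (p_1p_2)^{-1}\}$ for $\widehat{\mathbf{R}}^{(n)}$ and $\widehat{\mathbf{C}}^{(n)}$, and Lemma \ref{lemma:boundkh1}, which gives the same rate for $|\widehat{k}^{(n)}_{jj} - k_{jj}|$ and $|\widehat{h}^{(n)}_{ii} - h_{ii}|$ uniformly in the indices. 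Since Assumptions \ref{ass:common}--\ref{ass:dep} and the auxiliary Lemmas \ref{lemma:starnorm}, \ref{lemma:tildeBound}, and \ref{lemma:crossf} are independent of $n$, the entire algebraic structure of the earlier proof is preserved.

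For (i)--(ii), I would write $\|\widehat{\mathbf{K}}^{(n)-1}\| = [\nu^{(p_2)}(\widehat{\mathbf{K}}^{(n)})]^{-1}$ and bound the smallest eigenvalue from below by $\min_{j} k_{jj} - \max_{j} |\widehat{k}^{(n)}_{jj} - k_{jj}|$, which is bounded away from zero by Assumption \ref{subass:idshocks} and Lemma \ref{lemma:boundkh1}(i). Parts (iii)--(iv) are the analytical heart: I would expand $\widehat{k}^{(n)-1}_{jj} = k^{-1}_{jj} (1 + (\widehat{k}^{(n)}_{jj} - k_{jj})/k_{jj})^{-1}$ via a geometric series, decompose $\widehat{\mathbf{C}}^{(n)\prime}\widehat{\mathbf{K}}^{(n)-1} - \mathbf{C}'\dg{\mathbf{K}}^{-1}$ into the cross terms involving $(\widehat{\mathbf{C}}^{(n)} - \mathbf{C}\widehat{\mathbf{J}}_2)$ and $(\widehat{\mathbf{K}}^{(n)-1} - \dg{\mathbf{K}}^{-1})$, and combine the bounds using Lemma \ref{lemma:tildeBound}(iii),(v), Proposition \ref{prop:EMloadCons}, and Lemma \ref{lemma:boundkh1}(i). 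Parts (v)--(vi) follow from Weyl's inequality applied to $\widehat{\mathbf{C}}^{(n)\prime}\widehat{\mathbf{K}}^{(n)-1}\widehat{\mathbf{C}}^{(n)}/p_2$, with the difference from $\mathbf{C}'\dg{\mathbf{K}}^{-1}\mathbf{C}/p_2$ controlled via (iii)--(iv) and the asymptotic nondegeneracy provided by Lemma \ref{lemma:tildeBound}(iv). Parts (vii)--(viii) invoke the matrix identity $A^{-1} - B^{-1} = A^{-1}(B - A)B^{-1}$ combined with (v)--(vi), and (ix)--(x) are routine combinations of (iii),(vii) and (iv),(viii).

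The only subtle point is the ordering of dependencies in the joint induction, since the M-step at iteration $n+1$ requires bounds on $\widehat{\mathbf{K}}^{(n)-1}$ and $\widehat{\mathbf{H}}^{(n)-1}$ (via Lemma \ref{lemma:est.loadscale.1} at step $n$) in order to invoke Proposition \ref{prop:EMloadCons} at step $n+1$, and Lemma \ref{lemma:boundkh1} at step $n+1$ in turn relies on Proposition \ref{prop:EMloadCons} at step $n+1$. The resolution is to propagate the three results in the fixed order at each step: first Proposition \ref{prop:EMloadCons}, then Lemma \ref{lemma:boundkh1}, then Lemma \ref{lemma:est.loadscale.1}. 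The main obstacle I anticipate is the bookkeeping of rates in parts (iii)--(iv): one must verify that the slowest term matches $\max\{(Tp_1)^{-1/2}, (Tp_2)^{-1/2}, (p_1p_2)^{-1}\}$ rather than the slower $n = 0$ rate, the improvement coming precisely from replacing the projected-estimator rate of Lemma \ref{lemma:boundRC} with the faster EM rate of Proposition \ref{prop:EMloadCons}. Because every step reduces to the same algebraic template as in the $n = 0$ proof, the argument is omitted in the same spirit as the omission of Proposition \ref{prop:EMloadCons}'s proof for $n > 0$.
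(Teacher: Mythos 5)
Your proposal is correct and follows essentially the same route as the paper: the paper's proof of Lemma \ref{lemma:est.loadscale.1} is precisely the $n=0$ argument of Lemma \ref{lemma:est.loadscale} rerun with Proposition \ref{prop:EMloadCons} and Lemma \ref{lemma:boundkh1} replacing Lemmas \ref{lemma:boundRC} and \ref{lemma:boundkh} — part (i) via the smallest-eigenvalue bound, (iii)--(iv) via the split into a loading-error term and a term $\mathbf{C}'(\widehat{\mathbf{K}}^{(n)-1}-\dg{\mathbf{K}}^{-1})$ bounded entrywise, (v)--(vi) via eigenvalue perturbation, (vii)--(viii) via $A^{-1}-B^{-1}=A^{-1}(B-A)B^{-1}$, and (ix)--(x) by combination — and your resolution of the interlocking induction (Proposition \ref{prop:EMloadCons} at step $n$, then Lemma \ref{lemma:boundkh1} at step $n$, then this lemma at step $n$) matches the paper's implicit ordering. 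One small slip in your commentary: the change of the $(Tp_2)$-term from $(Tp_2)^{-1}$ at $n=0$ to $(Tp_2)^{-1/2}$ for $n\geq 1$ reflects the EM loading rate of Proposition \ref{prop:EMloadCons} being \emph{slower} in that term than the PE rate of Lemma \ref{lemma:boundRC}, not faster, though this does not affect the validity of the argument.
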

	\begin{proof}
		Consider (i) and note that
		\[
		\begin{array}{lcl}
			\left\lVert\widehat{\mathbf{K}}^{(n)-1}\right\rVert &=& \left\{\nu^{(p_2)}\left(\widehat{\mathbf{K}}^{(n)}\right)\right\}^{-1} \\[0.1in]
			&=& \left\{ \min\limits_{j = 1,\dots,p_2} k_{jj} + \widehat{k}^{(n)}_{jj} - k_{jj}  \right\}^{-1} \\[0.1in]
			&=& \left\{ \min\limits_{j = 1,\dots,p_2} k_{jj} - \min\limits_{j = 1,\dots,p_2} \left\lvert \widehat{k}^{(n)}_{jj} - k_{jj} \right\rvert  \right\}^{-1} \\[0.1in]
			&=& \left\{ C^{-1}_K -  \left\lvert \widehat{k}^{(n)}_{jj} - k_{jj} \right\rvert  \right\}^{-1} = C_K + O_p\left(\max\left\{\frac{1}{\sqrt{Tp_1}},\frac{1}{\sqrt{Tp_2}},\frac{1}{p_1p_2}\right\}\right)
		\end{array}	
		\]
		by Assumption \ref{subass:idshocks} and Lemma \ref{lemma:boundkh1}(i). The proof for (ii) follows the same steps.  Consider (iii), we have that
		\[
		\begin{array}{rll}
			\frac{1}{\sqrt{p_2}}\left\lVert \widehat{\mathbf{C}}^{(n)\prime}\widehat{\mathbf{K}}^{(n)-1} -  \mathbf{C}'\dg{\mathbf{K}}^{-1} \right\rVert &\lesssim& \frac{1}{\sqrt{p_2}}\left\lVert  \widehat{\mathbf{C}}^{(n)} - \mathbf{C}\widehat{\mathbf{J}}_2   \right\rVert \left\lVert \dg{\mathbf{K}}^{-1} \right\rVert + \frac{1}{\sqrt{p_2}} \left\lVert\mathbf{C}' \left(\widehat{\mathbf{K}}^{(n)-1} -\dg{\mathbf{K}}^{-1}\right) \right\rVert \\
			& = & O_p\left(\max\left\{\frac{1}{\sqrt{Tp_1}}, \frac{1}{\sqrt{Tp_2}},\frac{1}{p_1p_2}\right\}\right)
		\end{array}
		\]
		because of Proposition \ref{prop:EMloadCons}, Assumption \ref{subass:idshocks}, and since
		\[
		\begin{array}{rll}
			\frac{1}{\sqrt{p_2}} \left\lVert\mathbf{C}' \left(\widehat{\mathbf{K}}^{(n)-1} -\dg{\mathbf{K}}^{-1}\right) \right\rVert&\leq& \frac{1}{\sqrt{p_2}} \left\{ \sum\limits_{i=1}^{k_2}\sum\limits_{j=1}^{p_2} \left\lvert\left[\mathbf{C}' \left(\widehat{\mathbf{K}}^{(n)-1} -\dg{\mathbf{K}}^{-1}\right) \right]_{ij} \right\rvert^2 \right\}^{\frac{1}{2}} \\[0.1in]
			&\leq& \sqrt{k_2} \max\limits_{ij} \left\lvert\mathbf{c}'_{\cdot i} \left[\widehat{\mathbf{K}}^{(n)-1} -\dg{\mathbf{K}}^{-1} \right]_{\cdot j} \right\rvert \\[0.1in]
			&\leq& \bar{c}\sqrt{k_2}  \max\limits_{j}  \left\lvert \left\{\widehat{k}^{(n)-1}_{jj}k^{-1}_{jj} \left(\widehat{k}^{(n)}_{jj} - k_{jj}\right) \right\} \right\rvert \\[0.1in]
			&\leq&  \bar{c}\sqrt{k_2}  \max\limits_{j}   \left\lvert \left(\min\limits_{j}\widehat{k}^{(n)}_{jj}\right)^{-1} \left(\min\limits_{j}k_{jj}\right)^{-1} \sum\limits_{j=1}^{p_2} \left(\widehat{k}^{(n)}_{jj} - k_{jj}\right) \right\rvert \\[0.1in]
			&\leq& \bar{c}\sqrt{k_2}C^2_K\left\lvert \widehat{k}^{(n)}_{jj} - k_{jj} \right\rvert \\[0.1in] 
			&=& O_p\left(\max\left\{\frac{1}{\sqrt{Tp_1}},\frac{1}{\sqrt{Tp_2}},\frac{1}{p_1p_2},\right\}\right)
		\end{array}
		\]
		by Assumptions \ref{subass:loading}, \ref{subass:idshocks} and Lemma \ref{lemma:boundkh1}(i). The proof for (iv) follows the same steps. 
		Consider (v) and note that 
		\[
		\begin{array}{rll}
			\det \left(\widehat{\mathbf{C}}^{(n)\prime}\widehat{\mathbf{K}}^{(n)-1}\widehat{\mathbf{C}}^{(0)}\right)^{-1} &=& \prod\limits_{j=1}^{k_2} \nu^{(j)}\left( \widehat{\mathbf{C}}^{(n)\prime}\widehat{\mathbf{K}}^{(n)-1}\widehat{\mathbf{C}}^{(n)}\right) \\[0.1in]
			&\geq& \left(\nu^{(k_2)}\left( \widehat{\mathbf{C}}^{(n)\prime}\widehat{\mathbf{K}}^{(n)-1}\widehat{\mathbf{C}}^{(n)}\right)\right)^{k_2} \\[0.1in]
			&\geq& \left( \nu^{(k_2)}\left( \mathbf{C}'\dg{\mathbf{K}}^{-1}\mathbf{C}\right) - \left| \nu^{(k_2)}\left( \widehat{\mathbf{C}}^{(n)\prime}\widehat{\mathbf{K}}^{(n)-1}\widehat{\mathbf{C}}^{(n)}\right)- \nu^{(k_2)}\left( \mathbf{C}'\dg{\mathbf{K}}^{-1}\mathbf{C}\right)\right| \right)^{k_2}
		\end{array}
		\]
		From Lemma \ref{lemma:tildeBound}(iv), we have that $\lim\limits_{p_2\rightarrow\infty}p^{-1}_2\nu^{(k_2)}\left( \mathbf{C}'\dg{\mathbf{K}}^{-1}\mathbf{C}\right)>0$. Moreover,
		\[
		\begin{array}{rll}
			\frac{1}{p_2}\left| \nu^{(k_2)}\left( \widehat{\mathbf{C}}^{(n)\prime}\widehat{\mathbf{K}}^{(n)-1}\widehat{\mathbf{C}}^{(n)}\right)- \nu^{(k_2)}\left( \mathbf{C}'\mathbf{K}^{-1}\mathbf{C}\right)\right| &\leq& \frac{1}{p_2} \left\lVert \widehat{\mathbf{C}}^{(n)\prime}\widehat{\mathbf{K}}^{(n)-1}\widehat{\mathbf{C}}^{(n)} - \mathbf{C}'\dg{\mathbf{K}}\mathbf{C} \right\rVert \\[0.1in]
			& \lesssim & \frac{1}{p_2} \left\lVert \widehat{\mathbf{C}}^{(n)\prime}\widehat{\mathbf{K}}^{(n)-1} - \mathbf{C}'\dg{\mathbf{K}} \right\rVert \left\lVert \mathbf{C} \right\rVert  \\[0.1in]
			&&  + \frac{1}{p_2}\left\lVert \mathbf{C}'\dg{\mathbf{K}}^{-1} \right\rVert \left\lVert \widehat{\mathbf{C}}^{(n)}-\mathbf{C}\widehat{\mathbf{J}}_2\right\rVert \\[0.1in]
			&=& O_p\left(\max\left\{\frac{1}{\sqrt{Tp_1}},\frac{1}{\sqrt{Tp_2}},\frac{1}{p_1p_2}\right\}\right)
		\end{array}
		\]
		by Lemmas \ref{lemma:tildeBound}(iii), \ref{lemma:tildeBound}(v), Proposition \ref{prop:EMloadCons} and term (iii), implying that $\det \left(p^{-1}_2\widehat{\mathbf{C}}^{(n)\prime}\widehat{\mathbf{K}}^{(n)-1}\widehat{\mathbf{C}}^{(n)}\right)^{-1} > 0$ with probability tending to one as $\min\{T,p_1,p_2\}$ goes to infinity, i.e. $p_2\left\lVert\left(\widehat{\mathbf{C}}^{(n)\prime}\widehat{\mathbf{K}}^{(n)-1}\widehat{\mathbf{C}}^{(n)}\right)^{-1} \right\rVert=O_p(1)$. The proof for (vi) follows the same steps. 
		Consider (vii), we have that
		\[
		\begin{array}{rll}
			p_2\left\lVert \left(\widehat{\mathbf{C}}^{(n)\prime}\widehat{\mathbf{K}}^{(n)-1}\widehat{\mathbf{C}}^{(n)}\right)^{-1} - \left( \mathbf{C}'\mathbf{K}^{-1}\mathbf{C} \right)^{-1} \right\rVert & \leq & p_2\left\lVert\left(\widehat{\mathbf{C}}^{(n)\prime}\widehat{\mathbf{K}}^{(n)-1}\widehat{\mathbf{C}}^{(n)}\right)^{-1} \right\rVert p_2  \left\lVert \left( \mathbf{C}'\dg{\mathbf{K}}^{-1}\mathbf{C} \right)^{-1}  \right\rVert \\[0.1in]
			&&  \frac{1}{p_2} \left\lVert \widehat{\mathbf{C}}^{(n)\prime}\widehat{\mathbf{K}}^{(n)-1}\widehat{\mathbf{C}}^{(n)}- \mathbf{C}'\dg{\mathbf{K}}^{-1}\mathbf{C}  \right\rVert \\[0.1in]
			&=& O_p\left(\max\left\{ \frac{1}{\sqrt{Tp_1}},  \frac{1}{\sqrt{Tp_2}} \frac{1}{p_1p_2}\right\} \right)
		\end{array}
		\]
		because of Lemma \ref{lemma:tildeBound}(iv), term (v) and since $ \frac{1}{p_2} \left\lVert \widehat{\mathbf{C}}^{(n)\prime}\widehat{\mathbf{K}}^{(n)-1}\widehat{\mathbf{C}}^{(n)}- \mathbf{C}'\mathbf{K}^{-1}\mathbf{C}  \right\rVert=O_p\left(\max\left\{ \frac{1}{\sqrt{Tp_1}}, \frac{1}{\sqrt{Tp_2}} \frac{1}{p_1p_2}\right\} \right)$. Proof for (viii) follows analogously. The proofs for (ix) and (x) follow directly from (iii) and (vii), and from (iv) and (viii), respectively.
	\end{proof}

	\begin{lemma}
		Under Assumptions \ref{ass:common} through \ref{ass:dep}, there exist matrices $\widehat{\mathbf{J}}_1$ and $\widehat{\mathbf{J}}_2$ satisfying $\widehat{\mathbf{J}}_1\widehat{\mathbf{J}}'_1 \xrightarrow{p} \mathbb{I}_{k_1k_1}$ and $\widehat{\mathbf{J}}_2\widehat{\mathbf{J}}'_2 \xrightarrow{p} \mathbb{I}_{k_2k_2} $, such that, for all $n \in \mathbb{N}$, as $\min\left\{p_1,p_2,T\right\} \rightarrow \infty$,
		\begin{itemize}
			\item[(i)] $\left(\frac{1}{T}\sum\limits_{t=1}^{T}\left(\mathsf{f}^{(n)}_{t|T} - \widehat{\mathbf{J}}^{-1} \mathsf{f}_t\right)  \mathsf{f}'_t\mathbf{\widehat{J}}\right) =  O_p\left(\max \left\{ \frac{1}{\sqrt{Tp_1}}, \frac{1}{\sqrt{Tp_2}}, \frac{1}{p_1p_2} \right\} \right)$
			\item[(ii)] $\left\lVert\frac{1}{Tp_2}\sum\limits_{j=1}^{p_2}\sum\limits_{t=1}^{T}e_{tij} \left(\mathsf{f}^{(n)}_{t|T} - \widehat{\mathbf{J}}^{-1} \mathsf{f}_t\right) \right\rVert = O_p\left(\max \left\{ \frac{1}{\sqrt{Tp_1}}, \frac{1}{\sqrt{Tp_2}}, \frac{1}{p_1p_2} \right\} \right)$
			\item[(iii)] $\left\lVert \frac{1}{Tp_1} \sum\limits_{i=1}^{p_1}\sum\limits_{t=1}^{T} e_{tij}(\mathsf{f}^{(n)}_{t|T}-\widehat{\mathbf{J}}^{-1}\mathsf{f}_t) \right\rVert = O_p\left(\max \left\{ \frac{1}{\sqrt{Tp_1}}, \frac{1}{\sqrt{Tp_2}}, \frac{1}{p_1p_2} \right\} \right)$
			\item[(iv)] $\left\lVert\frac{1}{T\sqrt{p_1}p_2}\sum\limits_{i=1}^{p_2}\sum\limits_{t=1}^{T}\left(\mathsf{f}^{(n)}_{t|T} - \widehat{\mathbf{J}}^{-1} \mathsf{f}_t\right)  \mathbf{e}'_{t\cdot i}\right\rVert = O_p\left(\max \left\{ \frac{1}{\sqrt{Tp_1}}, \frac{1}{\sqrt{Tp_2}}, \frac{1}{p_1p_2} \right\} \right)$
			\item[(v)] $\left\lVert \frac{1}{Tp_1\sqrt{p_2}} \sum\limits_{i=1}^{p_1}\sum\limits_{t=1}^{T} \mathbf{e}_{ti\cdot}(\mathsf{f}^{(n)}_{t|T}-\widehat{\mathbf{J}}^{-1}\mathsf{f}_t)' \right\rVert = O_p\left(\max \left\{ \frac{1}{\sqrt{Tp_1}}, \frac{1}{\sqrt{Tp_2}}, \frac{1}{p_1p_2} \right\} \right)$
		\end{itemize}
		\label{lemma:crossf}
	\end{lemma}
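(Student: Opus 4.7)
The overall strategy is induction on $n$: first handle $n=0$ using the initial pre-estimator rates from Lemmas \ref{lemma:boundRC}, \ref{lemma:boundAB}, and \ref{lemma:est.loadscale}, then propagate to $n\geq 1$ using Proposition \ref{prop:EMloadCons} together with Lemmas \ref{lemma:boundkh1} and \ref{lemma:est.loadscale.1}. The proof will follow \emph{verbatim} the same template as the analogous bounds in \cite{barigozzi2019quasi} (their Lemmas D.12--D.16), adapted to the matrix / bilinear setting. Specifically, we first replace the Kalman smoother output by its least-squares counterpart, writing
\[
\mathsf{f}^{(n)}_{t|T} - \widehat{\mathbf{J}}^{-1}\mathsf{f}_t = \bigl(\mathsf{f}^{(n)}_{t|T} - \mathsf{f}^{(n)}_{t|t}\bigr) + \bigl(\mathsf{f}^{(n)}_{t|t} - \mathsf{f}^{LS(n)}_t\bigr) + \bigl(\mathsf{f}^{LS(n)}_t - \widehat{\mathbf{J}}^{-1}\mathsf{f}_t\bigr),
\]
with $\mathsf{f}^{LS(n)}_t$ as in the proof of Proposition~\ref{prop:EMfactorCons}. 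The first two terms, by Lemmas D.15--D.16 of \cite{barigozzi2019quasi} together with the stationarity of the MAR (Assumption \ref{subass:armatrix}) and the bounds on the (inverse) idiosyncratic covariances (Lemma \ref{lemma:est.loadscale}(i)-(ii)), are uniformly of order $O_p(1/(p_1p_2))$, so they contribute harmlessly when averaged in time and only need to be combined via Cauchy--Schwarz with $T^{-1}\sum_t \|\mathsf f_t\|^2 = O_p(1)$ (Assumption \ref{subass:factor}) or $T^{-1}\sum_t\|\mathsf e_t\|^2 = O_p(p_1p_2)$ (Assumption \ref{subass:idshocks}). The analysis therefore reduces to bounding the LS remainder $\mathsf{f}^{LS(n)}_t - \widehat{\mathbf{J}}^{-1}\mathsf{f}_t$.

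Using the Kronecker identity from the proof of Proposition~\ref{prop:EMfactorCons}, this remainder decomposes into the sum of an ``estimated-loading'' contribution,
\[
\bigl[(\widehat{\mathbf{C}}^{(n)\prime}\widehat{\mathbf{K}}^{(n)-1}\widehat{\mathbf{C}}^{(n)})^{-1}\widehat{\mathbf{C}}^{(n)\prime}\widehat{\mathbf{K}}^{(n)-1} \otimes (\widehat{\mathbf{R}}^{(n)\prime}\widehat{\mathbf{H}}^{(n)-1}\widehat{\mathbf{R}}^{(n)})^{-1}\widehat{\mathbf{R}}^{(n)\prime}\widehat{\mathbf{H}}^{(n)-1}\bigr]\bigl(\mathbf{C}\widehat{\mathbf{J}}_2\otimes\mathbf{R}\widehat{\mathbf{J}}_1 - \widehat{\mathbf{C}}^{(n)}\otimes\widehat{\mathbf{R}}^{(n)}\bigr)\widehat{\mathbf{J}}^{-1}\mathsf{f}_t,
\]
and a ``projected-noise'' contribution that, up to lower order terms controlled by Lemma~\ref{lemma:est.loadscale}(iii)-(x), is
\[
\bigl[(\mathbf{C}'\dg{\mathbf K}^{-1}\mathbf{C})^{-1}\mathbf{C}'\dg{\mathbf{K}}^{-1}\otimes(\mathbf{R}'\dg{\mathbf H}^{-1}\mathbf{R})^{-1}\mathbf{R}'\dg{\mathbf H}^{-1}\bigr]\mathsf e_t.
\]

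For part~(i), I multiply this decomposition from the right by $\mathsf{f}'_t\widehat{\mathbf J}$ and average in $t$. The loading contribution is bounded by combining the $O_p(\max\{(Tp_1)^{-1/2},(Tp_2)^{-1/2},(p_1p_2)^{-1}\})$ rates of Proposition~\ref{prop:EMloadCons} (for $n\geq 1$) or Lemma~\ref{lemma:boundRC} (for $n=0$) with $T^{-1}\sum_t\mathsf f_t\mathsf f_t' = O_p(1)$ from Assumption~\ref{subass:factor}. The projected-noise contribution reduces to bounding $(Tp_1p_2)^{-1}\|\mathbf R'\dg{\mathbf H}^{-1}\sum_t\mathbf E_t\dg{\mathbf K}^{-1}\mathbf C\,\mathbf F'_t\|$, which is $O_p((Tp_1p_2)^{-1/2})$ by Assumption~\ref{subass:fecorr}. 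Parts~(iv)-(v) follow the same route after multiplying by $\mathbf{e}'_{t\cdot i}/(p_1\sqrt{p_2})$ or $\mathbf{e}_{ti\cdot}/(\sqrt{p_1}p_2)$; the cross-moment terms that arise, of the form $(Tp_1p_2)^{-1}\sum_t\mathbf R'\dg{\mathbf H}^{-1}\mathbf E_t\dg{\mathbf K}^{-1}\mathbf C\otimes \mathbf E_t'$, are controlled by Assumption~\ref{subass:fecorr2} (this is exactly the purpose of the fourth-order cumulant bounds there). Parts~(ii)-(iii) are the single-index analogues and follow by the same arguments restricted to a single row or column.

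The main obstacle will be the projected-noise contribution in parts~(ii)-(v), because $\mathsf{f}^{(n)}_{t|T}$ depends on \emph{all} observations, including $e_{tij}$ itself, so one cannot naively pass the expectation inside. This is precisely where Assumption~\ref{subass:fecorr2} is designed to help: rewriting the cross-sums in terms of the $\boldsymbol\zeta_{ij}$ vectors defined there and then applying Cauchy--Schwarz and the maximum-norm bounds of that assumption yields the claimed $(Tp_1p_2)^{-1/2}$ rate. A minor additional technical issue is that at iteration $n$ the estimators entering $\mathsf f^{(n)}_{t|T}$ are no longer independent of $\{\mathbf E_t\}$ either; but by Proposition~\ref{prop:EMloadCons} and Lemma~\ref{lemma:est.loadscale.1} their rotation error inherits the same rate as the initial PE, so that plugging these rates into the expansion and using again Assumption~\ref{subass:fecorr2} closes the induction step with identical bounds at each $n\geq 0$.
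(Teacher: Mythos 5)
You take essentially the same route as the paper: the same split of $\mathsf{f}^{(n)}_{t|T}-\widehat{\mathbf{J}}^{-1}\mathsf{f}_t$ into (smoother minus filter) $+$ (filter minus least squares) $+$ (least squares minus truth), the first two pieces dismissed at rate $O_p(1/(p_1p_2))$ via the auxiliary lemmas of \cite{barigozzi2019quasi}, the least-squares remainder split into a loading-error part (Lemmas \ref{lemma:boundRC} and \ref{lemma:est.loadscale} at $n=0$, Proposition \ref{prop:EMloadCons} and Lemma \ref{lemma:est.loadscale.1} at later iterations) plus a projected-noise part, and the same iteration over $n$. The only presentational difference is that the paper handles (i)--(v) simultaneously through a generic multiplier $\mathsf{x}_t$, while you argue case by case.

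The one step that would fail as written is your treatment of the projected-noise cross terms in parts (ii)--(v). After substituting the least-squares remainder, the critical object is a pure idiosyncratic--idiosyncratic sum such as $\frac{1}{\sqrt{p_1p_2}}\left\lVert\frac{1}{Tp_2}\sum_{j=1}^{p_2}\sum_{t=1}^{T}\mathsf{e}_t e_{tij}\right\rVert$ (and its row/column analogues for (iv)--(v)). You invoke Assumption \ref{subass:fecorr2}, but that assumption only bounds covariances built from the vectors $\boldsymbol{\zeta}_{ij}=\text{vec}(T^{-1/2}\sum_t\mathbf{F}_te_{tij})$, i.e., factor--idiosyncratic cross moments; it says nothing about moments of products of idiosyncratic terms alone and therefore cannot deliver the required bound. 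The paper instead bounds this term by a direct second-moment computation, showing its squared norm has expectation $O(1/(Tp_2))$, using the fourth-order cumulant conditions on the idiosyncratic components in Assumption \ref{subass:ecorr2}. With that substitution (and after correcting the swapped normalizations $\frac{1}{\sqrt{p_1}p_2}$ and $\frac{1}{p_1\sqrt{p_2}}$ that you attach to parts (iv) and (v)), your argument coincides with the paper's.
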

	\begin{proof}
		Let $\mathsf{x}_t= \left\{\mathsf{f}_t\mathbf{\widehat{J}}, \frac{1}{p_2}\sum\limits_{j=1}^{p_2}e_{tij}, \frac{1}{p_1}\sum\limits_{i=1}^{p_1} e_{tij}, \frac{1}{\sqrt{p_1}p_2}\sum\limits^{p_2}_{i=1} \mathbf{e}_{t\cdot i}, \frac{1}{p_1\sqrt{p_2}}\sum\limits_{i=1}^{p_1}\mathbf{e}_{ti\cdot}\right\}$ . From (B.4) in \cite{barigozzi2019quasi}, we have that
		\[
		\begin{array}{rll}
			\left\lVert \frac{1}{T}\sum\limits_{t=1}^{T}\left(\mathsf{f}^{(n)}_t - \widehat{\mathbf{J}}^{-1} \mathsf{f}_t\right)  \mathsf{x}^{\prime}_t \right\rVert &\leq& \left\lVert \frac{1}{T}\sum\limits_{t=1}^{T}\left(\mathsf{f}^{(n)}_{t|T} - \mathsf{f}^{(n)}_{t|t}\right)  \mathsf{x}^{\prime}_t   \right\rVert \\[0.1in]
			&& + \left\lVert \frac{1}{T}\sum\limits_{t=1}^{T}\left(\mathsf{f}^{(n)}_{t|t} - \mathsf{f}^{LS(n)}_{t}\right)  \mathsf{x}^{\prime}_t  \right\rVert \\[0.1in]
			&& + \left\lVert \frac{1}{T}\sum\limits_{t=1}^{T}\left(\mathsf{f}^{LS(n)}_{t}-\widehat{\mathbf{J}}^{-1} \mathsf{f}_t\right)  \mathsf{x}^{\prime}_t\right\rVert \\[0.1in]
			&=& I + II + III
		\end{array}
		\]
		where
		\[
		\mathsf{f}^{LS(n)}_{t} = \left( \left(\left(\widehat{\mathbf{C}}^{(n)\prime}\widehat{\mathbf{K}}^{(n)-1}\widehat{\mathbf{C}}^{(n)}\right)^{-1}\widehat{\mathbf{C}}^{(n)\prime}\widehat{\mathbf{K}}^{(n)-1}\right)\otimes\left(\left(\widehat{\mathbf{R}}^{(n)\prime}\widehat{\mathbf{H}}^{(n)-1}\widehat{\mathbf{R}}^{(n)}\right)^{-1}\widehat{\mathbf{R}}^{(n)\prime}\widehat{\mathbf{H}}^{(n)-1}\right)\right)\mathsf{y}_t
		\]
		Consider the case $n=0$. From Lemmas \ref{lemma:boundRC}, \ref{lemma:boundAB}, \ref{lemma:est.loadscale}, (B.5) and (B.6) in  \cite{barigozzi2019quasi}, it follows that terms I and II are both $O_p\left(\frac{1}{p_1p_2}\right)$. Focusing on the third term, we have
		\[
		\begin{array}{rll}
			III &\leq& \left\lVert \frac{1}{T}\sum^T\limits_{t=1} \left( \left(\left(\widehat{\mathbf{C}}^{(0)\prime}\widehat{\mathbf{K}}^{(0)-1}\widehat{\mathbf{C}}^{(0)}\right)^{-1}\widehat{\mathbf{C}}^{(0)\prime}\widehat{\mathbf{K}}^{(0)-1}\right)\otimes\left(\left(\widehat{\mathbf{R}}^{(0)\prime}\widehat{\mathbf{H}}^{(0)-1}\widehat{\mathbf{R}}^{(0)}\right)^{-1}\widehat{\mathbf{R}}^{(0)\prime}\widehat{\mathbf{H}}^{(0)-1}\right)\right) \right. \\[0.1in]
			&& \hspace*{1.5in}\left. \times \left( \mathbf{C}\widehat{\mathbf{J}}_2 \otimes \mathbf{R}\widehat{\mathbf{J}}_1 - \widehat{\mathbf{C}}^{(0)} \otimes \widehat{\mathbf{R}}^{(0)} \right) \widehat{\mathbf{J}}^{-1}\mathsf{f}_t \mathsf{x}'_t \right\rVert \\[0.1in]
			&& + \left\lVert \frac{1}{T}\sum^T\limits_{t=1} \left( \left(\left(\widehat{\mathbf{C}}^{(0)\prime}\widehat{\mathbf{K}}^{(0)-1}\widehat{\mathbf{C}}^{(0)}\right)^{-1}\widehat{\mathbf{C}}^{(0)\prime}\widehat{\mathbf{K}}^{(0)-1}\right)\otimes\left(\left(\widehat{\mathbf{R}}^{(0)\prime}\widehat{\mathbf{H}}^{(0)-1}\widehat{\mathbf{R}}^{(0)}\right)^{-1}\widehat{\mathbf{R}}^{(0)\prime}\widehat{\mathbf{H}}^{(0)-1}\right)\right)\mathsf{e}_t\mathsf{x}'_t  \right\rVert \\[0.1in]
			&\leq& \left\lVert \left( \left(\left({\mathbf{C}}^{\prime}\dg{\mathbf{K}}^{-1}\mathbf{C}\right)^{-1}\mathbf{C}^{\prime}\mathbf{K}^{-1}\right)\otimes\left(\left(\mathbf{R}^{\prime}\dg{\mathbf{H}}^{-1}\mathbf{R}\right)^{-1}\mathbf{R}^{\prime}\dg{\mathbf{H}}^{-1}\right)\right) \right. \\[0.1in]
			&& \hspace*{1.5in}\left. \times \left( \mathbf{C}\widehat{\mathbf{J}}_2 \otimes \mathbf{R}\widehat{\mathbf{J}}_1 - \widehat{\mathbf{C}}^{(0)} \otimes \widehat{\mathbf{R}}^{(0)} \right) \right\rVert \left\lVert \frac{1}{T}\sum^T\limits_{t=1}  \widehat{\mathbf{J}}^{-1}\mathsf{f}_t \mathsf{x}'_t \right\rVert \\[0.1in]
			&& + \left\lVert \left( \left(\left(\widehat{\mathbf{C}}^{(0)\prime}\widehat{\mathbf{K}}^{(0)-1}\widehat{\mathbf{C}}^{(0)}\right)^{-1}\widehat{\mathbf{C}}^{(0)\prime}\widehat{\mathbf{K}}^{(0)-1}\right)\otimes\left(\left(\widehat{\mathbf{R}}^{(0)\prime}\widehat{\mathbf{H}}^{(0)-1}\widehat{\mathbf{R}}^{(0)}\right)^{-1}\widehat{\mathbf{R}}^{(0)\prime}\widehat{\mathbf{H}}^{(0)-1}\right)\right)  \right. \\[0.1in]
			&& \left. \hspace*{1.5in} - \left(\left(\mathbf{C}'\mathbf{K}^{-1}\mathbf{C}\right)^{-1}\mathbf{C}'\mathbf{K}^{-1}\right) \otimes \left(\left(\mathbf{R}^{\prime}\mathbf{H}^{-1}\mathbf{R}\right)^{-1}\mathbf{R}^{\prime}\mathbf{H}^{-1} \right) \right\rVert \\[0.1in]
			&& \hspace*{0.5in} \times  \left\lVert  \mathbf{C}\widehat{\mathbf{J}}_2 \otimes \mathbf{R}\widehat{\mathbf{J}}_1 - \widehat{\mathbf{C}} \otimes \widehat{\mathbf{R}} \right\rVert \left\lVert \frac{1}{T}\sum^T\limits_{t=1} \widehat{\mathbf{J}}^{-1} \mathsf{f}_t \mathsf{x}'_t \right\rVert \\[0.1in]
			&& + \left\lVert  \left(\left(\mathbf{C}'\mathbf{K}^{-1}\mathbf{C}\right)^{-1}\right)\otimes\left(\left(\mathbf{R}'\mathbf{H}^{-1}\mathbf{R}\right)^{-1}\right) \right\rVert  \left\lVert\frac{1}{T}\sum\limits_{t=1}^T \left(\mathbf{C}'\mathbf{K}^{-1}\right)\otimes\left(\mathbf{R}'\mathbf{H}^{-1} \right) \mathsf{e}_t\mathsf{x}'_t  \right\rVert \\[0.1in]
			&& +  \left\lVert \left( \left(\left(\widehat{\mathbf{C}}^{(0)\prime}\widehat{\mathbf{K}}^{(0)-1}\widehat{\mathbf{C}}^{(0)}\right)^{-1}\widehat{\mathbf{C}}^{(0)\prime}\widehat{\mathbf{K}}^{(0)-1}\right)\otimes\left(\left(\widehat{\mathbf{R}}^{(0)\prime}\widehat{\mathbf{H}}^{(0)-1}\widehat{\mathbf{R}}^{(0)}\right)^{-1}\widehat{\mathbf{R}}^{(0)\prime}\widehat{\mathbf{H}}^{(0)-1}\right)\right)  \right. \\[0.1in]
			&& \left. \hspace*{1.5in} - \left(\left(\mathbf{C}'\mathbf{K}^{-1}\mathbf{C}\right)^{-1}\mathbf{C}'\mathbf{K}^{-1}\right) \otimes \left(\left(\mathbf{R}^{\prime}\mathbf{H}^{-1}\mathbf{R}\right)^{-1}\mathbf{R}^{\prime}\mathbf{H}^{-1} \right) \right\rVert \left\lVert \frac{1}{T}\sum^T\limits_{t=1} \mathsf{e}_t \mathsf{x}'_t \right\rVert \\[0.1in]
			&& = III_a + III_{b} + III_{c} + III_{d}
		\end{array}
		\]
		Since,
		\[
		\begin{array}{rll}
			III_a &\leq& \left\lVert p_2 \left(\mathbf{C}'\dg{\mathbf{K}}^{-1}\mathbf{C}\right)^{-1}\right\rVert \left\lVert\frac{\mathbf{C}' \dg{\mathbf{K}}^{-1}}{\sqrt{p_2}}\right\rVert \left\lVert p_1 \left(\mathbf{R}^{\prime}\dg{\mathbf{H}}^{-1}\mathbf{R}\right)^{-1}\right\rVert \left\lVert\frac{\mathbf{R}^{\prime}\dg{\mathbf{H}}^{-1}}{\sqrt{p_1}} \right\rVert \\[0.1in]
			&& \hspace*{0.5in}\times \frac{1}{\sqrt{p_1p_2}} \left\lVert  \mathbf{C}\widehat{\mathbf{J}}_2 \otimes \mathbf{R}\widehat{\mathbf{J}}_1 - \widehat{\mathbf{C}}^{(0)} \otimes \widehat{\mathbf{R}}^{(0)} \right\rVert \left\lVert \frac{1}{T}\sum^T\limits_{t=1}  \widehat{\mathbf{J}}^{-1} \mathsf{f}_t \mathsf{x}'_t \right\rVert \\[0.1in]
			&=& O_p\left(\max\left\{\frac{1}{\sqrt{Tp_1}},\frac{1}{\sqrt{Tp_2}},\frac{1}{p_1p_2}\right\}\right)  \left\lVert \frac{1}{T}\sum^T\limits_{t=1} \widehat{\mathbf{J}}^{-1}  \mathsf{f}_t \mathsf{x}'_t \right\rVert
		\end{array}
		\]
		by Lemmas \ref{lemma:tildeBound}(iv),  \ref{lemma:tildeBound}(v), \ref{lemma:boundRC}(iii),
		\[
		\begin{array}{rll}
			III_b &\lesssim&  \sqrt{p_2}\left\lVert\left(\widehat{\mathbf{C}}^{(0)\prime}\widehat{\mathbf{K}}^{(0)-1}\widehat{\mathbf{C}}^{(0)}\right)^{-1}\widehat{\mathbf{C}}^{(0)\prime}\widehat{\mathbf{K}}^{(0)-1}-\left(\mathbf{C}'\dg{\mathbf{K}}^{-1}\mathbf{C}\right)^{-1}\mathbf{C}'\dg{\mathbf{K}}^{-1} \right \rVert \\[0.1in]
			&& \hspace*{1in} \times \left\lVert \left(\frac{\mathbf{R}^{\prime}\dg{\mathbf{H}}^{-1}\mathbf{R}}{p_1}\right)^{-1}\right\rVert \frac{\left\lVert\mathbf{R}^{\prime}\dg{\mathbf{H}}^{-1}  \right\rVert}{\sqrt{p_1}}  \frac{\left\lVert  \mathbf{C}\widehat{\mathbf{J}}_2 \otimes \mathbf{R}\widehat{\mathbf{J}}_1 - \widehat{\mathbf{C}} \otimes \widehat{\mathbf{R}} \right\rVert}{\sqrt{p_1p_2}}  \left\lVert\frac{1}{T}\sum^T\limits_{t=1} \widehat{\mathbf{J}}^{-1}  \mathsf{f}_t \mathsf{x}'_t \right\rVert \\[0.1in]
			&& + \sqrt{p_1} \left\lVert \left(\widehat{\mathbf{R}}^{(0)\prime}\widehat{\mathbf{H}}^{(0)-1}\widehat{\mathbf{R}}^{(0)}\right)^{-1}\widehat{\mathbf{R}}^{(0)\prime}\widehat{\mathbf{H}}^{(0)-1} - \left(\mathbf{R}^{\prime}\dg{\mathbf{H}}^{-1}\mathbf{R}\right)^{-1}\mathbf{R}^{\prime}\dg{\mathbf{H}}^{-1}\right\rVert \\[0.1in]
			&& \hspace*{1in} \times \left\lVert \left(\frac{ \mathbf{C}'\dg{\mathbf{K}}^{-1}\mathbf{C}}{p_2}\right)^{-1} \right\rVert \frac{\left\lVert \mathbf{C}'\dg{\mathbf{K}}^{-1} \right \rVert}{\sqrt{p_1}} \left\lVert \frac{ \mathbf{C}\widehat{\mathbf{J}}_2 \otimes \mathbf{R}\widehat{\mathbf{J}}_1 - \widehat{\mathbf{C}} \otimes \widehat{\mathbf{R}}}{\sqrt{p_1p_2}} \right\rVert \left\lVert \frac{1}{T}\sum^T\limits_{t=1} \widehat{\mathbf{J}}^{-1}  \mathsf{f}_t \mathsf{x}'_t \right\rVert \\[0.1in]
			&=& o_p\left(\max\left\{\frac{1}{\sqrt{Tp_1}},\frac{1}{\sqrt{Tp_2}},\frac{1}{p_1p_2}\right\}\right)  \left\lVert \frac{1}{T}\sum^T\limits_{t=1}  \widehat{\mathbf{J}}^{-1} \mathsf{f}_t \mathsf{x}'_t \right\rVert
		\end{array}
		\]
		by Lemmas \ref{lemma:tildeBound}(iv),  \ref{lemma:tildeBound}(v),  \ref{lemma:boundRC}(iii),  \ref{lemma:est.loadscale}(ix) and \ref{lemma:est.loadscale}(x),
		\[
		\begin{array}{rll}
			III_c &\leq& \left\lVert \left(\frac{\mathbf{C}'\mathbf{K}^{-1}\mathbf{C}}{p_2}\right)^{-1}\right\rVert \left\lVert\left(\frac{\mathbf{R}'\mathbf{H}^{-1}\mathbf{R}}{p_1}\right)^{-1} \right\rVert  \frac{1}{p_1p_2} \left\lVert\frac{1}{T}\sum\limits_{t=1}^T \left(\mathbf{C}'\mathbf{K}^{-1}\right)\otimes\left(\mathbf{R}'\mathbf{H}^{-1} \right) \mathsf{e}_t\mathsf{x}'_t  \right\rVert \\[0.1in]
			&\lesssim&  \frac{1}{p_1p_2} \left\lVert\frac{1}{T}\sum\limits_{t=1}^T \left(\mathbf{C}'\mathbf{K}^{-1}\right)\otimes\left(\mathbf{R}'\mathbf{H}^{-1} \right) \mathsf{e}_t\mathsf{x}'_t  \right\rVert
		\end{array}
		\]
		by Lemma \ref{lemma:tildeBound}(iv), and
		\[
		\begin{array}{rll}
			III_d &\lesssim& \sqrt{p_2}\left\lVert\left(\widehat{\mathbf{C}}^{(0)\prime}\widehat{\mathbf{K}}^{(0)-1}\widehat{\mathbf{C}}^{(0)}\right)^{-1}\widehat{\mathbf{C}}^{(0)\prime}\widehat{\mathbf{K}}^{(0)-1}-\left(\mathbf{C}'\dg{\mathbf{K}}^{-1}\mathbf{C}\right)^{-1}\mathbf{C}'\dg{\mathbf{K}}^{-1} \right \rVert \\[0.1in]
			&& \hspace*{1in} \times \left\lVert \left(\frac{\mathbf{R}^{\prime}\dg{\mathbf{H}}^{-1}\mathbf{R}}{p_1}\right)^{-1}\right\rVert \frac{\left\lVert\mathbf{R}^{\prime}\dg{\mathbf{H}}^{-1}  \right\rVert}{\sqrt{p_1}}  \frac{1}{\sqrt{p_1p_2}}  \left\lVert\frac{1}{T}\sum^T\limits_{t=1} \mathsf{e}_t \mathsf{x}'_t \right\rVert \\[0.1in]
			&& + \sqrt{p_1} \left\lVert \left(\widehat{\mathbf{R}}^{(0)\prime}\widehat{\mathbf{H}}^{(0)-1}\widehat{\mathbf{R}}^{(0)}\right)^{-1}\widehat{\mathbf{R}}^{(0)\prime}\widehat{\mathbf{H}}^{(0)-1} - \left(\mathbf{R}^{\prime}\dg{\mathbf{H}}^{-1}\mathbf{R}\right)^{-1}\mathbf{R}^{\prime}\dg{\mathbf{H}}^{-1}\right\rVert \\[0.1in]
			&& \hspace*{1in} \times \left\lVert \left(\frac{ \mathbf{C}'\dg{\mathbf{K}}^{-1}\mathbf{C}}{p_2}\right)^{-1} \right\rVert \frac{\left\lVert \mathbf{C}'\dg{\mathbf{K}}^{-1} \right \rVert}{\sqrt{p_1}} \left\lVert \frac{ 1}{\sqrt{p_1p_2}} \right\rVert \left\lVert \frac{1}{T}\sum^T\limits_{t=1} \mathsf{e}_t \mathsf{x}'_t \right\rVert \\[0.1in]
			&=&O_p\left(\max\left\{\frac{1}{\sqrt{Tp_1}},\frac{1}{\sqrt{Tp_2}},\frac{1}{p_1p_2}\right\}\right) \frac{1}{\sqrt{p_1p_2}}\left\lVert\frac{1}{T}\sum\limits_{t=1}^T \mathsf{e}_t\mathsf{x}'_t  \right\rVert
		\end{array}
		\]
		by Lemmas \ref{lemma:tildeBound}(iv)-(v), \ref{lemma:boundRC}, \ref{lemma:est.loadscale}(ix)-(x), we obtain
		\[
		\begin{array}{rll}
			III &=& O_p\left(\max\left\{\frac{1}{\sqrt{Tp_1}},\frac{1}{\sqrt{Tp_2}},\frac{1}{p_1p_2}\right\}\right) \left\{   \left\lVert \frac{1}{T}\sum^T\limits_{t=1} \widehat{\mathbf{J}}^{-1}  \mathsf{f}_t \mathsf{x}'_t \right\rVert +  \frac{1}{\sqrt{p_1p_2}}\left\lVert\frac{1}{T}\sum\limits_{t=1}^T \mathsf{e}_t\mathsf{x}'_t  \right\rVert \right\} \\[0.1in]
			&& + \frac{1}{p_1p_2} \left\lVert\frac{1}{T}\sum\limits_{t=1}^T \left(\mathbf{C}'\mathbf{K}^{-1}\right)\otimes\left(\mathbf{R}'\mathbf{H}^{-1} \right) \mathsf{e}_t\mathsf{x}'_t  \right\rVert
		\end{array}
		\] 
		From Lemma \ref{lemma:dmfmt}, we have that the stochastic behavior of $\widehat{\mathbf{J}}^{-1}\mathsf{f}_t$ is equivalent to that of $\mathsf{f}_t$. Set $\mathsf{x}_t= \mathsf{f}_t$, we have that 
		\[
		\left\lVert \frac{1}{T}\sum^T\limits_{t=1}  \mathsf{f}_t \mathsf{f}'_t   \right\rVert = O_p(1)
		\]
		by Assumption \ref{subass:factor},
		\[
		\frac{1}{\sqrt{Tp_1p_2}}\left\lVert \frac{1}{\sqrt{T}}\sum^T\limits_{t=1}  \mathsf{e}_t \mathsf{f}'_t \right\rVert = O_p\left(\frac{1}{\sqrt{T}}\right)
		\]
		by Assumption \ref{subass:fecorr}
		\[
		\begin{array}{rll}
			\frac{1}{p_1p_2}\left\lVert\frac{1}{T}\sum\limits_{t=1}^T \left(\mathbf{C}'\mathbf{K}^{-1}\right)\otimes\left(\mathbf{R}'\mathbf{H}^{-1} \right) \mathsf{e}_t\mathsf{f}'_t  \right\rVert = O_p\left(\max\left\{\frac{1}{\sqrt{Tp_1p_2}} \right\}\right) 
		\end{array}
		\]
		by Assumption \ref{subass:fecorr} and Lemma \ref{lemma:tildeBound}(iii). This concludes the proof for (i). Let $\mathsf{x}_t= \frac{1}{p_2}\sum\limits_{j=1}^{p_2}e_{tij}$, we have that 
			\[
			\left\lVert \frac{1}{Tp_2}\sum\limits_{j=1}^{p_2}\sum^T\limits_{t=1}  \mathsf{f}_t e_{tij} \right\rVert = O_p(1)
			\]
			by Assumption \ref{subass:fecorr},
			\[
			\begin{array}{rll}
				\frac{1}{\sqrt{p_1p_2}}\left\lVert \frac{1}{Tp_2}\sum^{p_2}\limits_{i=1}\sum^T\limits_{t=1}  \mathsf{e}_t e_{tij} \right\rVert &=& O_p\left(\frac{1}{\sqrt{Tp_2}}\right)
			\end{array}
			\]
			since
			\[
			\begin{array}{rll}
				\mathbb{E}\left[\frac{1}{{p_1p_2}}\left\lVert \frac{1}{Tp_2}\sum^{p_2}\limits_{i=1}\sum^T\limits_{t=1}  \mathsf{e}_t e_{tij} \right\rVert^2_F \right] &=& \mathbb{E}\left[  \frac{1}{{p_1p_2}} \sum\limits_{l}^{p_1}\sum\limits_{h}^{p_2} \left\lvert \frac{1}{Tp_2}\sum\limits_{j=1}^{p_2}\sum^T\limits_{t=1}  e_{tlh} e_{tij} \right\rvert^2 \right]\\[0.1in]
				&=& \max\limits_{l}\max\limits_{h}  \mathbb{E}\left[  \left\lvert \frac{1}{Tp_2}\sum\limits_{j=1}^{p_2}\sum^T\limits_{t=1}  e_{tlh} e_{tij} \right\rvert^2 \right]\\[0.1in]
				&=& \max\limits_{l}\max\limits_{h} \frac{1}{T^2p^2_2}   \sum\limits_{j_1,j_2=1}^{p_2}\sum^T\limits_{s,t=1}  \mathbb{E}\left[ e_{tlh} e_{tij_1}e_{slh} e_{sij_2} \right]\\[0.1in]
				&=& O_p\left(\frac{1}{Tp_2}\right)
			\end{array}
			\]
			by Assumption \ref{subass:ecorr2}, and
					\[
					\begin{array}{rll}
							\frac{1}{p_1p_2}\left\lVert\frac{1}{Tp_2}\sum\limits_{j=1}^{p_2}\sum\limits_{t=1}^T \left(\mathbf{C}'\mathbf{K}^{-1}\right)\otimes\left(\mathbf{R}'\mathbf{H}^{-1} \right) \mathsf{e}_te_{tij} \right\rVert = O_p\left(\max\left\{\frac{1}{\sqrt{Tp_1p_2}} \right\}\right) 
						\end{array}
					\]
			This concludes the proof for (ii). The results for  (iii), (iv) and (v) can be established analogously. Repeating the same steps for all $n\in \mathbb{N}$ using Proposition \ref{prop:EMloadCons}, Proposition 1 (a.4)-(a.5) in \cite{barigozzi2019quasi}, and Lemma \ref{lemma:est.loadscale.1}  in place of Lemmas \ref{lemma:boundRC}, \ref{lemma:boundAB},  \ref{lemma:est.loadscale} completes the proof.		
	\end{proof}
	
	\section{Cointegrated factors and common trends}\label{sec:gianni}
	To prove \eqref{eq:gianni}, we must find a  $k_1\times k_1$ invertible matrix $\bm{\mathcal R}$ and a $k_2\times k_2$ invertible matrix $\bm{\mathcal C}$ such that
	 \begin{align}\nonumber
	 \bm{\mathcal R}\mathbf F_t \bm{\mathcal C}' =  \left(\begin{array}{cc}
	 \mathbf G_{1t}& \mathbf 0_{r_1,q_2}\\
	 \mathbf 0_{q_1,r_2}& \mathbf G_{0t}
	 \end{array}
	 \right), \quad  \mathbf R \bm{\mathcal R}^{-1} = \left[ \mathbf R_1\;\;\mathbf R_0\right], \quad  \mathbf C\bm{\mathcal C}^{-1} = \left[ \mathbf C_1\;\;\mathbf C_0\right].
\end{align}

%Specifically, 
%\[
%\text{vec}(\mathbf F_t) = \bm{\mathcal C}\otimes \bm{\mathcal R} \left(
%\begin{array}{c}
%\text{vec}( \mathbf G_{1t})\\
% \mathbf 0_{r_1q_2}\\
% \mathbf 0_{q_1r_2}\\
%\text{vec}( \mathbf G_{0t})
%\end{array}
%\right)
%\]
%r1(r2+q2)+q1(r2+q2)
%the last  $q_1q_2$ rows of $\bm{\mathcal C}\otimes \bm{\mathcal R}$ are a basis of the cointegration space of $\{\text{vec}(\mathbf F_t)\}$ (see also \citealp{barigozzi2021large} for the same arguments in the vector case).
%The model by \citet{} is in fact a special case based on the following assumptions. 
Here, as an illustration, we provide one possible choice.
Let $\bm\beta_1$ be $k_1\times q_1$ such that $\bm\beta_1'\bm\beta_1=\mathbb I_{q_1}$ and $\text{vec}(\bm\beta_1'\mathbf F_t)\sim I(0)$, which means that all columns of $\mathbf F_t$ have the same cointegration relations. Similarly, let $\bm\beta_2$ be $k_2\times q_2$ such that $\bm\beta_2'\bm\beta_2=\mathbb I_{q_2}$ and $\text{vec}(\mathbf F_t\bm\beta_2)\sim I(0)$, which means that all rows of $\mathbf F_t$ have the same cointegration relations. Let also $\bm\beta_{i\perp}$ be $k_i\times k_i-q_i$ such that 
$\bm\beta_{i\perp}'\bm\beta_{i\perp}=\mathbb I_{k_i-q_i}$ and $\bm\beta_{i\perp}'\bm\beta_i=\mathbf 0_{k_i-q_i,q_i}$, for $i=1,2$. Let us also assume that $\bm\beta_{i\perp}'\bm\beta_j=\mathbf 0_{k_i-q_i,q_j}$ for $i\ne j$. Then,
\[
\bm{\mathcal R} = \left(
\begin{array}{c}
\bm\beta_1'\\
\bm\beta_{1\perp}'\\
\end{array}
\right)\quad \text{and} \quad \bm{\mathcal C} = \left(
\begin{array}{c}
\bm\beta_2'\\
\bm\beta_{2\perp}'\\
\end{array}
\right).
\]
	
\section{Additional simulation results}\label{app:sim}

\subsection{Separate estimation of $\mathbf{A}$ and $\mathbf{B}$}

	We conduct a Monte Carlo simulation to evaluate the finite-sample performance of the proposed EM estimator when the autoregressive matrices $\mathbf{A}$, $\mathbf{B}$, and the innovation covariance matrices $\mathbf{P}$, $\mathbf{Q}$ are estimated separately using the procedures outlined in Appendix~\ref{ap:em}. Table~\ref{tab:EMvsPCA_fm} reports a comparison between the EM estimator and the PE approach in terms of their accuracy in recovering the factor and loading matrices, under stationary conditions. Across all scenarios considered, the EM algorithm consistently outperforms PE.
	
	\begin{table}[h!]
	\caption{Average and standard deviation (in parenthesis) of the ratio between the performance  of the EM estimator and PE over 100 replications, for each of $\mathcal{D}(\mathbf{R}, \widehat{\mathbf{R}})$, $\mathcal{D}(\mathbf{C}, \widehat{\mathbf{C}})$ and $\textrm{MSE}_{\textbf{S}}$. }
	\centering
	\begin{tabular}{cccccccccccc}
		\toprule
		& & & & & & \multicolumn{3}{c}{$T=100$} & \multicolumn{3}{c}{$T=400$}   \\
		\midrule
		$\mu$ & $\delta$ & $\tau$ & $\mathfrak{D}$ & $p_1$ & $p_2$ &  $\mathcal{D}(\mathbf{R}, \widehat{\mathbf{R}})$ & $\mathcal{D}(\mathbf{C}, \widehat{\mathbf{C}})$ & $\textrm{MSE}_{\textbf{S}}$ & $\mathcal{D}(\mathbf{R}, \widehat{\mathbf{R}})$ & $\mathcal{D}(\mathbf{C}, \widehat{\mathbf{C}})$ & $\textrm{MSE}_{\textbf{S}}$   \\ 
		\midrule
		\multirow{2}{*}{$0.7$} & \multirow{2}{*}{$0$} & \multirow{2}{*}{$0$} & \multirow{2}{*}{N}& 20 & 20 &  0.98 & 0.97 & 0.92 & 0.98 & 0.96 & 0.91 \\
		&&&& &   & (0.05) & (0.05) & (0.03)  & (0.05) & (0.06) & (0.01)\\
		&&&& 10 & 30 &  0.98 & 0.96 & 0.9 & 0.96 & 0.96 & 0.9\\
		&&&& &   & (0.11) & (0.05) & (0.01) & (0.09) & (0.05) & (0.03) \\
		\midrule
		\multirow{2}{*}{$0.7$} & \multirow{2}{*}{$0.7$} & \multirow{2}{*}{$0.5$} & \multirow{2}{*}{N} & 20 & 20 & 0.8 & 0.71 & 0.73 & 0.74 & 0.65 & 0.75  \\
		&&&& &   &  (0.07) & (0.08) & (0.04) & (0.06) & (0.06) & (0.02) \\
		&&&& 10 & 30 & 0.87 & 0.68 & 0.7  & 0.82 & 0.63 & 0.75 \\
		&&&& &   & (0.1) & (0.09) & (0.05) & (0.1) & (0.06) & (0.03)\\
		\bottomrule
	\end{tabular}
	\label{tab:EMvsPCA_fm}
\end{table}

\subsection{Handling missing data: Initialization from balanced subpanels}
As an alternative initialization strategy for datasets with missing observations, we consider using starting values derived by applying our EM algorithm to a fully observed subset of the original matrix $\mathbf{Y}_t$. Because this approach necessitates excluding any rows and columns with missing values, we focus on the block missing data pattern. For comparison, we continue to use the PE estimator as a benchmark, applied to the original matrix after imputation using the method proposed by \cite{cen2024tensor}. Table \ref{tab:EMvsPCAmiss2} reports summary statistics for the ratio of the EM estimator’s performance relative to that of the PE. The results further confirm that the EM algorithm yields improved estimates compared to the PE. 
	
%%%%%%%%%%%%%%%%%%%%%%%%%%%%%%%%
% THIS TABLE INITIALIZES THE EM WITH PE ON SUB-PANEL %
%%%%%%%%%%%%%%%%%%%%%%%%%%%%%%%%
\begin{table}[htbp]
	\caption{Average and standard deviation (in parenthesis) of the ratio between the performance of PE and of the EM algorithm over 100 replications, for each of $\mathcal{D}(\mathbf{R}, \widehat{\mathbf{R}})$, $\mathcal{D}(\mathbf{C}, \widehat{\mathbf{C}})$, $\textrm{MSE}_{\textbf{S}}$, and $\textrm{MSE}_{\textbf{Y}^{(0)}}$. }
	\centering
		\begin{tabular}{ccccccccccccc}
			\toprule
			& & & & & \multicolumn{4}{c}{$T=100$} & \multicolumn{4}{c}{$T=400$}   \\
			\midrule
			$\mu$ & $\mathfrak{D}$ & $\pi$ & $p_1$ & $p_2$ &  $\mathcal{D}(\mathbf{R}, \widehat{\mathbf{R}})$ & $\mathcal{D}(\mathbf{C}, \widehat{\mathbf{C}})$ & $\textrm{MSE}_{\textbf{S}}$ & $\textrm{MSE}_{\textbf{Y}^{(0)}}$& $\mathcal{D}(\mathbf{R}, \widehat{\mathbf{R}})$ & $\mathcal{D}(\mathbf{C}, \widehat{\mathbf{C}})$ & $\textrm{MSE}_{\textbf{S}}$ & $\textrm{MSE}_{\textbf{Y}^{(0)}}$   \\ 
			\midrule
			\multirow{2}{*}{0.7} & \multirow{2}{*}{N} & \multirow{2}{*}{$25\%$} & 20 & 20 &  0.82 & 0.92 & 0.92 & 0.99 & 0.65 & 0.87 & 0.94 & 1.00  \\
			&&& &  & (0.17) & (0.06) & (0.06) & (0.01) & (0.14) & (0.07) & (0.02) & (0.00) \\
			&&& 10 & 30 &   0.86 & 0.97 & 0.91 & 1 .00  & 0.67 & 0.95 & 0.9 & 1.00  \\
			&&& &   & (0.15) & (0.05) & (0.03) & (0.00) & (0.14) & (0.05) & (0.01) & (0.00) \\
			\midrule
			\multirow{2}{*}{0.7} & \multirow{2}{*}{N} &\multirow{2}{*}{$50\%$} & 20 & 20 &  0.92 & 0.69 & 0.74 & 0.98 &  0.73 & 0.7 & 0.87 & 0.99    \\
			&&& &  &  (0.07) & (0.15) & (0.12) & (0.02)  & (0.10) & (0.12) & (0.04) & (0.00) \\
			&&& 10 & 30 &   0.76 & 0.88 & 0.82 & 0.98   &  0.54 & 0.84 & 0.85 & 0.99 \\
			&&& & &  (0.14) & (0.12) & (0.07) & (0.02) & (0.09) & (0.10) & (0.02) & (0.00)\\
			\midrule
			\multirow{2}{*}{0.7} & \multirow{2}{*}{St} &\multirow{2}{*}{$25\%$} & 20 & 20 &  0.83 & 0.98 & 0.92 & 0.99  &  0.73 & 1.10 & 0.96 & 1.00  \\
			&&& &  & (0.17) & (0.12) & (0.14) & (0.05) &(0.14) & (0.08) & (0.04) & (0.00) \\
			&&& 10 & 30 & 0.87 & 0.92 & 0.88 & 0.99   &  0.71 & 0.96 & 0.9 & 1.00   \\
			&&& &   & (0.21) & (0.09) & (0.09) & (0.02) &(0.17) & (0.02) & (0.03) & (0.00) \\
			\midrule
			\multirow{2}{*}{0.7} & \multirow{2}{*}{St} &\multirow{2}{*}{$50\%$} & 20 & 20 &  0.88 & 0.69 & 0.7 & 0.93  & 0.71 & 0.8 & 0.83 & 0.98   \\
			&&& &  & (0.09) & (0.23) & (0.23) & (0.14) & (0.08) & (0.2) & (0.11) & (0.02) \\
			&&& 10 & 30 &  0.81 & 0.77 & 0.74 & 0.95 &  0.62 & 0.85 & 0.79 & 0.98  \\
			&&&& & (0.21) & (0.21) & (0.2) & (0.08) & (0.15) & (0.13) & (0.11) & (0.03) \\
			\midrule
			\multirow{2}{*}{1} &\multirow{2}{*}{N} &\multirow{2}{*}{$25\%$} & 20 & 20 & 0.71 & 0.6 & 0.51 & 0.94  &  0.36 & 0.25 & 0.25 & 0.86    \\
			&&& &  & (0.21) & (0.23) & (0.25) & (0.08) & (0.18) & (0.09) & (0.18) & (0.08) \\
			&&& 10 & 30 & 0.67 & 0.84 & 0.67 & 0.97  &  0.3 & 0.61 & 0.43 & 0.92 \\
			&&& &   & (0.22) & (0.14) & (0.19) & (0.04)& (0.15) & (0.22) & (0.22) & (0.08) \\
			\midrule
			\multirow{2}{*}{1} & \multirow{2}{*}{N} &\multirow{2}{*}{$50\%$} & 20 & 20 & 0.9 & 0.18 & 0.12 & 0.5  &  0.79 & 0.07 & 0.07 & 0.34 \\
			&&& &  & (0.08) & (0.16) & (0.17) & (0.27) & (0.14) & (0.08) & (0.14) & (0.25) \\
			&&& 10 & 30 & 0.91 & 0.58 & 0.29 & 0.79 &  0.74 & 0.31 & 0.14 & 0.59 \\
			&&&& & (0.16) & (0.28) & (0.21) & (0.21) & (0.24) & (0.26) & (0.17) & (0.31)\\
			\bottomrule 
		\end{tabular}
	\label{tab:EMvsPCAmiss2}
\end{table}

\end{document}